\documentclass[11pt,a4paper]{article}
\usepackage[T1]{fontenc}
\usepackage{lmodern}
\usepackage[margin=1in]{geometry}
\usepackage{amsmath,amssymb,amsthm,mathtools,booktabs,array,longtable,microtype,needspace,placeins}
\usepackage{pgfplots}
\pgfplotsset{compat=1.18}
\usepackage{tikz,pgfplots}
\usetikzlibrary{arrows.meta,calc,positioning,patterns,decorations.pathreplacing}
\pgfplotsset{compat=1.18}
\usepackage[hidelinks]{hyperref}
\definecolor{inkblue}{HTML}{254E70}
\definecolor{inkteal}{HTML}{26867B}
\definecolor{inkgold}{HTML}{BA7625}
\newcommand{\E}{\mathbb E}
\newcommand{\PP}{\mathbb P}
\newcommand{\ALG}{\mathrm{ALG}}
\newcommand{\ind}{\mathbf{1}}

\newcommand{\bet}{\beta^*}

\newtheorem{theorem}{Theorem}[section]
\newtheorem{lemma}[theorem]{Lemma}

\theoremstyle{definition}

\theoremstyle{remark}
\numberwithin{equation}{section}
\hypersetup{pdftitle={Prophet Inequalities with Only the Expected Maximum},pdfauthor={}}

\title{Mean-only Prophet Inequalities}
\author{Yilong Feng \and Yaonan Jin \and Xiaowei Wu \and Xinyue Xiang}
\date{\today}

\begin{document}

\maketitle

\begin{abstract}
We study mean-based algorithms for single-choice prophet inequalities: the algorithm knows the number of variables and the mean of their maximum, but not their distributions. For arbitrary independent variables in a fixed order, we can already achieve the optimal $1/2$ competitive ratio using only this mean. For i.i.d. variables and prophet secretary, however, we previously knew only a $1-1/e$ guarantee, which is far away from the ratios with full distribution information. We improve the guarantees in both settings using a simple and natural algorithm: we fix a continuous decreasing threshold function that depends only on the mean and accept the first variable exceeding its threshold. For i.i.d. variables, we prove a $0.7419$ competitive ratio. For the prophet secretary, we prove a $0.6902$ competitive ratio.
\end{abstract}

\setcounter{tocdepth}{2}
\renewcommand{\contentsname}{Contents}
\tableofcontents

\newpage
\section{Introduction}\label{sec:introduction}

A decision maker observes a sequence of values $X_1, X_2,\ldots, X_n$ and can accept at most one of them, where the distribution of each value $X_i$ is known, but not the realizations.
Accepting a value ends the process; rejecting it removes that option forever. How much can an online algorithm obtain compared with a decision maker who sees the entire sequence? Prophet inequalities answer this question when the values are independent random variables. The benchmark is the expected maximum $\E\!\left[X^*\right]$, where $X^* = \max\!\left\{X_1,\ldots,X_n\right\}$.
As shown by Krengel and Sucheston~\cite{journals/bams/KrengelS77}, for arbitrary independent nonnegative variables in a fixed order, we can achieve half of the expected maximum, and no algorithm can guarantee more. We can attain this optimal guarantee using only the expected maximum: we accept the first value that exceeds $\frac{1}{2}\cdot \E[X^*]$~\cite{DBLP:journals/geb/KleinbergW19}. Samuel-Cahn~\cite{journals/aop/SamuelCahn84} also proved the guarantee using a single threshold. Better guarantees become possible when the values are identically distributed (aka. i.i.d. prophet) or arrive in uniformly random order (aka. prophet secretary). Both settings are central models for online selection and pricing, and in both we can guarantee a ratio of $1-1/e$ using only the expected maximum~\cite{journals/siamdm/EsfandiariHLM17,conf/soda/EhsaniHKS18}.

With full distribution information, the optimal competitive ratio for i.i.d. prophet inequalities is $\beta^* \approx 0.7454$~\cite{conf/sigecom/CorreaFHOV17}, and we can obtain at least $0.688$ for prophet secretary~\cite{conf/soda/ChenHLT25}. We therefore see a substantial gap from the previously known $1-1/e$ guarantee using only $\E[X^*]$. A natural question to ask is how much we can close this gap without additional distribution information. From the expected maximum, we learn the scale of a good outcome. We cannot infer a quantile, the probability of seeing a particular value, or the expected benefit of continuing after a rejection. 
Existing work~\cite{conf/soda/EhsaniHKS18} on mean-only prophet secretary uses a natural non-adaptive (i.e., fixed before any values are observed) threshold-based algorithm as follows.
Assume that $\E\!\left[X^*\right]=1$ and each variable has an independent arrival time chosen uniformly at random in $\left[0,1\right]$.
They set a decreasing threshold function $\theta(t) = 1-e^{t-1}$ for $t\in \left[0,1\right]$.
The function is concave and decreases from $\theta(0)=1-1/e$ to $\theta\left(1\right)=0$. 
The algorithm then accepts the first variable exceeding the threshold at its arrival time.
The choice of the threshold function allows an elegant analysis that leads to a competitive ratio of $1-1/e$.

It is thus natural to ask whether the choice of $1-e^{t-1}$ is optimal.
If not, how much better can we do with other choices of threshold functions?

\subsection{Our Results}

In this work, we show that the choice of $1-e^{t-1}$ is not optimal.
Perhaps surprisingly, we show that substantially stronger competitive ratios are obtainable.

We first show that for the i.i.d. setting, using only $\E\!\left[X^*\right]$, it is possible to achieve a competitive ratio that is close to the optimal ratio $\beta^*$ using full distribution information.

\begin{theorem}\label{thm:iid}
There is a non-adaptive algorithm with a continuous concave decreasing threshold function for i.i.d. variables such that, for every $n$,
\begin{equation*}
 \E\!\left[\ALG\right]\ge 0.7419\cdot \E\!\left[X^*\right].
\end{equation*}
\end{theorem}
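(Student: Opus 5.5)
Normalize $\E[X^*]=1$ and embed the i.i.d.\ instance in continuous time: variable $i$ arrives at time $t_i=i/n$, or, more conveniently, we replace the fixed order by independent uniform arrival times in $[0,1]$, which for i.i.d.\ variables only permutes identical copies. The algorithm fixes a continuous decreasing concave threshold $\theta:[0,1]\to[0,\theta(0)]$ with $\theta(1)=0$ and accepts the first variable whose value exceeds $\theta$ at its arrival time. Let $F$ be the common CDF and write $G(v)=F(v)^n=\PP[X^*\le v]$. The probability that nothing has been accepted by time $t$ is
\[
 R(t)=\exp\!\Bigl(-n\int_0^t \PP[X>\theta(s)]\,ds\Bigr)\approx \exp\!\Bigl(\int_0^t \ln G(\theta(s))\,ds\Bigr),
\]
since $n\,\PP[X>v]\ge -\ln G(v)$, and the inequality goes the right way for a lower bound on the survival-weighted gain. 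In the discrete version the exact product $\prod_{i}(1-q(\theta(i/n)))$ appears, and an exchange argument (convexity of $-\ln$) shows that the continuous-time surrogate is the worst case over $n$. We then use the standard layer-cake decomposition
\[
 \E[\ALG]=\int_0^1 R(t)\,(\text{acceptance rate at }t)\,\theta(t)\,dt+\int_0^\infty \PP[\text{accept a value}\ge \theta+v]\,dv,
\]
and lower bound the second term by $\int_0^\infty \int_0^1 R(t)\,n\,\PP[X>\max(\theta(t),v)]\,dt\,dv\ge \int_0^\infty(1-G(v))\,w(v)\,dv$-type expressions. The result is a bound of the form $\E[\ALG]\ge \int_0^\infty (1-G(v))\,\Phi_\theta(v,G)\,dv$ in which $G$ enters only through the values $G(\theta(s))$.

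The second step reduces to a finite-dimensional worst case. Since $\E[X^*]=\int_0^\infty (1-G(v))\,dv=1$, it suffices to show that the ratio of the two integrals is at least $0.7419$ over all CDFs $G$ with mean $1$. Parametrize by the quantile $u(s)=G(\theta(s))$, which is nondecreasing in $s$ because $\theta$ is decreasing. The bound becomes a functional of $u$ and $\theta$, and, exactly as in the $1-1/e$ analysis of Ehsani et al.\ but with a general $\theta$, the adversary's problem becomes pointwise: for each $v$ we need the per-level gain to be at least $c\,(1-G(v))$, up to a Lagrange term that uses the mean constraint. The constraint $\E[X^*]=1$ is exactly what the threshold's scale exploits, and it enters through a multiplier $\lambda$, so that we must verify
\[
 \text{gain}(v)\ \ge\ c\,(1-G(v))-\lambda\,(\text{linear in }v),
\]
with the worst case attained by two-point (Bernoulli-type) or truncated distributions for $X^*$.

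The third step chooses $\theta$. We would take $\theta$ to be the solution of the ODE that makes the pointwise inequality tight simultaneously across levels, in analogy with the $\beta^*$ threshold of Correa et al. Where the ODE solution is not concave or not closed-form, we would use an explicit piecewise or polynomial concave approximation. We then verify the inequality numerically with rigorous interval arithmetic on a fine grid, with Lipschitz bounds controlling the error between grid points, which yields $0.7419$.

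The main obstacle is the second step: proving that the worst case over all $G$ with mean $1$ reduces to a tractable one-parameter family. The adversary's choice of $G$ couples the survival $R(t)$ and the gain nonlinearly. I would first try showing that, for fixed $\theta$, the ratio is quasi-concave in $G$ along mixtures. That would reduce the adversary to distributions for $X^*$ supported on at most two points plus an atom at $0$, and the remaining two-parameter minimization could then be certified numerically.
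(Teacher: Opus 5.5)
Your outline has a genuine gap at exactly the step you flag, and the route you propose for it would fail. After the revenue--utility bound, the adversary is an arbitrary nondecreasing rate $q(t)=-\ln G(\theta(t))$ (your $u=e^{-q}$). The utility this rate generates at time $t$ is weighted by $S(t)=\exp(-\int_0^t q)$ and $A(t)=\int_t^1 S$, so its choices at different levels are coupled through the whole history. For a general $\theta$ there is therefore no level-by-level Lagrangian reduction. The $1-1/e$ argument decouples only because $\theta(t)=1-e^{t-1}$ makes the coefficient of $S(t)$ vanish. That argument uses only $\E[(X^*-\theta)^+]\ge 1-\theta$, which caps the ratio at $1-1/e$. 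Any gain beyond that comes from the terms $q(t)A(t)$ and $A(0)e^{-q(t)}$, which tie $q$ to $S$.

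You also give no argument for quasi-concavity, and a reduction to two atoms is not where the worst case lives. In the paper's warm-up for $\theta(t)=1-t$, with $a=A(0)=\E[\tau]$ fixed, the extremal rate at cumulative hazard $x$ is the $v$ solving $(1+v)e^{-v}=(1-x/a)e^{-x}$. This rate increases continuously in $x$. Concretely, for large $n$, take all variables equal to $v$ plus a rare large value; this gives $1-v+v^2\ge 3/4$. By contrast, take each variable equal to $1$ with probability $0.3/n$ and to $1/2$ otherwise, plus a rare large value restoring $\E[X^*]=1$. This already gives about $0.7416$. So certifying over a small-support family would not prove the theorem. What is missing is a device that controls all admissible $q$ at once. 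The paper uses concave, monotone value functions $V_t(y,x)$ with $L\ge\theta(0)+V_0(1,0)$. These satisfy a backward inequality $V_t(y,x)\ge\min_j\{a_j+\phi_j V_{t+\Delta}(y_j,x_j)\}$ over a finite cover of the possible mean rates on each time step, with one interval unbounded. After discretizing $(y,x)$ and interpolating by concavity, this becomes a finite LP certifying $0.74191\ldots$ for a fixed forty-slope threshold.

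The other steps are also not in place. In Step~1 the inequality is reversed: with $G=F^n$, $-\ln G(v)=-n\ln(1-\PP[X>v])\ge n\,\PP[X>v]$. Moreover, with independent uniform arrival times the exact survival is $(1-\int_0^t\PP[X>\theta(s)]\,ds)^n$, not the exponential. Survival also enters $\E[\ALG]$ non-monotonically: larger survival lowers the revenue $\E[\theta(\tau)]$ but raises the utility, so no direction is favorable for free. That the continuous-time surrogate is the worst case for every $n$ is asserted, not proved. The paper needs a specific shifted-rank rule (one uniform offset $U$, clocks $(i-U)/n$, forced acceptance at rank $n$). It also needs a comparison lemma requiring $\theta(0)+0.142\cdot\max_t b(t)\le1$, which its threshold meets with slack of only about $2.6\cdot10^{-4}$. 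Finally, the equalizing ODE of Step~3 is never written down. Since $0.7419$ lies within $10^{-4}$ of the $0.74197$ upper bound for every precommitted threshold, the threshold has to be essentially optimal. The paper obtains its slopes by constraint generation on the max--min program, which is linear in $b$ for each fixed $q$, and only then certifies them.
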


We also show that with a simple linear threshold of $\theta(t) = 1-t$, we can already achieve a ratio of at least $0.6906$, and the proof is only two pages long.
We accompany our positive results with two upper bounds.
We first show that every non-adaptive algorithm with a pre-committed $\theta$ cannot achieve a ratio strictly higher than $0.74197$, almost matching our competitive ratio.
We further establish a strict upper bound $\beta^*-\varepsilon$, for some constant $\varepsilon>0$, for every randomized, history-dependent (i.e., adaptive) mean-only algorithm. 

\smallskip

Then we consider the prophet secretary setting.
We show that using only the expected maximum, we can achieve a competitive ratio of at least $0.6902$, which beats the state-of-the-art ratio of $0.688$ using full distribution information by Chen, Huang, Li, and Tang~\cite{conf/soda/ChenHLT25}.

\begin{theorem}\label{thm:secretary}
There is a non-adaptive algorithm with a continuous concave decreasing threshold function for prophet secretary such that, for every $n$,
\begin{equation*}
 \E\!\left[\ALG\right]\ge 0.6902\cdot \E\!\left[X^*\right].
\end{equation*}
\end{theorem}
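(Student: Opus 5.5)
We normalize $\E[X^*]=1$ and give each variable an independent uniform arrival time in $[0,1]$. The algorithm accepts the first $X_i$ with $X_i\ge\theta(t_i)$, where $\theta$ is a fixed continuous concave decreasing function with $\theta(1)=0$. The goal is to lower bound $\E[\ALG]$ by a functional of $\theta$ that holds uniformly over all instances with $\E[X^*]=1$, and then to pick $\theta$ so that this functional is at least $0.6902$.

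\textbf{Step 1: the layer-cake identity.} Write
\[
\E[\ALG]=\int_0^\infty \PP[\ALG\ge v]\,dv .
\]
Let $\tau$ be the stopping time. For $v\le\theta(\tau)$ we have $\ALG\ge v$ whenever the algorithm stops. For larger $v$, we need the accepted value itself to exceed $v$. Let $S(t)$ be the probability that nothing is accepted before time $t$. For a variable $i$, let $S_{-i}(t)$ be the same quantity computed without variable $i$. Then $S_{-i}(t)\ge S(t)$, and the product form over independent variables gives $S(t)=\prod_j\bigl(1-\int_0^t \PP[X_j\ge\theta(s)]\,ds\bigr)$. This yields
\[
\E[\ALG]\ \ge\ \int_0^1 \theta(t)\,\bigl(-S'(t)\bigr)\,dt\;+\;\sum_i\int_0^1 S_{-i}(t)\,\E\bigl[(X_i-\theta(t))^+\bigr]\,dt .
\]

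\textbf{Step 2: comparison with the prophet.} The benchmark satisfies $1=\E[X^*]\le \theta+\sum_i\E[(X_i-\theta)^+]$ for every $\theta$. To keep $S_{-i}$ from being too weak, I would use the standard prophet-secretary device. For each $i$, we have $S_{-i}(t)\ge S(t)$, and more sharply $S_{-i}(t)\ge S(t)/(1-q_i t)$, where $q_i$ is the acceptance mass of $i$. The worst case is reached by letting the instance become ``small'', meaning many variables each with tiny probability mass. The reason is that, in $S(t)=\exp(-\int_0^t \lambda(s)\,ds)$ with $\lambda(s)=\sum_j \PP[X_j\ge\theta(s)]$, the per-variable correction only helps. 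I would therefore reduce to a Poisson-type instance described by a single decreasing tail function $\lambda(v)$, the expected number of arrivals with value at least $v$, under the constraint
\[
\E[X^*]=\int_0^\infty\bigl(1-e^{-\lambda(v)}\bigr)\,dv=1 .
\]
In this limit, $S(t)=\exp\bigl(-\int_0^t\lambda(\theta(s))\,ds\bigr)$. The ratio becomes a functional $\Phi_\theta[\lambda]$, and we must show
\[
\inf_\lambda \Phi_\theta[\lambda]\ \ge\ 0.6902 .
\]
Justifying that this small-mass limit really is the worst case for fixed $\theta$ requires care. I would handle it by a splitting argument, replacing each $X_j$ by many independent pieces that preserve $\E[X^*]$. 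This is where continuity and concavity of $\theta$ enter.

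\textbf{Step 3: the adversary's problem (main obstacle).} For fixed $\theta$ we must minimize $\Phi_\theta[\lambda]$ over decreasing $\lambda$ subject to the mean constraint, and this is the hard step. I would first dualize the constraint with a multiplier $\mu$. I would then show that the minimization decouples pointwise in $v$, with each value level $v$ interacting with $\theta$ only through the hitting time $\theta^{-1}(v)$. My guess is that the extremal $\lambda$ is a step or two-point family. With that structure, the bound reduces to a one- or two-parameter inequality, which I would verify numerically on a fine grid with a rigorous Lipschitz error margin.

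\textbf{Step 4: choosing $\theta$.} I would choose $\theta$ as a piecewise-linear or polynomial concave function, optimized numerically so that the worst-case ratio exceeds $0.6902$. This choice is expected to improve on $1-e^{t-1}$, which gives only $1-1/e$.
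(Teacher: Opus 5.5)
Your Step~2 fails, and not for lack of care: the claim that, for a fixed $\theta$, the worst case is the small-mass (Poisson) limit is false. In that limit the arrivals above the thresholds form a Poisson process described by a single tail function $\lambda$, which is exactly the $n\to\infty$ i.i.d.\ model. For its forty-slope threshold the paper certifies $0.7419$ in that model. If small instances were always worst, that threshold would be $0.7419$-competitive for prophet secretary.

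But the paper proves that \emph{every} deterministic continuous strictly decreasing threshold has asymptotic worst-case ratio at most $0.694451$ in prophet secretary. Its hard inputs are exactly the configurations your reduction discards: one variable equal to $\theta(u)$ with probability $1$ (or $1/2$), plus many copies of a floor value $\theta(s)$. When $X_1=\theta(u)$ deterministically, that variable is accepted only if it arrives in $(u,s)$, and it is never replaced if it arrives before $u$. In the Poisson model, the same law of $X^*$ forces a value of at least $\theta(u)$ to arrive almost immediately after time $u$.

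The heuristic that ``the per-variable correction only helps'' overlooks how survival depends on the instance. For a fixed law of $X^*$, the true survival satisfies $S(t)\ge\exp\bigl(\int_0^t\log(1-p^*(v))\,dv\bigr)$, with equality only in the Poisson limit. Splitting therefore moves the stopping time, and, as the example shows, it can strictly \emph{increase} the reward. That is the opposite of what your reduction needs, so no splitting argument can rescue it.

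Step~3 has a further problem. Even in the Poisson model the objective does not decouple pointwise in $v$, because $S(t)$ depends on the whole path $\int_0^t\lambda(\theta(s))\,ds$. This is why the paper needs the concave value functions $V_t(y,x)$ and a large LP already for i.i.d.\ inputs, and your guess of a two-point extremal family is unsupported.

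The missing idea is a way to treat non-small instances directly, without knowing the law of $\tau$. The paper keeps the individual factors:
\begin{itemize}
\item It writes the utility as an $X^*$-based baseline $\E[F_{p^*}(\tau)]$ plus an extra gain $\int_0^1 b(s)\int_s^1 S(t)\,B_s(t)\,dt\,ds$, where $B_s(t)=\sum_i p_i(s)/(1-P_i(t))-p^*(s)\ge 0$.
\item It lower-bounds this gain by expressions depending only on $p^*(s)=\PP(X^*>\theta(s))$ and the realized stopping time. One is a logarithmic bound from $P_i(s+u)\ge P_i(s)+u\,p_i(s)$ combined with a product-merging inequality; the other compares the stopping probability with the arrival of $X^*$.
\item Since $p^*$ does not determine the law of $\tau$, it proves the resulting bound $\Gamma_{p^*}(\tau)\ge\gamma$ pointwise for every $\tau\in[0,1]$ and every nondecreasing $p^*$. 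Before doing so, it adds a correction term of nonpositive expectation built from the survival bound above.
\end{itemize}
The value $0.6902$ is then certified by a factor-revealing LP after linearizing the nonlinear terms. Your Steps~1 and~4 are compatible with this, but Step~2 must be replaced by an argument of this type.
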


We also provide a simple (one-page) analysis showing that a two-segment piecewise-linear theta gives a ratio of at least $0.6385$, beating the classic $1-1/e$ ratio by~\cite{conf/soda/EhsaniHKS18}.
We then show that no deterministic precommitted threshold function guarantees an asymptotic ratio above $0.694451$.

\smallskip

We plot the threshold functions in Figure~\ref{fig:mean-only-thresholds}.
We also summarize and compare other guarantees under different information models in Table~\ref{tab:information-ratios}.

\begin{figure}[!htbp]
\centering
\begin{minipage}[t]{0.4\textwidth}
\centering
\begin{tikzpicture}
\begin{axis}[
    width=0.97\linewidth, height=4.5cm,
    xmin=0, xmax=1, ymin=0, ymax=1,
    xlabel={$t$}, ylabel={$\theta(t)$},
    xlabel style={at={(axis description cs:1,0)}, anchor=west, xshift=6pt},
    title={(a) I.I.D. Prophet},
    title style={font=\small},
    label style={font=\small}, tick label style={font=\small},
    xtick={0,0.25,0.5,0.75,1}, ytick={0,0.2,0.4,0.6,0.8,1},
    axis lines=left, tick align=outside,
    grid=major, grid style={black!10},
    scaled ticks=false,
]
\addplot[black!60, dashed, line width=0.9pt] coordinates {
    (0,1) (1,0)
};
\addplot[inkblue, line width=1.3pt] coordinates {
    (0,0.7419615683922162)
    (0.025,0.7355106076020216)
    (0.05,0.7285923252134327)
    (0.075,0.7214991111870098)
    (0.1,0.7141229394025917)
    (0.125,0.7064524097941529)
    (0.15,0.6984679273206473)
    (0.175,0.6901566241840774)
    (0.2,0.6814964681135451)
    (0.225,0.6724722944983229)
    (0.25,0.663059893119096)
    (0.275,0.6532408845494443)
    (0.3,0.6429886465201203)
    (0.325,0.6322814957447677)
    (0.35,0.6210891297610388)
    (0.375,0.6093871106314772)
    (0.4,0.5971399876309874)
    (0.425,0.5843200948203213)
    (0.45,0.5708868082460946)
    (0.475,0.5568075075455743)
    (0.5,0.5420367564877159)
    (0.525,0.5265350354412825)
    (0.55,0.5102515358182379)
    (0.575,0.4931393341365643)
    (0.6,0.4751397160306222)
    (0.625,0.4561992466825068)
    (0.65,0.4362479300257943)
    (0.675,0.4152248016224289)
    (0.7,0.3930481712746234)
    (0.725,0.3696458018462857)
    (0.75,0.3449246462975888)
    (0.775,0.3187968757277445)
    (0.8,0.2911566942060563)
    (0.825,0.2618988586517287)
    (0.85,0.2308995719332495)
    (0.875,0.1980369876457899)
    (0.9,0.1631620635352875)
    (0.925,0.1261339503672213)
    (0.95,0.0867759300198585)
    (0.975,0.0453829804826449)
    (1,0)
};
\end{axis}
\end{tikzpicture}
\end{minipage}
\hspace{30pt}
\begin{minipage}[t]{0.4\textwidth}
\centering
\begin{tikzpicture}
\begin{axis}[
    width=0.97\linewidth, height=4.5cm,
    xmin=0, xmax=1, ymin=0, ymax=1,
    xlabel={$t$}, ylabel={$\theta(t)$},
    xlabel style={at={(axis description cs:1,0)}, anchor=west, xshift=6pt},
    title={(b) Prophet Secretary},
    title style={font=\small},
    label style={font=\small}, tick label style={font=\small},
    xtick={0,0.25,0.5,0.75,1}, ytick={0,0.2,0.4,0.6,0.8,1},
    axis lines=left, tick align=outside,
    grid=major, grid style={black!10},
    scaled ticks=false,
]
\addplot[black!60, dashed, line width=0.9pt] coordinates {
    (0,{2/3}) ({2/3},{1/3}) (1,0)
};
\addplot[inkteal, line width=1.3pt] coordinates {
    (0,0.71987853048)
    (0.05,0.697157345745)
    (0.1,0.67443616101)
    (0.15,0.651714976275)
    (0.2,0.62899379154)
    (0.25,0.606272606805)
    (0.3,0.58355142207)
    (0.35,0.5605213421)
    (0.4,0.535934859795)
    (0.45,0.50937548407)
    (0.5,0.48123512577)
    (0.55,0.451412318035)
    (0.6,0.420814514945)
    (0.65,0.388682900985)
    (0.7,0.35518907771)
    (0.75,0.319841642715)
    (0.8,0.276142125695)
    (0.85,0.227177521185)
    (0.9,0.168117391995)
    (0.95,0.102213400975)
    (1,0)
};
\end{axis}
\end{tikzpicture}
\end{minipage}
\caption{The threshold functions for the i.i.d. and prophet secretary settings, normalized by $\E\!\left[X^*\right]=1$. We plot the thresholds for the main results in solid colors and the simple thresholds in dashed gray. We remark that the choices of one- and two-segment linear functions are not optimized. We choose the particular functions mainly for convenience of analysis and exposition.}
\label{fig:mean-only-thresholds}
\end{figure}

% \vspace{-40pt}

% Information models and selected guarantees; all ratios concern expected reward.
\begin{table}[htp]
\centering
\caption{Selected guarantees with limited distribution information.}
\label{tab:information-ratios}
\begingroup
\footnotesize
\setlength{\tabcolsep}{4pt}
\renewcommand{\arraystretch}{1.6}
\begin{tabular}{@{}>{\raggedright\arraybackslash}p{0.27\textwidth}
                    >{\centering\arraybackslash}p{0.17\textwidth}
                    >{\centering\arraybackslash}p{0.18\textwidth}
                    >{\raggedright\arraybackslash}p{0.34\textwidth}@{}}
\toprule
Information & I.I.D. Prophet & Prophet Secretary & Remark \\
\midrule
$o(n)$ samples & $1/e$ & $1/e$
 & Adaptive~\cite{conf/sigecom/CorreaDFS19}. \\
$\rho \cdot n$ samples, $\rho\le 1/\left(e-1\right)$ & $\left(1+\rho\right)/e$ & ---
 & Adaptive~\cite{conf/innovations/CorreaDFSZ21}. \\
One sample per distribution & $0.671$ & $0.63518$
 & Adaptive~\cite{journals/mor/CorreaCES24,conf/soda/CorreaCES20}. \\
$O\!\left(n/\varepsilon\right)$ samples in total & $\beta^*-\varepsilon$ & ---
 & Adaptive; $n\to\infty$~\cite{journals/mor/CorreaCES24}. \\
$O_\varepsilon\!\left(1\right)$ samples per distribution
 & $\beta^*-\varepsilon$ & $\beta^*_{\mathrm{PS}}-\varepsilon$
 & I.I.D.:~\cite{conf/innovations/RubinsteinWW20}; secretary:~\cite{conf/stoc/CristiZ24}. \\
\midrule
One $1/e$-quantile of $X^*$ & $1-1/e$ & $1-1/e$
 & Single threshold~\cite{conf/soda/EhsaniHKS18}. \\
The distribution of $X^*$ & $\beta^*$ & $0.6724$
 & Non-adaptive~\cite{conf/sigecom/CorreaFHOV17,conf/soda/Harb25}. \\
One quantile query
 & $0.6862$ & ---
 & Adaptive; $n\to\infty$~\cite{journals/ai/FengLLWW25}. \\
$2\sim 10$ quantile queries
 & $0.7080 \sim 0.7428$ & ---
 & Non-adaptive~\cite{journals/mor/PerezSalazarST26}. \\
\midrule
$\E\!\left[X^*\right]$ & $1-1/e$ & $1-1/e$
 & Non-adaptive~\cite{conf/soda/EhsaniHKS18}. \\
Priors with error at most $\delta$
 & $\beta^*-\frac{5\cdot n\cdot \delta}{\E\!\left[X^*\right]}$ & $0.6724-\frac{5\cdot n\cdot \delta}{\E\!\left[X^*\right]}$
 & Non-adaptive; values in $\left[0,1\right]$~\cite{conf/sigecom/DuttingK19}. \\
\midrule
Full distributions & $\beta^*\approx0.7454$ & $0.688$
 & Non-adaptive~\cite{conf/sigecom/CorreaFHOV17,conf/soda/ChenHLT25}. \\
\textbf{$\E\!\left[X^*\right]$}
 & $\boldsymbol{0.7419}$ & $\boldsymbol{0.6902}$
 & \textbf{Non-adaptive. This work.} \\
\bottomrule
\end{tabular}
\par\smallskip
\begin{minipage}{\textwidth}
\footnotesize
We write $\beta^*_{\mathrm{PS}}$ for the optimal full-information prophet secretary ratio, whose exact value is unknown.
Here, adaptive/non-adaptive refers to whether the acceptance rule depends on previously observed online values.
\end{minipage}
\endgroup
\end{table}

\subsection{Our Techniques}

The key to our algorithms, in both the i.i.d. and prophet secretary settings, is to design a threshold function that works well against all distributions with the same expected maximum (we assume that $\E\!\left[X^*\right]=1$). We explain our approach through the i.i.d. setting.

\paragraph{Reduction to a Max--min Program.}
We describe the threshold function $\theta$ by its drop rate $b(t)=-\theta'(t)$. Since we seek a decreasing concave threshold, we choose $b$ to be positive and nondecreasing. Intuitively, we lower the threshold slowly at the beginning, when we can afford to wait, and more quickly near the end, when fewer opportunities remain. The main difficulty is deciding how quickly to lower it without knowing the distributions. By analyzing the expected reward of the algorithm, we obtain a lower bound $L\!\left(b,q\right)$ involving two functions of continuous time. We control $b$, while $q$ encodes the unknown distribution. We then require our guarantee to hold for every admissible $q$, so that it applies to every distribution with this expected maximum. Thus, we choose $b$ to maximize the worst-case lower bound:
\begin{equation*}
 \max_b\left\{\inf_q\left\{L\!\left(b,q\right)\right\}\right\}.
\end{equation*}
For each fixed $q$, we obtain a linear constraint on $b$. We therefore face a program with infinitely many constraints, one for each possible $q$. Our main analysis centers on how to represent these constraints concisely and obtain a good feasible solution. To illustrate the idea, consider the linear threshold $\theta(t)=1-t$, for which $b(t)=1$. Although we still need to account for infinitely many $q$, we can bound their effects using one continuous function of a single variable. We then minimize this one-dimensional function using elementary calculations and numerical evaluation, establishing a competitive ratio of $0.6906$.

\paragraph{Characterization of Infinitely Many Distributions via Factor-revealing LP.}
For a general drop rate $b$, we can no longer summarize the effect of $q$ using only one variable. We address this difficulty through discretization. We first restrict $b$ to a step function, which corresponds to a piecewise linear $\theta$. We partition the time interval $\left[0,1\right]$ into finitely many intervals and cover the possible values of $q$ by finitely many intervals, including one extending to infinity. For a fixed $b$, we introduce functions $V_t\!\left(y,x\right)$ that lower-bound the remaining contribution to a reward bound of the algorithm from time $t$. Informally, we use $x$ and $y$ to summarize what has already accumulated and what remains. 
In particular, the functions $V_t$ account for all admissible $q$, so they do not depend on a particular unknown distribution. We obtain
\begin{equation*}
 L\!\left(b,q\right)\ge\theta(0)+V_0\!\left(1,0\right)
 \qquad\text{for every admissible }q.
\end{equation*}
We further discretize time $t$ and the two state coordinates $y$ and $x$. Since $V_t$ is concave in its state coordinates, we can lower-bound its value between grid points by weighted averages of the values at nearby vertices. We introduce LP variables for lower bounds at these vertices and connect consecutive times using our bounds for each interval of acceptance rates. With a sufficiently fine discretization, we certify the ratio $0.7419$. It remains to find good drop rates $b$. We use a heuristic search by constraint enumeration: we start with candidate rates $q$, optimize $b$ against them (this is possible because the max--min program introduced above is linear in $b$), and then search for additional rates $q$ that give a smaller reward bound. After choosing $b$, we use the factor-revealing LP to certify its guarantee against every admissible $q$.

\subsection{Related Work}

\paragraph{I.i.d. selection and random arrival.}
Hill and Kertz~\cite{journals/aop/HillK82} characterized the sharp finite-horizon competitive ratios for i.i.d. prophet inequalities, and Kertz~\cite{journals/jmva/Kertz86} determined the corresponding asymptotic constant $\beta^*$. \cite{conf/stoc/AbolhassaniEEHKL17} improved the classical $1-1/e$ guarantee to $0.738$ for sufficiently large $n$. Independently, \cite{conf/sigecom/CorreaFHOV17} achieved the optimal ratio $\beta^*$ for every $n$ using distribution-dependent quantile thresholds.
For prophet secretary, \cite{journals/siamdm/EsfandiariHLM17} established an asymptotic $1-1/e$ guarantee using only $\E\!\left[X^*\right]$. With known distributions, Azar, Chiplunkar, and Kaplan~\cite{conf/sigecom/AzarCK18} were the first to surpass the $1-1/e$ barrier. The guarantee was subsequently improved to $0.669$~\cite{conf/soda/CorreaSZ19}, $0.6724$~\cite{conf/soda/Harb25}, and $0.688$~\cite{conf/soda/ChenHLT25}. %Our results improve the competitive ratios achievable with knowledge of $\E[X^*]$ alone.

\paragraph{Arrival-order models and information.}
Arsenis, Drosis, and Kleinberg~\cite{conf/soda/ArsenisDK21} studied constrained-order prophet inequalities, where the algorithm chooses an arrival order from a prescribed set of permutations. For unrestricted order selection, Peng and Tang~\cite{conf/focs/PengT22} obtained a ratio of approximately $0.7251$. Bubna and Chiplunkar~\cite{conf/sigecom/BubnaC23} improved this to $0.7258$ and established a strict separation from random arrivals. Giambartolomei, Mallmann-Trenn, and Saona~\cite{journals/corr/GiambartolomeiMS23} proved a $0.7235$ upper bound for order-unaware random arrivals. Separately, \cite{conf/soda/EzraFGT23} studied algorithms without advance order information against an optimal online algorithm with advance knowledge of that order. 

\paragraph{Limited distributional information.}
For unknown i.i.d. distributions, \cite{conf/sigecom/CorreaDFS19} showed that the asymptotic competitive ratio cannot exceed $1/e$ with $o(n)$ total offline samples. Rubinstein, Wang, and Weinberg~\cite{conf/innovations/RubinsteinWW20} showed that $O_\varepsilon(n)$ total samples suffice to approach $\beta^*$ within any fixed accuracy $\varepsilon>0$. \cite{conf/innovations/CorreaDFSZ21} improved guarantees for linear sample budgets and developed streaming algorithms. Related work connected sample-based selection to the two-sided game of Googol~\cite{conf/soda/CorreaCES20} and studied rank-based optimal stopping with independent sampling~\cite{journals/mor/CorreaCES24}. Cristi and Ziliotto~\cite{conf/stoc/CristiZ24} showed that $O_\varepsilon(1)$ samples per distribution suffice to approach the optimal full-information guarantees within $\varepsilon$ in the i.i.d., prophet secretary, and free-order settings.
\cite{journals/ai/FengLLWW25} obtained an asymptotic competitive ratio of $0.6862$ given a single suitable value--quantile pair for the common distribution. 
\cite{journals/mor/PerezSalazarST26} showed that strictly higher ratios are possible with more (but still constant) quantile queries.

\paragraph{Pricing and sufficient conditions.}
Revenue--utility decomposition is a standard tool in pricing-based analyses of prophet inequalities, including balanced-threshold arguments~\cite{conf/stoc/KleinbergW12} and posted-pricing techniques~\cite{conf/soda/FeldmanGL15}. 
\cite{conf/focs/DuttingFKL17} developed a general balanced-pricing framework that derives prophet guarantees from sufficient conditions on prices. Further work extends prophet inequalities to subadditive combinatorial auctions~\cite{conf/focs/DuttingKL20,conf/stoc/CorreaC23}, general downward-closed constraints~\cite{conf/stoc/Rubinstein16}, and combinatorial valuation functions~\cite{conf/soda/RubinsteinS17}. Online contention resolution schemes provide another framework for combinatorial feasibility constraints~\cite{conf/soda/FeldmanSZ16}.

\section{Preliminaries}\label{sec:preliminaries}

Let $X_1,\ldots,X_n$ be independent nonnegative random variables. Write $X^*=\max_{1\le i\le n}\left\{X_i\right\}$.
By scaling all variables, we can assume without loss of generality that $\E\!\left[X^*\right]=1$.
The prophet observes all realizations and receives $X^*$. The online algorithm may accept at most one value, and must decide whether to accept a value when it arrives. Its reward is $\ALG$, with reward zero if it accepts nothing. An algorithm is $\gamma$-competitive if $\E\!\left[\ALG\right]\ge\gamma$ on every admissible input.
The algorithm knows only $n$ and $\E\!\left[X^*\right]$. In the \emph{i.i.d. setting}, the variables have the same unknown distribution. In the \emph{prophet secretary setting}, their unknown distributions may differ, and their arrival order is a uniformly random permutation independent of the values. 
For the analysis, we represent the random order of variables by independent arrival times $t_i$ drawn uniformly from $\left[0,1\right]$ for variable $X_i$, independent of other variables. 

A precommitted threshold algorithm chooses a decreasing function $\theta$ before seeing any variables and accepts the first value satisfying $X_i > \theta\left(t_i\right)$.
Throughout this paper (for both positive and negative results), we restrict our attention to natural threshold functions, which are continuous and strictly decrease to zero. Write
\begin{equation*}
 b(t)=-\theta'(t)\ge0,\qquad
 \theta(0)=\int_0^1 b(t)\,dt \leq 1,
 \qquad \theta(t)=\int_t^1 b(s)\,ds.
\end{equation*}
The \emph{drop rate} $b$ describes how the algorithm relaxes its acceptance requirement. A nondecreasing $b$ makes $\theta$ concave. The threshold functions in Theorems~\ref{thm:iid} and~\ref{thm:secretary} use a positive nondecreasing step function $b$, so they are continuous, piecewise linear, and concave. 

Figure~\ref{fig:threshold-algorithm} illustrates the threshold rule;
$\tau$ denotes the time at which a value is accepted.

\begin{figure}[!htbp]
\centering
\begin{tikzpicture}[x=7cm,y=3cm,>={Stealth[length=2mm]}]
\fill[inkblue!4] (0,0)--(0,.76) .. controls (.45,.65) and (.76,.41) .. (1,0)--cycle;
\draw[->] (0,0)--(1.055,0) node[right] {$t$};
\draw[->] (0,0)--(0,1) node[above] {value};
\draw[very thick,inkblue] (0,.76) .. controls (.45,.65) and (.76,.41) .. (1,0);
\node[inkblue,anchor=south west] at (.20,.73) {$\theta(t)$};
\foreach \x/\y in {.10/.21,.23/.37,.34/.18} {\fill[black!45] (\x,\y) circle[radius=1.6pt];}
\draw[densely dashed,inkteal] (.48,0)--(.48,.82);
\fill[inkteal] (.48,.82) circle[radius=2.4pt];
\node[below] at (.48,0) {$\tau$};
% \draw[thick,inkteal] (.495,0.56131886)--(.495,.82);
\draw[decorate,decoration={brace,mirror,amplitude=4pt},inkteal] (.5,0.56131886)--(.5,.82) node[midway,right=5pt,font=\small] {utility};
\draw[decorate,decoration={brace,amplitude=4pt},inkteal] (.46,0.01)--(.46,.54) node[midway,left=5pt,font=\small] {revenue};
\draw[densely dashed,inkblue] (0,0.56131886)--(.48,0.56131886);
\node[left,font=\small] at (0,0.56131886) {$\theta\left(\tau\right)$};
\filldraw[draw=inkgold,fill=white,thick] (.79,0.98) circle[radius=2.3pt];
\node[right,inkgold] at (.79,0.98) {$X^*$};
\filldraw[draw=black!45,fill=white] (.90,.31) circle[radius=1.6pt];
\node[below] at (0,0) {$0$};\node[below] at (1,0) {$1$};
\end{tikzpicture}
\caption{An illustration: Filled gray variables are rejected, and a variable is accepted at $\tau$.}
\label{fig:threshold-algorithm}
\end{figure}

We write $(x)^+=\max\!\left\{x,0\right\}$ and $\mathbf{1}\left(E\right)$ for the indicator of an event $E$.

\section{I.I.D. Variables}
\label{sec:iid-analysis}

In the i.i.d. setting, all variables have the same probability of exceeding the threshold and the same conditional expected value.
This allows us to give a cleaner characterization of $\E\!\left[\ALG\right]$.
We establish a ratio of $0.7419$ by characterizing the expected reward.

\subsection{Characterizing the Expected Reward}
\label{sec:iid-reduction}

We first analyze the case when $n\to \infty$.
In this subsection, $X^*$ denotes their maximum, set to zero if there are no observations, and we normalize $\E\!\left[X^*\right]=1$.
Section~\ref{sec:iid-transfer} transfers the resulting lower bound to every fixed number of observations.

For i.i.d. variables and fixed threshold $\theta$, define
\begin{equation*}
    q(t) = n\cdot\PP\!\left(X_1>\theta(t)\right),
\end{equation*}
which denotes the expected number of values exceeding $\theta(t)$ over the whole time horizon.
The function $q$ is nondecreasing.
Recall that we use $\tau$ to denote the stopping time, and set $\tau=1$ if no variable is accepted.
Define
\begin{equation*}
 S(t)=\PP\!\left(\tau>t\right)=\exp\left(-\int_0^t q(s)\,ds\right)
 \qquad \forall t\in [0,1).
\end{equation*}
We further define $S\left(1\right)$ as the probability of no acceptance. Define
\begin{equation*}
 A(t)=\int_t^1S(s)\,ds=\E\!\left[\left(\tau-t\right)^+\right].
\end{equation*}
Note that $A(t)$ measures the area below the survival function after time $t$, and $A(0)=\E[\tau]$.

We characterize the reward $\E[\ALG]$.
A variable arriving at time $t$ is accepted when its value exceeds $\theta(t)$ and no earlier arrival was accepted.
Since the variables are independent, we have
\begin{align*}
 \E\!\left[\ALG\right]
 &= \int_0^1 S(t)\cdot n\cdot \E\!\left[X_1\cdot\mathbf{1}\left(X_1>\theta(t)\right)\right]dt\\
 &=\int_0^1S(t)\cdot\left(
      \theta(t)\cdot q(t)+n\cdot\int_{\theta(t)}^\infty
        \PP\!\left(X_1>x\right)\,dx\right)\,dt\\
 &=\int_0^1S(t)\cdot\left(
      \theta(t)\cdot q(t) + n\cdot\int_{\theta(t)}^{\theta(0)}
        \PP\!\left(X_1>x\right)\,dx + n\cdot\int_{\theta(0)}^\infty
        \PP\!\left(X_1>x\right)\,dx\right)\,dt\\
 &=\int_0^1 S(t)\cdot\left(
      \theta(t)\cdot q(t)+\int_0^t b(s)\cdot q(s)\,ds
      +n\cdot\E\!\left[ \left( X_1-\theta(0)\right)^+\right]\right)\,dt,
\end{align*}
where the last equality follows by substituting $x=\theta(s)$.
We evaluate the first two terms and lower-bound the third term.
For the first term:
\begin{align*}
 \int_0^1S(t)\cdot\theta(t)\cdot q(t)\,dt
 & =-\int_0^1\theta(t)\,dS(t) \\
 & =-\left[\theta(t)\cdot S(t)\right]_0^1 + \int_0^1S(t)\,d\theta(t)
 =\theta(0)-\int_0^1b(t)\cdot S(t)\,dt,
\end{align*}
where we use $S'=-q\cdot S$, $S(0)=1$, $\theta\left(1\right)=0$, and $\theta'=-b$.
For the second term:
\begin{align*}
 \int_0^1S(t)\cdot\int_0^t b(s)\cdot q(s)\,ds\,dt
 =\int_0^1b(s)\cdot q(s)\cdot\int_s^1S(t)\,dt\,ds
 =\int_0^1b(t)\cdot q(t)\cdot A(t)\,dt.
\end{align*}

Finally, using a union bound and $\E\!\left[X^*\right]=1$, we obtain
\begin{align*}
 n\cdot\E\!\left[\left(X_1-\theta(0)\right)^+\right]
 &=\int_{\theta(0)}^\infty n\cdot\PP\!\left(X_1>x\right)\,dx\\
 &\ge\int_{\theta(0)}^\infty\PP\!\left(X^*>x\right)\,dx\\
 &=1-\theta(0)
   +\int_0^{\theta(0)}\PP\!\left(X^*\le x\right)\,dx\\
 &=1-\theta(0)+\int_0^1b(t)\cdot e^{-q(t)}\,dt,
\end{align*}
where the last equality substitutes $x=\theta(t)$ and uses $\PP\!\left[ X^* \leq \theta(t) \right] = (1-q(t)/n)^n = e^{-q(t)}$.
Putting these together, we obtain the following lower bound on $\E\!\left[\ALG\right]$:
\begin{align*}
 & \theta(0)-\int_0^1 b(t)\cdot\left(S(t)-q(t)\cdot A(t)\right)\,dt
   +A(0)\cdot \left( 1-\theta(0)+\int_0^1b(t)\cdot e^{-q(t)}\,dt \right) \\
=\ & \theta(0) + \left(1-\theta(0)\right)\cdot A(0)
    -\int_0^1b(t)\cdot \left( S(t)-q(t)\cdot A(t)-A(0)\cdot e^{-q(t)} \right)\,dt.
\end{align*}

For ease of notation, we define
\begin{align*}
 k(t) & =S(t)-q(t)\cdot A(t)-A(0)\cdot e^{-q(t)}, \quad \text{ and } \\
 L & =\theta(0)+\left(1-\theta(0)\right)\cdot A(0)-\int_0^1b(t)\cdot k(t)\,dt.
\end{align*}

It is easy to verify that $0\le k(t)\le S(t)\le1$ (see Lemma~\ref{lem:k-bounds} for a proof).
We therefore obtain the following lemma immediately, by the definition of $k$ and $L$.

\begin{lemma}\label{lem:iid-function}
The expected reward satisfies $\E\!\left[\ALG\right]\ge L$.
\end{lemma}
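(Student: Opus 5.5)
The plan is to assemble the three-term decomposition derived just above the statement. No new estimate is needed; the work is to check each identity and that the only inequality is applied to a term with a nonnegative coefficient.

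\textbf{Step 1: the exact reward formula.} I will first record
\[
\E[\ALG]=\int_0^1 S(t)\cdot n\cdot\E\!\left[X_1\mathbf{1}(X_1>\theta(t))\right]dt .
\]
This holds in the continuous-time (Poissonized, $n\to\infty$) model: arrivals exceeding $\theta(t)$ form a Poisson process of intensity $q(t)$, and survival up to $t$ is independent of the value arriving at $t$. Using $\E[X_1\mathbf{1}(X_1>\theta)]=\theta\PP(X_1>\theta)+\int_\theta^\infty\PP(X_1>x)\,dx$, I split the tail integral at $\theta(0)$. On $[\theta(t),\theta(0)]$ I substitute $x=\theta(s)$, which is legitimate because $\theta$ is continuous and strictly decreasing, giving $\int_0^t b(s)q(s)\,ds$.

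\textbf{Step 2: the two exact terms.} I integrate $\int_0^1 S\theta q\,dt$ by parts using $S'=-qS$, $S(0)=1$ and $\theta(1)=0$. Boundary care at $t=1$ is needed only in that $\theta(1)S(1)=0$, so the convention for $S(1)$ is irrelevant. This gives $\theta(0)-\int_0^1 bS\,dt$. The double integral I handle by Fubini, which applies because the integrand is nonnegative, giving $\int_0^1 b\,q\,A\,dt$.

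\textbf{Step 3: the only inequality.} The third term is $n\E[(X_1-\theta(0))^+]$, the same for every $t$, multiplied by $\int_0^1 S(t)\,dt=A(0)\ge 0$. I lower-bound it with the union bound $n\PP(X_1>x)\ge\PP(X^*>x)$. Then I write
\[
\int_{\theta(0)}^\infty\PP(X^*>x)\,dx=\E[X^*]-\int_0^{\theta(0)}\PP(X^*>x)\,dx=1-\theta(0)+\int_0^{\theta(0)}\PP(X^*\le x)\,dx .
\]
Finally I substitute $x=\theta(t)$ and use $\PP(X^*\le\theta(t))=e^{-q(t)}$ in the limit model.

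Collecting terms yields exactly $L$, so $\E[\ALG]\ge L$. The main point needing care is that the lower bound in Step 3 is multiplied by the nonnegative factor $A(0)$, so the inequality direction is preserved. A secondary point is justifying the limiting identities $S(t)=e^{-\int_0^t q}$ and $(1-q/n)^n=e^{-q}$ as the model in which the lemma is stated. The transfer to finite $n$ is deferred to Section~\ref{sec:iid-transfer}, so here I treat these as the defining assumptions of the limit model.
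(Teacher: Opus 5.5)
Your proposal is correct and follows the paper's own derivation step for step: the reward formula is split into three terms, the first is handled by integration by parts, the second by Fubini, and the $\theta(0)$-excess term by the union bound, with the resulting lower bound multiplied by $A(0)\ge 0$. Your Poissonized reading of the $n\to\infty$ model is a clean way to make $S(t)=e^{-\int_0^t q}$ and $\PP(X^*\le\theta(t))=e^{-q(t)}$ exact; this matches how the paper treats this section before the transfer in Section~\ref{sec:iid-transfer}.
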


We have reduced lower bounding the competitive ratio to the following max--min program.
The variables are $L$ and $b(t)$.
We choose the drop rate $b$ to maximize the guaranteed value of $L$ against every admissible $q$, which encodes the unknown input.
\begin{align*}
 \text{maximize}\quad &L\\
 \text{subject to}\quad
 &b(t)\ge0,\qquad\int_0^1b(t)\,dt\le1,\\
 &L\le\theta(0)+\left(1-\theta(0)\right)\cdot A(0)
       -\int_0^1b(t)\cdot k(t)\,dt
       \qquad\forall q.
\end{align*}
Here an admissible $q$ is any nonnegative nondecreasing rate.
Each such $q$ supplies a constraint, with $S$, $A$, and $k$ computed from that rate.
Every distribution with $\E\!\left[X^*\right]=1$ supplies an admissible $q$, so a lower bound valid for all these rates also holds for every such input.

To approximate a continuous drop rate $b$, we partition the time horizon into $N$ intervals of length $1/N$ and make $b(t)$ constant on each interval.
The resulting threshold curve is piecewise linear.
This discretization approximates the unrestricted max--min value.
Indeed, writing $L\!\left(b,q\right)$ for $L$ with fixed $b$ and $q$, we have the uniform Lipschitz estimate:
\begin{align*}
 \left|L\!\left(b,q\right)-L\!\left(\widehat b,q\right)\right|
 =\left|\int_0^1
     \left(1-A(0)-k(t)\right)\cdot\left(b(t)-\widehat b(t)\right)\,dt
   \right|
 \le\int_0^1\left|b(t)-\widehat b(t)\right|\,dt.
\end{align*}
The inequality uses $0\le A(0)\le1$ and $0\le k(t)\le1$.
Consequently, the optimal values of the discretized programs converge to the unrestricted value as $N\to\infty$.

\subsection{Linear Threshold}
\label{sec:iid-linear}

In this section, we consider the linear threshold $\theta(t)=1-t$, which corresponds to the
coarsest discretization of $N=1$: one interval with $b(t)=1$ and
$\theta(0)=1$.
This serves as a warm-up and establishes a competitive ratio $0.6906$.
By the definition of $L$, for $\theta(t)=1-t$, we have
\begin{align*}
 L
 &=1-\int_0^1k(t)\,dt\\
 &=1-A(0)+\int_0^1 q(t)\cdot A(t)\,dt+A(0)\cdot\int_0^1e^{-q(t)}\,dt\\
 &=1-A(0)+\int_0^1 q(t)\cdot \int_t^1 S(s)\,ds  dt+A(0)\cdot\int_0^1e^{-q(t)}\,dt\\
 &=1-A(0)+\int_0^1 \left( S(s)\cdot \int_0^s q(t) dt \right) ds +A(0)\cdot\int_0^1e^{-q(t)}\,dt.
\end{align*} 
We now focus on lower-bounding $L$. Its value is determined by the unknown distribution, characterized by the acceptance rate $q$. We show that, for every $q$, we can lower-bound $L$ by a function $\tilde{L}(a)$ of the single variable $a=A(0)=\E\!\left[\tau\right]$, and then minimize over $a$. The main step is to remove $q(t)$ from the integral. We do this by finding the pointwise worst possible rate.
% at each fixed survival coordinate $x(t) :=-\log S(t)=\int_0^t q(s)\,ds$. Since $d x(t)= q(t)\,dt$, we will minimize the contribution per unit increase in $x$, rather than the contribution per unit of time.

\begin{lemma}\label{lem:iid-linear}
There is a convex function $\tilde{L}:\left[0,1\right]\to\left[0,1\right]$ such that, for every nonnegative and nondecreasing acceptance rate $q$, we have $L\ge\tilde{L}\left(A(0)\right)$.
\end{lemma}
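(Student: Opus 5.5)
The plan is to reduce the bound to a pointwise minimization in the cumulative hazard, and to get convexity for free at the end by convexification. Write $x(t)=\int_0^t q(s)\,ds=-\log S(t)$, so that $x$ is convex, nondecreasing, $x(0)=0$, $x'=q$, and $a:=A(0)=\int_0^1 e^{-x(t)}\,dt\in[0,1]$. By the last display before the lemma,
\begin{equation*}
 L = 1-a+\int_0^1 \left( e^{-x(t)}\cdot x(t) + a\cdot e^{-q(t)} \right) dt .
\end{equation*}
Every term other than $1-a$ is nonnegative, so $L\ge 1-a$ already. Hence some valid lower bound $g(a)\ge 1-a$ exists, and the work is to make it as large as possible while keeping it a function of $a$ alone.

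\textbf{Step 1: the key pointwise bound.} Fix $a$ and introduce a multiplier $\lambda\in\mathbb R$ for the constraint $\int_0^1 e^{-x}\,dt=a$:
\begin{equation*}
 L = 1-a-\lambda a+\int_0^1 \left( e^{-x(t)}\cdot\left(x(t)+\lambda\right) + a\cdot e^{-x'(t)} \right) dt .
\end{equation*}
I would lower-bound the integral by the value of a one-dimensional control problem with state $x$ and control $q=x'\ge0$, nondecreasing. I then plan to change variables from time to state, $dt=dx/q$ on the portions where $q>0$. This turns the integrand into a cost per unit increase of $x$,
\begin{equation*}
 \frac{e^{-x}\cdot(x+\lambda)+a\cdot e^{-q}}{q},
\end{equation*}
plus the contribution of an initial stretch where $q=0$, whose integrand is $\lambda+a$. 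The total time used must equal $1$, and this is handled by a second multiplier $\mu$ that adds $\mu/q$.

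Minimizing pointwise over $q>0$ for each state $x$ gives an explicit function $\psi_{\lambda,\mu}(x)$. The bound then reads
\begin{equation*}
 L\ \ge\ g(a):=\sup_{\lambda,\mu}\left\{1-a-\lambda a-\mu+\min\!\left(\text{idle cost},\ \inf_{X\ge0}\int_0^X\psi_{\lambda,\mu}(x)\,dx\right)\right\}.
\end{equation*}
This depends only on $a$, because weak duality holds for every admissible $q$. In the final accounting I drop the monotonicity of $q$, which only weakens the constraint set and so keeps the bound valid. The resulting minimization is elementary calculus, and I expect to evaluate it numerically as in the text.

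\textbf{Step 2: convexity and range.} Set $\tilde L$ to be the lower convex envelope of $\min\{\max\{g,1-a\},1\}$ on $[0,1]$. The convex envelope of a function lies below it, so $\tilde L(a)\le g(a)\vee(1-a)\le L$ still holds for every admissible $q$ with $A(0)=a$. Since $1-a$ is convex and lies below the function, $\tilde L\ge 1-a\ge 0$. Since the function is at most $1$, so is its envelope. Thus $\tilde L$ is convex with values in $[0,1]$, as the lemma requires.

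\textbf{Main obstacle.} The hard step is Step 1, namely justifying the time-to-state change of variables and the two-multiplier duality uniformly over all $q$, including the degenerate cases. These are $q\equiv 0$ on an initial interval, $q$ unbounded near $t=1$ (where $x(1)=\infty$ is possible), and the idle stretch. I would first try to avoid explicit duality altogether: lower-bound $a\cdot e^{-q}$ by its tangent line at a point chosen as a function of $x$. Convexity of $e^{-q}$ in $q$ makes the integrand linear in $q\,dt=dx$, so it integrates exactly in terms of $x(1)$ and $a$, and a crude bound suffices for the lemma because Step 2 repairs convexity.
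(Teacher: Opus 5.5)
Your argument is correct. The statement as written is in fact already satisfied by $\tilde L(a)=1-a$, which you observe at the outset. Everything beyond that concerns the quantitative bound the paper draws after the lemma.

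\textbf{How the paper does it.} The paper's route is your Step~1 with one specific multiplier. It takes $\lambda=-a$ and no time multiplier ($\mu=0$), writing $L-(1-a+a^2)=\int_0^1\bigl(e^{-x}(x-a)+a e^{-q}\bigr)\,dt$. It then sets $\phi_a(x)=\inf_{v>0}\{(e^{-x}(x-a)+a e^{-v})/v\}$, and uses $\phi_a\le 0$ together with $\phi_a=0$ for $x\ge a$ to get $\int_0^{x(1)}\phi_a\ge\int_0^a\phi_a$. This particular choice removes the degenerate cases you list as the main obstacle. The idle-stretch integrand $\lambda+a$ vanishes, the time budget is simply dropped, and the truncation at $x=a$ covers $x(1)=\infty$.

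\textbf{How your Step~1 relates.} Your family contains the paper's choice at $(\lambda,\mu)=(-a,0)$, where your expression reduces exactly to the paper's $1-a+a^2+\int_0^a\phi_a$. So your $g$ dominates the paper's $\tilde L$ pointwise and can only be stronger. One correction: your $\min(\text{idle cost},\cdot)$ should logically be the sum $\min\{0,\lambda+a+\mu\}+\inf_X\int_0^X\psi_{\lambda,\mu}$. The two happen to agree, because $\lambda+a+\mu<0$ forces $\psi_{\lambda,\mu}=-\infty$ near $x=0$.

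\textbf{Convexity.} The paper bounds $-\frac{d^2}{da^2}\int_0^a\phi_a(x)\,dx$ by $13/12<2$. This gives an explicit, strictly convex function whose unique minimizer can then be located numerically. Your lower convex envelope makes convexity free and loses nothing for the later $\min_a$, since a function and its convex envelope on $[0,1]$ have the same infimum. The price is that your $\tilde L$ is no longer explicit. Certifying $\min_a g(a)>0.6906$ would then need separate regularity estimates in $a$, rather than following from convexity of a closed form.
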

\begin{proof}
Fix any $q$. Let $a=A(0)=\int_0^1 S(t)\,dt\in\left(0,1\right]$ and $x(t)=-\log S(t)=\int_0^tq(s)\,ds$. We have
\begin{align*}
 L-\left(1-a+a^2\right)
 &=\int_0^1\left[S(t)\cdot\left( \int_0^t q(s)\,ds - a\right)+a\cdot e^{-q(t)}\right]dt\\
 &=\int_0^1\left(e^{-x(t)}\cdot\left(x(t)-a\right)+a\cdot e^{-q(t)}\right)\,dt.
\end{align*}
Since $x'(t)=q(t)$, it would be convenient to lower bound the above integrand by $c\cdot q(t)$ for some constant $c$.
We therefore seek a pointwise lower bound proportional to $q(t)$. For $x\ge0$, define
\begin{equation*}
 \phi_a(x)=\inf_{v>0}\left\{
   \frac{e^{-x}\cdot\left(x-a\right)+a\cdot e^{-v}}{v}
 \right\}.
\end{equation*}
The definition gives $\phi_a(x)\le0$, and $\phi_a(x)=0$ for $x\ge a$.
For $q(t)>0$, the definition yields
\begin{equation*}
 e^{-x}\cdot \left(x-a\right)+a\cdot e^{-q(t)}\ge q(t)\cdot \phi_a(x).
\end{equation*}
The same inequality follows when $q(t)=0$.
Using $dx(t)=q(t)\,dt$, we obtain
\begin{align*}
 L-\left(1-a+a^2\right)
 \ge\int_0^1q(t)\cdot\phi_a(x)\,dt
 =\int_0^{\int_0^1 q(s)\,ds}\phi_a(x)\,dx
 \ge\int_0^a\phi_a(x)\,dx.
\end{align*}
The last inequality follows because
$\phi_a(x)\le0$ and $\phi_a(x)=0$ for $x\ge a$. Thus, defining
\begin{equation*}
 \tilde{L}(a)=1-a+a^2+\int_0^a\phi_a(x)\,dx,
\end{equation*}
we conclude that $L\ge \tilde{L}(a)$.
It remains to show that $\tilde{L}$ is convex.

For $0<x<a$, the minimum defining $\phi_a(x)$ is attained at the unique
$v>0$ satisfying
\begin{equation*}
 \left(1+v\right)\cdot e^{-v}
 =\left(1-\frac{x}{a}\right)\cdot e^{-x}.
\end{equation*}
At this minimum, $\phi_a(x)=-a\cdot e^{-v}$. Differentiating the
first-order condition with respect to $a$ and substituting $v$ for $x$
gives
\begin{align*}
 -\frac{d^2}{da^2}\int_0^a\phi_a(x)\,dx
 =\int_0^\infty
   \frac{x^2\cdot e^{-x}}{a^2\cdot v^2\cdot\left(1+a-x\right)}\,dv
 \le\int_0^1\frac{v^2}{4}\,dv+\int_1^\infty\frac{1}{v^2}\,dv
 =\frac{13}{12}.
\end{align*}

Consequently, we have $\tilde{L}''(a)\geq 2-13/12 > 0$, which implies that $\tilde{L}$ is strictly convex.
\end{proof}

Since $\tilde{L}$ is strictly convex on $\left[0,1\right]$, we can use numerical evaluation to locate its unique minimum at $a\approx0.6022588$, with $\tilde{L}(a)\approx0.690619858$, as illustrated in
Figure~\ref{fig:iid-linear-scalar}.
Consequently,
\begin{equation*}
 L\ge\min_{0<a<1}\left\{\tilde{L}(a)\right\}>0.6906.
\end{equation*}

\begin{figure}[htb]
    \centering
    \begin{tikzpicture}
    \begin{axis}[
      width=9cm, height=4.5cm,
      xmin=0, xmax=1.05, ymin=0.68, ymax=1.03,
      xlabel={$a = A(0)$}, ylabel={$\tilde{L}(a)$},
      xlabel style={at={(axis description cs:1,0)}, anchor=west, xshift=6pt},
      xtick={0,0.2,0.4,0.6,0.8,1},
      ytick={0.7,0.8,0.9,1},
      ymajorgrids=true,
      grid style={gray!25},
      axis lines=left,
      tick align=outside,
      font=\small,
      clip=false,
    ]
    % Precomputed values of \tilde{L}(a).
    \addplot[blue!70!black, very thick] coordinates {
        (0.000000000,1.000000000) (0.020000000,0.980300955) (0.040000000,0.961207572)
        (0.060000000,0.942725340) (0.080000000,0.924859563) (0.100000000,0.907615367)
        (0.120000000,0.890997710) (0.140000000,0.875011390) (0.160000000,0.859661045)
        (0.180000000,0.844951168) (0.200000000,0.830886105) (0.220000000,0.817470065)
        (0.240000000,0.804707126) (0.260000000,0.792601236) (0.280000000,0.781156220)
        (0.300000000,0.770375787) (0.320000000,0.760263530) (0.340000000,0.750822933)
        (0.360000000,0.742057375) (0.380000000,0.733970132) (0.400000000,0.726564381)
        (0.420000000,0.719843207) (0.440000000,0.713809600) (0.460000000,0.708466464)
        (0.480000000,0.703816617) (0.500000000,0.699862795) (0.520000000,0.696607655)
        (0.540000000,0.694053774) (0.560000000,0.692203660) (0.580000000,0.691059745)
        (0.600000000,0.690624393) (0.602258796,0.690619858) (0.620000000,0.690899901)
        (0.640000000,0.691888501) (0.660000000,0.693592361) (0.680000000,0.696013592)
        (0.700000000,0.699154241) (0.720000000,0.703016302) (0.740000000,0.707601711)
        (0.760000000,0.712912353) (0.780000000,0.718950061) (0.800000000,0.725716615)
        (0.820000000,0.733213751) (0.840000000,0.741443153) (0.860000000,0.750406464)
        (0.880000000,0.760105279) (0.900000000,0.770541153) (0.920000000,0.781715597)
        (0.940000000,0.793630084) (0.960000000,0.806286045) (0.980000000,0.819684877)
        (1.000000000,0.833827936)
    };
    \addplot[only marks, mark=*, mark size=2pt, blue!70!black]
      coordinates {(0.6022587955,0.6906198580)};
    \node[align=left, anchor=south west] (minimum)
      at (axis cs:0.43,0.85)
      {$a\approx0.6022588$\\$\tilde{L}(a)\approx0.690619858$};
    \draw[gray, ->] (minimum.south)
      -- (axis cs:0.6022587955,0.6906198580);
    \end{axis}
    \end{tikzpicture}
    \caption{Plot for the function $\tilde{L}(a)$.}
    \label{fig:iid-linear-scalar}
\end{figure}

\subsection{Lower Bounding the Ratio via a Factor-revealing LP}
\label{sec:iid-comparison}

We now consider a piecewise linear threshold with $N$ equal-length intervals and positive nondecreasing drop rates $b_1,\ldots,b_N$. We fix these rates throughout the analysis. We describe how we choose $N$ and $b_1,\ldots,b_N$ later.
For every nonnegative nondecreasing acceptance rate $q$, we have an upper constraint on the guaranteed value in the max--min program of Section~\ref{sec:iid-reduction}. Since there are infinitely many such rates, as in the warm-up analysis above, the key is to concisely represent the upper bounds imposed by the infinitely many $q$ using a single function independent of $q$, e.g., by finding the pointwise worst case.

We introduce continuous functions $V_t\!\left(y,x\right)$ to characterize (lower bound) the contribution of the unknown $q$ in the time interval $\left[t,1\right]$. Informally, $x$ records the contribution already accumulated, and $y$ records the remaining contribution. At time zero, we start from $\left(y,x\right)=\left(1,0\right)$. We will use the following lemma to reduce lower-bounding $L$ to lower-bounding $V_0\!\left(1,0\right)$. The formal definition of $V_t\!\left(y,x\right)$ and the proofs of the lemma are deferred to Appendix~\ref{app:iid-continuation}.

\begin{lemma}\label{lem:iid-continuation}
There exists a family of continuous functions $V_t\!\left(y,x\right)$ on $0\le x\le y\le1$ such that for every nonnegative nondecreasing rate $q$, we have
\begin{equation*}
 L\ge\theta(0)+V_0\!\left(1,0\right).
\end{equation*}
For each $t$, $V_t\!\left(y,x\right)$ is a nonpositive concave function of $\left(y,x\right)$ that is nondecreasing in $x$ and nonincreasing in $y$, with $V_t\!\left(y,y\right)=0$ and $V_1\left(y,x\right)=0$.

Furthermore, fix any discretization of the time window $[0,1]$, any consecutive partition points $t,t+\Delta$, and a state $0\le x<y\le1$.
We cover $\left[0,\infty\right)$ by $N_q$ intervals for the mean acceptance rate on $\left[t,t+\Delta\right]$.
For each $j\in[N_q]$, we can compute constants $a_j\in \mathbb R$, $\phi_j\ge0$, and a state point $\left(y_j,x_j\right)$ with $0\le x_j\le y_j\le1$ that are independent of the unknown $q$, such that 
\begin{equation*}
 V_t\!\left(y,x\right)\ge
 \min_{j\in[N_q]}\left\{
 a_j+\phi_j\cdot V_{t+\Delta}\left(y_j,x_j\right)
 \right\}.
\end{equation*}
\end{lemma}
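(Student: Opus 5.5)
The plan is to define $V_t$ as a worst-case cost-to-go of an optimal-control problem, where the adversary controls $q$ on $[t,1]$. The structural claims will then follow from writing this cost as an infimum of functions that are affine in the state. The recursion will follow from dynamic programming combined with monotonicity.

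\textbf{Step 1: a time-additive form of $L$.} First I rewrite $L-\theta(0)$ as a single time integral of a running cost. Using $A(t)=\int_t^1 S$ and Fubini, exactly as in the warm-up for $\theta(t)=1-t$, I aim for
\begin{equation*}
L-\theta(0)=\int_0^1\Bigl(S(t)\bigl(B(t)-c\bigr)+A(0)\, b(t)\, e^{-q(t)}\Bigr)dt ,
\end{equation*}
where $B(t)=\int_0^t b(s)q(s)\,ds$ and $c$ is the constant collecting the coefficient of $A(0)$, namely $\theta(0)-(1-\theta(0))$ up to regrouping. The only global coupling is through $A(0)=\int_0^1 S$.

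I then remove this coupling by homogeneity. Conditioned on the past, the future of $S$ is $S(t)$ times a fresh survival curve. Dividing by $S(t)$, the relevant state at time $t$ consists of a normalized ``remaining budget'' $y$ and an ``accumulated'' quantity $x$. Concretely, $x$ is the portion of $A(0)$ or of $B$ already realized, and $y$ is the total target, normalized so that $0\le x\le y\le1$ and the start is $(1,0)$.

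\textbf{Step 2: defining $V_t$ and proving the stated properties.} I define
\begin{equation*}
V_t(y,x)=\inf_{q\ \text{on}\ [t,1]}\ \text{(normalized cost-to-go from state } (y,x)\text{)}.
\end{equation*}
The infimum ranges over all nonnegative nondecreasing continuations, together with a pointwise relaxation as in $\phi_a$ that drops the coupling between consecutive values of $q$. The inequality $L\ge\theta(0)+V_0(1,0)$ is then immediate from Step 1.

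The stated properties should follow as below.
\begin{itemize}
\item \emph{Concavity:} for each fixed $q$, the cost is affine in $(y,x)$, since $y$ and $x$ enter only as linear coefficients, and an infimum of affine functions is concave.
\item \emph{Nonpositivity:} choosing $q\equiv 0$ (or $q\to\infty$) shows the infimum is at most $0$.
\item \emph{$V_t(y,y)=0$:} when $x=y$, the integrand is pointwise nonnegative, mirroring $\phi_a(x)=0$ for $x\ge a$.
\item \emph{$V_1\equiv0$:} the time interval is empty.
\item \emph{Monotonicity:} larger $x$ enlarges the nonnegative part of the cost, and larger $y$ enlarges the negative part.
\end{itemize}
Continuity follows from concavity on the compact triangle, together with a Lipschitz estimate near the boundary like the one used for $b$-discretization.

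\textbf{Step 3: the one-step recursion.} Fix consecutive grid points $t,t+\Delta$. On this interval $b$ is a constant $b_i$, and I let $\bar q$ be the mean acceptance rate over $[t,t+\Delta]$. For $\bar q$ in the $j$-th covering interval $[\ell_j,u_j)$, the dynamic-programming principle splits $V_t(y,x)$ into the cost on $[t,t+\Delta]$ plus the factor $S(t+\Delta)/S(t)$ times $V_{t+\Delta}$ at the new normalized state.

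\begin{itemize}
\item The factor $S(t+\Delta)/S(t)=e^{-\Delta\bar q}$ lies in $[e^{-\Delta u_j},e^{-\Delta\ell_j}]$. Since $V\le0$, I take $\phi_j=e^{-\Delta\ell_j}$, which gives a valid lower bound. For the infinite interval, $\phi_j=e^{-\Delta\ell_j}$ is still finite.
\item The new state moves monotonically in $\bar q$. Because $V_{t+\Delta}$ is nondecreasing in $x$ and nonincreasing in $y$, I replace it by the corner $(y_j,x_j)$: the largest possible $y$ and the smallest possible $x$ over the interval, clipped to the triangle.
\item The running cost on the interval is bounded below by $a_j$, its minimum over rates with mean in $[\ell_j,u_j)$. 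The term $e^{-q}$ is convex, so Jensen's inequality lets me replace $q$ by $\bar q$. The $S\cdot(\cdot)$ term is handled by bounding $S$ and $B$ between their extreme values.
\end{itemize}

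Taking the minimum over $j$ gives the claimed inequality.

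\textbf{Main obstacle.} I expect the main obstacle to be Step 1 together with the choice of state. The global $A(0)$ must be decoupled so that the cost-to-go is genuinely a function of two normalized coordinates in $[0,1]$ and is affine in them for each $q$. If the naive choice fails, my first fallback is to treat $a=A(0)$ as a guessed parameter, as in the warm-up. I would then encode $y$ as the normalized remaining $a$-mass still to be matched by $\int S$, and recover $A(0)=a$ as the terminal condition $x=y$.
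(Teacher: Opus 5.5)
There is a genuine gap. Your architecture matches the paper's: $V_t$ is an infimum over $q$ of per-$q$ affine functions, monotonicity of $q$ is relaxed across cells so that the dynamic-programming split is exact, the new state is replaced by a corner using monotonicity, and $V\le0$ bounds the scale factor. But the content of the lemma is the construction of $V_t$, and your proposal leaves exactly that step unresolved; you flag it yourself as the main obstacle.

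\textbf{The state is never constructed.} Your Step~1 is already off: the running cost is $S(t)\bigl(B(t)+1-\theta(0)-b(t)\bigr)+A(0)\,b(t)\,e^{-q(t)}$, so the subtracted quantity $b(t)+\theta(0)-1$ is not a constant. More importantly, split $A(0)=\int_0^tS+S(t)\,A_t(t)$. Then for fixed $q$ the cost-to-go is linear in three past quantities: $\int_0^tS$, $S(t)$, and $S(t)$ times an accumulated surplus. You need a normalization that makes this an affine function of two coordinates on the triangle $0\le x\le y\le1$. Dividing by $S(t)$ does not do this, because it leaves the coordinate $\int_0^tS/S(t)$, which is unbounded as $S(t)\to0$.

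The paper in effect normalizes by $\Phi(t)=\int_0^tS+(1-t)\,S(t)$, the largest value $A(0)$ can still take. Then $y=S(t)/\Phi(t)\in(0,1]$ and the normalized past area is $1-y+t\,y$; this is where the coefficient $1-y+t\cdot y+y\cdot A_t(t)$ in its $V_t(y,x,q)$ comes from. It also merges the cross term $S(t)\,A_t(t)\int_0^tb\,e^{-q}$ with $S(t)\bigl(B(t)+1-\theta(0)\bigr)A_t(t)$, using $1-\theta(0)=1-\theta(t)-\int_0^tb$. As a result the accumulated coordinate grows by $b\,(q+e^{-q}-1)\ge0$. It is scaled by $c=\max_s\{b(s)+\theta(s)\}-1$, and the case $c\le0$ is handled separately with $V_t\equiv0$.

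\textbf{The listed properties depend on these choices.} They do not follow from the analogy with $\phi_a$. The identity $V_t(y,y)=0$ uses $A_t(t)\ge A_t(s)$ and $q+e^{-q}\ge1$ to obtain
\[
V_t(y,x,q)\ge\int_t^1S_t(s)\bigl(c\,x+y\,(1-\theta(s)-b(s))\bigr)\,ds+(1-y+t\,y)\int_t^1b(s)\,e^{-q(s)}\,ds,
\]
which is nonnegative for $x\ge y$ precisely because of how $c$ and $x$ were defined. In the recursion, $\phi_j$ is the normalization ratio $\Phi(t+\Delta)/\Phi(t)$ for the constant rate $r_{j-1}$, namely $1-y+y\bigl(t+I_{r_{j-1}}+e^{-\Delta r_{j-1}}(1-t-\Delta)\bigr)$, with $y_j=y\,e^{-\Delta r_{j-1}}/\phi_j$. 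Your $\phi_j=e^{-\Delta\ell_j}$ belongs to the $S(t)$-normalization, that is, to the unbounded chart.

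\textbf{Two smaller points.}
\begin{itemize}
\item $q\equiv0$ gives a nonnegative cost; it is the best case, with $L=1$. So only $q\to\infty$ proves $V_t\le0$.
\item $V_t(y,x,q)$ need not be nonincreasing in $y$ for each $q$. For a large constant rate $M$ its $y$-coefficient is about $(1-\theta(t))/M>0$. Monotonicity in $y$ therefore requires discarding such rates. This is harmless because the constant and $x$-terms are nonnegative and $V_t\le0$.
\end{itemize}
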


% We now describe how we choose $b_1,\ldots,b_N$ and derive the resulting ratio.
Recall the max--min program from Section~\ref{sec:iid-reduction}, and write $L\!\left(b,q\right)$ for the value of $L$ when $b$ and $q$ are fixed.
We aim to choose $b$ to maximize $\inf_q\left\{L\!\left(b,q\right)\right\}$.
For any fixed $b$, we have $L\!\left(b,q\right)\ge\theta(0)+V_0\!\left(1,0\right)$ for every admissible $q$, by Lemma~\ref{lem:iid-continuation}.
We first lower bound $V_0\!\left(1,0\right)$ by discretization and using a factor-revealing LP, then discuss how we select $b$.

\paragraph{Discretization.}
We divide the time window $\left[0,1\right]$ into $N_t$ equal-length intervals, where $N_t$ is divisible by $N$, and set $t_k=\frac{k}{N_t}$ for $k=0,\ldots,N_t$.
For the two state coordinates $y$ and $x$, we divide $\left[0,1\right]$ into $N_y$ intervals, whose lengths may vary:
\begin{equation*}
0=y_0=x_0<y_1=x_1<\cdots<y_{N_y}=x_{N_y}=1.
\end{equation*}
We use finer intervals near $0$ and $1$, in the numerical certificate below.
At each time $t_k$, we retain the state vertices $\left(y_i,x_j\right)$ with $0\le j\leq i\le N_y$.
We use $v_{k,i,j}$ to represent a certified lower bound on $V_{t_k}\left(y_i,x_j\right)$.
Here $k,i,j$ are integer indices for $t$, $y$, and $x$, respectively.
We triangulate the state domain using these vertices and use concavity to lower bound $V_t\!\left(y,x\right)$ by a weighted average of its values at the three vertices of a containing triangle.
We combine this bound with Lemma~\ref{lem:iid-continuation}, using a common cover of $\left[0,\infty\right)$ by $N_q=131{,}093$ intervals.
% We use the same index $h$ at every state because we consider the same rate $q$, whose average over the time interval is independent of the state; we cover the additional zero expression using the interval that extends to infinity.
% We retain the numerical guarantee because taking the worst case separately at each state before averaging cannot give a larger bound than averaging first and then taking the worst case, and rounding down can only decrease the bound further.
We obtain the following bound at each nonterminal grid vertex:
\begin{equation*}
 V_{t_k}\left(y_i,x_j\right)\ge
 \min_{h\in[N_q]}\left\{a_{k,i,j,h}
   +\sum_{x_{j'}\le y_{i'}}
       \eta_{k,i,j,h,i',j'}\cdot
       V_{t_{k+1}}\left(y_{i'},x_{j'}\right)\right\}.
\end{equation*}
Here $h$ indexes the $N_q$ intervals covering $\left[0,\infty\right)$.
We compute the constants $a_{k,i,j,h}$ and
nonnegative coefficients $\eta_{k,i,j,h,i',j'}$
in advance and keep them fixed in the LP.

\paragraph{The factor-revealing LP for fixed $b$.}
We fix the drop rates $b_1,\ldots,b_N$ throughout the LP; in particular, they are not LP variables.
We optimize only the candidate guarantee $\gamma$ and the grid values $v_{k,i,j}$, for $0\le k\le N_t$, $0\le j\leq i\le N_y$:
\begin{align*}
 \text{maximize}\quad &\gamma\\
 \text{subject to}\quad
 &\gamma\le\theta(0)+v_{0,N_y,0},\\
 &v_{N_t,i,j}=0,\\
 &v_{k,i,j}=0 &&\text{if } j=i,\\
 &v_{k,i,j}\le0,\\
 &v_{k,i,j}\le a_{k,i,j,h}
      +\sum_{x_{j'}\le y_{i'}}
         \eta_{k,i,j,h,i',j'}\cdot v_{k+1,i',j'}.
\end{align*}
We impose the constraints for all valid indices, with $k<N_t$, $x_j<y_i$, and $h\in[N_q]$ in the last line.
We obtain a feasible solution by evaluating the constraints backward from $k=N_t$ to $k=0$.
For every feasible solution, we prove $v_{k,i,j}\le V_{t_k}\!\left(y_i,x_j\right)$ by backward induction on $k$.
At $k=N_t$, we use $V_1=0$.
Therefore, for every nonnegative nondecreasing rate $q$, we obtain
\begin{equation*}
 L\!\left(b,q\right)\ge\theta(0)+V_0\!\left(1,0\right)
   \ge\theta(0)+v_{0,N_y,0}\ge\gamma.
\end{equation*}

% For the numerical certificate, we also apply the same interval bounds on enclosing rate intervals and take their strongest applicable consequence on each fine interval.
% At each time, we improve the certified state values using coordinate concavity and monotonicity of $V_t$ before continuing backward.
% Each step preserves a lower bound on $V_t$; once its interpolation weights are fixed, it is a linear inequality and can be checked by backward substitution.

\paragraph{Heuristic search for $b$.}
We choose $b_1,\ldots,b_N$ by heuristic searches using constraint enumeration.
Starting from a collection of candidate rates $q$, we solve the max--min program from Section~\ref{sec:iid-reduction} restricted to this collection.
We can optimize $b_1,\ldots,b_N$ for this restricted program because, for fixed $q$, we have a linear expression in $b$.
We then search for additional adverse rates against the resulting $b$, add them to the collection, and repeat.
We do not establish that the chosen $b$ is optimal for fixed $N$.
After fixing $b$, we certify its guarantee against every admissible $q$ using the factor-revealing LP.
We use $N=40$ positive nondecreasing drop rates obtained by this search.
With $N_t=640$, $N_y=1200$, and $N_q=131{,}093$, the following lower bound can be verified by backward evaluation of these finite inequalities with outward rounding:
\begin{equation*}
 \gamma=0.74191589\ldots,
 \quad \text{ and } \quad
 \theta(0)=0.74196156\ldots.
\end{equation*}

\subsection{Transferring the Guarantee to Every \texorpdfstring{$n$}{n}}
\label{sec:iid-transfer}

We now show that the proved guarantee for $n\to\infty$ also holds for every finite $n$ using the following shifted-rank rule and a transfer theorem.
Draw $U$ uniformly from $\left[0,1\right]$, independently of the values.
Let $X_i$ denote the $i$-th arrival.
We accept the first value satisfying $X_i>\E\!\left[X^*\right]\cdot\theta\left(\frac{i-U}{n}\right)$.
If the first $n-1$ variables are rejected, we accept $X_n$. The same $U$ is used at every rank, so the entire decreasing threshold sequence is fixed before the values are observed. The algorithm uses only $n$ and the expected maximum.
The proof appears in Appendix~\ref{app:iid-transfer}.

\begin{theorem}
\label{thm:iid-transfer}
Using the shifted-rank rule with the forty-slope threshold $\theta$, we have
\begin{equation*}
 \E\!\left[\ALG\right] > 0.7419\cdot \E\!\left[X^*\right],
 \qquad \text{ for every } n.
\end{equation*}
\end{theorem}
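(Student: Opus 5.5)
The plan is to reduce the finite-$n$ shifted-rank rule to the continuous-time model of Section~\ref{sec:iid-reduction}. Then Lemma~\ref{lem:iid-function} and the certified bound $\gamma=0.74191589\ldots$ apply, with a little slack absorbing rounding.

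\textbf{Coupling with continuous arrivals.} Normalize $\E[X^*]=1$. Put the $i$-th arrival at time $s_i=(i-U)/n$. Since $U$ is uniform on $[0,1]$, each $s_i$ is uniform on $[(i-1)/n,i/n]$, and the $s_i$ tile $[0,1]$ with one arrival per slot. The rule accepts $X_i$ exactly when $X_i>\theta(s_i)$, the same test as the continuous algorithm. The reward then reproduces the decomposition of Section~\ref{sec:iid-reduction} with one change: the Poisson survival $S(t)=\exp(-\int_0^t q)$ becomes the discrete survival. For $t$ in slot $i$, this is
\[
\textstyle\prod_{l<i}\bigl(1-p(s_l)\bigr), \qquad p(s)=\PP\bigl(X_1>\theta(s)\bigr).
\]
I would redo the revenue--utility computation rank by rank, averaging over $U$. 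The revenue term is still $\theta(0)-\E\bigl[\sum_i b\text{-weighted survival}\bigr]$, now a telescoping sum. The utility term is bounded as before: the tail $\int_{\theta(0)}^\infty n\PP(X_1>x)\,dx$ is at least $1-\theta(0)+\int_0^1 b(t)\PP(X^*\le\theta(t))\,dt$. For finite $n$, $\PP(X^*\le\theta(t))=(1-q(t)/n)^n\le e^{-q(t)}$, so this step now loses in the wrong direction and must be tracked. One advantage is that the forced acceptance of $X_n$ only adds nonnegative reward.

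\textbf{Comparison step.} The target is a finite-$n$ expression $L_n(b,q)$ that dominates the continuous functional $L(b,\tilde q)$ for a suitable admissible rate $\tilde q$. The natural choice is to define $\tilde q$ on each slot by $e^{-\tilde q\cdot(\text{slot length})}=1-p$ via $-\log(1-p)$. Then survival at slot boundaries matches exactly, and within a slot, uniform $U$ gives a linear interpolation. Because $e^{-x}$ is convex, this interpolation lies above the exponential, so $A$ and the $qA$ terms only improve. The remaining discrepancy is the $e^{-q}$ versus $(1-q/n)^n$ term. Under the substitution $\tilde q$, this reverses, since $(1-p)^n=e^{-n\cdot(-\log(1-p))}$ matches $e^{-\tilde q}$ once $\tilde q$ is expressed in the right units. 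So I expect the exact identity $\PP(X^*\le\theta(t))=e^{-\tilde q(t)}$, which removes the issue. The function $\tilde q$ is still nondecreasing, so it is admissible, and Section~\ref{sec:iid-comparison} then gives $L(b,\tilde q)\ge\gamma>0.7419$.

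\textbf{Main obstacle.} The delicate step is the within-slot comparison. One must show that all the signed pieces, namely $S$, $A$, and the cross term $qA$ inside $k$, move in directions that do not decrease $L$ when Poisson survival is replaced by the discrete, $U$-averaged survival. Here $k$ has both signs of dependence on $S$, so a pointwise dominance argument may fail. My first attempt would be a direct per-slot calculation. I would write both contributions as functions of the slot's $p$ and the entering survival, and verify the inequality by convexity in $p$. If that fails, I would fall back to a telescoping martingale comparison. Any leftover $O(1/n)$ loss can be absorbed for large $n$ by the slack $\gamma-0.7419\approx1.6\times10^{-5}$, and small $n$ would then be handled by a separate direct check.
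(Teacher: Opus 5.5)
Your reduction is set up correctly and agrees with the paper's. Your rate $\tilde q(t)=-n\log\bigl(1-p(t)\bigr)=-n\log F(\theta(t))$ is exactly the paper's $r$. It is nondecreasing, and it makes the normalization exact: $C:=1-\theta(0)+\int_0^1 b\,e^{-\tilde q}=\E[(X^*-\theta(0))^+]$. Averaging over $U$ then gives $\E[\ALG]\ge C\cdot A_P(0)+\int_0^1 b(t)\,T_n(t)\,dt$. Here $P(t)=\prod_{1\le j\le nt}\bigl(1-p(t-j/n)\bigr)$ is the shifted-grid survival, and $T_n(t)=\PP(\ALG>\theta(t))=1-\overline P(t)+n\,p(t)\,A_P(t)$ before the last slot. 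Your picture of an exact match at slot boundaries with linear interpolation inside only holds if $p$ is constant on slots, which it is not. What is true is $P\ge S$ pointwise, by monotonicity of $\tilde q$.

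The genuine gap is the comparison step, which is the whole content of the theorem. Write $L(b,\tilde q)=C\cdot A(0)+\int_0^1 b\,T_\infty$ with $T_\infty=1-S+\tilde q\,A$. The deficit $T_\infty-T_n$ can be positive: on the last slot it is the Jensen gap $\overline P(1)-S(1)\ge0$, strict as soon as the rate jumps off the grid. This is not only because larger survival lowers $1-S$. The per-rank rate $n\,p=n\bigl(1-e^{-\tilde q/n}\bigr)$ is below $\tilde q$, so your claim that the $qA$ term ``only improves'' because $A_P\ge A$ is unjustified. Your fallback does not close this. Grant a uniform $O(1/n)$ bound on $\int_0^1(T_\infty-T_n)^+$; this is true, but it already needs the paper's deficit lemmas, which give $\tfrac{1}{16n}$. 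The loss $b_{40}/(16n)$ with $b_{40}\approx1.815$ then fits in the slack $\gamma-0.7419\approx1.6\times10^{-5}$ only for $n$ above roughly $7000$. For every smaller $n$, the ``separate direct check'' is a fresh minimax problem over all distributions, and you have no certificate for it.

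The missing idea is that the selection deficit is paid for by the survival-area gain, not by the slack. For a bounded nondecreasing rate $q$, the shifted grid gives $\mathcal E=A_P(0)-A(0)\ge0$, which enters with coefficient $C(q)\ge1-\theta(0)$. So the finite expression $C(q)A_P(0)+\int_0^1 b\,T_n$ minus $L(b,q)$ equals $C(q)\,\mathcal E-\int_0^1 b\,(T_\infty-T_n)$. This is at least $\bigl(1-\theta(0)\bigr)\mathcal E-B\int_0^1(T_\infty-T_n)^+$ with $B=\max b$.

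The paper proves $\int_0^1(T_\infty-T_n)^+\le0.142\,\mathcal E$ for every such step rate, in four steps. It reduces to a rate constant on the last slot. It shows the deficit per unit rate $(T_\infty-T_n)/q$ is nondecreasing in $t$, so everything is controlled by the terminal gap $\overline P(1)-S(1)$. It bounds $\mathcal E\ge\frac1n\bigl(1-\overline P(1)\bigr)/\bigl(1+\sqrt{1+e^{-z}}\bigr)$, with $z$ the terminal rate divided by $n$; the $z$-dependence is needed, since the crude constant $1+\sqrt2$ would not reach $0.142$. It closes with an elementary scalar inequality.

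It then uses the threshold-specific fact $1-\theta(0)-0.142\,b_{40}>0.000263>0$. So the finite expression dominates $L(b,q)\ge\gamma$ for every $n$ simultaneously, with no $O(1/n)$ loss. Approximating the possibly infinite rate $\tilde q$ by bounded step rates then yields $\E[\ALG]\ge\gamma$. Without this compensation, or some other argument uniform in $n$, your plan proves the statement only for large $n$.
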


\section{Prophet Secretary}\label{sec:secretary}

We now consider the prophet secretary setting, where the variables may have different distributions.
We use the same notation for the threshold $\theta$ and its drop rate $b$.
As in the i.i.d. analysis, we first derive a lower bound on the expected reward that holds for every distribution, and then choose the threshold to maximize this guarantee.
Unlike the i.i.d. setting, the probability that $X^*$ exceeds the threshold does not determine the algorithm's survival probability $S(t)$.
We therefore retain the individual conditioning factors and derive lower bounds using $X^*$.
Note that we can assume without loss of generality that $n$ is sufficiently large, by inserting dummy variables with value $0$.

\subsection{Characterizing the Expected Reward}

As before, let $\tau\in\left[0,1\right]$ be the stopping time of the algorithm.
We set $\tau=1$ if no variable is accepted.
For each variable $i$, define
\begin{equation*}
 p_i(t)=\PP\!\left(X_i>\theta(t)\right),\qquad
 P_i(t)=\int_0^t p_i(s)\,ds.
\end{equation*}
The first term $p_i(t)$ is the probability of $X_i$ exceeding the threshold when it arrives at time $t$.
The second term $P_i(t)$ is the probability that variable $i$ arrives before $t$ and exceeds its threshold, regardless of whether another variable is selected.
By independence, the \emph{survival probability} is
\begin{equation*}
 S(t)=\PP\!\left(\tau>t\right)
 =\prod_i\left(1-P_i(t)\right),\qquad t\in\left[0,1\right).
\end{equation*}
We further let $S\left(1\right)$ be the probability that no value is accepted.
Define
\begin{equation*}
 p^*(t)=\PP\!\left(X^*>\theta(t)\right).
\end{equation*}
The function $p^*$ is nondecreasing as $\theta$ is nonincreasing.
For $0\le s\le t<1$, define
\begin{equation*}
 B_s(t)=\sum_i\frac{p_i(s)}{1-P_i(t)}-p^*(s)\ge0.
\end{equation*}
The nonnegativity follows because\footnote{The same lower bound is used in existing works~\cite{journals/siamdm/EsfandiariHLM17,conf/soda/EhsaniHKS18} to derive a lower bound of $1-1/e$. The crucial step of our analysis is to retrieve some extra gain when this lower bound is loose.}
\begin{equation*}
    \sum_i\frac{\PP\!\left(X_i>\theta(s)\right)}{1-P_i(t)} \ge \sum_i \PP\!\left(X_i>\theta(s)\right) \ge \PP\!\left(X^*>\theta(s)\right).
\end{equation*}
We use $S(t)\cdot B_s(t)$ to account for the extra gain beyond the bound obtained from $p^*$ alone.
To express the reward bound in terms of the stopping time, define, for $t\in\left[0,1\right]$,
\begin{equation*}
 F_{p^*}(t)=t+\theta(t)
 -\int_0^1b(s)\cdot\min\left\{s,t\right\}\cdot p^*(s)\,ds.
\end{equation*}

\begin{lemma}\label{lem:baseline}
The expected reward satisfies
\begin{equation*}
 \E\!\left[\ALG\right]
 \ge\E\!\left[F_{p^*}(\tau)\right]
 +\int_0^1b(s)\cdot\int_s^1S(t)\cdot B_s(t)\,dt\,ds.
\end{equation*}
\end{lemma}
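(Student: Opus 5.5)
The plan is to split $\ALG$ into \emph{revenue} $\theta(\tau)$ and \emph{utility} $\bigl(X_{\text{acc}}-\theta(\tau)\bigr)$, as in Figure~\ref{fig:threshold-algorithm}. We then compute the revenue exactly, lower-bound the utility using independence and the union bound, and finally rewrite everything as an expectation over $\tau$.

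\textbf{Revenue.} Since $\theta(1)=0$, the convention $\tau=1$ when nothing is accepted gives zero revenue in that case. Hence the revenue equals exactly $\E[\theta(\tau)]$.

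\textbf{Utility: conditioning on the arrival.} Condition on variable $i$ arriving at time $t$. It is accepted iff $X_i>\theta(t)$ and no other variable $j$ arrived before $t$ with $X_j>\theta(t_j)$. By independence, this second event has probability
\[
\prod_{j\ne i}\left(1-P_j(t)\right)=\frac{S(t)}{1-P_i(t)},
\]
and it is independent of $X_i$. Therefore
\[
\E[\text{utility}]=\int_0^1 S(t)\sum_i\frac{\E\!\left[\left(X_i-\theta(t)\right)^+\right]}{1-P_i(t)}\,dt .
\]

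\textbf{Utility: rewriting the inner expectation.} As in Section~\ref{sec:iid-reduction}, write $\E[(X_i-\theta(t))^+]=\int_{\theta(t)}^\infty \PP(X_i>x)\,dx$. Split the integral at $\theta(0)$ and substitute $x=\theta(s)$, so $dx=b(s)\,ds$, on $[\theta(t),\theta(0)]$. This gives
\[
\E\!\left[\left(X_i-\theta(t)\right)^+\right]=\int_0^t b(s)\,p_i(s)\,ds+\E\!\left[\left(X_i-\theta(0)\right)^+\right].
\]

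\textbf{The first piece.} Summing over $i$ with weights $1/(1-P_i(t))$ produces exactly $\sum_i p_i(s)/(1-P_i(t))=p^*(s)+B_s(t)$. This is where the extra term $B_s(t)$ enters.

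\textbf{The tail piece.} Drop the factor $1/(1-P_i(t))\ge1$. Then use $\sum_i\E[(X_i-\theta(0))^+]\ge\E[(X^*-\theta(0))^+]$, which holds by the union bound applied pointwise in $x$. With $\E[X^*]=1$, the same substitution gives
\[
\E\!\left[\left(X^*-\theta(0)\right)^+\right]=1-\int_0^{\theta(0)}\PP(X^*>x)\,dx=1-\int_0^1 b(s)\,p^*(s)\,ds .
\]

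\textbf{Combining.} Putting the pieces together,
\[
\E[\text{utility}]\ge\int_0^1 S(t)\left(1+\int_0^t b(s)\,p^*(s)\,ds-\int_0^1 b(s)\,p^*(s)\,ds\right)dt+\int_0^1\!\!\int_0^t b(s)\,S(t)\,B_s(t)\,ds\,dt .
\]
The bracket simplifies to $1-\int_t^1 b(s)p^*(s)\,ds$.
\begin{itemize}
\item $\int_0^1S(t)\,dt=\E[\tau]$, since $S(t)=\PP(\tau>t)$ on $[0,1)$.
\item By Fubini, $\int_0^1S(t)\int_t^1 b(s)p^*(s)\,ds\,dt=\int_0^1 b(s)p^*(s)\int_0^s S(t)\,dt\,ds=\int_0^1 b(s)p^*(s)\,\E[\min\{s,\tau\}]\,ds$.
\item Exchanging the order in the last term gives $\int_0^1 b(s)\int_s^1 S(t)B_s(t)\,dt\,ds$.
\end{itemize}
Adding the revenue $\E[\theta(\tau)]$ yields $\E[F_{p^*}(\tau)]$ plus the $B$-term, as claimed.

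\textbf{Main obstacle.} The step needing the most care is bookkeeping rather than a hard estimate. One must keep the factor $1/(1-P_i(t))$ on the $p_i(s)$ part, which is what creates $B_s(t)$, while discarding it only on the tail part. One must also check that the conventions at $t=1$ and $\tau=1$ (where $S(1)$ is redefined and $\theta(1)=0$) do not affect the identities $\E[\tau]=\int_0^1 S$ and $\E[\min\{s,\tau\}]=\int_0^s S$. These hold because a single point has measure zero.
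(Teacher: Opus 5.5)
Your proposal is correct and follows essentially the same route as the paper's proof: the revenue--utility split with revenue $\E[\theta(\tau)]$, conditioning on the arrival time to get the weight $S(t)/(1-P_i(t))$, keeping that weight below $\theta(0)$ so that $\sum_i p_i(s)/(1-P_i(t))=p^*(s)+B_s(t)$ while dropping it and using the union bound above $\theta(0)$, and finally Fubini with $\int_0^s S(t)\,dt=\E[\min\{s,\tau\}]$. (A trivial slip: under $x=\theta(s)$ one has $dx=-b(s)\,ds$ with the limits reversed, which still gives the $\int_0^t b(s)\,p_i(s)\,ds$ you wrote.)
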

\begin{proof}
We decompose an accepted value $X_i$ arriving at time $t_i$ into revenue $\theta(t_i)$ and utility $(X_i - \theta(t_i))^+$.
An observation $X_i$ arriving at time $t$ is accepted if $X_i>\theta(t)$ and no other variable was accepted earlier.
Independence and the uniform arrival times give
\begin{equation*}
 \E\!\left[\ALG\right]
 =\sum_i\int_0^1
 \E\!\left[X_i\cdot\mathbf{1}\left(X_i>\theta(t)\right)\right]
 \cdot\prod_{k\ne i}\left(1-P_k(t)\right)\,dt.
\end{equation*}
The revenue is $\E\!\left[\theta(\tau)\right]$, since $\theta\left(1\right)=0$ on failure.
For the utility, we use
$\E\!\left[\left(X_i-\theta(t)\right)^+\right]
=\int_{\theta(t)}^\infty\PP\!\left(X_i>x\right)\,dx$.
Using $S(t) = \prod_i\left(1-P_i(t)\right)$, the revenue--utility decomposition becomes
\begin{equation*}
 \E\!\left[\ALG\right]
 =\E\!\left[\theta(\tau)\right]
 +\int_0^1S(t)\cdot\int_{\theta(t)}^\infty
 \sum_i\frac{\PP\!\left(X_i>x\right)}{1-P_i(t)}\,dx\,dt.
\end{equation*}

Fix any $x\in\left[\theta(t),\theta\left(0\right)\right]$ and write $x=\theta(s)$.
Using the definition of $B_s(t)$, we obtain
\begin{align*}
 S(t)\cdot\int_{\theta(t)}^\infty
 \sum_i\frac{\PP\!\left(X_i>x\right)}{1-P_i(t)}\,dx
 \ge S(t)\cdot\int_{\theta(t)}^\infty
 \PP\!\left(X^*>x\right)\,dx
 +\int_0^t b(s)\cdot S(t)\cdot B_s(t)\,ds.
\end{align*}
Using $\E\!\left[X^*\right]=1$ and substituting $x=\theta(s)$ again gives
\begin{equation*}
 \int_{\theta(t)}^\infty\PP\!\left(X^*>x\right)\,dx
 =1-\int_0^{\theta(t)}\PP\!\left(X^*>x\right)\,dx
 =1-\int_t^1b(s)\cdot p^*(s)\,ds.
\end{equation*}

Finally, we use
$\int_0^1S(t)\,dt=\E\!\left[\tau\right]$
and
$\int_0^sS(t)\,dt=\E\!\left[\min\left\{s,\tau\right\}\right]$.
Integrating the utility bound, exchanging the order of integration, and adding the revenue, we obtain
\begin{align*}
 \E\!\left[\ALG\right]
 &\ge\E\!\left[\theta(\tau)\right]+\E\!\left[\tau\right]
 -\int_0^1b(s)\cdot p^*(s)\cdot
 \E\!\left[\min\left\{s,\tau\right\}\right]\,ds
 +\int_0^1b(s)\cdot\int_s^1S(t)\cdot B_s(t)\,dt\,ds\\
 &=\E\!\left[F_{p^*}(\tau)\right]
 +\int_0^1b(s)\cdot\int_s^1S(t)\cdot B_s(t)\,dt\,ds.\qedhere
\end{align*}
\end{proof}

\subsection{Threshold with Two Steps}\label{sec:secretary-warmup}

We begin with a warm-up showing that we can improve on $1-1/e$ using a threshold with only two linear pieces.
Choose
\begin{equation*}
    b(t) =
    \begin{cases}
        1/2, & 0\le t<2/3, \\
        1, & 2/3\le t\le1.
    \end{cases}
    \qquad
    \theta(t) = \int_t^1b(s)\,ds =
    \begin{cases}
        2/3-t/2, & 0\le t\le2/3, \\
        1-t,&2/3\le t\le1.
    \end{cases}
\end{equation*}

\begin{theorem}\label{prop:secretary-warmup}
    For the prophet secretary, the two-step drop rate above guarantees a competitive ratio of at least $0.6385$ for every $n$.
\end{theorem}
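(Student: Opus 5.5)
The plan is to start from Lemma~\ref{lem:baseline}, discard the nonnegative correction term $\int_0^1 b(s)\int_s^1 S(t)\,B_s(t)\,dt\,ds\ge 0$, and lower-bound $\E\!\left[F_{p^*}(\tau)\right]$ by a functional of the survival curve $S$ alone. Write $E(t)=\int_0^t S(u)\,du=\E\!\left[\min\{t,\tau\}\right]$. Using $\E[\tau]=\int_0^1 S(t)\,dt$, $\E\!\left[\theta(\tau)\right]=\int_0^1 b(t)\left(1-S(t)\right)dt$, and exchanging the order of integration as in the proof of Lemma~\ref{lem:baseline}, we get
\begin{equation*}
 \E\!\left[F_{p^*}(\tau)\right]
 =\int_0^1\Bigl(S(t)+b(t)\bigl(1-S(t)\bigr)-b(t)\,p^*(t)\,E(t)\Bigr)\,dt .
\end{equation*}
Only $p^*$ still refers to the distributions. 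The expression is decreasing in $p^*$, so we need an upper bound on $p^*(t)$ in terms of $S(t)$.

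\textbf{Step 1 (linking $p^*$ and $S$).} Since $\theta$ is decreasing, each $p_i$ is nondecreasing, so $P_i(t)\le t\,p_i(t)$. Bernoulli's inequality gives $1-t\,p\ge (1-p)^t$ for $p,t\in[0,1]$. Hence
\begin{equation*}
 S(t)=\prod_i\bigl(1-P_i(t)\bigr)\ge\prod_i\bigl(1-p_i(t)\bigr)^t=\bigl(1-p^*(t)\bigr)^t ,
\end{equation*}
that is, $p^*(t)\le 1-S(t)^{1/t}$. Substituting this bound yields
\begin{equation*}
 \E\!\left[\ALG\right]\ge \Psi(S):=\int_0^1\Bigl(S(t)+b(t)\bigl(1-S(t)\bigr)-b(t)\bigl(1-S(t)^{1/t}\bigr)E(t)\Bigr)\,dt .
\end{equation*}
This holds for every nonincreasing $S$ with $S(0)=1$ and every $n$. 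The bound does not depend on $n$, which is consistent with the claim for every $n$.

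\textbf{Step 2 (minimizing over $S$).} This is the main obstacle, because the integrand couples $S(t)$ with its running integral $E(t)$, so the minimization is not pointwise. I would reparametrize by $E$, which is concave with $E'=S\in[0,1]$, and use the two-piece structure of $b$. On $[0,2/3)$ we have $b=1/2$ and on $[2/3,1]$ we have $b=1$. Within each piece, an exchange or Euler--Lagrange argument should show that the worst case is attained by $S$ of a simple form; a candidate is $S(t)^{1/t}$ constant on each piece, i.e.\ constant $p^*$, mirroring the warm-up where $q$ was effectively replaced by its pointwise worst case. This would reduce $\Psi$ to a function of one or two scalar parameters, for instance the values $S(2/3)$ and $S(1)$, or the corresponding constant levels of $p^*$.

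\textbf{Step 3 (numerical bound).} With $\Psi$ reduced to finitely many parameters, I would minimize it by elementary calculus together with a certified numerical evaluation, in the same style as the minimization of $\tilde L$ in Section~\ref{sec:iid-linear}, and check that the minimum exceeds $0.6385$. If the reduced bound falls short, the fallback is to retain part of the discarded $B_s(t)$ term. This can be done cheaply by bounding $\sum_i p_i(s)/(1-P_i(t))\ge \sum_i p_i(s)\ge -\log\bigl(1-p^*(s)\bigr)$, which gives $B_s(t)\ge -\log(1-p^*(s))-p^*(s)$, and then feeding this extra gain into the same one- or two-parameter minimization.
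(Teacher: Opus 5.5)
\textbf{There is a genuine gap: Step 1 has its inequality backwards, and the rest of the plan rests on it.} Your rewriting of $\E[F_{p^*}(\tau)]$ is fine. The chain $1-P_i(t)\ge 1-t\,p_i(t)\ge(1-p_i(t))^t$, and hence $S(t)\ge(1-p^*(t))^t$, is also correct. But taking $1/t$-th powers gives $S(t)^{1/t}\ge 1-p^*(t)$, that is, $p^*(t)\ge 1-S(t)^{1/t}$. This is a \emph{lower} bound on $p^*$. Since $\E[F_{p^*}(\tau)]$ is decreasing in $p^*$, you need an upper bound, so $\Psi(S)$ is not a valid lower bound on $\E[\ALG]$.

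Concretely, take $X_1\equiv1$ and all other variables $0$. Then $\E[X^*]=1$, $p^*\equiv1$ and $S(t)=1-t$, while $1-(1-t)^{1/t}<1$. No pointwise upper bound of $p^*$ through $S$ can exist: a single deterministic value just above $\theta(t_0)$ (normalized by a rare large value) gives $p^*=1$ right after $t_0$ while $S$ is still close to $1$.

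Your fallback is reversed for the same reason. Since $-\log(1-p^*(s))=\sum_i-\log(1-p_i(s))\ge\sum_i p_i(s)$, the bound $B_s(t)\ge-\log(1-p^*(s))-p^*(s)$ does not follow; take $p_1(s)=1$. Fixing the sign does not rescue the idea either. The correct-direction links $P^*(t)\le 1-S(t)$ and $-S'(t)/S(t)\ge p^*(t)$ still admit relaxed (not necessarily realizable) pairs with $\E[F_{p^*}(\tau)]\approx0.637<0.6385$. One example is $p^*=\mathbf 1(t\ge 0.05)$ with $S$ dropping sharply at $t=1/3$.

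The missing idea is that the extra gain $\int_0^1 b(s)\int_s^1S(t)B_s(t)\,dt\,ds$ cannot be discarded. Without it, the pointwise bound $F_{p^*}\ge F_1$ gives only $\min_tF_1(t)=F_1(1)=11/18\approx0.611$.

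The paper keeps this term and bounds it in three steps:
\begin{enumerate}
\item Using $P_i(s+u)\ge u\,p_i(s)$, the merging inequality $(1-da)(1-db)\le1-d(a+b-ab)$, and $-\log(1-x)-x\ge x^2/2$, it shows $\int_s^1S(t)B_s(t)\,dt\ge\frac{p^*(s)^2}{2}\E[((\tau-s)^+)^2]$.
\item Hence $\E[\ALG]\ge\E[G_{p^*}(\tau)]$, where $G_{p^*}(t)=F_{p^*}(t)+\frac12\int_0^tb(s)(t-s)^2p^*(s)^2\,ds$ is a deterministic function of $\tau$.
\item It bounds $G_{p^*}(t)$ pointwise in $t$ by minimizing the convex quadratic $s(1-u)+\frac{(t-s)^2}{2}u^2$ over $u=p^*(s)\in[0,1]$. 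This yields $G_{p^*}(t)\ge F_1(t)+t^3/32$, whose minimum over $t$ (near $t\approx0.92$) exceeds $0.6385$.
\end{enumerate}
Working pointwise in $\tau$ also removes any need to relate $p^*$ to the law of $\tau$, which is exactly what your Step 1 tried and failed to do.
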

\begin{proof}
Recall that $S(t)=\PP\!\left(\tau>t\right)=\E\!\left[\mathbf{1}\left(\tau>t\right)\right]$, we have
\begin{align*}
    \int_s^1 S(t)\cdot B_s(t)\,dt &= \E\!\left[\int_s^1 \mathbf{1}\left(\tau>t\right)\cdot B_s(t)\,dt\right] \\
    &= \E\!\left[\int_0^{\left(\tau-s\right)^+} B_s\!\left(s+u\right)du\right].
\end{align*}
Notice that $p_i(t)$ is nondecreasing, we have $P_i\left(s+u\right) \ge P_i(s) + u\cdot p_i(s) \ge u\cdot p_i(s)$.
Consequently, for any $0\le d\le 1-s$, we have
\begin{align*}
    \int_0^d B_s\!\left(s+u\right)du &\ge \sum_i\int_0^d\frac{p_i(s)}{1-u\cdot p_i(s)}\,du-d\cdot p^*(s) \\
    &= -\log\left(\prod_i\left(1-d\cdot p_i(s)\right)\right)-d\cdot p^*(s) \\
    &\ge -\log\left(1-d\cdot \left(1-\prod_i\left(1-p_i(s)\right)\right)\right)-d\cdot p^*(s) \\
    &= -\log\left(1-d\cdot p^*(s)\right)-d\cdot p^*(s) \ge\frac{d^2\cdot p^*(s)^2}{2}.
\end{align*}
The second inequality follows by repeatedly applying
\begin{equation*}
    \left(1-d\cdot a\right)\cdot\left(1-d\cdot b\right) \le 1-d\cdot\left(a+b-a\cdot b\right),\qquad0\le a,b,d\le1,
\end{equation*}
and the last inequality uses $-\log\left(1-x\right)-x\ge x^2/2$.
Therefore,
\begin{equation*}
    \int_s^1 S(t)\cdot B_s(t)\,dt \ge \frac{p^*(s)^2}{2}\cdot
   \E\!\left[\left(\left(\tau-s\right)^+\right)^2\right].
\end{equation*}
Combining this with Lemma~\ref{lem:baseline}, we obtain
\begin{align*}
    \E[\ALG] &\ge \E\!\left[F_{p^*}(\tau)\right]+\frac12\cdot \int_0^1 b(s)\cdot p^*(s)^2\cdot \E\!\left[\left(\left(\tau-s\right)^+\right)^2\right]ds \\
    &= \E\!\left[F_{p^*}(\tau)+\frac12\cdot \int_0^\tau b(s)\cdot \left(\tau-s\right)^2\cdot p^*(s)^2\,ds
    \right].
\end{align*}
For any $t\in \left[0,1\right]$, let
\begin{equation*}
    G_{p^*}(t) = F_{p^*}(t)+\frac12\cdot \int_0^t b(s)\cdot \left(t-s\right)^2\cdot p^*(s)^2\,ds
\end{equation*}
be the quantity inside the above expectation.
In the following, we give a uniform lower bound on $G_{p^*}(t)$ for any time $t\in \left[0,1\right]$ and any $p^*$.
Let
\begin{equation*}
    F_1(t) = t+\theta(t) -\int_0^1b(s)\cdot\min\!\left\{s,t\right\} ds,
\end{equation*}
which is $F_{p^*}$ when $p^*(s)=1$ for all $s$.
By definition of $F_{p^*}$, we have
\begin{align*}
    F_{p^*}(t) &= F_1(t) + \int_0^1 b(s)\cdot \min\left\{s,t\right\} \cdot \left(1-p^*(s)\right) \,ds \\
    &= F_1(t) + \int_0^t b(s)\cdot s\cdot \left(1-p^*(s)\right)\,ds + \int_t^1 b(s)\cdot t\cdot \left(1-p^*(s)\right)\,ds.
\end{align*}
Notice that the last term on the RHS is nonnegative.
We have
\begin{align*}
    G_{p^*}(t) &\ge F_1(t) + \int_0^t b(s)\cdot s\cdot \left(1-p^*(s)\right)\,ds + \frac12\cdot \int_0^t b(s)\cdot \left(t-s\right)^2\cdot p^*(s)^2\,ds \\
    &= F_1(t) + \int_0^t b(s)\cdot \left(s\cdot \left(1-p^*(s)\right) + \frac{\left(t-s\right)^2}{2}\cdot p^*(s)^2\right)\,ds.
\end{align*}

For any $t\in \left[0,1\right]$ and $s\in \left[0,t\right)$, let $h = \left(t-s\right)^2$ and define
\begin{equation*}
    f(u) = s\cdot \left(1-u\right) + \frac{h}{2}\cdot u^2, \qquad \text{for } u\in \left[0,1\right].
\end{equation*}
Observe that $f$ is convex and achieves its minimum at $u^* = \min\left\{1,s/h\right\}$.
We have
\begin{equation*}
    \min_{u\in \left[0,1\right]}\left\{f(u)\right\} =
    \begin{cases}
        s-s^2/\left(2\cdot h\right), & s\le h, \\
        h/2, & s\ge h,
    \end{cases}
    \ \ge\frac12\cdot \min\left\{s,h\right\}.
\end{equation*}

Therefore, we establish a lower bound on $G_{p^*}$ that is independent of $p^*$:
\begin{equation*}
    G_{p^*}(t) \ge F_1(t) + \frac12\cdot \int_0^t b(s)\cdot\min\!\left\{s,\left(t-s\right)^2\right\} ds \geq F_1(t) + \frac{t^3}{32}.
\end{equation*}
We minimize the above bound over $t\in\left[0,1\right]$ and obtain $G_{p^*}(t)>0.6385$.
Taking expectations at $t=\tau$, we conclude that $\E[\ALG]\ge\E\!\left[G_{p^*}(\tau)\right]>0.6385$.
\end{proof}

\subsection{From Extra Gain to a Pointwise Guarantee}

Starting from Lemma~\ref{lem:baseline}, our goal is to construct a deterministic
function $\Gamma_{p^*}(\tau)$ such that, for the actual input, $\E\!\left[\ALG\right]\ge\E\!\left[\Gamma_{p^*}(\tau)\right]$.
A uniform bound $\Gamma_{p^*}(\tau)\ge\gamma$ for every nondecreasing
$p^*:\left[0,1\right]\to\left[0,1\right]$ and every $\tau\in\left[0,1\right]$
then proves a competitive ratio of $\gamma$.
We construct this function in three steps.
We first derive two lower bounds on the extra gain $\int_0^1b(s)\cdot\int_s^1S(t)\cdot B_s(t)\,dt\,ds$.
We then combine them with nonnegative weights whose total coefficient at each
pair $\left(s,t\right)$ is at most $b(s)$.
Finally, we use a lower bound on $S$ to add a correction with nonpositive expectation.

\paragraph{First lower bound.}
The warm-up uses $P_i\left(s+u\right)\ge u\cdot p_i(s)$.
We obtain a sharper lower bound by also using the probability accumulated before $s$:
\begin{equation*}
    P_i\left(s+u\right)\ge P_i(s)+u\cdot p_i(s).
\end{equation*}
To express the resulting lower bound using only $p^*$, for
$0\le s\le1$ and $0\le\Delta\le1-s$, let
\begin{equation*}
    \delta=\min\left\{s,\frac{1-\Delta}{2}\right\},\qquad
    y=\int_{s-\delta}^s p^*(v)\,dv,
\end{equation*}
and define
\begin{equation*}
    H_{p^*}\!\left(s,\Delta\right)
    =\log\left(1-y\right)-\log\left(1-y-\Delta\cdot p^*(s)\right)
     -\Delta\cdot p^*(s).
\end{equation*}
We use $\log0=-\infty$; in particular, $H_{p^*}\!\left(s,0\right)=0$.
The integral $y$ records probabilities on an earlier interval.
If we omit this contribution by setting $y=0$, the expression reduces to
$-\log\left(1-\Delta\cdot p^*(s)\right)-\Delta\cdot p^*(s)$,
whose quadratic lower bound was used in the warm-up.

\begin{lemma}\label{lem:conditioning}
For $0\le s\le c\le1$, we have
\begin{equation*}
    \int_s^c S(t)\cdot B_s(t)\,dt
    \ge\E\!\left[H_{p^*}\!\left(s,
    \left(\min\left\{\tau,c\right\}-s\right)^+\right)\right].
\end{equation*}
\end{lemma}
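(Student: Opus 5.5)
We follow the warm-up in the proof of Theorem~\ref{prop:secretary-warmup}, but keep the probability accumulated before $s$. Write $D=\left(\min\left\{\tau,c\right\}-s\right)^+$. Since $S(t)=\E\!\left[\mathbf{1}\left(\tau>t\right)\right]$ and $B_s\ge0$, exchanging expectation and integral gives
\begin{equation*}
 \int_s^c S(t)\cdot B_s(t)\,dt=\E\!\left[\int_0^{D}B_s\!\left(s+u\right)du\right].
\end{equation*}
It therefore suffices to prove the deterministic inequality $\int_0^d B_s(s+u)\,du\ge H_{p^*}(s,d)$ for every $0\le d\le 1-s$. We then apply it at $d=D$, noting $D\le c-s\le 1-s$.

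Fix $d$. Since $p_i$ is nondecreasing, $P_i(s+u)\ge P_i(s)+u\cdot p_i(s)$. Integrating the resulting pointwise lower bound on $p_i(s)/(1-P_i(s+u))$ gives
\begin{equation*}
 \int_0^d B_s(s+u)\,du\ge\sum_i\left[\log\left(1-P_i(s)\right)-\log\left(1-P_i(s)-d\cdot p_i(s)\right)\right]-d\cdot p^*(s).
\end{equation*}
Each summand equals $-\log\left(1-d\cdot p_i(s)/(1-P)\right)$ with $P=P_i(s)$, which is nondecreasing in $P$. We may therefore replace $P_i(s)$ by the smaller quantity
\begin{equation*}
 y_i=\int_{s-\delta}^s p_i(v)\,dv .
\end{equation*}
This is where $\delta\le s$ is used. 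Monotonicity of $p_i$ also gives $y_i\le\delta\cdot p_i(s)$. The remaining task is the aggregation inequality
\begin{equation*}
 \prod_i\frac{1-y_i-d\cdot p_i(s)}{1-y_i}\le\frac{1-y-d\cdot p^*(s)}{1-y},
\end{equation*}
which after taking $-\log$ yields exactly $H_{p^*}(s,d)$. The inputs available for it are three facts. First, $p^*(s)=1-\prod_i\left(1-p_i(s)\right)$. Second, $y=\int_{s-\delta}^sp^*(v)\,dv$. Third, by applying $p^*(v)=1-\prod_i(1-p_i(v))$ pointwise in $v$ and then integrating, $y$ is at most the union-type combination of the $y_i$.

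To prove the aggregation inequality we merge variables two at a time, as in the warm-up's inequality $(1-da)(1-db)\le1-d(a+b-ab)$. Consider two variables with pairs $(x_1,z_1)$ and $(x_2,z_2)$, where $x$ plays the role of $y_i$ and $z$ of $d\cdot p_i(s)$. We would show that
\begin{equation*}
 \frac{(1-x_1-z_1)(1-x_2-z_2)}{(1-x_1)(1-x_2)}\le\frac{1-x-z}{1-x}
\end{equation*}
when the merged pair is $z=z_1+z_2-z_1z_2/d$ (so the merged $p$ is $p_1+p_2-p_1p_2$) and $x$ is any value at most the corresponding merged earlier mass. The right-hand side is nonincreasing in $x$, since $-\log(1-z/(1-x))$ is increasing in $x$. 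Hence it suffices to check the inequality at the largest admissible $x$. Induction then handles all $i$, and the monotonicity in $x$ lets us finally lower the merged earlier mass down to $y$.

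The main obstacle is this two-variable inequality. It is not automatic, because dividing by $1-x_i$ can favor spreading mass. We expect it to hold only under the constraints $x_i\le\delta\cdot p_i$ and $\delta+d\le1$; the latter follows from $\delta\le(1-\Delta)/2$, which also keeps every logarithm's argument positive. We would first try clearing denominators and checking the sign of the resulting bilinear expression in $(p_1,p_2)$ at the extreme case $x_i=\delta p_i$. If that fails, we would try a continuous splitting argument: show that the quantity is monotone when one variable is split into two independent ones.
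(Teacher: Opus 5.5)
Your outline is the paper's proof: exchange of $\E$ and $\int$, the bound $P_i(s+u)\ge P_i(s)+u\,p_i(s)$, replacing $P_i(s)$ by $y_i=\delta q_i$, pairwise merging, then lowering the merged earlier mass to $y$. The gap is that the one nontrivial step, the two-variable inequality, is only something you \emph{would try}, and the hypothesis you expect it to need is wrong.

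Write $a_i=p_i(s)$ and $q_i=y_i/\delta\le a_i$. After clearing the positive denominators, the inequality reads $D(a_1,a_2)\ge0$, where
\begin{equation*}
D=(1-\delta q_1)(1-\delta q_2)(1-\delta q-d a)-(1-\delta q_1-d a_1)(1-\delta q_2-d a_2)(1-\delta q),
\end{equation*}
with $a=a_1+a_2-a_1a_2$ and $q=q_1+q_2-q_1q_2$. The missing idea is to hold the earlier masses $q_1,q_2$ fixed and move the current probabilities. $D$ is affine in each $a_i$. Given $2\delta+d\le1$, the coefficient of $a_1$ is nonnegative whenever $a_2\ge q_2$, and symmetrically for $a_2$. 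Hence $D$ is minimized at $a_i=q_i$. This is your extreme case $x_i=\delta p_i$, but it must be reached by lowering $p_i$ with $x_i$ fixed. At that point $D(q_1,q_2)=d\,q_1q_2\,(1-2\delta-d+\delta(\delta+d)q)$, which is nonnegative precisely because $2\delta+d\le1$.

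The condition $\delta+d\le1$ that you name is not enough. For $\delta=0.4$, $d=0.5$ and $a_i=q_i\to0$, the bracket tends to $-0.3$ and the two-variable inequality is false. This is exactly why $\delta$ is capped at $(1-\Delta)/2$ rather than at $1-\Delta$.

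Your third fact is also not justified as stated. Applying $p^*=1-\prod_i(1-p_i)$ pointwise and integrating only gives $y=\delta-\int_{s-\delta}^s\prod_i(1-p_i(v))\,dv$. The final lowering of $x$ needs $y\le\delta\,(1-\prod_i(1-q_i))$. For that you must show that the average of $\prod_i(1-p_i)$ over $[s-\delta,s]$ is at least the product of the averages. This is Chebyshev's integral inequality for the similarly ordered, nonincreasing functions $1-p_i$. It genuinely uses the monotonicity of every $p_i$: for oppositely ordered functions (say $p_1$ and $p_2$ indicators of the two halves of the interval) it fails. With these two pieces supplied, the rest of your argument goes through as in the paper, including the induction over variables and the final substitution $d=(\min\{\tau,c\}-s)^+$.
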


\begin{proof}
Fix $0\le s\le c\le1$. If $s=c$, both sides are zero, so assume $s<c$.
For any $\Delta\in\left[0,1-s\right]$, use $\delta$ and $y$ from the
definition of $H_{p^*}\!\left(s,\Delta\right)$ above.
Integrating the preceding monotonicity bound gives
\begin{equation*}
 \int_0^\Delta B_s\!\left(s+u\right)\,du
 \ge\sum_i-\log\left(1-\frac{\Delta\cdot p_i(s)}{1-P_i(s)}\right)
 -\Delta\cdot p^*(s).
\end{equation*}
Compared with the warm-up analysis, we keep $P_i(s)$, the contribution of
probabilities accumulated before $s$.
For $\delta>0$, define the average exceedance probability of variable $i$ on
$\left[s-\delta,s\right]$ by
\begin{equation*}
 q_i=\frac1\delta\cdot\int_{s-\delta}^s p_i(v)\,dv.
\end{equation*}
When $\delta=0$, set $q_i=0$.
Monotonicity gives $q_i\le p_i(s)$ and $P_i(s)\ge\delta\cdot q_i$, so
\begin{equation*}
 \int_0^\Delta B_s\!\left(s+u\right)\,du
 \ge\sum_i-\log\left(1-\frac{\Delta\cdot p_i(s)}{1-\delta\cdot q_i}\right)
 -\Delta\cdot p^*(s).
\end{equation*}

We next combine the individual factors.
For any pairs $\left(a_i,q_i\right)$ satisfying $0\le q_i\le a_i\le1$,
and nonnegative $\delta,\Delta$ with $2\cdot\delta+\Delta\le1$, we claim that
\begin{equation*}
 \prod_i\left(1-\frac{\Delta\cdot a_i}{1-\delta\cdot q_i}\right)
 \le1-\frac{\Delta\cdot\left(1-\prod_i\left(1-a_i\right)\right)}
 {1-\delta\cdot\left(1-\prod_i\left(1-q_i\right)\right)}.
\end{equation*}
To prove this, first merge two pairs and write $a=a_1+a_2-a_1\cdot a_2$ and $q=q_1+q_2-q_1\cdot q_2$.
We show that
\begin{equation*}
 \left(1-\frac{\Delta\cdot a_1}{1-\delta\cdot q_1}\right)
 \cdot\left(1-\frac{\Delta\cdot a_2}{1-\delta\cdot q_2}\right)
 \le1-\frac{\Delta\cdot a}{1-\delta\cdot q}.
\end{equation*}
After multiplying by the positive denominators, this is equivalent to
$D\!\left(a_1,a_2\right)\ge0$, where
\begin{align*}
 D\!\left(a_1,a_2\right)
 &=\left(1-\delta\cdot q_1\right)\cdot\left(1-\delta\cdot q_2\right)
   \cdot\left(1-\delta\cdot q-\Delta\cdot a\right)\\
 &\quad-\left(1-\delta\cdot q_1-\Delta\cdot a_1\right)
   \cdot\left(1-\delta\cdot q_2-\Delta\cdot a_2\right)
   \cdot\left(1-\delta\cdot q\right).
\end{align*}
Fix $q_1,q_2$. The expression $D$ is linear in either $a_i$ when the other is fixed.
It can be verified that the coefficients of both $a_1$ and $a_2$ are nonnegative.
Thus $D$ is minimized at $\left(a_1,a_2\right)=\left(q_1,q_2\right)$, where $D\!\left(q_1,q_2\right)
 =\Delta\cdot q_1\cdot q_2\cdot
 \left(1-2\cdot\delta-\Delta+\delta\cdot\left(\delta+\Delta\right)\cdot q\right)
 \ge0$.
This proves the two-pair inequality.
Repeating the merging proves the claimed product inequality.

Apply this inequality with $a_i=p_i(s)$ and take negative logarithms.
The choice of $\delta$ ensures $2\cdot\delta+\Delta\le1$, and we obtain
\begin{equation*}
 \int_0^\Delta B_s\!\left(s+u\right)\,du
 \ge-\log\left(1-\frac{\Delta\cdot p^*(s)}
 {1-\delta\cdot\left(1-\prod_i\left(1-q_i\right)\right)}\right)
 -\Delta\cdot p^*(s).
\end{equation*}
For $\delta>0$, the functions $1-p_i(v)$ are nonnegative and nonincreasing.
Applying Chebyshev's integral inequality repeatedly gives
\begin{align*}
 \prod_i\left(1-q_i\right)
 =\prod_i\left(\frac1\delta\cdot\int_{s-\delta}^s\left(1-p_i(v)\right)\,dv\right)
 \le\frac1\delta\cdot\int_{s-\delta}^s\prod_i\left(1-p_i(v)\right)\,dv
 =1-\frac y\delta.
\end{align*}
Hence $\delta\cdot\left(1-\prod_i\left(1-q_i\right)\right)\ge y$.
This also holds for $\delta=0$, when $q_i=y=0$.
Substituting into the logarithmic bound yields
\begin{equation*}
 \int_0^\Delta B_s\!\left(s+u\right)\,du
 \ge\log\left(1-y\right)-\log\left(1-y-\Delta\cdot p^*(s)\right)
 -\Delta\cdot p^*(s)
 =H_{p^*}\!\left(s,\Delta\right).
\end{equation*}

Finally, use $S(t)=\PP\!\left(\tau>t\right)$ for $t<1$ and apply this bound
with $\Delta=\left(\min\left\{\tau,c\right\}-s\right)^+$:
\begin{align*}
 \int_s^c S(t)\cdot B_s(t)\,dt
 &=\E\!\left[\int_s^c\mathbf{1}\!\left(\tau>t\right)\cdot B_s(t)\,dt\right]\\
 &=\E\!\left[\int_0^{\left(\min\left\{\tau,c\right\}-s\right)^+}
 B_s\!\left(s+u\right)\,du\right]\\
 &\ge\E\!\left[H_{p^*}\!\left(s,
 \left(\min\left\{\tau,c\right\}-s\right)^+\right)\right].\qedhere
\end{align*}
\end{proof}

\paragraph{Second lower bound.}
We also bound $S(t)\cdot B_s(t)$ at a fixed time $t$ by comparing the algorithm's stopping
probability with the arrival of $X^*$.
Let $P^*(t)=\int_0^t p^*(v)\,dv$.
For $z\le t$, the difference $P^*(t)-P^*(z)$ is the probability that
the maximum arrives during $\left[z,t\right]$ and exceeds its threshold.
The next lower bound relates this probability and $1-S(t)$ to the extra gain.

\begin{lemma}\label{lem:late}
For $0\le z\le s\le t<1$, we have
\begin{equation*}
    S(t)\cdot B_s(t)\ge\left(1-S(t)\right)\cdot p^*(s)
    +\frac{P^*(t)-P^*(z)-\left(1-S(t)\right)}{1-t+z}.
\end{equation*}
\end{lemma}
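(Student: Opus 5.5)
The plan is to strip the inequality of the factor $S(t)$ and reduce it to a coupling argument on the probability space of values and arrival times. Write $Q_i=\prod_{k\ne i}\left(1-P_k(t)\right)$, so that $S(t)\cdot B_s(t)=\sum_i p_i(s)\cdot Q_i-S(t)\cdot p^*(s)$. Then the claimed inequality is equivalent to
\begin{equation*}
 p^*(s)-\sum_i p_i(s)\cdot Q_i\le\frac{\left(1-S(t)\right)-\left(P^*(t)-P^*(z)\right)}{1-t+z}.
\end{equation*}
I will prove this form by interpreting both sides as probabilities of events defined by the values $X_i$ and the independent uniform arrival times $t_i$.

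For the left-hand side, say that variable $k$ \emph{fires before $t$} if $t_k<t$ and $X_k>\theta(t_k)$. For each $i$, the event ``no $k\ne i$ fires before $t$'' depends only on $\{(X_k,t_k)\}_{k\ne i}$, so it is independent of $X_i$ and has probability $Q_i$. Hence $p_i(s)\cdot Q_i=\PP\!\left(X_i>\theta(s),\ \text{no }k\ne i\text{ fires before }t\right)$. Summing over $i$ gives the expected number of such $i$, which is at least the probability that the maximizer $i^*$ (ties broken by index) has this property. Therefore
\begin{equation*}
 p^*(s)-\sum_i p_i(s)\cdot Q_i\le\PP\!\left(\mathcal A\right),\qquad \mathcal A=\left\{X^*>\theta(s),\ \text{some }k\ne i^*\text{ fires before }t\right\}.
\end{equation*}

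For the right-hand side, let $G$ be the event that some variable fires before $t$, so that $\PP(G)=1-S(t)$. Let $M$ be the event that $i^*$ arrives in $[z,t]$ and $X^*>\theta(t_{i^*})$, so that $\PP(M)=P^*(t)-P^*(z)$; the paper already uses this interpretation, and it relies on $t_{i^*}$ being uniform and independent of the values. Up to the null event $t_{i^*}=t$, we have $M\subseteq G$. Hence the numerator on the right equals $\PP\!\left(G\setminus M\right)$. The remaining claim is therefore $(1-t+z)\cdot\PP(\mathcal A)\le\PP\!\left(G\setminus M\right)$.

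The key step is to condition on the arrival time of the maximizer. Conditional on all values, $\mathcal A$ is determined by the values together with $\{t_k\}_{k\ne i^*}$. The time $t_{i^*}$ is independent of these and uniform, so $\PP\!\left(\mathcal A\cap\{t_{i^*}\notin[z,t]\}\right)=(1-t+z)\cdot\PP(\mathcal A)$. On $\mathcal A\cap\{t_{i^*}\notin[z,t]\}$, some $k\ne i^*$ fires before $t$, so $G$ holds; and $i^*$ arrives outside $[z,t]$, so $M$ fails. This gives the required inequality, and rearranging recovers the statement.

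I expect the main obstacle to be the bookkeeping in this conditioning step. One must check that $i^*$ is a function of the values alone, and that the event ``some $k\ne i^*$ fires'' does not involve $t_{i^*}$. Both hold because the event refers only to $k\ne i^*$ and ties are broken by index. One must also handle the degenerate case $X^*\le\theta(s)$ and the boundary $t_{i^*}=t$, both of which are harmless.
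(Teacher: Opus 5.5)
Your proposal is correct and is essentially the paper's proof. Your event $\mathcal A\cap\{t_{i^*}\notin[z,t]\}$ is the paper's second event, with probability $P_2=(1-t+z)\cdot\PP(\mathcal A)$; your bound $p^*(s)-\sum_i p_i(s)\cdot Q_i\le\PP(\mathcal A)$ is the paper's comparison with $\sum_i p_i(s)\prod_{k\ne i}(1-P_k(t))$; and your disjoint inclusion of $M$ and $\mathcal A\cap\{t_{i^*}\notin[z,t]\}$ in $G$ is the paper's $1-S(t)\ge P_1+P_2$, the only difference being that you rearrange the target inequality at the start rather than at the end.
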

\begin{proof}
Fix $0\le z\le s\le t<1$.
Choose an index $j$ attaining $X^*$, breaking ties by a fixed order on indices.
We lower-bound $1-S(t)=\PP\!\left(\tau\le t\right)$ using two disjoint events:
\begin{itemize}
    \item $X^*$ arrives during $\left[z,t\right]$ and exceeds its threshold.
    \item Some variable other than $j$ arrives before $t$ and exceeds its threshold;
    $X^*>\theta(s)$; and $t_j\notin\left[z,t\right]$.
\end{itemize}
Both events imply $\tau\le t$.
The first has probability $P_1=P^*(t)-P^*(z)$.
Let $P_2$ be the probability of the second event.
If $X^*>\theta(s)$ and $t_j\notin\left[z,t\right]$, but the second event
does not occur, then $X_j>\theta(s)$ and no other variable exceeds its threshold
upon arrival by time $t$. Hence
\begin{align*}
    p^*(s)-\frac{P_2}{1-\left(t-z\right)}
    &\le\sum_i p_i(s)\cdot\prod_{k\ne i}\left(1-P_k(t)\right)\\
    &=S(t)\cdot\sum_i\frac{p_i(s)}{1-P_i(t)}
     =S(t)\cdot\left(B_s(t)+p^*(s)\right).
\end{align*}
Rearranging gives
\begin{equation*}
    P_2\ge\left(1-t+z\right)\cdot
    \left(\left(1-S(t)\right)\cdot p^*(s)-S(t)\cdot B_s(t)\right).
\end{equation*}
Substituting into $1-S(t)\ge P^*(t)-P^*(z)+P_2$ proves
\begin{equation*}
    S(t)\cdot B_s(t)\ge\left(1-S(t)\right)\cdot p^*(s)
    +\frac{P^*(t)-P^*(z)-\left(1-S(t)\right)}{1-t+z}.\qedhere
\end{equation*}
\end{proof}

\paragraph{Linear combination of the lower bounds.}
Lemmas~\ref{lem:conditioning} and~\ref{lem:late} give lower bounds on the same extra gain.
We therefore allocate the available coefficient $b(s)$ between these two lower bounds.
We choose the endpoint and weight functions without using the unknown distributions. The resulting lower bound must hold for every admissible $p^*$.
For Lemma~\ref{lem:conditioning}, choose endpoints
$c_j(s)\in\left[s,1\right]$ and weights $\lambda_j(s)\ge0$.
An application with endpoint $c_j(s)$ uses coefficient $\lambda_j(s)$
at every $t\in\left[s,c_j(s)\right]$.
For Lemma~\ref{lem:late}, choose starting points
$z_k\!\left(s,t\right)\in\left[0,s\right]$ and weights
$\mu_k\!\left(s,t\right)\ge0$ at each pair $\left(s,t\right)$.
To keep the total coefficient within $b(s)$, require
\begin{equation}\label{eq:secretary-capacity}
    \sum_{j:c_j(s)\ge t}\lambda_j(s)
    +\sum_k\mu_k\!\left(s,t\right)\le b(s),
    \qquad 0\le s\le t<1.
\end{equation}
Because $S(t)\cdot B_s(t)\ge0$, this constraint gives
\begin{align*}
    \int_0^1b(s)\cdot\int_s^1S(t)\cdot B_s(t)\,dt\,ds
    &\ge\int_0^1\sum_j\lambda_j(s)\cdot\int_s^{c_j(s)}S(t)\cdot B_s(t)\,dt\,ds\\
    &\quad+\int_0^1\int_s^1\sum_k\mu_k\!\left(s,t\right)\cdot S(t)\cdot B_s(t)\,dt\,ds.
\end{align*}
We apply Lemma~\ref{lem:conditioning} to the first term and
Lemma~\ref{lem:late} to the second.
The introduction of the linear combination allows us to derive a stronger lower bound by adjusting these parameters when running the factor-revealing LP.

\paragraph{How to define $\Gamma_{p^*}(\tau)$?}
Combining these contributions with $F_{p^*}(\tau)$ gives an expression
whose expectation lower-bounds $\E\!\left[\ALG\right]$.
Suppose we define this lower bound by $\Gamma_{p^*}(\tau)$.
Our goal is to prove $\E\!\left[\ALG\right] \ge\E\!\left[\Gamma_{p^*}(\tau)\right]\ge\gamma$ for every input.
The guarantee must therefore hold for every admissible $p^*$.
However, $p^*$ does not determine the distribution of $\tau$:
the survival probability $S(t)=\prod_i\left(1-P_i(t)\right)$ also depends
on the individual distributions that give rise to $p^*$.
To avoid characterizing all compatible stopping-time distributions,
we will establish $\Gamma_{p^*}(\tau)\ge\gamma$ for every admissible $p^*$
and every possible value $\tau\in\left[0,1\right]$ of the stopping time.
Taking a minimum over all possible stopping times allows a single small value to limit the guarantee, regardless of how unlikely that stopping time is. In particular, a small value at an early stopping time can limit the guarantee even when the algorithm is unlikely to stop so early.
We therefore introduce a decreasing correction term when defining $\Gamma_{p^*}(\tau)$ that shifts value towards earlier stopping times, at the expense of later ones.
We choose a nonnegative integrable weight $\kappa$ without using the
unknown input distributions, and define
\begin{equation*}
    \widehat S(s)=\left(1-p^*(s)\right)\cdot
    \exp\left(\int_0^s\log\left(1-p^*(v)\right)\,dv\right),
    \qquad \forall s\in [0,1].
\end{equation*}
We use these functions to form the correction term
\begin{equation*}
    \int_0^1\kappa(s)\cdot\widehat S(s)\,ds
    -\int_0^\tau\kappa(s)\cdot\left(1-p^*(s)\right)\,ds.
\end{equation*}
This correction term is nonincreasing in $\tau$.
Since $0\le\widehat S(s)\le1-p^*(s)$, the correction is nonnegative
at $\tau=0$ and nonpositive at $\tau=1$.
It can therefore shift value towards earlier stopping times.
The weight $\kappa$ controls the amount of this adjustment at each time.
The factor $1-p^*(s)$ makes the adjustment small where $p^*(s)$ is close
to $1$ and gives the expression $\widehat S$ used in the linear lower
bounds of Appendix~\ref{app:secretary-linear-bounds}.
Lemma~\ref{lem:survival-bounds} below shows that the first integral is
no larger than the expected value of the subtracted integral.
Thus, the correction has nonpositive expectation and preserves a lower bound on expected reward.
Define:
\begin{align*}
    \Gamma_{p^*}(\tau)
    &=F_{p^*}(\tau)
    +\int_0^1\sum_j\lambda_j(s)\cdot
    H_{p^*}\!\left(s,\left(\min\left\{\tau,c_j(s)\right\}-s\right)^+\right)\,ds\\
    &\quad+\int_0^1\int_s^1\sum_k\mu_k\!\left(s,t\right)\cdot
    \left(\mathbf{1}\!\left(\tau\le t\right)\cdot p^*(s)
    +\frac{P^*(t)-P^*\!\left(z_k\!\left(s,t\right)\right)
    -\mathbf{1}\!\left(\tau\le t\right)}{1-t+z_k\!\left(s,t\right)}\right)\,dt\,ds\\
    &\quad+\int_0^1\kappa(s)\cdot\widehat S(s)\,ds
    -\int_0^\tau\kappa(s)\cdot\left(1-p^*(s)\right)\,ds.
\end{align*}

\begin{lemma}\label{lem:survival-bounds}
For every nonnegative integrable weight $\kappa$, the actual stopping time
$\tau$ satisfies
\begin{equation*}
    \E\!\left[
        \int_0^1\kappa(s)\cdot\widehat S(s)\,ds
        -\int_0^\tau\kappa(s)\cdot\left(1-p^*(s)\right)\,ds
    \right]\le0.
\end{equation*}
\end{lemma}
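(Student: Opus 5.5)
The plan is to compute the expectation of the subtracted integral exactly, and then to compare the survival function $S$ with the explicit lower bound hidden in $\widehat S$.

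First, exchange expectation and integration (Tonelli applies, since $\kappa\ge0$ and $1-p^*\ge0$):
\begin{equation*}
 \E\!\left[\int_0^\tau\kappa(s)\cdot\left(1-p^*(s)\right)\,ds\right]
 =\int_0^1\kappa(s)\cdot\left(1-p^*(s)\right)\cdot\PP\!\left(\tau>s\right)\,ds
 =\int_0^1\kappa(s)\cdot\left(1-p^*(s)\right)\cdot S(s)\,ds.
\end{equation*}
The last step uses $S(s)=\PP(\tau>s)$ for $s<1$; the single point $s=1$ has measure zero. The first integral in the correction is deterministic. It therefore suffices to prove the pointwise bound
\begin{equation*}
 S(s)\ge\exp\left(\int_0^s\log\left(1-p^*(v)\right)\,dv\right),
 \qquad s\in[0,1).
\end{equation*}
Multiplying this bound by $\kappa(s)\cdot(1-p^*(s))\ge0$ and integrating gives the lemma.

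To prove the pointwise bound, fix a variable $i$ and $s\in[0,1)$. Write
\begin{equation*}
 1-P_i(s)=\int_0^1 g_i(v)\,dv,
\end{equation*}
where $g_i(v)=1-p_i(v)$ for $v<s$ and $g_i(v)=1$ for $v\ge s$. Since $\log$ is concave, Jensen's inequality on the probability space $[0,1]$ gives
\begin{equation*}
 \log\left(1-P_i(s)\right)\ge\int_0^1\log g_i(v)\,dv=\int_0^s\log\left(1-p_i(v)\right)\,dv.
\end{equation*}
This holds with the convention $\log0=-\infty$, in which case the bound is trivial.

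Summing over $i$ and using $S(s)=\prod_i(1-P_i(s))$ yields
\begin{equation*}
 \log S(s)\ge\int_0^s\log\prod_i\left(1-p_i(v)\right)\,dv.
\end{equation*}
By independence, $\prod_i(1-p_i(v))=\PP(X^*\le\theta(v))=1-p^*(v)$, so the right-hand side equals $\int_0^s\log(1-p^*(v))\,dv$. Exponentiating gives the claimed bound.

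The only delicate point is this Jensen step. Padding the integrand with the constant $1$ on $[s,1]$ is exactly what makes the averaging measure a probability measure, so that the inequality runs in the needed direction. The remaining issues are minor: possible zeros of $1-p_i$, and the endpoint $s=1$, which is a null set for the integral.
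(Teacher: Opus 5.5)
Your proposal is correct and follows essentially the same route as the paper. Both reduce the lemma to the pointwise bound $(1-p^*(s))\cdot S(s)\ge\widehat S(s)$ by exchanging expectation and integration, and both prove it by applying Jensen's inequality to $1-P_i(s)=(1-s)\cdot 1+\int_0^s(1-p_i(v))\,dv$, followed by the identity $\prod_i(1-p_i(v))=1-p^*(v)$.
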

\begin{proof}
We first lower-bound $S(s)$ for $0\le s<1$.
Since $S(s) = \prod_{i}\left( 1-P_i(s) \right)$, we have
\begin{align*}
    \log\left(S(s)\right)
    &=\sum_i\log\left(1-P_i(s)\right)\\
    &\ge\sum_i\int_0^s\log\left(1-p_i(v)\right)\,dv\\
    &=\int_0^s\log\left(\prod_i\left(1-p_i(v)\right)\right)\,dv
    =\int_0^s\log\left(1-p^*(v)\right)\,dv,
\end{align*}
where the inequality follows from Jensen's inequality for the concave logarithm, and $1-P_i(s)=\left(1-s\right)\cdot1
    +\int_0^s\left(1-p_i(v)\right)\,dv$.
Taking exponential and multiplying by $1-p^*(s)$, we obtain
\begin{equation*}
    \left(1-p^*(s)\right)\cdot S(s)\ge\widehat S(s).
\end{equation*}
Using $S(s)=\PP\!\left(\tau>s\right)$ for $s<1$ and exchanging the
expectation and integral, we conclude that
\begin{align*}
    &\E\!\left[
        \int_0^1\kappa(s)\cdot\widehat S(s)\,ds
        -\int_0^\tau\kappa(s)\cdot\left(1-p^*(s)\right)\,ds
    \right]\\
    &\qquad=\int_0^1\kappa(s)\cdot
    \left(\widehat S(s)-\left(1-p^*(s)\right)\cdot S(s)\right)\,ds
    \le0.\qedhere
\end{align*}
\end{proof}

By definition of $\Gamma_{p^*}(\tau)$ and the nonpositive expectation of the correction term, we have the following immediately.
In particular, if $\Gamma_{p^*}(\tau)\ge\gamma$ for every nondecreasing $p^*:\left[0,1\right]\to\left[0,1\right]$ and every possible value $\tau\in\left[0,1\right]$, then the algorithm is $\gamma$-competitive.

\begin{lemma}\label{lem:secretary-criterion}
For the actual input, we have
\begin{equation*}
    \E\!\left[\ALG\right]
    \ge\E\!\left[\Gamma_{p^*}(\tau)\right].
\end{equation*}
\end{lemma}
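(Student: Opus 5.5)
The plan is to start from Lemma~\ref{lem:baseline} and lower-bound the extra-gain term $\int_0^1 b(s)\int_s^1 S(t)B_s(t)\,dt\,ds$ by the expectation of the two middle lines of $\Gamma_{p^*}(\tau)$. Then I add the correction term, whose expectation is nonpositive by Lemma~\ref{lem:survival-bounds}. Subtracting a quantity with nonpositive expectation can only keep a lower bound valid, so $\E[\ALG]\ge\E[\Gamma_{p^*}(\tau)]$ follows once the extra-gain term is handled.

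For the extra gain, I first invoke the capacity constraint \eqref{eq:secretary-capacity}. Since $S(t)B_s(t)\ge0$, replacing $b(s)$ by the smaller total weight $\sum_{j:c_j(s)\ge t}\lambda_j(s)+\sum_k\mu_k(s,t)$ at each pair $(s,t)$ only decreases the integral. This gives the displayed split into a $\lambda$-part and a $\mu$-part.

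For the $\lambda$-part, Fubini rewrites $\int_0^1\sum_j\lambda_j(s)\int_s^{c_j(s)}S(t)B_s(t)\,dt\,ds$. Lemma~\ref{lem:conditioning} is applied with $c=c_j(s)$ for each $s$ and $j$. Multiplying by $\lambda_j(s)\ge0$, integrating, and exchanging expectation and integral (Tonelli, since $H_{p^*}\ge0$ or at least is bounded below appropriately) yields $\E\bigl[\int_0^1\sum_j\lambda_j(s)H_{p^*}(s,(\min\{\tau,c_j(s)\}-s)^+)\,ds\bigr]$.

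For the $\mu$-part, I apply Lemma~\ref{lem:late} pointwise with $z=z_k(s,t)\in[0,s]$, for $t<1$. I then observe that $1-S(t)=\PP(\tau\le t)=\E[\mathbf 1(\tau\le t)]$. The right-hand side of Lemma~\ref{lem:late} is affine in $1-S(t)$, so it equals the expectation of the integrand in the second line of $\Gamma_{p^*}$. Multiplying by $\mu_k\ge0$, summing, integrating over $0\le s\le t<1$, and exchanging expectation with the integrals then finishes this part. The set $t=1$ has measure zero and does not matter.

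The main obstacle is justifying these interchanges, and the case where $H_{p^*}$ takes the value $+\infty$ (via $\log 0$). That case only helps the lower bound. The Lemma~\ref{lem:late} integrand is bounded, since the denominators satisfy $1-t+z\ge 1-t>0$, and integrability follows from $b$ being integrable. Assembling $F_{p^*}$, the two extra-gain bounds, and the correction term by linearity of expectation gives the claim.
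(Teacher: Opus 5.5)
Your decomposition is the same as the paper's. You use Lemma~\ref{lem:baseline}, the split via \eqref{eq:secretary-capacity}, Lemma~\ref{lem:conditioning} on the $\lambda$-part (Tonelli is legitimate there because $H_{p^*}\ge 0$), Lemma~\ref{lem:late} on the $\mu$-part with $1-S(t)=\E[\mathbf{1}(\tau\le t)]$, and Lemma~\ref{lem:survival-bounds} for the correction. Note that the correction is \emph{added}, not subtracted; it is its nonpositive expectation that preserves the lower bound. The step that fails as written is your justification for exchanging $\E$ with the $(s,t)$-integral in the $\mu$-part.

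The integrand $g=\mathbf{1}(\tau\le t)\,p^*(s)+\bigl(P^*(t)-P^*(z_k)-\mathbf{1}(\tau\le t)\bigr)/(1-t+z_k)$ is not bounded. The inequality $1-t+z_k\ge 1-t>0$ holds for each fixed $t<1$, but $1/(1-t)$ is not integrable near $t=1$. The constraint $\mu_k\le b(s)$ with $b$ integrable gives no control in the $t$-direction.

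Here is a concrete failure. Take $z_k\equiv 0$ and $\mu_k=b$, which satisfies \eqref{eq:secretary-capacity}, and an input with $0<P^*(1)<1$.
\begin{itemize}
\item For every $\tau<1$, the integrand on $t\ge\max\{s,\tau\}$ is at most $1-(1-P^*(1))/(1-t)$. Hence the $\mu$-term of $\Gamma_{p^*}(\tau)$ equals $-\infty$.
\item At $\tau=1$, the $\mu$-term equals $\int_0^1 b(s)\int_s^1 P^*(t)/(1-t)\,dt\,ds=+\infty$.
\end{itemize}
Both events $\tau<1$ and $\tau=1$ have positive probability here, so $\E[\Gamma_{p^*}(\tau)]$ is not even defined. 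Fubini is unavailable, even though each pointwise inequality from Lemma~\ref{lem:late} is true.

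The paper avoids this with a standing hypothesis, stated with the LP parametrization in Appendix~\ref{app:ind-cells}. It requires, for each basis function $\psi_{kq}$ and hence for each $\mu_k$, that $\int_0^1\int_s^1\mu_k(s,t)/(1-t+z_k(s,t))\,dt\,ds<\infty$. Since $|g|\le 1+1/(1-t+z_k)$, this bounds $\iint\mu_k|g|$ uniformly in $\tau$. Fubini then legitimately turns $\iint\mu_k\,\E[g]$ into $\E\bigl[\iint\mu_k\,g\bigr]$. You need to add this integrability condition on $(\mu_k,z_k)$, or an equivalent one, for your argument to close.
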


\subsection{Optimizing and Verifying the Pointwise Guarantee}

As in the i.i.d. setting, we restrict $b$ to $N$ equal-length intervals with
nonnegative nondecreasing drop rates $b_1,\ldots,b_N$, and require $\theta(0)\le1$.
We treat the endpoints $c_j(s)$ and starting points $z_k\!\left(s,t\right)$
as outer design parameters. For each fixed choice, we consider
\begin{equation*}
    \sup_{b,\lambda,\mu,\kappa}\left\{ \inf_{\tau,p^*}
    \left\{\Gamma_{p^*}(\tau)\right\} \right\},
\end{equation*}
where $\lambda$ and $\mu$ satisfy~\eqref{eq:secretary-capacity}, $\kappa$ is nonnegative and integrable, $\tau\in[0,1]$, and $p^*$ is nondecreasing.
The outer optimization chooses a common guarantee; the inner minimization
requires it to hold for every admissible $p^*$ and every possible stopping time.

The two nonlinear contributions in $\Gamma$ are $H_{p^*}$ and $\widehat S$.
We lower-bound them by expressions linear in $p^*$, using the bounds in
Appendix~\ref{app:secretary-linear-bounds}.
As in the i.i.d. case, we use interval averages of $p^*$ and bounds valid throughout each stopping-time interval. Finally, monotonicity of those averages reduces the
remaining minimization to finitely many linear constraints.
Appendix~\ref{app:ind-cells} gives the discretization and the resulting
factor-revealing LP.
Every feasible solution certifies $\Gamma_{p^*}(\tau)\ge\gamma$ uniformly;
optimizing the LP searches for the largest such certified value.
We use $N=20$ positive nondecreasing drop rates.
The resulting threshold has initial value $\theta(0)\approx 0.71987853$.

\begin{lemma}
There exists a continuous, concave, strictly decreasing threshold function $\theta$ with admissible proof parameters, such that
\begin{equation*}
    \Gamma_{p^*}(\tau)\ge0.690215522\ldots
\end{equation*}
for every nondecreasing function $p^*:[0,1]\to[0,1]$ and every $\tau\in [0,1]$.
\end{lemma}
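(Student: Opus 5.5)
The plan is to certify the bound numerically through a finite factor-revealing LP, in the same spirit as the i.i.d. certificate of Section~\ref{sec:iid-comparison}. Since $\Gamma_{p^*}(\tau)$ depends on $p^*$ in two nonlinear ways, through $H_{p^*}$ and through $\widehat S$, the whole argument rests on replacing these by linear quantities.

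\textbf{Step 1: fix the design.} First I fix $N=20$ positive nondecreasing drop rates $b_1\le\cdots\le b_N$ with $\sum_k b_k/N=\theta(0)\approx0.71987853\le1$. The resulting $\theta$ is continuous, concave, strictly decreasing and piecewise linear. I also fix the outer parameters: the endpoints $c_j(s)$, the starting points $z_k(s,t)$, and step-function weights $\lambda_j,\mu_k,\kappa$. These are constant on cells of a time grid $0=t_0<\cdots<t_{N_t}=1$ refining the $b$-grid, and they satisfy the capacity constraint~\eqref{eq:secretary-capacity} cellwise, so that Lemma~\ref{lem:secretary-criterion} applies.

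\textbf{Step 2: linearize and discretize.} Next I lower-bound each nonlinear term by an expression linear in $p^*$, using the appendix bounds.
\begin{itemize}
\item For $H_{p^*}(s,\Delta)$: $H$ is convex and increasing in $p^*(s)$, and increasing in $y$. Tangent-line (or chord) lower bounds in $(p^*(s),y)$ on finitely many ranges therefore suffice, and the minimum over ranges is taken later.
\item For $\widehat S$: I use $e^{u}\ge e^{u_0}(1+u-u_0)$, together with $\log(1-p)\ge$ a piecewise chordal bound, valid on each range of $p$.
\end{itemize}
I then describe $p^*$ by its cell averages $\bar p_1\le\cdots\le\bar p_{N_t}$ and its endpoint values. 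Monotonicity of $p^*$ sandwiches every pointwise value $p^*(s)$ within a cell between neighbouring averages or grid values. Since $\Gamma$ is monotone in $\tau$ within a cell up to controlled terms, I restrict $\tau$ to a cell $[t_m,t_{m+1}]$ and use worst-case bounds valid throughout the cell. This is where continuity of $\theta$ and the form of $F_{p^*}$ are used: the term $\int b\min\{s,\tau\}p^*$ is linear in $p^*$ and Lipschitz in $\tau$.

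\textbf{Step 3: assemble and check the LP.} After Step 2, for each stopping-time cell $m$ the quantity $\inf_{p^*}\Gamma_{p^*}(\tau)$ is bounded below by the minimum of a linear function over a polytope of monotone vectors $\bar p$ in $[0,1]^{N_t}$, intersected with the chosen ranges for the linearizations. By LP duality, $\min\ge\gamma$ is equivalent to the existence of dual multipliers. The proof then becomes the exhibition of a finite feasible dual certificate, one per cell and per range combination, checked with outward-rounded rational arithmetic. Jointly optimizing $b,\lambda,\mu,\kappa$ and the duals is a single LP, because $\Gamma$ is linear in those design variables for fixed $p^*$. Its optimum yields $\gamma=0.690215522\ldots$.

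\textbf{Main obstacle.} The hardest part is Step 2. The linearizations of $H_{p^*}$ near $y+\Delta p^*(s)\to1$, where $\log0=-\infty$, and the coupling between $y=\int_{s-\delta}^s p^*$ and the grid must be sound for every monotone $p^*$, not only for step functions, and uniformly in $\tau$ within each cell. I would first handle the blow-up by capping $H$ at a finite linear minorant, which is valid since $H\ge0$ is increasing. I would then bound the discretization error by Lipschitz estimates in the style of the i.i.d. $L(b,q)$ estimate, absorbing it into the slack $0.690215522-0.6902$.
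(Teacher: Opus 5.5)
Your overall architecture matches the paper's: fix $b$ and basis weights, linearize, pass to monotone interval averages, reduce the inner minimum to finitely many linear constraints, then solve and verify. Using LP duality in place of the vertex enumeration of Lemma~\ref{lem:ordered-cube} would be a legitimate variant. The gap is in Step~2, which you rightly call the crux but handle with bounds that are not valid. Write $H_{p^*}(s,\Delta)=\mathcal L(\Delta,x,y)$ with $x=p^*(s)$. Separate convexity in $x$ and in $y$ does not give joint convexity. The Hessian of $\mathcal L$ has determinant $-\Delta^2/\bigl((1-y)^2(1-\Delta x-y)^2\bigr)<0$. On every line $\Delta x+y=\text{const}$, $\mathcal L$ equals $\log(1-y)+y$ plus a constant, which is strictly concave. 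So a tangent plane in $(x,y)$ lies strictly above $\mathcal L$ along such a line, on any range you choose. Chords in the convex variable $x$ are majorants, not minorants. The paper's Lemma~\ref{lem:log-plane} instead fixes slopes $(\sigma,\nu)$ and computes the intercept $\beta=\min_{D_\delta}\{\mathcal L-\sigma x-\nu y\}$ exactly. Monotonicity of $p^*$ puts $(x,y)$ in the fixed triangle $D_\delta=\{0\le x\le 1,\ 0\le y\le\delta x\}$. The indefinite Hessian rules out interior minima, and each edge restriction is convex, so tangent lines on the edges are rigorous. The same monotonicity, with $\delta=\min\{s,(1-\Delta)/2\}$, gives $1-y-\Delta x\ge(1-\Delta)/2$. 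So the blow-up you plan to cap does not occur for $\Delta<1$.

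The $\widehat S$ step fails for a different reason. After $e^u\ge e^{u_0}(1+u-u_0)$ and a chord bound on $\log(1-p^*(v))$, you are left with $(1-p^*(s))$ times an affine functional of $p^*$. That is bilinear, not linear. Also, a chord of $\log(1-p)$ is a minorant only on a known range, which you would need to fix separately for each $v$. Lemma~\ref{lem:direct-survival-plane} uses monotonicity to place every $p^*(v)$ with $v\le s$ in $[0,p^*(s)]$. One chord then gives $\widehat S(s)\ge z^{1+a}$ with $z=1-p^*(s)$ and $a=P^*(s)/p^*(s)$. A weighted AM--GM step turns this into the jointly linear minorant $\rho^c(1+c-c\rho)(1-p^*(s))-\rho^{1+c}P^*(s)$, valid for every nondecreasing $p^*$ and every $s\le c$.

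Because both of the paper's minorants are global, each stopping-time interval yields a single affine function of the averages (Lemma~\ref{lem:secretary-discretization}), hence only $N_p+1$ constraints. Your range-dependent bounds sit inside integrals over $s$, so ``one certificate per range combination'' would require exponentially many certificates.

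Finally, absorbing Lipschitz discretization errors into the slack $0.690215522-0.6902$ would certify $0.6902$, which suffices for Theorem~\ref{thm:secretary}, but not the stated constant. In the paper every cellwise bound is exact; for instance, $\tau+\theta(\tau)\ge v_\ell$ holds on $T_\ell$ because $T_\ell$ lies in one linear piece of $\theta$. So the LP value itself is the certified bound. A different LP from yours would not reproduce $0.690215522\ldots$ in any case.
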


Applying Lemma~\ref{lem:secretary-criterion} to the actual input now gives
\begin{equation*}
    \E\!\left[\ALG\right]\ge\E\!\left[\Gamma_{p^*}(\tau)\right]>0.6902.
\end{equation*}
This proves Theorem~\ref{thm:secretary} for every $n$.

\section*{Declaration for the Use of AI}
We used OpenAI's GPT-6 Astra and GPT-6.1 Sol to assist with mathematical analysis, language editing, notation consistency checks, and LaTeX formatting. All mathematical statements, proofs, calculations, and citations were reviewed and verified by the authors, who take full responsibility for the content of the paper.

\clearpage
\bibliographystyle{alpha}
\bibliography{ref}

\clearpage

\addtocontents{toc}{\protect\setcounter{tocdepth}{1}}
\appendix

\section{Supplementary I.I.D. Analysis}\label{app:primal}

\begin{lemma}\label{lem:k-bounds}
Every nonnegative nondecreasing $q$ satisfies $0\le k(t)\le S(t)\le1$.
\end{lemma}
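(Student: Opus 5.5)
The upper bounds $k(t)\le S(t)\le 1$ are immediate. $S(t)$ is a probability. Moreover $q(t)\cdot A(t)\ge 0$ and $A(0)\cdot e^{-q(t)}\ge 0$, since $q\ge 0$ and $S\ge 0$. All the work is therefore in the lower bound $k(t)\ge 0$, that is,
\begin{equation*}
 q(t)\cdot A(t)+A(0)\cdot e^{-q(t)}\le S(t).
\end{equation*}
Fix $t\in[0,1)$ and write $Q=q(t)$. The plan is to bound every occurrence of $S(s)$ on the left-hand side by $S(t)$ times an explicit exponential, using only that $q$ is nondecreasing. The key point is to treat $s<t$ and $s>t$ separately.

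\textbf{Step 1 (pointwise comparison of survival).} For $s\ge t$ we have $\int_t^s q\ge Q\cdot(s-t)$, so $S(s)=S(t)\cdot\exp\left(-\int_t^s q\right)\le S(t)\cdot e^{-Q\cdot(s-t)}$. For $s\le t$ we have $\int_s^t q\le Q\cdot(t-s)$, so $S(s)=S(t)\cdot\exp\left(\int_s^t q\right)\le S(t)\cdot e^{Q\cdot(t-s)}$. The value $S(1)$ only affects a single point and does not change any integral.

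\textbf{Step 2 (integrate).} Step~1 gives
\begin{equation*}
 A(t)\le S(t)\cdot\frac{1-e^{-Q(1-t)}}{Q}
 \qquad\text{and}\qquad
 \int_0^t S(s)\,ds\le S(t)\cdot\frac{e^{Qt}-1}{Q}.
\end{equation*}
Split $A(0)=\int_0^t S(s)\,ds+A(t)$. The left-hand side of the target inequality is then at most $S(t)$ times
\begin{equation*}
 \left(Q+e^{-Q}\right)\cdot\frac{1-e^{-Q(1-t)}}{Q}+e^{-Q}\cdot\frac{e^{Qt}-1}{Q}
 =1-e^{-Q(1-t)}+e^{-Q(1-t)}\cdot\frac{1-e^{-Q}}{Q},
\end{equation*}
where the equality follows by expanding and observing that the $e^{-Q}/Q$ terms cancel.

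\textbf{Step 3 (conclude).} Since $(1-e^{-Q})/Q\le 1$ for $Q>0$, the bracket is at most $1$, which gives $k(t)\ge 0$. The case $Q=0$ is handled separately: then $q\equiv 0$ on $[0,t]$, so $S(s)=1$ there, and the claim reduces to $A(0)\le 1$. This is trivial, and it is also the $Q\to 0$ limit of the bound above.

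The only delicate point is Step~2. Bounding $A(0)$ crudely by $1$, or using only the forward comparison, loses too much. The two-sided comparison around $t$ is exactly what makes the cancellation in Step~2 produce the factor $(1-e^{-Q})/Q\le 1$.
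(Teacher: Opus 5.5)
Your proof is correct. The cancellation in Step~2 checks out, and it in fact gives the quantitative bound $k(t)\ge S(t)\,e^{-Q(1-t)}\bigl(1-\frac{1-e^{-Q}}{Q}\bigr)$.

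The paper's route is shorter. It uses only your forward comparison on $A(t)$, obtaining $S(t)-q(t)\,A(t)\ge S(t)\,e^{-q(t)(1-t)}$. It then applies monotonicity backward to $S(t)$ itself, not to $\int_0^t S(s)\,ds$. Since $\int_0^t q\le t\,q(t)$, one has $S(t)\ge e^{-t\,q(t)}$, and therefore $S(t)-q(t)\,A(t)\ge e^{-q(t)}$. This is exactly the coefficient of $A(0)$ in $k$. The crude bound $A(0)\le 1$ then finishes the argument: $k(t)\ge\bigl(1-A(0)\bigr)e^{-q(t)}\ge 0$.

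So your closing remark is inaccurate. Bounding $A(0)$ by $1$ loses nothing, provided the backward comparison is spent on $S(t)$ rather than on the integral.

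Both arguments use monotonicity on both sides of $t$. The paper's placement avoids splitting $A(0)$, the cancellation computation, and the separate $Q=0$ case, and it yields the clean bound $(1-A(0))\,e^{-q(t)}$. The two lower bounds coincide, and are exact, when $q$ is constant.
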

\begin{proof}
Since $q$ is nondecreasing, we have
\begin{align*}
 S(t)-q(t)\cdot A(t)
 &=S(t)-q(t)\cdot\int_t^1S(s)\,ds\\
 &\ge S(t)-q(t)\cdot S(t)
       \cdot \int_t^1e^{-q(t)\cdot\left(s-t\right)}\,ds\\
 &=S(t)\cdot e^{-q(t)\cdot\left(1-t\right)}\\
 &=\exp\left(-\int_0^tq(u)\,du-q(t)\cdot\left(1-t\right)\right)\\
 &\ge\exp\left(-t\cdot q(t)-q(t)\cdot\left(1-t\right)\right)
 =e^{-q(t)},
\end{align*}
where the first inequality uses
$S(s)\le S(t)\cdot e^{-q(t)\cdot\left(s-t\right)}$ for $s\ge t$.
It follows that
$k(t)\ge\left(1-A(0)\right)\cdot e^{-q(t)}\ge0$.
Finally, $k(t)\le S(t)\le1$ follows since $q(t)\geq 0$ and $A(0)\geq 0$.
\end{proof}

\subsection{Proof of Lemma~\ref{lem:iid-continuation}}
\label{app:iid-continuation}

We first construct $V_t$ and establish its basic properties.
Define
\begin{equation*}
    c=\max_{0\le s\le1}\left\{b(s)+\theta(s)\right\}-1.
\end{equation*}
Note that $k(s) \le S(s) - A(s)\cdot \left(q(s)+e^{-q(s)}\right) \leq S(s) - A(s)$.
If $c\le 0$, we have
\begin{align*}
 L-\theta(0)
 &\ge\left(1-\theta(0)\right)\cdot A(0)
   -\int_0^1b(s)\cdot\left(S(s)-A(s)\right)\,ds\\
& =\int_0^1 S(s)\cdot
\left(1-\theta(0)-b(s)+ \int_0^s b(t)\,dt \right)\,ds\\
 &=\int_0^1S(s)\cdot\left(1-\theta(s)-b(s)\right)\,ds\ge0.
\end{align*}
% We extend this bound to unbounded rates by truncation.
In this case, we take $V_t=0$.
We assume henceforth that $c>0$.
For a fixed rate $q$, define
\begin{equation*}
 S_t(s)=\exp\left(-\int_t^s q(u)du\right),\qquad
 A_t(s)=\int_s^1S_t(u)du.
\end{equation*}
At $t=0$, we have $S_0=S$ and $A_0=A$ as defined in Section~\ref{sec:iid-reduction}.
As discussed in Section~\ref{sec:iid-comparison}, we discretize $\left[0,1\right]$ into $N_t$ equal-length intervals and define $t_k=k/N_t$.
In defining $V_t$, we minimize over bounded nonnegative $q$ that are nondecreasing separately on each time interval $\left[t_k,t_{k+1}\right)$, without requiring monotonicity across different intervals.
We can therefore minimize independently over $q$ on consecutive intervals.
For each such $q$, $0\le y\le1$, and $x\ge0$, define
\begin{align*}
 V_t\!\left(y,x,q\right)
 &=\left(c\cdot x+y\cdot\left(1-\theta(t)\right)\right)\cdot A_t(t)
   -y\cdot\int_t^1b(s)\cdot\left(S_t(s)-q(s)\cdot A_t(s)\right)\,ds\\
 &\quad+\left(1-y+t\cdot y+y\cdot A_t(t)\right)\cdot
        \int_t^1 b(s)\cdot e^{-q(s)}\,ds.
\end{align*}
Let $V_t\!\left(y,x\right)=\inf_q\left\{V_t\!\left(y,x,q\right)\right\}$,
which is well-defined for the whole domain $0\le y\le1$ and $x\ge0$, instead of just $0\leq x\leq y\leq 1$.
In the following, we establish a few properties of $V_t\!\left(y,x\right)$. 
\begin{enumerate}
 \item We have $L\ge\theta(0)+V_0\!\left(1,0\right)$, since for every admissible $q$, we have $V_0\!\left(1,0,q\right)=L-\theta(0)$.

 \item We have $-y\cdot\theta(t)\le V_t\!\left(y,x\right)\le0$.
 Since $\theta(t)\le\theta(0)\le1$, we obtain the lower bound by discarding all terms except $-y\cdot\int_t^1b(s)\cdot S_t(s)\,ds$ and using $S_t\le1$.
 For the upper bound, we take the constant rate $q=M$.
 We then have $S_t(s)-M\cdot A_t(s)=S_t\left(1\right)$, and every term in $V_t\!\left(y,x,q\right)$ tends to zero as $M\to\infty$.
 In particular, we have $V_t\!\left(0,x\right)=0$ and $V_1\left(y,x\right)=0$.

 \item The function $V_t\!\left(y,x\right)$ is nondecreasing in $x$ and nonincreasing in $y$.
 For each $q$, $V_t\!\left(y,x,q\right)$ is a constant plus a linear expression in $x,y$, with constant term $\int_t^1b(s)\cdot e^{-q(s)}\,ds\ge0$, and $x$ coefficient $c\cdot A_t(t)\ge0$.
 Since $V_t\!\left(y,x\right)\leq 0$, we can discard every $q$ for which the $y$ coefficient of $V_t\!\left(y,x,q\right)$ is nonnegative.
 For the retained rates, we have a nonnegative $x$ coefficient and a negative $y$ coefficient.
 Taking the infimum yields the property.

 \item For every fixed $t$, $V_t\!\left(y,x\right)$ is concave and continuous in $\left(y,x\right)$.
 We obtain concavity by taking the infimum of these functions $V_t\!\left(y,x,q\right)$.
 Since $V_t\!\left(y,x,q\right)$ is continuous for every fixed $q$, it suffices to upper and lower bound the coefficients of $y$ and $x$ by some constant, which can be easily verified by the definition of $V_t\!\left(y,x\right)$.

 \item We have $V_t\!\left(y,x\right)=0$ whenever $x\ge y$.
 Since $A_t(t)\ge A_t(s)$ and
$q(s)+e^{-q(s)}\ge1$, we have
\begin{equation*}
 S_t(s)-q(s)\cdot A_t(s)
 -A_t(t)\cdot e^{-q(s)}
 \le S_t(s)-A_t(s).
\end{equation*}
Substituting this into the definition of
$V_t\!\left(y,x,q\right)$, we obtain
\begin{align*}
 V_t\!\left(y,x,q\right)
 &\ge\int_t^1 S_t(s)\cdot
 \left(c\cdot x+y\cdot
 \left(1-\theta(s)-b(s)\right)\right)\,ds\\
 &\quad+\left(1-y+t\cdot y\right)\cdot
 \int_t^1 b(s)\cdot e^{-q(s)}\,ds
 \ge0,
\end{align*}
where the last inequality uses $x\ge y$,
$c\ge b(s)+\theta(s)-1$, and $0\le y\le1$.
Together with $V_t\!\left(y,x\right)\le0$, this proves
$V_t\!\left(y,x\right)=0$.
\end{enumerate}

Fix consecutive partition points $t=t_k$ and $t+\Delta=t_{k+1}$, where $\Delta=1/N_t$, and a state $0\le x<y\le1$.
Write $b>0$ for the constant drop rate on $\left[t,t+\Delta\right]$ and define
\begin{equation*}
 p=c\cdot x+y\cdot\left(1-\theta(t)-b\right),
 \qquad u=b\cdot\left(1-y+t\cdot y\right).
\end{equation*}
For each bounded nonnegative rate $q$ that is nondecreasing on this interval, define
\begin{equation*}
 I_q=\int_t^{t+\Delta}S_t(s)\,ds,
 \qquad E_q=\int_t^{t+\Delta}e^{-q(s)}\,ds.
\end{equation*}
For finite $r\ge0$ and the constant rate $q(s)=r$, we write $I_r=I_q$; thus $I_r=\int_0^\Delta e^{-r\cdot s}\,ds$.
We have $I_0=\Delta$ and $I_r=\left(1-e^{-\Delta\cdot r}\right)/r$ for $r>0$.
Define
\begin{equation*}
 R_q=p\cdot I_q-y\cdot b\cdot\int_t^{t+\Delta}S_t(s)\cdot\log S_t(s)\,ds
       +u\cdot E_q+y\cdot b\cdot I_q\cdot E_q.
\end{equation*}

We first relate $V_t$ to $V_{t+\Delta}$, keeping $R_q$ explicit.
We will then bound $R_q$ using only the interval containing the mean of $q$.
To relate consecutive times, fix partition points $0=r_0<r_1<\cdots<r_{N_q}=\infty$.
We cover $\left[0,\infty\right)$ by the intervals $\left[r_{j-1},r_j\right]$, $j\in[N_q]$, and define
\begin{align*}
 \phi_j&=1-y+y\cdot\left(t+I_{r_{j-1}}
             +e^{-\Delta\cdot r_{j-1}}\cdot\left(1-t-\Delta\right)\right),\\
 y_j&=\frac{y\cdot e^{-\Delta\cdot r_{j-1}}}{\phi_j},\\
 x_j&=\min\!\left\{y_j,\frac{e^{-\Delta\cdot r_{j-1}}\cdot
       \left(c\cdot x+y\cdot b\cdot\Delta\cdot
           \left(r_{j-1}+e^{-r_{j-1}}-1\right)\right)}{c\cdot\phi_j}\right\}.
\end{align*}
Note that all these quantities are independent of $q$.
Moreover, we have $\phi_j>0$ since $I_{r_{j-1}}\ge\Delta\cdot e^{-\Delta\cdot r_{j-1}}$, which implies $\phi_j\ge1-y+y\cdot e^{-\Delta\cdot r_{j-1}}>0$.
Furthermore, we have $0<y_j\le1$.
We also have $x_j\ge0$ because $r_{j-1}+e^{-r_{j-1}}-1\ge0$.
Therefore, we have $0\le x_j\le y_j \leq 1$.

\begin{lemma}\label{lem:iid-state-step}
For every rate $q$ allowed in the definition of $V_t$ with
\begin{equation*}
 r_{j-1}\le\frac1\Delta\cdot \int_t^{t+\Delta}q(s)\,ds\le r_j,
\end{equation*}
we have
\begin{equation*}
 V_t\!\left(y,x,q\right)\ge R_q+
       \phi_j\cdot V_{t+\Delta}\left(y_j,x_j\right).
\end{equation*}
\end{lemma}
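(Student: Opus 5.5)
The plan is a one-step dynamic-programming decomposition of $V_t(y,x,q)$ at the partition point $t+\Delta$. I will rewrite the part of the expression on $[t+\Delta,1]$ as a positive multiple of $V_{t+\Delta}(y',x',q)$ for a $q$-dependent state $(y',x')$. Everything from $[t,t+\Delta]$ will be collected into $R_q$. Finally, I will replace the $q$-dependent state and multiplier by the $q$-independent $(y_j,x_j)$ and $\phi_j$, using only $\frac1\Delta\int_t^{t+\Delta}q\ge r_{j-1}$ together with the monotonicity and concavity properties of $V_{t+\Delta}$ established above.

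\emph{Step 1 (splitting).} Write $\sigma=S_t(t+\Delta)$. For $s\ge t+\Delta$ we have $S_t(s)=\sigma\, S_{t+\Delta}(s)$ and $A_t(s)=\sigma A_{t+\Delta}(s)$. We also have $A_t(t)=I_q+\sigma A_{t+\Delta}(t+\Delta)$. On $[t,t+\Delta]$ the drop rate is the constant $b$ and $\theta(s)=\theta(t)-b(s-t)$. For the term $\int_t^{t+\Delta} q(s)A_t(s)\,ds$ I integrate by parts using $-\log S_t(s)=\int_t^s q$:
\begin{equation*}
\int_t^{t+\Delta} q(s)A_t(s)\,ds=-\log\sigma\cdot A_t(t+\Delta)-\int_t^{t+\Delta}S_t(s)\log S_t(s)\,ds .
\end{equation*}
This produces the entropy-type term of $R_q$. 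The boundary piece $-\log\sigma\cdot\sigma A_{t+\Delta}(t+\Delta)$ is absorbed into the tail.

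Expanding the prefactor $1-y+ty+yA_t(t)$ splits $\int_t^{t+\Delta}b\,e^{-q}$ into $u E_q+ybI_qE_q$ plus a term $y b\,\sigma A_{t+\Delta}(t+\Delta)\,E_q$ that goes to the tail. Collecting the terms linear in $I_q$ gives $p\,I_q$.

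\emph{Step 2 (matching the tail).} Set
\begin{equation*}
\phi=1-y+y\left(t+I_q+\sigma(1-t-\Delta)\right),\qquad y'=\frac{y\sigma}{\phi}.
\end{equation*}
Then
\begin{equation*}
1-y+ty+yA_t(t)=\phi\left(1-y'+(t+\Delta)y'+y'A_{t+\Delta}(t+\Delta)\right),
\end{equation*}
so the $\int_{t+\Delta}^1 b\,e^{-q}$ coefficient matches the definition of $V_{t+\Delta}$. The remaining tail coefficients of $A_{t+\Delta}(t+\Delta)$ are
\begin{equation*}
\sigma\left(cx+y(1-\theta(t))\right)+y b\,\sigma\left(-\log\sigma+\Delta E_q\right)-y\sigma b\Delta .
\end{equation*}
Using $1-\theta(t+\Delta)=1-\theta(t)+b\Delta$, these should reorganize as
\begin{equation*}
\phi\left(c x'+y'(1-\theta(t+\Delta))\right),\qquad x'=\frac{\sigma\left(cx+yb(-\log\sigma+E_q-\Delta)\right)}{c\phi}.
\end{equation*}
Any discrepancy has to be checked to be either exactly zero or moved into $R_q$. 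This gives
\begin{equation*}
V_t(y,x,q)=R_q+\phi\, V_{t+\Delta}(y',x',q|_{[t+\Delta,1]})\ge R_q+\phi\,V_{t+\Delta}(y',x').
\end{equation*}
This step uses that $q$ may be minimized independently on later intervals.

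\emph{Step 3 (removing $q$).} With $\bar r=\frac1\Delta\int_t^{t+\Delta}q\ge r_{j-1}$ we get $\sigma=e^{-\Delta\bar r}\le e^{-\Delta r_{j-1}}$. Convexity of $e^{-\cdot}$ gives $E_q\ge\Delta e^{-\bar r}$, so $-\log\sigma+E_q-\Delta\ge\Delta(\bar r+e^{-\bar r}-1)$. The map $r\mapsto r+e^{-r}-1$ is increasing, hence this is at least $\Delta(r_{j-1}+e^{-r_{j-1}}-1)$.

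The comparison of $I_q$ with $I_{r_{j-1}}$ and of $\phi$ with $\phi_j$ is the delicate part. Since $V\le 0$, I want $\phi V_{t+\Delta}(y',x')\ge\phi_jV_{t+\Delta}(y_j,x_j)$. I will argue via the perspective map
\begin{equation*}
\lambda\mapsto\lambda\,V_{t+\Delta}\!\left(\frac{Y}{\lambda},\frac{X}{\lambda}\right),
\end{equation*}
with $Y=y\sigma$ and $X=\phi x'$ fixed. This map is nondecreasing in $\lambda$ by concavity and $V_{t+\Delta}(0,0)=0$. Then I use that $V$ is nonincreasing in $y$ and nondecreasing in $x$ to pass from $(Y,X)$ to the $r_{j-1}$-values. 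The cap $x_j\le y_j$ is harmless because $V_{t+\Delta}=0$ whenever $x\ge y$.

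I expect this monotonic replacement to be the main obstacle. The difficulty is that $\sigma$ and $I_q$ move $\phi$ and $y'$ in coupled ways, so the signs must be checked carefully. Bounding $I_q$ against $I_{r_{j-1}}$ may additionally require using the monotonicity of $q$ within the interval, which gives $I_q\ge I_{\bar r}$ in the right direction, or shifting the leftover $y(I_q-I_{r_{j-1}})$ terms into $R_q$.
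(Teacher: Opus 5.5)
Your Steps~1 and~2 are correct, and the decomposition is in fact exact. With your $\phi$, $y'=y\sigma/\phi$ and $x'$ you get $V_t(y,x,q)=R_q+\phi\,V_{t+\Delta}(y',x',q)$ with no discrepancy, and $y'\le1$, $x'\ge0$. (The $E_q$ contribution to the coefficient of $A_{t+\Delta}(t+\Delta)$ is $yb\sigma E_q$, not $yb\sigma\Delta E_q$.) The genuine gap is Step~3, which carries the entire content of the lemma and which you leave open. Moreover, the route you sketch cannot close it.

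With $(Y,X)=(y\sigma,\phi x')$ held fixed, the perspective map is nondecreasing in $\lambda$, so moving from $\phi$ to $\phi_j$ needs $\phi\ge\phi_j$. This fails in general: for a constant rate $q\equiv r>r_{j-1}$, $\phi-\phi_j=y\bigl(I_r-I_{r_{j-1}}+(e^{-\Delta r}-e^{-\Delta r_{j-1}})(1-t-\Delta)\bigr)<0$. Coordinatewise monotonicity also fails, because $X=\sigma\bigl(cx+yb(\Delta\bar r+E_q-\Delta)\bigr)/c$ need not dominate $\phi_jx_j$. The factor $\sigma$ lies below $e^{-\Delta r_{j-1}}$ while the bracket lies above its $r_{j-1}$-value. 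For instance, take $j=1$ (so $\phi_1=1$, $x_1=x$), $x>0$ and a small constant rate $r$; then $X=x-\Delta r\,x+O(r^2)<x=\phi_1x_1$. Your two fallbacks are not available either. $R_q$ is fixed by the statement, and $I_q\ge I_{r_{j-1}}$ is false in general, since $I_{\bar r}\le I_{r_{j-1}}$.

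The missing idea is to rescale so that the target state matches the $y$-mass $y\sigma$ exactly, and to pay for this with $V_{t+\Delta}\le0$. Use the intermediate multiplier $\lambda=y\sigma/y_j=e^{-\Delta(\bar r-r_{j-1})}\phi_j\le\phi_j$, which is the paper's choice. Two inequalities then hold.
\begin{itemize}
\item $\lambda\le\phi$. This is equivalent to $e^{\Delta r_{j-1}}\phi_j\le e^{\Delta\bar r}\phi$. It follows from $I_q\ge I_{\bar r}$ and the fact that $r\mapsto e^{\Delta r}\bigl((1-y)/y+t+I_r\bigr)$ is nondecreasing, because $e^{\Delta r}I_r=\int_0^\Delta e^{rs}\,ds$.
\item $\lambda x_j\le X$. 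This is equivalent to $cx_j/y_j\le cx/y+b(\Delta\bar r+E_q-\Delta)$. The factors $e^{-\Delta r_{j-1}}/\phi_j$ cancel in the ratio $x_j/y_j$, so your Jensen estimate and the monotonicity of $r+e^{-r}-1$ suffice.
\end{itemize}
Perspective monotonicity from $\phi$ down to $\lambda$, then monotonicity in $x$, then $V_{t+\Delta}\le0$ give $\phi V_{t+\Delta}(y',x')\ge\lambda V_{t+\Delta}(y_j,x_j)\ge\phi_jV_{t+\Delta}(y_j,x_j)$.

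The paper carries out the same comparison at a fixed $q$, where $V_{t+\Delta}(\cdot,\cdot,q)$ is affine with nonnegative constant term. Its ``coefficient~2'' is exactly $\phi-\lambda$, and its ``coefficient~1'' is a positive multiple of $X-\lambda x_j$. So the paper never needs concavity of the infimum.
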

\begin{proof}
Fix $j\in[N_q]$ and a rate $q$ whose mean $r=\frac1\Delta\cdot \int_t^{t+\Delta}q(s)\,ds$ lies in $\left[r_{j-1},r_j\right]$.
We have $S_t\left(t+\Delta\right)=e^{-\Delta\cdot r}$.
By monotonicity of $q$ on this time interval, we have $S_t\left(t+s\right)\ge e^{-r\cdot s}$, so $I_q\ge I_r$.
By Jensen's inequality, we have $E_q\ge\Delta\cdot e^{-r}$.
We split the defining integrals of $V_t\!\left(y,x,q\right)$ at $t+\Delta$ and use
$S_{t+\Delta}(s)=S_t(s)/S_t\left(t+\Delta\right)$.
After subtracting $R_q$, we obtain
\begin{align*}
 &V_t\!\left(y,x,q\right)-R_q
  -\frac{y\cdot e^{-\Delta\cdot r}}{y_j}\cdot
       V_{t+\Delta}\left(y_j,x_j,q\right)\\
 ={}&y\cdot e^{-\Delta\cdot r}\cdot
 \underbrace{\left(\frac{c\cdot x}{y}
       +b\cdot\left(\Delta\cdot r+E_q-\Delta\right)
       -\frac{c\cdot x_j}{y_j}\right)}_{\text{coefficient $1$}}
       \cdot A_{t+\Delta}\left(t+\Delta\right)\\
 &+\underbrace{\left(1-y+t\cdot y+y\cdot I_q
      -y\cdot e^{-\Delta\cdot r}\cdot    \left(\frac1{y_j}-1+t+\Delta\right)\right)}_{\text{coefficient $2$}}
       \cdot\int_{t+\Delta}^1b(s)\cdot e^{-q(s)}\,ds.
\end{align*}
We show that the two coefficients are nonnegative.
First, since $r+e^{-r}-1$ is nondecreasing for $r\ge0$, the definition of $x_j$ gives
\begin{equation*}
 \frac{c\cdot x}{y}+b\cdot\left(\Delta\cdot r+E_q-\Delta\right)
 \ge\frac{c\cdot x_j}{y_j}.
\end{equation*}
Second, we have $e^{\Delta\cdot r}\cdot I_r=\int_0^\Delta e^{r\cdot s}\,ds$.
Therefore $e^{\Delta\cdot r}\cdot\left(\left(1-y\right)/y+t+I_r\right)$ is nondecreasing in $r$.
Using $I_q\ge I_r$ and $r\ge r_{j-1}$, we obtain
\begin{equation*}
 1-y+t\cdot y+y\cdot I_q
 \ge y\cdot e^{-\Delta\cdot r}\cdot
       \left(\frac1{y_j}-1+t+\Delta\right).
\end{equation*}
Hence both coefficients are nonnegative.
Since other factors are also nonnegative, we obtain
\begin{align*}
 V_t\!\left(y,x,q\right)
 &\ge R_q+\frac{y\cdot e^{-\Delta\cdot r}}{y_j}\cdot
                   V_{t+\Delta}\left(y_j,x_j,q\right)\\
 &\ge R_q+\frac{y\cdot e^{-\Delta\cdot r}}{y_j}\cdot
                   V_{t+\Delta}\left(y_j,x_j\right)
 \ge R_q+\phi_j\cdot V_{t+\Delta}\left(y_j,x_j\right).
\end{align*}
For the last inequality, we use $r\ge r_{j-1}$, $V_{t+\Delta}\le0$, and
$\phi_j=y\cdot e^{-\Delta\cdot r_{j-1}}/y_j$.
\end{proof}

It remains to lower bound $R_q$.
Since $0\le I_q\le\Delta$ and every term in $R_q$ other than $p\cdot I_q$ is nonnegative, we have $R_q\ge\min\!\left\{p\cdot\Delta,0\right\}$.
For a finite upper bound on the mean of $q$, the following lemma gives two stronger estimates.

\begin{lemma}\label{lem:iid-running-finite}
Suppose that $\frac1\Delta\cdot \int_t^{t+\Delta}q(s)\,ds\le r<\infty$.
Let
\begin{equation*}
 \alpha=\left(y\cdot b\cdot e^{-\Delta\cdot r}
       +\left(-p\right)^+\cdot\frac{I_r}{\Delta}\right)\cdot\Delta^2,
 \qquad \beta=\Delta\cdot\left(u+y\cdot b\cdot I_r\right).
\end{equation*}
We have $\alpha,\beta>0$ and
\begin{equation*}
 R_q\ge\min\!\left\{p\cdot\Delta,p\cdot I_r\right\}
 +\max\!\left\{\beta-\frac{\beta^2}{4\cdot\alpha},
 \frac\alpha4\cdot\left(3+2\cdot\log\left(\frac\beta\alpha\right)\right)\right\}.
\end{equation*}
Moreover, when $\beta\cdot e^{-r}>\alpha/2$, we also have
\begin{equation*}
 R_q\ge\min\!\left\{p\cdot\Delta,p\cdot I_r\right\}
 +\beta\cdot e^{-r}+\frac{\alpha\cdot r}{2}
 -\frac{\alpha^2}{24\cdot\left(\beta\cdot e^{-r}-\alpha/2\right)}.
\end{equation*}
\end{lemma}
\begin{proof}
Since $q$ is nondecreasing with average at most
$r$, we have $-\log S_t\left(t+s\right)\le s\cdot r$.
Using the concavity of $1-e^{-z}$, we obtain
\begin{align*}
 I_r\le I_q\le\Delta, \qquad
 S_t\left(t+s\right)\ge e^{-\Delta\cdot r},\qquad
 1-S_t\left(t+s\right)\ge-\frac{I_r}{\Delta}\cdot\log S_t\left(t+s\right).
\end{align*}
Since $I_r\le I_q\le\Delta$, we have
\begin{equation*}
 p\cdot I_q\ge\min\!\left\{p\cdot\Delta,p\cdot I_r\right\}
 +\left(-p\right)^+\cdot
   \int_0^\Delta\left(1-S_t\left(t+s\right)\right)\,ds.
\end{equation*}
For $p<0$, we have equality; for $p\ge0$, we use $I_q\ge I_r$.
Then we have
\begin{align*}
R_q & \ge
\min\!\left\{p\cdot\Delta,p\cdot I_r\right\}
-\left(y\cdot b\cdot e^{-\Delta\cdot r}
+\left(-p\right)^+\cdot\frac{I_r}{\Delta}\right)
\cdot \int_t^{t+\Delta}\log S_t(s)\,ds
+\left(u+y\cdot b\cdot I_r\right)\cdot E_q \\
& \ge\min\!\left\{p\cdot\Delta,p\cdot I_r\right\}
    +\int_0^1\left(\alpha\cdot\left(1-s\right)\cdot q\left(t+\Delta\cdot s\right)
             +\beta\cdot e^{-q\left(t+\Delta\cdot s\right)}\right)\,ds.
\end{align*}
where the first inequality follows from $1-S_t(s)\ge
-\frac{I_r}{\Delta}\cdot\log S_t(s)$, $S_t(s)\ge e^{-\Delta\cdot r}$ and $I_q\ge I_r$, and in the second inequality we use
\begin{equation*}
 -\int_t^{t+\Delta}\log S_t(s)\,ds
 =\int_0^\Delta\left(\Delta-s\right)\cdot q\left(t+s\right)\,ds
 =\Delta^2\cdot\int_0^1\left(1-s\right)\cdot q\left(t+\Delta\cdot s\right)\,ds.
\end{equation*}
We next bound the integral containing $\alpha$ and $\beta$.
For every $z>0$ and $w\ge0$, we have
\begin{equation*}
 \beta\cdot e^{-w}+z\cdot w
 =z\cdot\left(\frac\beta z\cdot e^{-w}+w\right)
 \ge z\cdot\left(1+\log\left(\frac\beta z\right)\right).
\end{equation*}
For any $\ell\ge0$, we apply this with $z=\ell+\alpha\cdot\left(1-s\right)$ and $w=q\left(t+\Delta\cdot s\right)$.
We subtract $\ell\cdot q\left(t+\Delta\cdot s\right)$ and integrate, using $\int_0^1q\left(t+\Delta\cdot s\right)\,ds\le r$, to obtain
\begin{equation*}
 R_q\ge\min\!\left\{p\cdot\Delta,p\cdot I_r\right\}
 +\frac1\alpha\cdot\int_\ell^{\ell+\alpha}
 z\cdot\left(1+\log\left(\frac\beta z\right)\right)dz-\ell\cdot r.
\end{equation*}
Taking $\ell=0$ and evaluating the integral gives the logarithmic bound in the first assertion.

For the other bound, we also use $e^{-w}\ge1-w$.
When $z\ge\beta$, we obtain
\begin{equation*}
 \beta\cdot e^{-w}+z\cdot w
 \ge\beta+\left(z-\beta\right)\cdot w\ge\beta.
\end{equation*}
We apply this bound when $z\ge\beta$ and the first pointwise bound when $z<\beta$, with $z=\alpha\cdot\left(1-s\right)$ and $w=q\left(t+\Delta\cdot s\right)$.
After integration and a change of variables, we bound the deficit of the integral from $\beta$ by
\begin{align*}
 \frac1\alpha\cdot\int_0^{\min\!\left\{\alpha,\beta\right\}}
 \left(\beta-z\cdot\left(1+\log\left(\frac\beta z\right)\right)\right)dz
 \le\frac1\alpha\cdot\int_0^\beta
 \left(\beta-z\cdot\left(1+\log\left(\frac\beta z\right)\right)\right)dz
 =\frac{\beta^2}{4\cdot\alpha}.
\end{align*}
We extend the integration range because the integrand is nonnegative on $\left[0,\beta\right]$.
Thus we obtain $\beta-\beta^2/\left(4\cdot\alpha\right)$ as another lower bound for the integral, proving the first assertion.

For the second assertion, take $\ell=\beta\cdot e^{-r}-\alpha/2>0$ in the shifted integral bound above.
We apply Taylor's inequality to
$z\mapsto z\cdot\left(1+\log\left(\beta/z\right)\right)$
on $\left[\ell,\ell+\alpha\right]$.
Its second derivative is $-1/z\ge-1/\ell$, and at the midpoint $\beta\cdot e^{-r}$ its first derivative is $r$.
Thus,
\begin{equation*}
 z\cdot\left(1+\log\left(\frac\beta z\right)\right)
 \ge\beta\cdot e^{-r}\cdot\left(1+r\right)
 +r\cdot\left(z-\beta\cdot e^{-r}\right)
 -\frac{\left(z-\beta\cdot e^{-r}\right)^2}{2\cdot\ell}.
\end{equation*}
We substitute $z=\ell+\alpha\cdot\left(1-s\right)$ and integrate over $s\in\left[0,1\right]$.
We obtain zero from the linear term and $\alpha^2/12$ from the squared term.
Consequently,
\begin{align*}
 \frac1\alpha\cdot\int_\ell^{\ell+\alpha}
 z\cdot\left(1+\log\left(\frac\beta z\right)\right)dz-\ell\cdot r
 &\ge\beta\cdot e^{-r}\cdot\left(1+r\right)
       -\frac{\alpha^2}{24\cdot\ell}-\ell\cdot r\\
 &=\beta\cdot e^{-r}+\frac{\alpha\cdot r}{2}
       -\frac{\alpha^2}{24\cdot\left(\beta\cdot e^{-r}-\alpha/2\right)}.
\end{align*}
Substituting into the shifted integral bound proves the second assertion.
\end{proof}

For each finite interval $\left[r_{j-1},r_j\right]$, we take $a_j$ to be the largest of $\min\!\left\{p\cdot\Delta,0\right\}$ and the applicable lower bounds in Lemma~\ref{lem:iid-running-finite} with $r=r_j$; for the last, unbounded interval, we use $a_{N_q}=\min\!\left\{p\cdot\Delta,0\right\}$.
We apply Lemma~\ref{lem:iid-state-step} to each rate and take the infimum over all allowed rates to obtain
\begin{equation*}
 V_t\!\left(y,x\right)\ge
 \min_{j\in[N_q]}\left\{a_j+\phi_j\cdot V_{t+\Delta}\left(y_j,x_j\right)\right\}.
\end{equation*}

\subsection{Proof of the Finite-input Transfer}
\label{app:iid-transfer}

Let $\theta:\left[0,1\right]\to\left[0,\infty\right)$ be
Lipschitz and strictly decreasing, with $\theta\left(1\right)=0$. Its initial
height and drop rate are
 $\theta\left(0\right)\le1,~ b(t)=-\theta'(t)~\text{for almost every }t.$
Let $B<\infty$ satisfy
$0\le b(t)\le B$ almost everywhere.
The internal clock at rank $i$ is
$\sigma_i=\left(i-U\right)/n$, where $U\sim\operatorname{Unif}\left(0,1\right)$
is independent of the observations.

For the comparison below, recall the following quantities, evaluated
at an arbitrary bounded nonnegative nondecreasing rate $q$:
\begin{align*}
S(t)&=\exp\!\left(-\int_0^t q(s)\,ds\right),
&
A(t)&=\int_t^1 S(s)\,ds,\\
k(t)&=S(t)-q(t)\cdot A(t)-A\left(0\right)\cdot e^{-q(t)},\\
L&=\theta\left(0\right)+\left(1-\theta\left(0\right)\right)\cdot A\left(0\right)-\int_0^1 b(t)\cdot k(t)\,dt.
\end{align*}
We write $L\!\left(b,q\right)$ only when its dependence on
the threshold's drop rate and the admissible rate needs to be
explicit.

For $t\ge s$, monotonicity gives
$S(t)\le S(s)\cdot e^{-q(s)\cdot\left(t-s\right)}$. Direct integration yields:
\begin{equation}
\label{eq:apx-qA}
 0\le q(s)\cdot A(s)
 \le S(s)\cdot\left(1-e^{-q(s)\cdot\left(1-s\right)}\right)\le S(s).
\end{equation}

\begin{lemma}[Finite-input transfer]\label{prop:iid-transfer}
Suppose that $\theta\left(0\right)+0.142\cdot B\le1$. Let $\gamma\in\mathbb R$ satisfy
$L\!\left(b,q\right)\ge\gamma$ for every bounded nonnegative nondecreasing
rate $q$. Then for every $n\ge1$ and every normalized i.i.d. input, the shifted-rank algorithm satisfies $\mathbb E\left[\mathrm{ALG}\right]\ge\gamma$.
\end{lemma}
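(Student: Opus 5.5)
We will compute the reward of the shifted-rank rule exactly and compare it, term by term, with $L(b,q)$ for a suitably chosen bounded nondecreasing rate $q$. Let $F$ be the common distribution and write $G(x)=\PP(X_1>x)$. The shifted-rank rule applies the threshold $\theta(\sigma_i)$ at rank $i$; since $\sigma_i=(i-U)/n$ is uniform on $[(i-1)/n,i/n]$, the random shift is equivalent to an internal clock $t$ that is uniform within each rank cell. We define the step-free continuous rate
\begin{equation*}
 q(t)=-n\cdot\log\left(1-G(\theta(t))\right),
\end{equation*}
so that $e^{-q(t)}=\PP(X^*\le\theta(t))$ holds exactly. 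This $q$ is nonnegative and nondecreasing because $\theta$ decreases; we will truncate at a large level and pass to the limit, so that it is bounded. With this choice the union-bound step of Section~\ref{sec:iid-reduction} is replaced by an exact identity. The remaining task is to show that the finite-$n$ reward dominates the continuous-time expression $L(b,q)$.

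\textbf{Step 1 (exact formula).} Conditional on $U$, the probability of surviving the first $i-1$ ranks is $\prod_{l<i}(1-G(\theta(\sigma_l)))=\exp(-\frac1n\sum_{l<i}q(\sigma_l))$. This is a Riemann sum of $\int q$ with one point per cell. Because $q$ is nondecreasing and $\sigma_l$ is the right-shifted sample point, averaging over $U$ lets us compare it with $S(\sigma_i)$ via Jensen's inequality. We will write $\E[\ALG]$ as an average over $U$ of $\sum_i \mathrm{Surv}_i\cdot\E[X_1\mathbf 1(X_1>\theta(\sigma_i))]$, plus the forced acceptance of $X_n$. We then apply the same revenue--utility manipulation as in Section~\ref{sec:iid-reduction}, splitting into a term with $\theta(\sigma_i)\cdot G$, a term with $\int_{\theta(\sigma_i)}^{\theta(0)}G$, and the tail $\E[(X_1-\theta(0))^+]$. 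The tail term is bounded below through $1-(1-G)^n\le$ the exact maximum identity, giving $\int_{\theta(0)}^\infty \PP(X^*>x)dx$ exactly after multiplying by the correct factor.

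The per-variable quantities we will need are $G=1-e^{-q/n}$ and the conditional excess $n\int_{\theta(t)}G$. Here $nG\le q$ works in the wrong direction for revenue, so we retain the identity $nG=n(1-e^{-q/n})$ and bound $q\mapsto n(1-e^{-q/n})$ from below by $q-q^2/(2n)$.

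\textbf{Step 2 (discretization error).} We will show that the finite-$n$ expression equals $L(b,q)$ plus error terms. The errors come from three sources: the Riemann-sum survival, the deficit $q-nG$, and the forced last acceptance. The forced acceptance contributes a nonnegative gain, $\E[X_n]\cdot\PP(\text{reach }n)$, and it is this gain that must absorb the losses. Using \eqref{eq:apx-qA} and $0\le k\le S$ from Lemma~\ref{lem:k-bounds}, each loss is bounded by a multiple of $\int_{(i-1)/n}^{i/n}b$ times local survival. The total loss therefore scales like $B$ times the survival mass in the final cell, which is matched against the gain from accepting $X_n$, of order $\E[X_1]$ times the probability of reaching rank $n$. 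The hypothesis $\theta(0)+0.142\cdot B\le1$ is exactly the room needed to make this comparison. The constant $0.142$ should arise from optimizing an explicit one-variable inequality in the per-cell exceedance probability $G$, of the form $\sup_g\{(\text{loss})/(\text{gain})\}$.

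\textbf{Main obstacle.} The main obstacle is Step 2: showing that the per-cell losses are dominated uniformly in $n$ and in $F$, including when $q$ is huge on a single cell. In that regime the Riemann-sum survival differs most from $S$. Our first attempt will be a coupling argument. We realize the continuous-time process by drawing Poisson arrivals with intensity $n$ and placing one candidate per cell at a uniform position, which is exactly the law of $\sigma_i$. Conditioning on the cell in which the continuous process stops then reduces the comparison to a single-cell inequality, which we will verify with the explicit constant $0.142$. Finally, since the hypothesis gives $L(b,q)\ge\gamma$ for the truncated $q$ and truncation error vanishes, we conclude $\E[\ALG]\ge\gamma$.
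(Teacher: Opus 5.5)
Your set-up matches the paper's. The rate $q(t)=-n\log F(\theta(t))$ makes $e^{-q(t)}=\PP(X^*\le\theta(t))$ exact, and the reward splits into $\E[\min\{\ALG,\theta(0)\}]=\int_0^1 b(t)\,\PP(\ALG>\theta(t))\,dt$ plus a tail part. But Step~2, which is the whole content of the lemma, relies on the wrong compensating term.

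\textbf{The forced acceptance cannot absorb the losses.} Its extra value over plain threshold acceptance at rank $n$ is $\E[X_n\mathbf 1(X_n\le\theta(\sigma_n))]\cdot\PP(\text{reach } n)\le B\cdot\PP(\text{reach } n)/n$. This is exactly zero for a law supported on $\{0,v,H\}$ with $\theta(1-1/n)<v\le\theta(0)$, $H>1$ a rare normalizing value, and $\theta^{-1}(v)$ not a multiple of $1/n$. For that input the rate is constant on the last cell. There, $T_\infty(t)-\PP(\ALG>\theta(t))=\overline P(1)-S(1)>0$ by strict Jensen, since the off-grid jump makes $-\log P$ nonconstant on $[1,1+1/n)$, and this holds with the forced acceptance already counted.

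\textbf{What actually pays for the deficit} is the survival-area excess of the shifted grid, $\mathcal E=A_P(0)-A(0)=\int_0^1(P-S)\,dt\ge0$. It enters through the tail term as $\E[(X^*-\theta(0))^+]\cdot\mathcal E\ge(1-\theta(0))\cdot\mathcal E$, which is where $1-\theta(0)$ in the hypothesis comes from. So the Riemann-sum survival, which you list only as a loss, is also the gain. The real target is $\int_0^1 D_n(t)^+\,dt\le0.142\cdot\mathcal E$ with $D_n=T_\infty-T_n$.

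\textbf{Your sketch does not prove that target, and that is where the work lies.}
\begin{itemize}
\item The bound $n(1-e^{-q/n})\ge q-q^2/(2n)$ is worthless once $q$ is of order $n$ on some cell, which is exactly the regime you flag. The paper keeps the exact factor $n(1-e^{-q(t)/n})$ inside $T_n$.
\item Summing per-cell loss bounds does not concentrate the loss on the final cell, since $D_n(t)$ is generally nonzero across the whole horizon. That concentration is exactly what must be proved.
\end{itemize}

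\textbf{The paper's route} has four steps.
\begin{itemize}
\item[(i)] Reduce to a rate constant on the last cell.
\item[(ii)] Show $D_n/q$ is nondecreasing in $t$, via a jump-gap bound from $\E[e^Z]\,e^{-\E[Z]}\le e^x/(1+x)$ for $Z$ of range at most $x$. This gives $\int_0^1D_n^+\le\frac{p-s}{\Lambda}\int_0^1q$ with $p=\overline P(1)$, $s=S(1)$, $\Lambda=q(1-1/n)$.
\item[(iii)] Show $\mathcal E\ge\frac{(1-p)/n}{1+\sqrt{1+e^{-z}}}$ with $z=\Lambda/n$. This gain is proportional to the stopping probability $1-p$, not to the probability of reaching rank $n$.
\item[(iv)] Prove the scalar inequality $\frac{n\int_0^1q}{\Lambda}\cdot\frac{p-s}{1-p}\le\min\{z/8,1/16\}$.
\end{itemize}
The constant is then $\sup_{z>0}(1+\sqrt{1+e^{-z}})\cdot\min\{z/8,1/16\}$, attained at $z=1/2$ and below $0.142$. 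Your guess of a one-variable optimization is right in spirit, but the ratio is loss over $\mathcal E$.

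These steps use shift identities such as $P(t+1/n)=e^{-q(t)/n}P(t)$, which hold only because a single $U$ shifts every rank. A coupling that places independent uniform points in each cell would analyze a different rule.
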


\paragraph{Discrete notation and proof organization.}
Taking $q\equiv0$ in the hypothesis of Lemma~\ref{prop:iid-transfer} gives
$\gamma\le L\left(b,0\right)=1$.
If $n=1$, the last-rank rule gives
$\mathbb E\left[\mathrm{ALG}\right]=\mathbb E[X_1]=1\ge\gamma$.
For the auxiliary statements below, fix $n\ge2$ and put
$\delta=1/n$ and $\ell=1-\delta$.
Besides the continuous quantities above, we need the following discrete quantities.

For any nonnegative nondecreasing rate
$q:\left[0,1\right]\to\left[0,\infty\right]$, define
\begin{equation*}
P(t)=\exp\!\left(
-\delta\cdot\sum_{j=1}^{\left\lfloor t/\delta\right\rfloor}q\left(t-j\cdot\delta\right)
\right),
\qquad 0\le t\le1+\delta.
\end{equation*}
For $t\in\left[0,1\right]$, set
\begin{equation*}
A_P(t)=\int_t^1 P(s)\,ds,
\qquad
\overline P(t)=\frac1\delta\cdot\int_t^{t+\delta}P(s)\,ds.
\end{equation*}
An empty sum is zero, and $e^{-\infty}=0$.

%The quantities $A_P$ and $\overline P$ are respectively an integral and a moving average of $P$.
Define the finite selection expression by
\begin{equation*}
 T_n(t)=
 \begin{cases}
 1-\overline P(t)+\dfrac{1-e^{-\delta\cdot q(t)}}\delta\cdot A_P(t),&0\le t\le\ell,\\[3pt]
 1-e^{-\delta\cdot q(t)}\cdot\overline P(\ell),&\ell<t\le1.
 \end{cases}
\end{equation*}
The second case accounts for unconditional acceptance at the last
rank.
%At $a=\ell$, the formulas agree because $A_P(\ell)=\delta\cdot\overline P(\ell)$.

For a bounded rate $q$, introduce the auxiliary quantities
\begin{equation*}
 C(q)=1-\theta\left(0\right)+\int_0^1b(t)\cdot e^{-q(t)}\,\mathrm d t,
 \qquad T_\infty(t)=1-S(t)+q(t)\cdot A(t).
\end{equation*}
%The first measures the normalization term; for the logarithmic rate of an actual input, it is the expected surplus of $X^*$ above $\theta\left(0\right)$, as the next lemma shows. The second is the continuous selection expression at cutoff $\theta(a)$: its two terms account for selection before $a$ and selection above the cutoff after $a$.
By \eqref{eq:apx-qA}, $0\le T_\infty\le1$.
Expanding the definition of $k$ and using
$\int_0^1b(t)\,\mathrm d t=\theta\left(0\right)$ gives the identity
\begin{equation*}
 L\!\left(b,q\right)=C(q)\cdot A\left(0\right)+\int_0^1b(t)\cdot T_\infty(t)\,\mathrm d t.
\end{equation*}
%The formula for $C$ also applies to extended-valued rates,
%with $e^{-\infty}=0$; the continuous quantities $S,A,k,T_\infty,L$
%are used only for bounded rates.
Define the comparison errors by
\begin{equation*}
 D_n(t)=T_\infty(t)-T_n(t),\qquad
 \mathcal E=A_P\left(0\right)-A\left(0\right)=\int_0^1\left(P(t)-S(t)\right)\,\mathrm d t.
\end{equation*}
Here $D_n(t)$ is the difference between the continuous and finite
selection expressions and may be negative.
The quantity $\mathcal E$ is the gain in survival area from the
shifted grid.

\medskip
The proof has two levels. Lemma~\ref{lem:apx-accounting} reduces
the actual-input guarantee to a comparison of these finite and
continuous quantities. Lemma~\ref{lem:apx-comparison} supplies that
comparison from four technical lemmas. The short proof of
Lemma~\ref{prop:iid-transfer} at the end then combines these two
results and passes to finite step approximations of the actual rate.

\begin{lemma}[Finite-input payoff accounting]\label{lem:apx-accounting}
Let $X_1,\ldots,X_n$ be nonnegative i.i.d. observations with $\mathbb E\left[X^*\right]=1$ and $n\ge2$. Let $F$ be their common distribution function.
Use the shifted-rank algorithm and threshold $\theta$ defined above. 
Set
$r(t)=-n\cdot\log F\left(\theta(t)\right), -\log 0=\infty.$
Throughout this lemma and its proof, $P$, $A_P$, $\overline P$,
and $T_n$ are computed from $r$.
Then $r$ is nonnegative and nondecreasing.

For every $t\in\left[0,1\right]$,
\begin{equation*}
T_n(t)=\mathbb P\left(\mathrm{ALG}>\theta(t)\right).
\end{equation*}
The excess term satisfies
\begin{equation*}
C(r)=\mathbb E\left[\left(X^*-\theta\left(0\right)\right)^+\right].
\end{equation*}

Moreover,
\begin{equation}\label{eq:apx-finite-payoff}
 \mathbb E\left[\mathrm{ALG}\right]
 \ge C(r)\cdot A_P\left(0\right)+\int_0^1b(t)\cdot T_n(t)\,\mathrm d t.
\end{equation}
\end{lemma}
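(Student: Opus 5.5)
The plan is to compute everything exactly by conditioning on $U$ and then averaging over it. The key identity is that, with $e^{-r(t)}=F(\theta(t))^n$, a single observation exceeds $\theta(t)$ with probability $1-e^{-\delta\cdot r(t)}$. The monotonicity of $r$ is immediate: $\theta$ is decreasing and $F$ is nondecreasing, so $F(\theta(t))$ is nonincreasing, and therefore $r=-n\log F(\theta(\cdot))\in[0,\infty]$ is nondecreasing.

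\emph{Step 1: survival to rank $i$ equals $P(\sigma_i)$.} Fix $U$. Rank $m<n$ is rejected with probability $F(\theta(\sigma_m))=e^{-\delta\cdot r(\sigma_m)}$, independently across ranks. Since $\sigma_i/\delta=i-U$, we have $\lfloor\sigma_i/\delta\rfloor=i-1$. The earlier clocks are exactly $\sigma_i-j\delta$ for $j=1,\ldots,i-1$. Hence the probability of reaching rank $i$ equals $P(\sigma_i)$ as defined above.

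\emph{Step 2: the averaging facts.} As $U$ ranges uniformly over $[0,1]$, the clock $\sigma_i$ is uniform on $[(i-1)\delta,i\delta]$. Consequently:
\begin{itemize}
\item $\E\bigl[\sum_{i:\sigma_i>t}P(\sigma_i)\bigr]=\frac1\delta\int_t^1P(s)\,ds$ for $t$ on the grid. More generally, the first clock exceeding $t$ is uniform on $[t,t+\delta]$ whenever $t\le\ell$.
\item The probability of surviving all clocks $\le t$ equals $P$ evaluated at the first clock exceeding $t$. Its average is therefore $\overline P(t)$.
\end{itemize}

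\emph{Step 3: the formula for $T_n$.} For $t\le\ell$, decompose the event $\{\ALG>\theta(t)\}$ according to whether the stop occurs at a clock $\le t$ or at a clock $>t$.
\begin{itemize}
\item A stop at rank $i$ with $\sigma_i\le t$ gives $\ALG>\theta(\sigma_i)\ge\theta(t)$. This case contributes $1-\overline P(t)$.
\item At a rank with $\sigma_i>t$, the event $\ALG>\theta(t)$ requires $X_i>\theta(t)\ge\theta(\sigma_i)$, which forces acceptance. This holds even at the last rank. This case contributes $P(\sigma_i)\bigl(1-e^{-\delta r(t)}\bigr)$, which averages to $\frac{1-e^{-\delta r(t)}}{\delta}A_P(t)$.
\end{itemize}
For $t>\ell$, the complement event is "reach rank $n$ and $X_n\le\theta(t)$". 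Its probability averages to $e^{-\delta r(t)}\,\overline P(\ell)$, since $\sigma_n$ is uniform on $[\ell,1]$. I expect Step 3 to be the main obstacle. The delicate points are bookkeeping the last-rank unconditional acceptance and the convention $e^{-\infty}=0$ when $F(\theta(t))=0$. Both must be checked to fit the stated two-case formula.

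\emph{Step 4: the excess term $C(r)$.} Write
\begin{equation*}
\E\bigl[(X^*-\theta(0))^+\bigr]=1-\int_0^{\theta(0)}\PP(X^*>x)\,dx .
\end{equation*}
Substitute $x=\theta(t)$ and use $\PP(X^*\le\theta(t))=e^{-r(t)}$. This gives $1-\theta(0)+\int_0^1 b(t)\,e^{-r(t)}\,dt=C(r)$.

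\emph{Step 5: the payoff bound.} Split
\begin{equation*}
\E[\ALG]=\int_0^{\theta(0)}\PP(\ALG>x)\,dx+\int_{\theta(0)}^\infty\PP(\ALG>x)\,dx .
\end{equation*}
\begin{itemize}
\item For the first integral, substitute $x=\theta(t)$. By Step 3 it equals $\int_0^1 b(t)\,T_n(t)\,dt$.
\item For $x\ge\theta(0)$, every threshold is at most $x$, so $X_i>x$ forces acceptance if rank $i$ is reached. The stopping events at distinct ranks are disjoint. Hence
\begin{equation*}
\PP(\ALG>x)=\E\Bigl[\sum_i P(\sigma_i)\Bigr]\cdot\PP(X_1>x)=A_P(0)\cdot n\,\PP(X_1>x)\ge A_P(0)\,\PP(X^*>x),
\end{equation*}
where the last step is the union bound. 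Integrating over $x\ge\theta(0)$ and applying Step 4 gives at least $C(r)\cdot A_P(0)$.
\end{itemize}
Adding the two pieces yields \eqref{eq:apx-finite-payoff}.
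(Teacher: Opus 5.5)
Your proposal is correct and follows the paper's proof essentially step by step. You show that reaching rank $i$ has conditional probability $P(\sigma_i)$, that averaging over $U$ turns survival at the first clock past $t$ into $\overline P(t)$ and the later reach probabilities into $A_P(t)/\delta$, and you decompose the cutoff event $\{\ALG>\theta(t)\}$ exactly as in the paper's conditional identity, with forced last-rank acceptance handled through the complement when $t>\ell$. Like the paper, you then split $\E[\ALG]$ at $\theta(0)$ and apply the union bound.

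The only cosmetic difference is the excess part. You apply the union bound pointwise to $\PP(\ALG>x)$ for $x\ge\theta(0)$ and then integrate. The paper instead computes $\E[(\ALG-\theta(0))^+]=n\,\E[(X_1-\theta(0))^+]\cdot A_P(0)$ directly and then uses $C(r)\le n\,\E[(X_1-\theta(0))^+]$. Also, your restriction ``for $t$ on the grid'' in Step 2 is unnecessary, since that averaging identity holds for every $t$.
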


\begin{proof}
We have $e^{-r(t)}=F\left(\theta(t)\right)^n=\mathbb P\left(X^*\le\theta(t)\right)$.
Then $r$ is nondecreasing since $\theta$ decreases and $F$
increases.
If $R_i$ is the probability of reaching rank $i$ conditional
on $U$, since $\left\lfloor \sigma_i/\delta \right\rfloor=i-1$ and $\sigma_i-j\cdot\delta =\sigma_{i-j}$, independence gives
\begin{equation*}
 R_i=\prod_{k<i}F\left(\theta\left(\sigma_k\right)\right)=P\left(\sigma_i\right).
\end{equation*}
Thus $P\left(\sigma_i\right)$ is the probability of reaching actual rank $i$.
Values of $P$ after time $1$ are an auxiliary extension used for
the comparison, not survival after the algorithm's forced acceptance.

For a cutoff $\theta(t)$, fix $U$, and let $j$ be the first rank whose clock
satisfies $\sigma_j\ge t$, capped at $n$. Any acceptance before $j$
exceeds the cutoff. At every reached rank $i\ge j$, a value above
the cutoff is selected, including at the last rank. Therefore,
\begin{equation*}
 \mathbb P\left(\mathrm{ALG}>\theta(t)\mid U\right)
 =1-R_j+\left(1-F\left(\theta(t)\right)\right)\cdot\sum_{i=j}^nR_i.
\end{equation*}

For $t\le\ell$, the clock at rank $j$ is uniform on $\left[t, t+\delta \right)$
as $U$ varies. Writing $\mathbb E_U$ for this phase average gives
\begin{equation*}
 \mathbb E_U[R_j]=\overline P(t),\qquad
 \mathbb E_U\left[\sum_{i=j}^nR_i\right]=\frac1\delta\cdot A_P(t).
\end{equation*}
For $t>\ell$, the cap gives $j=n$ and
$\mathbb E_U[R_n]=\overline P(\ell)$.
Using $F\left(\theta(t)\right)=e^{-\delta\cdot r(t)}$ in the conditional identity therefore
gives the two cases in the definition of $T_n$:
\begin{equation*}
 \mathbb P\left(\mathrm{ALG}>\theta(t)\right)=T_n(t).
\end{equation*}
The factor $F\left(\theta(t)\right)$ includes the mass at the cutoff, consistently
with the strict acceptance rule.

For the actual input, the tail
formula for expectation and the substitution $x=\theta(t)$ give
\begin{align*}
 C(r)&=1-\int_0^{\theta\left(0\right)}\left(1-F(x)^n\right)\,\mathrm d x\\
 &=\mathbb E\left[\left(X^*-\theta\left(0\right)\right)^+\right]
 \le n\cdot\mathbb E\left[\left(X_1-\theta\left(0\right)\right)^+\right].
\end{align*}

Since every threshold is at most $\theta\left(0\right)$, any value above $\theta\left(0\right)$ at a
reached rank is selected. Independence of $X_i$ from the event of
reaching rank $i$, conditional on $U$, yields
\begin{align*}
 \mathbb E\left[\left(\mathrm{ALG}-\theta\left(0\right)\right)^+\right]
 &=\mathbb E\left[\left(X_1-\theta\left(0\right)\right)^+\right]\cdot\mathbb E_U\left[\sum_{i=1}^nR_i\right]\\
 &=n\cdot\mathbb E\left[\left(X_1-\theta\left(0\right)\right)^+\right]\cdot A_P\left(0\right)\\
 &\ge C(r)\cdot A_P\left(0\right).
\end{align*}

On the other hand,
\begin{align*}
 \mathbb E\left[\min\left\{\mathrm{ALG},\theta\left(0\right)\right\}\right]
 &=\int_0^{\theta\left(0\right)}\mathbb P\left(\mathrm{ALG}>x\right)\,\mathrm d x\\
 &=\int_0^1b(t)\cdot T_n(t)\,\mathrm d t.
\end{align*}

Adding the two parts proves the payoff bound in the statement.
\end{proof}

\paragraph{The comparison target.}
For a bounded rate $q$, the difference between the finite-input
expression and $L\left(b,q\right)$ is
\begin{equation*}
C(q)\cdot A_P\left(0\right)+\int_0^1b(t)\cdot T_n(t)\,dt-L\left(b,q\right)
=
C(q)\cdot\mathcal E-\int_0^1b(t)\cdot D_n(t)\,dt.
\end{equation*}

\begin{sloppypar}
By monotonicity,
\begin{equation*}
\delta\cdot q\left(t-j\cdot\delta\right)
\le
\int_{t-j\cdot\delta}^{t-\left(j-1\right)\cdot\delta}q(s)\,ds
\end{equation*}
for every summand in the definition of $P(t)$.
Summing these inequalities gives $P(t)\ge S(t)$ on $\left[0,1\right]$,
and hence $\mathcal E\ge0$.
Since $C(q)\ge1-\theta\left(0\right)$ and $0\le b\le B$, the sufficient target is
\end{sloppypar}
\begin{equation*}
 \int_0^1D_n(t)^+\,\mathrm d t\le0.142\cdot\mathcal E.
\end{equation*}
This bound makes the above difference at least
$\left(1-\theta\left(0\right)-0.142\cdot B\right)\cdot\mathcal E\ge0$.

\medskip
The auxiliary lemmas address this target in four steps.
Lemma~\ref{lem:apx-last-rank} removes rate increases on the last
interval. Lemma~\ref{lem:apx-local} then controls all cutoff losses
by a terminal survival gap. Lemma~\ref{lem:apx-area} bounds the
available survival-area gain below in terms of a terminal stopping
weight, and Lemma~\ref{lem:apx-scalar} connects that weight to the
survival gap. Lemma~\ref{lem:apx-comparison} combines the estimates.

\paragraph{Shift identities used by the auxiliary lemmas.}
For a bounded nonnegative nondecreasing rate $q$, the shifted sums
give two useful identities, for $t\in\left[0,1\right]$:
\begin{equation}
\label{eq:apx-shift-mean}
 P\left(t+\delta \right)=e^{-\delta\cdot q(t)}\cdot P(t),\qquad
 \frac1\delta\cdot\int_t^{t+\delta }-\log P(s)\,\mathrm d s
       =\int_0^t q(s)\,\mathrm d s.
\end{equation}
For the second identity, substitute $u=s-j\cdot\delta $ in each summand of
$-\log P(s)$; the resulting intervals partition $\left[0,t\right]$ apart from
endpoints. In particular, Jensen's inequality gives
$\overline P(t)\ge S(t)$.

Whenever a step rate is used below, it has finitely many jumps,
uses its right-continuous representative, and takes its left-limit
value at $1$. Changes at these finitely many endpoints do not alter
the integrals or phase averages. Cutoff comparisons are asserted
almost everywhere when appropriate.

\begin{lemma}[Last-rank reduction]\label{lem:apx-last-rank}
Let $q$ be a finite nonnegative nondecreasing step rate, and put
$q_e(t)=q\left(\min\left\{t,\ell\right\}\right)$. Use a subscript $e$ for the quantities
computed from $q_e$. For example,
$D_{n,e}=T_{\infty,e}-T_{n,e}$ and
$\mathcal E_e=A_{P,e}\left(0\right)-A_e\left(0\right)$. Then,
\begin{equation*}
 P=P_e\text{ on }\left[0,1\right],\qquad
 \mathcal E\ge\mathcal E_e,\qquad
 D_n(t)\le D_{n,e}(t)\text{ for almost every }t\in\left[0,1\right].
\end{equation*}
In particular, the target $\int D_n^+\le0.142\cdot\mathcal E$ for $q_e$
implies the same target for $q$.
\end{lemma}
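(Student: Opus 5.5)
The proof compares the two rates region by region: on $[0,\ell]$ every quantity built from $q$ coincides with the one built from $q_e$, so the only real work is on the last interval $(\ell,1]$. There I will show that the constant extension $q_e$ makes the continuous selection expression $T_{\infty,e}$ explicit, and that any rate increase after $\ell$ can only lower $D_n$.

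\emph{Step 1: $P=P_e$ on $[0,1]$.} For $t\in[0,1]$, every summand of $-\log P(t)$ has argument $t-j\cdot\delta$ with $j\ge1$, hence argument at most $1-\delta=\ell$. On $[0,\ell]$ we have $q_e=q$, so $P(t)=P_e(t)$. Consequently $A_P=A_{P,e}$ on $[0,1]$. Also $\overline P(t)=\overline P_e(t)$ for $t\le\ell$, since this average only uses $P$ on $[t,t+\delta]\subseteq[0,1]$.

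\emph{Step 2: $\mathcal E\ge\mathcal E_e$.} Since $q\ge q_e$ pointwise (monotonicity gives $q(t)\ge q(\ell)$ for $t>\ell$), we have $S\le S_e$. Hence $A(0)\le A_e(0)$. Combined with $A_P(0)=A_{P,e}(0)$, this gives $\mathcal E\ge\mathcal E_e$. Note also that $S=S_e$ on $[0,\ell]$, while $A\le A_e$ everywhere.

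\emph{Step 3: cutoff comparison, $t\le\ell$.} Here $q(t)=q_e(t)$ and $S(t)=S_e(t)$, so $T_n(t)=T_{n,e}(t)$ by Step 1. Moreover
\begin{equation*}
T_\infty(t)=1-S(t)+q(t)\cdot A(t)\le 1-S_e(t)+q_e(t)\cdot A_e(t)=T_{\infty,e}(t),
\end{equation*}
so $D_n(t)\le D_{n,e}(t)$.

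\emph{Step 4: cutoff comparison, $t>\ell$.} I expect this to be the main step, handled as follows. By \eqref{eq:apx-qA},
\begin{equation*}
T_\infty(t)\le 1-S(t)\cdot e^{-q(t)\cdot(1-t)}.
\end{equation*}
Since $q\le q(t)$ on $[\ell,t]$, we have $S(t)\ge S(\ell)\cdot e^{-q(t)\cdot(t-\ell)}$. Because $(t-\ell)+(1-t)=\delta$, this yields $T_\infty(t)\le 1-S(\ell)\cdot e^{-\delta\cdot q(t)}$. With $T_n(t)=1-e^{-\delta\cdot q(t)}\cdot\overline P(\ell)$ we obtain
\begin{equation*}
D_n(t)\le e^{-\delta\cdot q(t)}\cdot\left(\overline P(\ell)-S(\ell)\right).
\end{equation*}
For $q_e$, which is constant equal to $q(\ell)$ on $[\ell,1]$, both inequalities above are equalities. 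The same computation therefore gives exactly
\begin{equation*}
D_{n,e}(t)=e^{-\delta\cdot q(\ell)}\cdot\left(\overline P(\ell)-S(\ell)\right),
\end{equation*}
using $\overline P_e(\ell)=\overline P(\ell)$ and $S_e(\ell)=S(\ell)$. By \eqref{eq:apx-shift-mean} and Jensen's inequality, $\overline P(\ell)\ge S(\ell)$. Together with $q(t)\ge q(\ell)$, this gives $D_n(t)\le D_{n,e}(t)$. Any endpoint or jump issues of the step rate affect only finitely many $t$, hence the ``almost every'' qualifier.

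\emph{Conclusion.} From $D_n\le D_{n,e}$ a.e. we get $D_n^+\le D_{n,e}^+$ a.e. Therefore
\begin{equation*}
\int_0^1 D_n^+\le\int_0^1 D_{n,e}^+\le0.142\cdot\mathcal E_e\le0.142\cdot\mathcal E,
\end{equation*}
which is the claimed transfer of the target from $q_e$ to $q$.
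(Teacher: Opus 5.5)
Your proof is correct and follows the paper's argument essentially step by step. You use the same four ingredients: $P=P_e$ from the bound on the shifted arguments, $S\le S_e$ to get $\mathcal E\ge\mathcal E_e$, the direct comparison on $[0,\ell]$, and on $(\ell,1]$ the bound $T_\infty(t)\le 1-S(\ell)\,e^{-\delta q(t)}$ from \eqref{eq:apx-qA} combined with Jensen's $\overline P(\ell)\ge S(\ell)$. The only difference is cosmetic: the paper writes the last-interval comparison using $p_e=\overline P_e(1)=e^{-\delta q(\ell)}\overline P(\ell)$, $s_e=S_e(1)$ and $\eta_t=q(t)-q(\ell)$, which are the same quantities you express at time $\ell$.
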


\begin{proof}
For $t\le1$, every rate argument in $P(t)$ is at most $\ell$.
Hence $P=P_e$ on $\left[0,1\right]$, while $S\le S_e$, which gives
$\mathcal E\ge\mathcal E_e$.
We next show that $D_n(t)\le D_{n,e}(t)$ almost everywhere.
For $t\le\ell$, the finite expression $T_n$ is unchanged, whereas
$S(t)=S_e(t)$ and $A(t)\le A_e(t)$, so the claim follows.

For $t\ge\ell$, put
$p_e=\overline P_e\left(1\right)$, $s_e=S_e\left(1\right)$, and
$\eta_t=q(t)-q(\ell)\ge0$. The rate $q_e$ is constant on the
last interval.
Integrating the shift identity in \eqref{eq:apx-shift-mean} over $\left[\ell,1\right]$ gives
\begin{equation*}
p_e=\overline P_e\left(1\right)
=\frac1\delta\cdot\int_\ell^1 P_e\left(t+\delta\right)\,dt
=e^{-\delta\cdot q(\ell)}\cdot\overline P_e(\ell).
\end{equation*}
Hence the definition of $T_n$ gives
 $T_n(t)=1-p_e\cdot e^{-\delta\cdot\eta_t}.$

On the other hand, \eqref{eq:apx-qA} implies
\begin{align*}
 T_\infty(t)
 &\le1-S(t)\cdot e^{-q(t)\cdot\left(1-t\right)}\\
 &=1-\exp\left(-\int_0^1\min\left\{q(s),q(t)\right\}\,\mathrm d s\right)\\
 &\le1-s_e\cdot e^{-\delta\cdot\eta_t}.
\end{align*}
For the last inequality, the truncated integrand agrees with $q_e$
before $\ell$ and exceeds it by at most $\eta_t$ on an interval
of length $\delta $. Jensen's inequality in \eqref{eq:apx-shift-mean}
gives $p_e\ge s_e$. Also,
$T_{\infty,e}(t)=1-s_e$ and $T_{n,e}(t)=1-p_e$ on $\left[\ell,1\right]$.
Consequently,
\begin{equation*}
 D_n(t)\le e^{-\delta\cdot\eta_t}\cdot\left(p_e-s_e\right)\le p_e-s_e=D_{n,e}(t).
\end{equation*}

Taking positive parts and integrating gives $\int D_n^+\le\int D_{n,e}^+$.
Together with $\mathcal E\ge\mathcal E_e$, this proves the final assertion.
\end{proof}

\begin{lemma}[Local selection deficit]\label{lem:apx-local}
Let $q$ be a finite nonnegative nondecreasing step rate, constant
on $\left[\ell,1\right]$. Put $\Lambda=q(\ell)>0$,
$p=\overline P\left(1\right)$, and $s=S\left(1\right)$. Thus $p$ is shifted-grid survival
averaged over $\left[1,1+\delta \right]$, and $s$ is continuous survival at time $1$.
Then $p\ge s$ and
\begin{align*}
 D_n(t)&\le\frac{q(t)}{\Lambda}\cdot\left(p-s\right)
       \quad\text{for almost every }t\in\left[0,1\right],
       \\
 \int_0^1D_n(t)^+\,\mathrm d t
 &\le\frac{p-s}{\Lambda}\cdot\int_0^1q(t)\,\mathrm d t.
\end{align*}
\end{lemma}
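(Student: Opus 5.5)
The proof splits at $\ell$. The two facts $p\ge s$ and the integral bound are quick, and the pointwise bound for $t\le\ell$ is the real work. First, $p=\overline P(1)\ge S(1)=s$ is the Jensen consequence of \eqref{eq:apx-shift-mean} recorded after that display, evaluated at $t=1$. Second, once the pointwise bound holds, the integral bound follows: since $q\ge0$, $\Lambda>0$ and $p-s\ge0$, the right-hand side $\frac{q(t)}{\Lambda}(p-s)$ is nonnegative, so $D_n(t)^+\le\frac{q(t)}{\Lambda}(p-s)$ almost everywhere, and integrating gives the claim.

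\emph{Last interval, $\ell<t\le1$.} Here $q(t)=\Lambda$. As in the proof of Lemma~\ref{lem:apx-last-rank}, integrating the shift identity over $[\ell,1]$ gives $p=e^{-\delta\Lambda}\,\overline P(\ell)$, so $T_n(t)=1-p$. Because $q$ is constant on $[t,1]$, we have $S(1)=S(t)e^{-\Lambda(1-t)}$. Then \eqref{eq:apx-qA} gives $T_\infty(t)\le1-S(t)e^{-q(t)(1-t)}=1-s$. Hence $D_n(t)\le p-s=\frac{q(t)}{\Lambda}(p-s)$.

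\emph{Earlier times, $t\le\ell$.} Write $c(u)=(1-e^{-\delta q(u)})/\delta$, so that $0\le c\le q$, and $c$ is a concave nondecreasing function of $q$ with $c=0$ at $q=0$. The shift identity gives $\overline P'(u)=(P(u+\delta)-P(u))/\delta=-c(u)P(u)$, and also $S'=-qS$. Integrating both from $t$ to $1$ and using $\overline P(1)=p$, $S(1)=s$:
\begin{equation*}
 \overline P(t)-S(t)=(p-s)+\int_t^1\bigl(c(u)P(u)-q(u)S(u)\bigr)\,du .
\end{equation*}
Substituting this into $D_n(t)=\overline P(t)-S(t)+q(t)A(t)-c(t)A_P(t)$ gives the key representation
\begin{equation*}
 D_n(t)=(p-s)+\int_t^1\Bigl(\bigl(c(u)-c(t)\bigr)P(u)-\bigl(q(u)-q(t)\bigr)S(u)\Bigr)\,du .
\end{equation*}
It therefore suffices to prove
\begin{equation*}
 \int_t^1\Bigl(\bigl(q(u)-q(t)\bigr)S(u)-\bigl(c(u)-c(t)\bigr)P(u)\Bigr)\,du\;\ge\;\Bigl(1-\frac{q(t)}{\Lambda}\Bigr)(p-s).
\end{equation*}

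\emph{Main obstacle.} This last inequality is where the work lies, because $P\ge S$ makes the subtracted term potentially large. My first attempt is as follows. Concavity gives $c(u)-c(t)\le e^{-\delta q(t)}(q(u)-q(t))$, so the integrand is at least $(q(u)-q(t))\bigl(S(u)-e^{-\delta q(t)}P(u)\bigr)$. I would then express $p-s$ through the same representation at a point of the last interval, namely $p-s=\overline P(\ell)e^{-\delta\Lambda}-S(1)$. Because $q$ is a nondecreasing step function, I would use a backward induction over its finitely many steps, treating $(p-s)$ as the terminal gap and propagating the comparison between $P$ and $S$ interval by interval, with $\Lambda\ge q(u)$ and the factor $1-q(t)/\Lambda$ absorbing each step. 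If that fails, the fallback is a convexity argument: the target inequality is linear in the interpolation weight $q(t)/\Lambda\in[0,1]$, so check it separately at $q(t)=0$ (where $D_n(t)=\overline P(t)-S(t)$) and at $q(t)=\Lambda$ (where $q$ is constant from $t$ on, and the identity reduces to the last-interval computation), then interpolate.
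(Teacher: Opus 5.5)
\textbf{Verdict: there is a genuine gap.} Your preliminary steps are correct: $p\ge s$ by Jensen's inequality, the integral bound from the pointwise one, and the last-interval case. The identity
$D_n(t)=(p-s)+\int_t^1\bigl((c(u)-c(t))P(u)-(q(u)-q(t))S(u)\bigr)\,du$ for $t\le\ell$ is also correct; it is the integrated form of the paper's observation that $D_n/q$ is constant wherever $q$ is constant.

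But the reduced inequality is the entire content of the lemma, and neither of your routes proves it.
\begin{itemize}
\item The concavity route cannot work. Since $P\ge S$ and $1-e^{-\delta q(t)}\le\delta q(t)$, your integrand bound is at most $\delta\,q(t)\,(q(u)-q(t))\,S(u)$, which is $O(q(t))$. The required right side $(1-q(t)/\Lambda)(p-s)$ instead stays near $p-s$. That quantity is strictly positive for, e.g., a rate equal to a small $\varepsilon>0$ on $[0,1/2)$ and to $\Lambda$ afterwards.
\item The fallback is also invalid. First, $q(t)$ is not a free parameter, since the rate forces $q(t)\ge q(t^-)$. Second, the target is not linear in $q(t)$, because $c(t)A_P(t)$ is concave in it. Third, if you hold $S(t),\overline P(t),A(t),A_P(t)$ fixed and set $q(t)=0$, the required inequality becomes $S(t)-\overline P(t)+p-s\ge p-s$, i.e.\ $\overline P(t)\le S(t)$. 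This is the reverse of Jensen's inequality and fails whenever that gap is strict, which is the generic case.
\end{itemize}

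\textbf{The missing idea} is an \emph{upper} bound on the Jensen gap $\overline P(t)-S(t)$ at the jump times of $q$. Your backward induction over steps is the right skeleton. By your identity, crossing a jump at $t$ from $q(t^-)$ to $q(t^+)$ amounts to showing that $u\mapsto\bigl(\overline P(t)-S(t)+u\,A(t)-\frac{1-e^{-\delta u}}{\delta}A_P(t)\bigr)/u$ is nondecreasing on $[q(t^-),q(t^+)]$. Differentiating, this is exactly
\[
\overline P(t)-S(t)\le\frac{1-(1+\delta u)e^{-\delta u}}{\delta}\,A_P(t),
\]
and nothing in your proposal supplies it. Without such a bound there is nothing to propagate across a step.

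The paper proves this bound in three steps:
\begin{itemize}
\item $-\log P$ on $[t,t+\delta)$ has mean $\int_0^t q$ and essential range of length at most $\delta q(t^-)\le\delta u$.
\item The exponential-moment estimate $\E[e^Z]e^{-\E[Z]}\le e^x/(1+x)$, for $Z$ with range length at most $x$, then bounds the gap by a multiple of $\overline P(t)$.
\item Finally, $A_P(t)\ge\delta\,\overline P(t)$ for $t+\delta\le1$ converts this into the displayed bound.
\end{itemize}
With that, $D_n/q$ is nondecreasing in $t$ up to its terminal value $(p-s)/\Lambda$. The case $q(t)=0$ is handled separately, since then both selection expressions vanish.
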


\begin{proof}
Jensen's inequality in \eqref{eq:apx-shift-mean} at $t=1$ gives
$p\ge s$. We prove the pointwise bound by showing that the loss per unit
of rate, $D_n(t)/q(t)$, is nondecreasing as the cutoff is lowered.
For $t\le\ell$ and $q(t)>0$, its exact expression is
\begin{equation*}
 \frac{D_n(t)}{q(t)}
 =\frac{\overline P(t)-S(t)}{q(t)}
      +A(t)-\frac{1-e^{-\delta\cdot q(t)}}{\delta\cdot q(t)}\cdot A_P(t).
\end{equation*}
First, \eqref{eq:apx-shift-mean} and differentiation of integrals give
\begin{equation*}
 \overline P'=-\frac{1-e^{-\delta\cdot q}}{\delta }\cdot P,\qquad S'=-q\cdot S,\qquad
 A'=-S,\qquad A_P'=-P
\end{equation*}
almost everywhere. All four functions are absolutely continuous.
Thus $D_n/q$ is constant on every interval where
$q$ is constant and positive, even if $P$ itself jumps there.

At a jump time $t\le\ell$, hold the four continuous quantities
in the above expression for $D_n(t)/q(t)$ fixed and vary the rate through $u>0$,
between its old and new values. The derivative of the right side is
\begin{equation*}
 \frac{\frac{1-\left(1+\delta\cdot u\right)\cdot e^{-\delta\cdot u}}\delta\cdot A_P(t)
                    -\overline P(t)+S(t)}{u^2}.
\end{equation*}
It is therefore enough to establish the jump-gap bound
\begin{equation*}
 0\le\overline P(t)-S(t)
 \le\frac{1-\left(1+\delta\cdot u\right)\cdot e^{-\delta\cdot u}}\delta\cdot A_P(t).
\end{equation*}

This bound follows from the mean and range of the shifted sum.
We first record the elementary exponential-moment estimate
\begin{equation*}
 \mathbb E\left[e^Z\right]\cdot e^{-\mathbb E[Z]}\le\frac{e^x}{1+x}
\end{equation*}
whenever the essential range of $Z$ has length at most $x\ge0$.
Here essential range ignores null events. The ratio on the left
is unchanged by translating $Z$, so assume $0\le Z\le x$.
The function $y\mapsto e^y/\left(1+y\right)$ is nondecreasing for $y\ge0$,
hence
\begin{equation*}
 \mathbb E\left[e^Z\right]\le\frac{e^x}{1+x}\cdot\left(1+\mathbb E[Z]\right)
                 \le\frac{e^x}{1+x}\cdot e^{\mathbb E[Z]}.
\end{equation*}
This proves the estimate, including $x=0$.

Let $H(t)=-\log P(t)$ and let $V$ be uniform on $\left[t,t+\delta \right)$.
By \eqref{eq:apx-shift-mean},
$\mathbb E\left[H(V)\right]=\int_0^t q(s)\,\mathrm d s$.
The function $H$ is nondecreasing. If $t>0$,
\begin{equation*}
 H\left(\left(t+\delta \right)-\right)=H\left(t-\right)+\delta\cdot q\left(t-\right),\qquad H\left(t+\right)\ge H\left(t-\right).
\end{equation*}
Hence its essential range on this interval has length at most
$\delta\cdot q\left(t-\right)\le \delta\cdot u$. At $t=0$, $H(V)=0$  almost surely. Jensen's
inequality and the exponential-moment bound, with $Z=-H(V)$, give
\begin{align*}
 0\le\overline P(t)-S(t)
 \le\left(1-\left(1+\delta\cdot u\right)\cdot e^{-\delta\cdot u}\right)\cdot\overline P(t)
 \le\frac{1-\left(1+\delta\cdot u\right)\cdot e^{-\delta\cdot u}}\delta\cdot A_P(t),
\end{align*}
because $A_P(t)\ge \delta\cdot\overline P(t)$ for $t+\delta \le1$.
The single endpoint $t+\delta $ can include the new rate.
Using the essential range is what makes the bound valid for every intermediate $u$.
This proves the required jump-gap bound, and consequently
$D_n/q$ cannot decrease at a positive-rate jump either.

Finally, constancy of $q=\Lambda$ on $\left[\ell,1\right]$ gives
$T_\infty(t)=1-s$ there, while
$p=e^{-\delta\cdot\Lambda}\cdot\overline P(\ell)$ gives $T_n(t)=1-p$.
Thus the terminal value of $D_n/q$ is $\left(p-s\right)/\Lambda$.
The monotonicity just proved yields the pointwise bound in the statement
wherever $q(t)>0$. If $q(t)=0$, all earlier rates are zero and
both selection expressions vanish. Since $p\ge s$, taking
positive parts and integrating proves the integral bound.
\end{proof}

\begin{lemma}[Survival-area gain]\label{lem:apx-area}
Let $q$ be a finite nonnegative nondecreasing step rate, constant
on $\left[\ell,1\right]$. Put $\Lambda=q(\ell)>0$, $z=\delta\cdot\Lambda$, and
$p=\overline P\left(1\right)$. Then
\begin{equation*}
 \mathcal E\ge\frac{\delta\cdot\left(1-p\right)}{1+\sqrt{1+e^{-z}}}.
\end{equation*}
\end{lemma}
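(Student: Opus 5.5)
The plan is to prove the stronger estimate $\mathcal E\ge\delta\cdot(1-p)/2$. This suffices because $1+\sqrt{1+e^{-z}}\ge2$. The idea is to compare $\int_0^1P$ with $\int_0^1\overline P$. The latter already dominates $\int_0^1S$ by the Jensen consequence of \eqref{eq:apx-shift-mean}, namely $\overline P(t)\ge S(t)$ on $[0,1]$. Hence
\begin{equation*}
 \mathcal E=\int_0^1\left(P-\overline P\right)\,\mathrm d t+\int_0^1\left(\overline P-S\right)\,\mathrm d t
 \ge\int_0^1P(t)\,\mathrm d t-\int_0^1\overline P(t)\,\mathrm d t,
\end{equation*}
and it remains to evaluate the difference of these two integrals, which is a pure boundary effect of the moving average.

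Exchanging the order of integration gives $\int_0^1\overline P(t)\,\mathrm d t=\int_0^{1+\delta}w(s)\cdot P(s)\,\mathrm d s$. Here $w(s)=\min\{s,\delta\}/\delta$ on $[0,1]$ and $w(s)=(1+\delta-s)/\delta$ on $[1,1+\delta]$. (We have $\delta\le1/2$ since $n\ge2$, so the two ramps do not overlap.) Consequently,
\begin{equation*}
 \int_0^1P-\int_0^1\overline P
 =\frac1\delta\cdot\int_0^\delta\left(\delta-s\right)\cdot P(s)\,\mathrm d s
 -\frac1\delta\cdot\int_1^{1+\delta}\left(1+\delta-s\right)\cdot P(s)\,\mathrm d s+\int_1^{1+\delta}P(s)\,\mathrm d s .
\end{equation*}
On $[0,\delta)$ the defining sum of $P$ is empty, so $P\equiv1$ there and the first term equals $\delta/2$. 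The last two terms combine to $\frac1\delta\cdot\int_1^{1+\delta}(s-1)\cdot P(s)\,\mathrm d s$ with a minus sign.

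To bound this remaining term, I will use that $P$ is nonincreasing on $[0,1+\delta]$. Indeed, $-\log P(t)=\delta\cdot\sum_{j\le\lfloor t/\delta\rfloor}q(t-j\delta)$ gains terms as $t$ grows, and each term is nondecreasing in $t$ because $q$ is nondecreasing and nonnegative. On $[1,1+\delta]$ the weight $s-1$ is increasing while $P$ is nonincreasing. Chebyshev's integral inequality then gives
\begin{equation*}
 \frac1\delta\cdot\int_1^{1+\delta}(s-1)\cdot P(s)\,\mathrm d s\le\frac1\delta\cdot\frac{\delta}{2}\cdot\int_1^{1+\delta}P(s)\,\mathrm d s=\frac{\delta\cdot p}{2},
\end{equation*}
using $p=\overline P(1)=\frac1\delta\int_1^{1+\delta}P$. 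Combining, $\mathcal E\ge\delta/2-\delta p/2=\delta(1-p)/2\ge\delta(1-p)/\bigl(1+\sqrt{1+e^{-z}}\bigr)$.

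The main point needing care is the bookkeeping at the two boundary ramps. One must check that $P\equiv1$ on $[0,\delta)$ and that the monotonicity of $P$ persists past time $1$ into $[1,1+\delta]$, where $P$ is only an auxiliary extension. Both follow directly from the definition of $P$, with the right-continuous step conventions mattering only on a null set. The hypotheses $\Lambda>0$ and constancy on $[\ell,1]$ are not needed for this route. They would only matter if one instead tried to extract the sharper constant through the relation $p=e^{-z}\cdot\overline P(\ell)$, which the plan avoids.
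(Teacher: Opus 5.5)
There is a genuine gap, and the stronger bound $\mathcal E\ge\delta\cdot(1-p)/2$ you aim for is actually false.

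\textbf{The boundary identity is wrong.} Since $\int_0^1P$ puts no mass on $[1,1+\delta]$, subtracting $\int_0^{1+\delta}w\cdot P$ gives
\begin{equation*}
 \int_0^1P-\int_0^1\overline P
 =\frac1\delta\cdot\int_0^\delta\left(\delta-s\right)\cdot P(s)\,\mathrm d s
 -\frac1\delta\cdot\int_1^{1+\delta}\left(1+\delta-s\right)\cdot P(s)\,\mathrm d s .
\end{equation*}
There is no $+\int_1^{1+\delta}P$ term. Even your own formula would combine to $+\frac1\delta\int_1^{1+\delta}(s-1)\cdot P$, not minus.

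\textbf{Chebyshev then goes the wrong way.} The terminal weight is the \emph{decreasing} ramp $1+\delta-s$, and $P$ is also nonincreasing. Chebyshev's inequality therefore gives $\frac1\delta\int_1^{1+\delta}(1+\delta-s)\cdot P\ge\delta\cdot p/2$, i.e.\ $\int_0^1(P-\overline P)\le\delta\cdot(1-p)/2$. This is the opposite of what you need. Writing $w(u)=1-P(1+\delta\cdot u)$, the correct quantity is $\delta\cdot\int_0^1(1-u)\cdot w(u)\,\mathrm d u$ with $w$ nondecreasing. The constant $1/2$ is reached only when $w$ is constant.

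\textbf{Counterexample to the constant $1/2$.} Take $q=\Lambda\cdot\mathbf 1(t\ge a)$ with $a=\ell-f\cdot\delta$ and $f\in[0,1)$. For $t\ge a$ this gives $P(t)=e^{-z\cdot\lfloor(t-a)/\delta\rfloor}$ and $S(t)=e^{-\Lambda\cdot(t-a)}$. A direct computation gives $\mathcal E/\delta=z\cdot(1+f^2)/2+O(z^2)$ and $1-p=z\cdot(1+f)+O(z^2)$. At $f=\sqrt2-1$, the ratio $\mathcal E/(\delta\cdot(1-p))$ tends to $\sqrt2-1<1/2$ as $z\to0$. For instance, $n=100$, $\Lambda=1$, $f\approx0.4142$ gives $\mathcal E\approx5.83\cdot10^{-5}$ versus $\delta\cdot(1-p)/2\approx7.02\cdot10^{-5}$. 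The limit $\sqrt2-1$ coincides with the limit of $1/(1+\sqrt{1+e^{-z}})$, so the lemma's constant is sharp here.

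\textbf{The missing idea.} You need control on how much $w$ can grow, and this is exactly where the hypotheses you set aside are used.
\begin{itemize}
\item The shift identity with $q(1)=\Lambda$ gives $P(1+\delta)=e^{-z}\cdot P(1)$.
\item The first summand of $-\log P(1)$ is $\delta\cdot q(\ell)=z$, so $P(1)\le e^{-z}$.
\item Together with monotonicity of $P$, these give $w(0)\le w(u)\le(1+e^{-z})\cdot w(0)$.
\end{itemize}
Now choose $c$ with $(1-c)^2=(1+e^{-z})\cdot c^2$ and split $\int_0^1(1-u-c)\cdot w(u)\,\mathrm d u$ at $u=1-c$. Use the lower bound on $w$ where the weight is nonnegative and the upper bound where it is nonpositive. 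This makes the integral nonnegative, hence $\mathcal E\ge\delta\cdot c\cdot\int_0^1w=\delta\cdot c\cdot(1-p)$ with $c=1/(1+\sqrt{1+e^{-z}})$, which is the stated bound.
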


\begin{proof}
Jensen's inequality in \eqref{eq:apx-shift-mean} gives
$S(t)\le\overline P(t)$. Define
\begin{equation*}
 w(u)=1-P\left(1+\delta\cdot u\right),\qquad 0\le u\le1,
\end{equation*}
so that $\int_0^1w(u)\,\mathrm d u=1-p$.
We first express the available area in terms of this terminal
weight. After interchanging the integrals, the interior contributions
cancel, leaving only the initial and terminal intervals. Since
$P(t)=1$ for $0\le t<\delta $, this gives
\begin{align*}
 \mathcal E
 &\ge\int_0^1\left(P(t)-\overline P(t)\right)\,\mathrm d t\\
 &=\frac \delta 2-\frac1\delta\cdot\int_1^{1+\delta }\left(1+\delta -t\right)\cdot P(t)\,\mathrm d t\\
 &=\delta\cdot\int_0^1\left(1-u\right)\cdot w(u)\,\mathrm d u.
\end{align*}
Thus it remains to compare a weighted integral of $w$ with its
unweighted integral.

The function $P$ is nonincreasing because $q$ is nonnegative and
nondecreasing. The shift identity gives
$P\left(1+\delta \right)=e^{-z}\cdot P\left(1\right)$, and the first summand of $-\log P\left(1\right)$ gives
$P\left(1\right)\le e^{-z}$. Therefore
\begin{equation*}
 w\left(0\right)\le w(u)\le1-e^{-z}\cdot P\left(1\right)
                 \le\left(1+e^{-z}\right)\cdot w\left(0\right).
\end{equation*}
Put $c=\left(1+\sqrt{1+e^{-z}}\right)^{-1}$, so that
$\left(1-c\right)^2=\left(1+e^{-z}\right)\cdot c^2$. On $\left[0,1-c\right]$ the weight $1-u-c$ is
nonnegative, and on $\left[1-c,1\right]$ it is nonpositive. Applying the lower
bound for $w$ on the first interval and the upper bound on the
second gives
\begin{align*}
 \int_0^1\left(1-u-c\right)\cdot w(u)\,\mathrm d u
 &\ge w\left(0\right)\cdot\left(
       \int_0^{1-c}\left(1-u-c\right)\,\mathrm d u
       +\left(1+e^{-z}\right)\cdot\int_{1-c}^1\left(1-u-c\right)\,\mathrm d u\right)\\
 &=\frac{w\left(0\right)}2\cdot\left(\left(1-c\right)^2-\left(1+e^{-z}\right)\cdot c^2\right)=0.
\end{align*}
Consequently,
\begin{equation*}
 \mathcal E\ge \delta\cdot c\cdot\int_0^1w(u)\,\mathrm d u
       =\frac{\delta\cdot\left(1-p\right)}{1+\sqrt{1+e^{-z}}}. \qedhere
\end{equation*}
\end{proof}

\begin{lemma}[Terminal scalar comparison]\label{lem:apx-scalar}
Let $q$ be a finite nonnegative nondecreasing step rate, constant
on $\left[\ell,1\right]$. Put $\Lambda=q(\ell)>0$, $z=\delta\cdot\Lambda$,
$p=\overline P\left(1\right)$, and $s=S\left(1\right)$. Then $s\le p<1$ and
\begin{equation*}
 \frac{\int_0^1q(t)\,\mathrm d t}{\delta\cdot\Lambda}
       \cdot\frac{p-s}{1-p}\le\min\left\{z/8,1/16\right\}.
\end{equation*}
\end{lemma}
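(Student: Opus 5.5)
The plan is to view $p$ and $s$ as the two sides of Jensen's inequality for one bounded random variable, and then reduce the claim to a scalar inequality in two parameters. Write $Q=\int_0^1q(t)\,\mathrm dt$, let $V$ be uniform on $\left[1,1+\delta\right)$, and put $H=-\log P$.

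\emph{Step 1: reduction to a mean and a range.} By the shift identity \eqref{eq:apx-shift-mean} at $t=1$, $\mathbb E\left[H(V)\right]=Q$. Hence $s=e^{-Q}=e^{-\mathbb E[H(V)]}$ and $p=\mathbb E\left[e^{-H(V)}\right]$, and Jensen gives $s\le p$. As in the proof of Lemma~\ref{lem:apx-local}, $H$ is nondecreasing, and its essential range on $\left[1,1+\delta\right)$ has length at most $\delta\cdot q(1-)=\delta\cdot\Lambda=z$. Since $q\equiv\Lambda$ on $\left[\ell,1\right]$, we also have $Q\ge\delta\cdot\Lambda=z>0$, and $H(V)\ge H(1)\ge z$. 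This gives $p\le e^{-z}<1$. Write $H(V)=Q+Z$ with $\mathbb E[Z]=0$ and $Z$ supported in an interval of length at most $z$. The left-hand side of the lemma then becomes
\begin{equation*}
 \frac{Q}{z}\cdot\frac{e^{-Q}\cdot\left(\mathbb E\left[e^{-Z}\right]-1\right)}{1-e^{-Q}\cdot\mathbb E\left[e^{-Z}\right]}.
\end{equation*}

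\emph{Step 2: a bound on the Jensen gap.} Set $m=\mathbb E\left[e^{-Z}\right]$. Hoeffding's lemma gives $m\le e^{z^2/8}$. A slightly sharper extremal version is also available: for a mean-zero variable with range at most $z$, $\mathbb E\left[e^{-Z}\right]$ is maximized by a two-point law at the endpoints of the range. The displayed ratio is increasing in $m$, so it suffices to bound it at the extremal value of $m$. That reduces the lemma to a two-variable inequality in $Q\ge z>0$.

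\emph{Step 3: the two regimes.} For the bound $z/8$, I would use $m-1\le e^{z^2/8}-1\approx z^2/8$ together with $Q\cdot e^{-Q}/\left(1-e^{-Q}\right)\le1$. The leading term is then $(Q/z)\cdot e^{-Q}\cdot\left(z^2/8\right)/\left(1-e^{-Q}\right)\le z/8$. The higher-order corrections from $e^{z^2/8}$ and from the $m$ in the denominator have to be absorbed. I would do this by using the extremal two-point law for $Z$, which is tighter than Hoeffding for the uniform-average structure here, and, if that is not enough, by using the monotone structure of $H$ directly to get $m-1\le z^2/8\cdot(1+O(z))$ with a compensating gain in the denominator. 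For the bound $1/16$, which matters only when $z\ge1/2$, I would use $Q\ge z$ and the fact that $Q\cdot e^{-Q}$ is decreasing for $Q\ge1$. The cruder bound $p-s\le p\le e^{-z}\cdot e^{-(Q-z)}$ is then compared against $1-p\ge1-e^{-Q}$, and the resulting function of $(z,Q)$ is maximized by elementary calculus or a short numerical check.

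\emph{Main obstacle.} The hard step is Step 3 in the intermediate range $z\approx1/2$, with $Q$ close to $z$. There neither the quadratic Hoeffding estimate nor the crude bound $p-s\le p$ is obviously sharp enough, so both constants $1/8$ and $1/16$ must come out of the same extremal problem. What I would try first is to solve that problem exactly: maximize the ratio over two-point laws for $Z$ and over $Q\ge z$. This leaves an explicit function of one or two variables, which can be bounded analytically near $z=0$ and certified numerically on a compact range.
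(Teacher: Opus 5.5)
Your Step~1 is correct and agrees with the paper, and the chord (two-point) idea in Step~2 is the right reduction. Your $z/8$ branch even closes with Hoeffding alone: with $w=z^2/8$ it amounts to $Q\cdot(e^{w}-1)\le w\cdot(e^{Q}-e^{w})$, whose difference is increasing in $Q$ on $[z,\infty)$ and nonnegative at $Q=z$ when $z\le 1/2$.

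The gap is in the $1/16$ branch, in three places.

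First, the crude estimate you propose uses $p\le e^{-z}\cdot e^{-(Q-z)}=e^{-Q}$ and $1-p\ge 1-e^{-Q}$. This reverses the Jensen inequality $p\ge s=e^{-Q}$ from your own Step~1.

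Second, even the corrected crude bound $p-s\le p\le e^{-H(1)}$ discards $s$ and fails for moderate $z$. For $q=\Lambda\cdot\mathbf{1}_{[\ell,1]}$ the left-hand side is $0$, yet that bound gives $1/(e^{z}-1)$, which exceeds $1/16$ for all $z$ up to about $2.8$.

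Third, and most seriously, your fallback extremal problem maximizes over two-point laws for a mean-zero $Z$ of range $z$ and over $Q\ge z$. This lets the interval carrying $Z$ slide freely, so it drops the constraint $H(1)\ge z$ on the \emph{lower endpoint} of the support of $H(V)$. You derived that constraint in Step~1 but never carry it forward. Without it the target inequality is false. Take $z=2$ and $H(V)\in\{1,3\}$ with probability $1/2$ each. Then $Q=z$, $p=\frac12(e^{-1}+e^{-3})$, $s=e^{-2}$, and $\frac{Q}{z}\cdot\frac{p-s}{1-p}\approx 0.093>1/16$.

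The missing idea is how the paper uses that constraint. It places the support in $[H(1),H(1)+z]$ and writes $H(1)=mz$ with $m\ge 1$ (the paper's $m$, not your $\E[e^{-Z}]$) and $Q=(m+\xi)z$ with $\xi\in[0,1]$. The chord of $h\mapsto e^{-h}$ then gives $p\le p_*=e^{-mz}(1-\xi+\xi e^{-z})$. The logarithmic derivative in $m$ of $(m+\xi)(p_*-s)/(1-p_*)$ is $\frac{1}{m+\xi}-\frac{z}{1-p_*}<0$, because $1-p_*\le 1-s<(m+\xi)z$. So the worst case is the smallest admissible $m$, namely $m=1$. Relaxing to $Q\ge z$ lets $m$ drop to $1-\xi$, which is exactly how the counterexample above arises.

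At $m=1$ the paper proves the bound analytically for every $z>0$. With $x=e^{-z}$, it shows that $G(u)=\frac{1-u+ux-x^u}{u(1-u)}$ is convex via an integral representation. It then bounds $G(\xi)$ by its chord and substitutes the two elementary bounds on $-\log x$, reaching $\xi(1-\xi)/4\le 1/16$.

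Your proposal leaves exactly this scalar inequality to an unspecified numerical check on a compact range. That check also does not address the unbounded directions $z\to\infty$ and $m\to\infty$. As it stands, the $1/16$ half of the lemma is not proved.
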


\begin{proof}
Jensen's inequality gives $p\ge s$. For $t\in\left[1,1+\delta \right)$, the first
term of $-\log P(t)$ is $\delta\cdot q\left(t-\delta \right)=z$, so $p\le e^{-z}<1$.
Let $V$ be uniform
on $\left[1,1+\delta \right)$ and put $H(t)=-\log P(t)$. The mean identity in
\eqref{eq:apx-shift-mean} gives
$\mathbb E\left[H(V)\right]=\int_0^1q(t)\,\mathrm d t$. Monotonicity and the shift identity give
\begin{equation*}
 H\left(\left(1+\delta \right)-\right)-H\left(1\right)\le \delta\cdot\Lambda=z,
\end{equation*}
so its essential range has length at most $z$. Its minimum is $H\left(1\right)\ge z$.
Use the containing interval $\left[H\left(1\right),H\left(1\right)+z\right]$. Write its lower
endpoint and the mean offset in units of $z$ as
\begin{equation*}
 m=\frac{H\left(1\right)}z\ge1,\qquad
 \xi=\frac1z\cdot\int_0^1q(t)\,\mathrm d t-m\in\left[0,1\right].
\end{equation*}
By convexity of $h\mapsto e^{-h}$ on $\left[m\cdot z,\left(m+1\right)\cdot z\right]$,
we have
\begin{equation*}
 e^{-H(V)}\le e^{-m\cdot z}\cdot\left(1-
       \frac{H(V)-m\cdot z}{z}\cdot\left(1-e^{-z}\right)\right).
\end{equation*}
Taking expectations gives
\begin{equation*}
 p\le e^{-m\cdot z}\cdot\left(1-\xi+\xi\cdot e^{-z}\right)=:p_*,\qquad
 s=e^{-\left(m+\xi\right)\cdot z}.
\end{equation*}
Since $\left(p-s\right)/\left(1-p\right)$ increases with $p$, the desired estimate
reduces to the elementary inequality
\begin{equation}
\label{eq:apx-elementary-scalar}
 \frac{\left(m+\xi\right)\cdot\left(p_*-s\right)}{1-p_*}\le\min\left\{z/8,1/16\right\},
 \qquad m\ge1,\quad \xi\in\left[0,1\right],\quad z>0.
\end{equation}

The numerator is zero for $\xi=0,1$, so take $0<\xi<1$. Set
$c=1-\xi+\xi\cdot e^{-z}$. Convexity gives $c>e^{-\xi\cdot z}$.

First, the function
\begin{equation*}
 \xi\longmapsto 1-\xi+\xi\cdot e^{-z}-e^{-\xi\cdot z}
                 -\frac12\cdot\xi\cdot\left(1-\xi\right)\cdot z^2
\end{equation*}
has second derivative $z^2\cdot\left(1-e^{-\xi\cdot z}\right)\ge0$ and vanishes at both
endpoints. It is therefore nonpositive on $\left[0,1\right]$. Also,
\begin{equation*}
 e^{m\cdot z}-c
 \ge\left(m+\xi\right)\cdot z+\frac12\cdot\left(m^2-\xi\right)\cdot z^2
 \ge\left(m+\xi\right)\cdot z,
\end{equation*}
using $e^{m\cdot z}\ge1+m\cdot z+m^2\cdot z^2/2$, $e^{-z}\le1-z+z^2/2$, and $m^2\ge \xi$.
Consequently,
\begin{equation*}
 \frac{\left(m+\xi\right)\cdot\left(p_*-s\right)}{1-p_*}
 =\frac{\left(m+\xi\right)\cdot\left(c-e^{-\xi\cdot z}\right)}{e^{m\cdot z}-c}
 \le\frac12\cdot\xi\cdot\left(1-\xi\right)\cdot z\le\frac z8.
\end{equation*}

For the bound $1/16$, hold $\xi,z$ fixed. The logarithmic derivative of
the positive expression on the left of \eqref{eq:apx-elementary-scalar} is
\begin{equation*}
 \frac1{m+\xi}-\frac z{1-p_*}<0,
\end{equation*}
because $p_*\ge s$ and $1-s<\left(m+\xi\right)\cdot z$. Thus it suffices to prove the
bound at $m=1$. Put $x=e^{-z}\in\left(0,1\right)$. We bound the exponential chord gap
directly. For $0<u<1$, let
\begin{equation*}
 G(u)=\frac{1-u+u\cdot x-x^u}{u\cdot\left(1-u\right)},
\end{equation*}
extended continuously to the endpoints. 
For $0<u<1$, direct integration gives
\begin{align*}
z^2\cdot\int_0^1\int_0^{1-v}e^{-z\cdot\left(t+u\cdot v\right)}\,dt\,dv
&=\frac{1-e^{-z\cdot u}}{u}
-\frac{e^{-z\cdot u}-e^{-z}}{1-u}\\
&=\frac{1-u+u\cdot e^{-z}-e^{-z\cdot u}}{u\cdot\left(1-u\right)}
=G(u).
\end{align*}
For each fixed $t,v$, the integrand is convex in $u$.
The integral representation therefore shows that $G$ is convex.
Its endpoint values are
\begin{equation*}
G\left(0\right)=z-1+x,
\qquad
G\left(1\right)=1-x-x\cdot z.
\end{equation*}
Hence
\begin{align*}
 G(\xi)&\le\left(1-\xi\right)\cdot\left(z-1+x\right)+\xi\cdot\left(1-x-x\cdot z\right)\\
     &=\left[1-\xi\cdot\left(1+x\right)\right]\cdot z+\left(2\cdot\xi-1\right)\cdot\left(1-x\right).
\end{align*}
Since $m=1$ and $1-p_*=\left(1-x\right)\cdot\left(1+\xi\cdot x\right)$,
\begin{equation*}
 \frac{\left(1+\xi\right)\cdot\left(p_*-s\right)}{1-p_*}
 =\xi\cdot\left(1-\xi\right)\cdot\frac{\left(1+\xi\right)\cdot x\cdot G(\xi)}{\left(1-x\right)\cdot\left(1+\xi\cdot x\right)}.
\end{equation*}
We will bound the last fraction by $1/4$.

We use the elementary logarithm bounds
\begin{equation*}
 \frac{2\cdot\left(1-x\right)}{1+x}\le-\log x
 \le\frac{\left(1-x\right)\cdot\left(x^2+10\cdot x+1\right)}{6\cdot x\cdot\left(1+x\right)}.
\end{equation*}
Both gaps vanish at $x=1$, and their derivatives are nonpositive:
\begin{align*}
 \left(-\log x-\frac{2\cdot\left(1-x\right)}{1+x}\right)'
     &=-\frac{\left(1-x\right)^2}{x\cdot\left(1+x\right)^2},\\
 \left(\frac{\left(1-x\right)\cdot\left(x^2+10\cdot x+1\right)}{6\cdot x\cdot\left(1+x\right)}+\log x\right)'
     &=-\frac{\left(1-x\right)^4}{6\cdot x^2\cdot\left(1+x\right)^2}.
\end{align*}
This proves both bounds on $0<x\le1$. The sign of $1-\xi\cdot\left(1+x\right)$,
the coefficient of $z$ in the bound for $G(\xi)$, determines which
logarithm bound to substitute.

If $1-\xi\cdot\left(1+x\right)\le0$, use the lower logarithm bound in the estimate
for $G(\xi)$, obtaining $G(\xi)\le\left(1-x\right)^2/\left(1+x\right)$. Therefore
\begin{equation*}
 \frac{\left(1+\xi\right)\cdot x\cdot G(\xi)}{\left(1-x\right)\cdot\left(1+\xi\cdot x\right)}
 \le\frac{\left(1+\xi\right)\cdot x\cdot\left(1-x\right)}{\left(1+x\right)\cdot\left(1+\xi\cdot x\right)}
 \le\frac{2\cdot x\cdot\left(1-x\right)}{\left(1+x\right)^2}\le\frac14.
\end{equation*}
The middle inequality uses $\left(1+\xi\right)/\left(1+\xi\cdot x\right)\le2/\left(1+x\right)$, and the
last is equivalent to $\left(1-3\cdot x\right)^2\ge0$.

If $1-\xi\cdot\left(1+x\right)\ge0$, use the upper logarithm bound instead. After
collecting terms, the fraction is at most
\begin{equation*}
 \frac{\left(1+\xi\right)\cdot\left(1-x\right)\cdot\left[1+5\cdot x-\xi\cdot\left(1-x^2\right)\right]}{6\cdot\left(1+x\right)\cdot\left(1+\xi\cdot x\right)}.
\end{equation*}
To prove that this is at most $1/4$, expand the difference after
clearing the positive denominator:
\begin{align*}
 &3\cdot\left(1+x\right)\cdot\left(1+\xi\cdot x\right)-2\cdot\left(1+\xi\right)\cdot\left(1-x\right)\cdot\left[1+5\cdot x-\xi\cdot\left(1-x^2\right)\right]\\
 &\qquad=Q_{\xi}(x)+2\cdot\xi\cdot\left(1+\xi\right)\cdot x^3,
\end{align*}
where
\begin{equation*}
 Q_{\xi}(x)=\left(10+11\cdot\xi-2\cdot\xi^2\right)\cdot x^2-\left(5+7\cdot\xi+2\cdot\xi^2\right)\cdot x+1+2\cdot\xi^2.
\end{equation*}
This quadratic is positive for every real $x$: its leading
coefficient is positive, and the negative of its discriminant is
\begin{align*}
 &4\cdot\left(1+2\cdot\xi^2\right)\cdot\left(10+11\cdot\xi-2\cdot\xi^2\right)-\left(5+7\cdot\xi+2\cdot\xi^2\right)^2\\
 &\qquad=15-26\cdot\xi+3\cdot\xi^2+60\cdot\xi^3-20\cdot\xi^4\\
 &\qquad\ge5+4\cdot\xi+3\cdot\xi^2>0.
\end{align*}
The last inequality uses $\xi^4\le \xi^3$ and
$\xi^3\ge3\cdot\xi/4-1/4$, the latter following from
$\left(\xi-1/2\right)^2\cdot\left(\xi+1\right)\ge0$. The nonnegative cubic term then proves
the required bound in this case as well.

We have shown that the scalar expression is at most
$\xi\cdot\left(1-\xi\right)/4\le1/16$, completing \eqref{eq:apx-elementary-scalar}.

Since $m+\xi=\left(\delta\cdot\Lambda\right)^{-1}\cdot\int_0^1q(t)\,\mathrm d t$, this proves
the lemma's inequality.
\end{proof}

\begin{lemma}[Uniform comparison for step rates]\label{lem:apx-comparison}
For every finite nonnegative nondecreasing step rate $q$,
\begin{equation*}
 \mathcal E\ge0,\qquad
 \int_0^1D_n(t)^+\,\mathrm d t\le0.142\cdot\mathcal E.
\end{equation*}
\end{lemma}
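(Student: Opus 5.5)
The plan is to chain the four auxiliary lemmas. Fix a finite nonnegative nondecreasing step rate $q$ and $n\ge2$. The claim $\mathcal E\ge0$ was already shown just before the statement, via $P\ge S$ on $[0,1]$.

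For the integral bound, first apply Lemma~\ref{lem:apx-last-rank}. It says that proving the target for $q_e(t)=q(\min\{t,\ell\})$ suffices. We may therefore assume $q$ is constant on $[\ell,1]$, with value $\Lambda=q(\ell)$.

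If $\Lambda=0$, then $q\equiv0$ by monotonicity. In that case both selection expressions vanish, so $D_n\equiv0$ and the bound is trivial. Assume therefore $\Lambda>0$, and set $z=\delta\cdot\Lambda$, $p=\overline P(1)$ and $s=S(1)$.

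Lemma~\ref{lem:apx-local} then gives
\begin{equation*}
 \int_0^1 D_n(t)^+\,dt\le\frac{p-s}{\Lambda}\cdot\int_0^1q(t)\,dt
 =\delta\cdot(1-p)\cdot\left(\frac{\int_0^1q(t)\,dt}{\delta\cdot\Lambda}\cdot\frac{p-s}{1-p}\right),
\end{equation*}
where $p<1$ comes from Lemma~\ref{lem:apx-scalar}. That lemma also bounds the parenthesized factor by $\min\{z/8,1/16\}$. Lemma~\ref{lem:apx-area} gives $\delta\cdot(1-p)\le\left(1+\sqrt{1+e^{-z}}\right)\cdot\mathcal E$. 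Combining the three estimates,
\begin{equation*}
 \int_0^1D_n(t)^+\,dt\le g(z)\cdot\mathcal E,\qquad g(z)=\min\left\{\frac z8,\frac1{16}\right\}\cdot\left(1+\sqrt{1+e^{-z}}\right).
\end{equation*}

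It remains to check the one-variable inequality $\sup_{z>0}g(z)\le0.142$, which is the only real obstacle. The factor $1+\sqrt{1+e^{-z}}$ is decreasing in $z$, with limit $2$ as $z\to\infty$.

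For $z\ge1/2$, the first factor of $g$ equals $1/16$. Then $g(z)\le\left(1+\sqrt{1+e^{-1/2}}\right)/16$. Since $e^{-1/2}\approx0.6065$ and $\sqrt{1.6065}\approx1.2675$, this is $2.2675/16\approx0.14172<0.142$.

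For $z<1/2$, the first factor is $z/8$, so $g(z)=z\cdot\left(1+\sqrt{1+e^{-z}}\right)/8$. The bound $1+\sqrt{1+e^{-z}}\le1+\sqrt2$ gives $g(z)\le(1+\sqrt2)/16\approx0.1509$, which is too weak. We need the product itself. Its derivative is
\begin{equation*}
 \frac18\left(1+\sqrt{1+e^{-z}}\right)-\frac{z\cdot e^{-z}}{16\sqrt{1+e^{-z}}},
\end{equation*}
which is positive because $1+\sqrt{1+e^{-z}}>2>z\cdot e^{-z}/(2\sqrt{1+e^{-z}})$. Hence $g$ is increasing on $[0,1/2]$, and its value there is at most $g(1/2)\approx0.14172$.

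So $\sup_z g(z)\approx0.14172<0.142$. The margin is thin, and the numerical evaluation at $z=1/2$ must be done with outward rounding. Given this bound, the target $\int_0^1D_n^+\le0.142\cdot\mathcal E$ holds for the reduced rate, and by Lemma~\ref{lem:apx-last-rank} for the original $q$.
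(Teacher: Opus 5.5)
Your proposal is correct and follows the paper's proof essentially step for step: the last-rank reduction, the chain of Lemmas~\ref{lem:apx-local}, \ref{lem:apx-scalar} and~\ref{lem:apx-area} yielding the factor $\left(1+\sqrt{1+e^{-z}}\right)\cdot\min\{z/8,1/16\}$, and monotonicity of $z\cdot\left(1+\sqrt{1+e^{-z}}\right)$ so that both branches peak at $z=1/2$. The only difference is that the paper replaces your decimal evaluation at $z=1/2$ by an exact rational check ($e^{1/2}\ge 13/8$ and $(159/125)^2>21/13$, hence $\sqrt{1+e^{-1/2}}<159/125$ and the bound is below $71/500=0.142$), which is precisely the outward-rounded certification you say is needed.
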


\begin{proof}
We already observed that $P\ge S$, so $\mathcal E\ge0$.
By Lemma~\ref{lem:apx-last-rank},
it suffices to consider a rate constant on $\left[\ell,1\right]$. If
$\Lambda=q(\ell)=0$, the rate is zero and $D_n=0$. Otherwise put
$z=\delta\cdot\Lambda$, $p=\overline P\left(1\right)$, and $s=S\left(1\right)$.
Lemmas~\ref{lem:apx-local}--\ref{lem:apx-scalar} give
\begin{align*}
 \int_0^1D_n(t)^+\,\mathrm d t
 &\le\frac{p-s}{\Lambda}\cdot\int_0^1q(t)\,\mathrm d t\\
 &\le \delta\cdot\left(1-p\right)\cdot\min\left\{z/8,1/16\right\}\\
 &\le\left(1+\sqrt{1+e^{-z}}\right)
                \cdot\min\left\{z/8,1/16\right\}\cdot\mathcal E.
\end{align*}

Here $f(z)=1+\sqrt{1+e^{-z}}$ decreases, while
$ \left(z\cdot f(z)\right)'=f(z)-\frac{z\cdot e^{-z}}{2\cdot\sqrt{1+e^{-z}}}
 \ge2-\tfrac12\cdot z\cdot e^{-z}\ge\tfrac32>0.$
The two branches meet at $z=1/2$, giving the coefficient upper bound
\begin{equation*}
 \frac{1+\sqrt{1+e^{-1/2}}}{16}<\frac{71}{500}=0.142.
\end{equation*}

For an exact rational check, $e^{1/2}\ge13/8$ and
$\left(159/125\right)^2-21/13=528/203125>0$, so
$\sqrt{1+e^{-1/2}}<159/125$.
\end{proof}

\begin{proof}[Proof of Lemma~\ref{prop:iid-transfer}]
Fix $n\ge2$ and a normalized input with CDF $F$.
Use the auxiliary rate
$r(t)=-n\cdot\log F\left(\theta(t)\right)$ from Lemma~\ref{lem:apx-accounting}.
That lemma gives \eqref{eq:apx-finite-payoff}.

Approximate $r$ by bounded admissible step rates
\begin{equation*}
 q_k(t)=2^{-k}\cdot\left\lfloor2^k\cdot\min\left\{r(t),k\right\}\right\rfloor,
\end{equation*}
taking a right-continuous representative of each $q_k$. Each
representative changes the displayed function at only finitely
many points, so $q_k\to r$ almost everywhere. Use a subscript $k$
for quantities computed from $q_k$.
By Lemma~\ref{lem:apx-comparison} and $C(q_k)\ge1-\theta\left(0\right)$, the
difference of the finite-input expression from $L\!\left(b,q_k\right)$ is at
least $\left(1-\theta\left(0\right)-0.142\cdot B\right)\cdot\mathcal E_k\ge0$. The hypothesis on $L$ gives
\begin{equation*}
 C(q_k)\cdot A_{P,k}\left(0\right)+\int_0^1b(t)\cdot T_{n,k}(t)\,\mathrm d t
 \ge L\!\left(b,q_k\right)\ge\gamma.
\end{equation*}
It remains to take the limit only on the finite-input side.
The explicit exponential formulas give $C(q_k)\to C(r)$ and
$P_k\to P$ almost everywhere, including where the auxiliary rate $r$
is infinite. For $P_k$, the exceptional set is contained in
finitely many translates of the exceptional set for $q_k\to r$.
Dominated convergence gives $A_{P,k}\to A_P$ and
$\overline P_k\to\overline P$;\, then the piecewise formula gives
$T_{n,k}\to T_n$ almost everywhere. The bounds
$0\le P_k\le1$ and $0\le T_{n,k}\le1+1/\delta $ justify a final
application of dominated convergence:
\begin{equation*}
 C(r)\cdot A_P\left(0\right)+\int_0^1b(t)\cdot T_n(t)\,\mathrm d t
 =\lim_{k\to\infty}\left(
 C(q_k)\cdot A_{P,k}\left(0\right)+\int_0^1b(t)\cdot T_{n,k}(t)\,\mathrm d t\right)
 \ge\gamma.
\end{equation*}
Combining this with the actual-input payoff bound
\eqref{eq:apx-finite-payoff} gives
$\mathbb E\left[\mathrm{ALG}\right]\ge\gamma$ in the normalized input.
\end{proof}

\paragraph{The forty-slope threshold.}
For the threshold in Section~\ref{sec:iid-analysis}, the largest drop rate is $B=b_{40} < 1.81532$.
We verify the transfer condition by calculating $1-\theta\left(0\right)-0.142\cdot B > 0.000263 > 0$.
Using Lemma~\ref{prop:iid-transfer} and the certified bound on $L$, we obtain the displayed $0.7419$ guarantee for every $n$, proving Theorems~\ref{thm:iid-transfer} and~\ref{thm:iid}.

\section{Supplementary Prophet Secretary Analysis}
\label{app:elementary}

\subsection{Linear Lower Bounds}\label{app:secretary-linear-bounds}

In this subsection, we derive linear lower bounds for the two nonlinear contributions in $\Gamma_{p^*}(\tau)$:
$H_{p^*}\!\left(s,\Delta\right)$ from integrated conditioning and $\widehat S(s)$ from the adjustment term.
For fixed threshold and proof parameters, the remaining contributions are already linear in $p^*$, since $P^*$ is an integral of $p^*$.
We use the bounds below together with interval averaging in the next subsection to construct the finite factor-revealing LP.
Define
\begin{equation*}
    \mathcal L\!\left(\Delta,x,y\right) = \log\left(1-y\right)-\log\left(1-\Delta\cdot x-y\right)-\Delta\cdot x.
\end{equation*}
Recall that $H_{p^*}\!\left(s,\Delta\right) = \mathcal L\!\left(\Delta,p^*(s),y\right)$, where
\begin{equation*}
    \delta=\min\!\left\{s,\left(1-\Delta\right)/2\right\}, \qquad y=\int_{s-\delta}^s p^*(v)\,dv.
\end{equation*}
Since $p^*$ is nondecreasing, we have $0\le y\le\delta\cdot p^*(s)$.
For fixed $\Delta$ and $\delta$, we consider the following triangular domain:
\begin{equation*}
    D_\delta = \left\{\left(x,y\right):0\le x\le1,\ 0\le y\le\delta\cdot x\right\}.
\end{equation*}

\begin{lemma}\label{lem:log-plane}
Fix $\Delta>0$ and $\delta\ge0$ with $\Delta+\delta<1$.
For any slopes $\sigma,\nu\in\mathbb R$, we can compute a constant $\beta$, independent of $\left(x,y\right)$, such that $\mathcal L\!\left(\Delta,x,y\right)\ge\beta+\sigma\cdot x+\nu\cdot y$ for every $\left(x,y\right)\in D_\delta$.
\end{lemma}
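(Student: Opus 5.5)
The plan is to set $\beta=\min_{(x,y)\in D_\delta}\{\mathcal L(\Delta,x,y)-\sigma x-\nu y\}$ and show that this minimum is finite and computable. Everything hinges on $\mathcal L$ being finite and continuous on the compact set $D_\delta$. On $D_\delta$ we have $y\le\delta x\le\delta$, hence
\[
1-\Delta x-y\ge 1-\Delta-\delta>0 .
\]
So both logarithms are finite and smooth on a neighborhood of $D_\delta$. The function $g(x,y)=\mathcal L(\Delta,x,y)-\sigma x-\nu y$ is therefore continuous on the compact triangle, the minimum exists, and it is finite. By definition, $\mathcal L\ge\beta+\sigma x+\nu y$ on $D_\delta$.

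For computability I would use convexity rather than rely on abstract compactness. Write
\[
\mathcal L=\log(1-y)-\log(1-\Delta x-y)-\Delta x .
\]
Its Hessian is the Hessian of $-\log(1-\Delta x-y)$, which is PSD with rank one along $(\Delta,1)$, plus $-\frac{1}{(1-y)^2}$ in the $yy$ entry, so $\mathcal L$ is not jointly convex. The cleaner route is to fix $y$ and view $g$ as a function of $x$. In $x$ alone, $-\log(1-\Delta x-y)$ is convex and the other terms are linear, so $g(\cdot,y)$ is convex on the segment $x\in[y/\delta,1]$. Its minimum over $x$ is attained either at the stationary point
\[
x^\circ(y)=\frac{1-y}{\Delta}-\frac{1}{\Delta+\sigma},
\]
which arises from $\frac{\Delta}{1-\Delta x-y}=\Delta+\sigma$ and is valid when $\Delta+\sigma>0$, or at an endpoint $x=y/\delta$ or $x=1$. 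This reduces the problem to minimizing an explicit one-variable function of $y\in[0,\delta]$. That function is a minimum of finitely many smooth branches. Its stationary points can be bracketed, or the minimum can be lower-bounded rigorously on a fine grid of $y$ using a Lipschitz constant. A usable bound is
\[
|\partial_y g|\le \frac{1}{1-\delta}+\frac{1}{1-\Delta-\delta}+|\nu| ,
\]
which is valid on $D_\delta$ and can be subtracted to give a certified $\beta$.

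The case $\delta=0$ needs separate treatment: there $D_\delta$ is the segment $y=0$, and the inner convex minimization in $x$ alone already yields $\beta$ in closed form. The main obstacle I expect is making $\beta$ rigorous and computable rather than merely existent, specifically handling the non-convexity in $y$. I would first try to show that $g(x^\circ(y),y)$ and the endpoint branches are monotone or concave in $y$, so that the minimum sits at $y\in\{0,\delta\}$ or on the boundary $y=\delta x$. If that fails, I would fall back on the grid-plus-Lipschitz certification, which suffices for the LP.
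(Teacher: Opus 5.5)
Your proof is correct, but it takes a different route from the paper.

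\textbf{Your route.} Existence is immediate from compactness, as you note. Your grid fallback does certify $\beta$, provided you compare each $y\in[y_k,y_{k+1}]$ with the left grid point. The feasible interval $[y/\delta,1]$ shrinks as $y$ grows, and the vertical segment from $(x,y_k)$ to $(x,y)$ stays in $D_\delta$. Hence $\min_x g(x,y)\ge\min_{x\in[y_k/\delta,1]}g(x,y_k)-L\,(y_{k+1}-y_k)$ with your constant $L$, and the inner minimum is in closed form by convexity in $x$.

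\textbf{The paper's route.} It draws a stronger conclusion from the Hessian you already computed. The rank-one PSD block plus the $-1/(1-y)^2$ entry give $\det\nabla^2 g=-\Delta^2/\bigl((1-y)^2(1-\Delta x-y)^2\bigr)<0$. So $g$ has no interior local minimum, and its minimum over the triangle lies on the boundary, i.e.\ on the edges $y=0$, $y=\delta x$, $x=1$. The paper then checks that $g$ restricted to each edge is convex. One tangent line per edge, minimized over the edge parameter in $[0,1]$, therefore gives a rigorous closed-form lower bound, and $\beta$ is the smallest of these three numbers.

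This is exactly the boundary statement you hoped to reach through your branches. Along your interior branch $1-\Delta x-y$ is constant, so $g(x^\circ(y),y)$ is $\log(1-y)$ plus an affine function of $y$, hence concave. Your two endpoint branches are the edge restrictions, and these are convex rather than concave.

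\textbf{What each buys.} The paper's argument gives a discretization-free certificate at the cost of three tangent evaluations per pair $(\sigma,\nu)$. It is valid for any choice of tangent points, which is convenient for outward rounding. This matters because such bounds are generated for every coefficient of a large LP. Yours is more generic, needing only continuity and a Lipschitz bound, but it pays a grid error of order $L$ times the mesh and more computation.
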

\begin{proof}
For fixed slopes $\sigma,\nu$, write
\begin{equation*}
    R\left(x,y\right) = \mathcal{L}\left(\Delta,x,y\right)-\sigma\cdot x-\nu\cdot y.
\end{equation*}
We seek a constant $\beta$ such that $R\left(x,y\right)\ge \beta$ throughout $D_\delta$.
% s
Suppose first that $\delta>0$.
For every $\left(x,y\right)\in D_\delta$, we have $1-y\ge 1-\Delta\cdot x-y\ge 1-\Delta-\delta>0$.
Hence $R$ is continuous, and we can minimize it over the closed and bounded triangle $D_\delta$.
At every interior point, we have
\begin{equation*}
    \det\nabla^2 R\left(x,y\right) = -\frac{\Delta^2}{\left(1-y\right)^2\cdot\left(1-\Delta\cdot x-y\right)^2}<0.
\end{equation*}
We therefore exclude an interior local minimum and need only lower bound $R$ on the three edges of the triangle $D_\delta$.
For $0\le u\le 1$, we parameterize the three edges by
\begin{equation*}
    \left(x_0(u),y_0(u)\right)=\left(u,0\right),\qquad
    \left(x_1(u),y_1(u)\right)=\left(u,\delta\cdot u\right),\qquad
    \left(x_2(u),y_2(u)\right)=\left(1,\delta\cdot u\right),
\end{equation*}
and define $\varphi_e(u) = R\left(x_e(u),y_e(u)\right)$ for each $e\in\left\{0,1,2\right\}$.
Since
\begin{align*}
    \varphi_0''(u) &= \frac{\Delta^2}{\left(1-\Delta\cdot u\right)^2}\ge 0, \\
    \varphi_1''(u) &= \frac{\left(\Delta+\delta\right)^2}{\left(1-\left(\Delta+\delta\right)\cdot u\right)^2} - \frac{\delta^2}{\left(1-\delta\cdot u\right)^2}\ge 0, \\
    \varphi_2''(u) &= \frac{\delta^2}{\left(1-\Delta-\delta\cdot u\right)^2} - \frac{\delta^2}{\left(1-\delta\cdot u\right)^2}\ge 0,
\end{align*}
each $\varphi_e$ is convex.
For each edge $e$, we choose any point $u_e\in\left[0,1\right]$ and use the tangent there to obtain
\begin{equation*}
    \varphi_e(u) \ge \varphi_e(u_e) + \varphi_e'(u_e)\cdot\left(u-u_e\right), \qquad 0\le u\le1.
\end{equation*}
Minimize this linear lower bound over $\left[0,1\right]$ and let
\begin{align*}
    \beta_e =\min_{0\le u\le1} \left\{\varphi_e(u_e)+\varphi_e'(u_e)\cdot \left(u-u_e\right)\right\} =\varphi_e(u_e)-u_e\cdot \varphi_e'(u_e)+\min\left\{0,\varphi_e'(u_e)\right\}.
\end{align*}
We then have $\varphi_e(u)\ge \beta_e$ along each edge.
Taking $\beta=\min\left\{\beta_0,\beta_1,\beta_2\right\}$, we obtain $R\left(x,y\right)\ge\beta$ on the boundary and hence throughout $D_\delta$.

If $\delta=0$, we have only the segment $\left(u,0\right)$, for $0\le u\le1$.
We use the same construction for $\varphi_0$ and take $\beta=\beta_0$.
In both cases, adding back $\sigma\cdot x+\nu\cdot y$ gives the claimed lower bound.
\end{proof}

For $\Delta=0$, we use $\mathcal L\!\left(0,x,y\right)=0$ directly.
Recall that
\begin{equation*}
    \widehat S(s) = \left(1-p^*(s)\right)\cdot \exp\left(\int_0^s\log\left(1-p^*(v)\right)\,dv\right).
\end{equation*}
We use the conventions $\log0=-\infty$ and $\exp\left(-\infty\right)=0$.

\begin{lemma}\label{lem:direct-survival-plane}
Fix $0<c\le 1$ and $0\le\rho\le 1$.
For every nondecreasing $p^*:\left[0,1\right]\to\left[0,1\right]$ and every $0\le s\le c$, we have
\begin{equation*}
    \widehat S(s) \ge \rho^c\cdot\left(1+c-c\cdot\rho\right)\cdot \left(1-p^*(s)\right)-\rho^{1+c}\cdot P^*(s).
\end{equation*}
\end{lemma}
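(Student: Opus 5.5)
The plan is to linearize the two nonlinear pieces of $\widehat S(s)$ one after the other: first the exponential, by a tangent line anchored at the time $c$ rather than at $s$, and then the logarithm, by a pointwise inequality that uses monotonicity of $p^*$. Fix $s\le c$ and write $x=p^*(s)$ and $g(v)=\log\left(1-p^*(v)\right)$. Since $p^*$ is nondecreasing, we have $p^*(v)\le x$ for all $v\le s$.

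\emph{Degenerate cases.} If $x=1$, the right-hand side equals $-\rho^{1+c}\cdot P^*(s)\le0\le\widehat S(s)$, so there is nothing to prove. If $\rho=0$, the right-hand side is $0$ for $c>0$, and the claim is again trivial. Hence I assume $x<1$ and $\rho>0$. In that case every $g(v)$ is finite.

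\emph{Tangent of the exponential at $c\log\rho$.} Convexity of $\exp$ gives $e^{Y}\ge\rho^c\cdot\left(1+Y-c\cdot\log\rho\right)$ for all real $Y$. I apply this with $Y=\int_0^s g(v)\,dv$ and write
\begin{equation*}
Y-c\cdot\log\rho=\int_0^s\log\frac{1-p^*(v)}{\rho}\,dv+\left(c-s\right)\cdot\log\frac1\rho .
\end{equation*}
Using $\log\left(1/\rho\right)\ge1-\rho$ together with $c-s\ge0$, the second term is at least $\left(c-s\right)\cdot\left(1-\rho\right)$. Multiplying by $1-x\ge0$ preserves the inequality, even though the bound itself may be negative. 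This yields
\begin{equation*}
\widehat S(s)\ge\rho^c\cdot\left[\left(1-x\right)\cdot\left(1+\left(c-s\right)\cdot\left(1-\rho\right)\right)+\int_0^s\left(1-x\right)\cdot\log\frac{1-p^*(v)}{\rho}\,dv\right].
\end{equation*}
Anchoring the tangent at $c$ rather than at $s$ is the key choice. A tangent at $s\log\rho$ would produce the factor $\rho^{1+s}$ in front of $P^*(s)$, and that factor cannot be compared with $\rho^{1+c}$ in the right direction.

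\emph{Pointwise linearization of the logarithm, which I expect to be the main step.} I need a bound linear in $p=p^*(v)$ that uses only $p\le x$. I claim
\begin{equation*}
\left(1-x\right)\cdot\log\frac{1-p}{\rho}\ge\left(1-x\right)\cdot\left(1-\rho\right)-\rho\cdot p .
\end{equation*}
To prove it, start from $\log\left(u/\rho\right)\ge1-\rho/u$ with $u=1-p$. The claim then reduces to $\left(1-x\right)/\left(1-p\right)\le\left(1-x\right)+p$. After clearing the denominator, this becomes $p\cdot\left(1-x\right)\le p\cdot\left(1-p\right)$, which holds because $p\le x$.

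\emph{Assembling the bound.} Integrating the claim over $v\in\left[0,s\right]$ contributes $s\cdot\left(1-x\right)\cdot\left(1-\rho\right)-\rho\cdot P^*(s)$. The $\left(1-x\right)\cdot\left(1-\rho\right)$ terms then combine: $\left(c-s\right)\cdot\left(1-\rho\right)+s\cdot\left(1-\rho\right)=c\cdot\left(1-\rho\right)$. Therefore
\begin{equation*}
\widehat S(s)\ge\rho^c\cdot\left(1+c-c\cdot\rho\right)\cdot\left(1-x\right)-\rho^{1+c}\cdot P^*(s),
\end{equation*}
which is exactly the statement.
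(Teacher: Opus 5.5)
Your proof is correct, and it takes a genuinely different route from the paper's.

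\textbf{The paper's route.} The paper first collapses $\widehat S(s)$ to a function of two scalars, $z=1-p^*(s)$ and $a=P^*(s)/p^*(s)$. The chord bound $\log(1-p)\ge p\log z/(1-z)$ on $[0,1-z]$, which is where monotonicity enters, gives $\widehat S(s)\ge z^{1+a}$ with $0\le a\le s\le c$. The linear bound then follows from the scalar inequality $z^{1+a}+a\rho^{1+c}(1-z)\ge\rho^c(1+c-c\rho)z$. This is proved by weighted AM--GM followed by $\rho^{-(c-a)/(1+a)}\ge1+\frac{c-a}{1+a}(1-\rho)$.

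\textbf{Your route.} You linearize directly. The supporting line of $\exp$ at $c\log\rho$ makes the expression affine in $\int_0^s\log(1-p^*)$. The slack step $(c-s)\log(1/\rho)\ge(c-s)(1-\rho)$ is exactly where $s\le c$ is used. The pointwise inequality $(1-x)\log\frac{1-p}{\rho}\ge(1-x)(1-\rho)-\rho p$, which is equivalent to $p(x-p)\ge0$, replaces the chord bound as the place where monotonicity enters.

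\textbf{Comparison.} Your argument is shorter and consists only of tangent-line inequalities. It also needs no separate treatment of $p^*(s)=0$, whereas the paper divides by $p^*(s)$ to define $a$ and handles that case apart. The paper's route instead passes through the sharper nonlinear intermediate bound $\widehat S(s)\ge(1-p^*(s))^{1+P^*(s)/p^*(s)}$. That bound depends on $p^*$ only through $p^*(s)$ and $P^*(s)$, and it makes visible how much is lost in the final linearization.
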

\begin{proof}
Write $\alpha=\rho^c\cdot\left(1+c-c\cdot\rho\right)$. First suppose that $0<p^*(s)<1$ and $\rho>0$.
Write
\begin{equation*}
    z=1-p^*(s),\qquad a=\frac{P^*(s)}{1-z}.
\end{equation*}
Since $p^*$ is nondecreasing, we have $0\le a\le s\le c$.
We bound the logarithm below by its chord on $\left[0,1-z\right]$:
\begin{equation*}
    \int_0^s\log\left(1-p^*(v)\right)\,dv \ge \frac{\log(z)}{1-z}\cdot \int_0^s p^*(v)\,dv = a\cdot \log(z).
\end{equation*}
Consequently, $\widehat S(s)\ge z^{1+a}$.

For $a>0$, we apply the weighted arithmetic--geometric mean
inequality with weights $1/\left(1+a\right)$ and $a/\left(1+a\right)$ to obtain
\begin{equation*}
    z^a+\frac{a\cdot \rho^{1+c}}{z} \ge \left(1+a\right)\cdot \rho^{\frac{a\cdot\left(1+c\right)}{1+a}}.
\end{equation*}
We obtain the same inequality directly when $a=0$.
Notice that $\left(c-a\right)/\left(1+a\right)\ge 0$, we have
\begin{align*}
    \left(1+a\right)\cdot \rho^{\frac{a\cdot\left(1+c\right)}{1+a}}
    &= \left(1+a\right)\cdot \rho^c\cdot \rho^{-\frac{c-a}{1+a}}\\
    &\ge \rho^c\cdot \left(1+a+\left(c-a\right)\cdot \left(1-\rho\right)\right)
    =\rho^c\cdot \left(1+c-c\cdot \rho\right) + a\cdot \rho^{1+c}.
\end{align*}
After multiplying by $z$ and rearranging, we obtain
\begin{equation*}
    z^{1+a}+a\cdot \rho^{1+c}\cdot \left(1-z\right) \ge \alpha\cdot z.
\end{equation*}
We now use $P^*(s)=a\cdot \left(1-z\right)$ and $\widehat S(s)\ge z^{1+a}$ to prove the desired bound.

If $\rho=0$, the RHS is zero.
If $p^*(s)=1$, the RHS is nonpositive.
If $p^*(s)=0$, monotonicity gives $P^*(s)=0$ and $\widehat S(s)=1$; we then use $\alpha\le 1$, since $\rho\mapsto\rho^c\cdot \left(1+c-c\cdot \rho\right)$ is nondecreasing on $\left[0,1\right]$ and equals $1$ at $\rho=1$.
These observations cover all boundary cases.
\end{proof}

\subsection{Discretization and the Factor-Revealing LP}\label{app:ind-cells}

We now convert the linear lower bounds from the preceding subsection into a finite LP.
As in the i.i.d. analysis, we construct bounds valid throughout each discretization interval and then remove the unknown probabilities $p^*$.
Here we optimize the drop rates together with the proof weights.

\paragraph{Discretization.}
We use $N$ equal-length intervals for $b$, with drop rates $b_1,\ldots,b_N$, and cover the stopping-time domain by
$T_\ell=[t_\ell^-,t_\ell^+]$, $\ell=0,\ldots,N_t-1$,
each lying within one constant-slope piece of $\theta$.
We choose a common partition for averaging $p^*$:
\begin{equation*}
    0=s_0<\cdots<s_{N_p}=1,\qquad
    I_i=[s_i,s_{i+1}],\qquad h_i=s_{i+1}-s_i,
\end{equation*}
and define
\begin{equation*}
    \xi_i=\frac{1}{h_i}\cdot \int_{I_i}p^*(s)\,ds,
    \qquad 0\le\xi_0\le\cdots\le\xi_{N_p-1}\le1.
\end{equation*}
We use averages of the original $p^*$, allowing it to vary within each interval.

We fix measurable functions $c_j(s)\in[s,1]$, $z_k(s,t)\in[0,s]$, and finitely many nonnegative bounded piecewise constant functions to represent the weights:
\begin{equation*}
    \lambda_j(s)=\sum_q L_{jq}\cdot \phi_{jq}(s),\qquad
    \mu_k(s,t)=\sum_q M_{kq}\cdot \psi_{kq}(s,t),\qquad
    \kappa(s)=\sum_q K_q\cdot \chi_q(s).
\end{equation*}
The functions are fixed; we optimize the nonnegative coefficients $L_{jq},M_{kq},K_q$.
We choose the averaging partition to include the breakpoints of $b,\phi_{jq},\chi_q$, and fix all parameters used to construct the linear lower bounds.
We assume
\begin{equation*}
    \int_0^1\int_s^1\frac{\psi_{kq}(s,t)}{1-t+z_k(s,t)}\,dt\,ds<\infty,
\end{equation*}
so that we can exchange integrals in the comparison contribution.
We partition the $(s,t)$ domain so that $b(s)$ and the functions used to represent the weights are constant on each region, and the same indices $j$ satisfy $t\le c_j(s)$.
We therefore enforce~\eqref{eq:secretary-capacity} by one linear constraint per region.

\paragraph{A uniform bound in the interval averages.}
For each $T_\ell$, we introduce $v_\ell$ and impose $v_\ell\le \min\{t_\ell^- + \theta(t_\ell^-), t_\ell^+ + \theta(t_\ell^+)\}$.
Since $\tau+\theta(\tau)$ is linear on $T_\ell$, we may use $v_\ell$ as a lower bound on $\tau+\theta(\tau)$ throughout this interval.

\begin{lemma}\label{lem:secretary-discretization}
For each $\ell$, we can construct $C_\ell$ and $g_{\ell,i}$,
$0\le i<N_p$, as linear expressions in the drop rates $b_1,\ldots,b_N$, the proof-weight coefficients $L_{jq},M_{kq},K_q$, and $v_\ell$, such that
\begin{equation}\label{eq:secretary-cell-affine}
    \Gamma_{p^*}(\tau)\ge C_\ell+\sum_{i=0}^{N_p-1}g_{\ell,i}\xi_i
    \qquad\text{for every }\tau\in T_\ell
\end{equation}
for every nondecreasing $p^*:[0,1]\to[0,1]$.
\end{lemma}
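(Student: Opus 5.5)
The plan is to bound each summand in the definition of $\Gamma_{p^*}(\tau)$ separately, uniformly in $\tau\in T_\ell$, by an expression of the form (constant) $+\int_0^1 w(s)\,p^*(s)\,ds$. Here the weight $w$ is fixed once $\ell$ and all proof parameters are fixed, and it depends linearly on $b_1,\ldots,b_N$, $L_{jq}$, $M_{kq}$, $K_q$. To pass from $p^*$ to the averages $\xi_i$ I use one device throughout. Since $p^*\ge0$, for any bounded weight $w$,
\[
\int_{I_i}w(s)\,p^*(s)\,ds\ge \Bigl(\inf_{I_i}w\Bigr)\cdot h_i\cdot\xi_i ,
\]
with equality when $w$ is constant on $I_i$. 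Because $b,\phi_{jq},\chi_q$ are constant on each $I_i$, these infima remain linear in the optimization variables. Summing over $i$ then gives the affine form \eqref{eq:secretary-cell-affine}. Products such as $b_m\xi_i$ are allowed, since only $C_\ell$ and $g_{\ell,i}$ need to be linear.

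\textbf{The base term $F_{p^*}(\tau)$.} I bound $\tau+\theta(\tau)\ge v_\ell$ directly. Since $b\,p^*\ge0$ and $\min\{s,\tau\}\le\min\{s,t_\ell^+\}$, I also have
\[
-\int_0^1 b(s)\min\{s,\tau\}\,p^*(s)\,ds\ge-\int_0^1 b(s)\min\{s,t_\ell^+\}\,p^*(s)\,ds .
\]

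\textbf{The conditioning terms.} Fix $j$ and $s$, and put $\Delta^{\pm}=(\min\{t_\ell^{\pm},c_j(s)\}-s)^+$ and $\delta^+=\min\{s,(1-\Delta^+)/2\}$. Direct differentiation shows that $\mathcal L(\Delta,x,y)$ is nondecreasing in $\Delta$, since $\partial_\Delta\mathcal L=x/(1-\Delta x-y)-x\ge0$, and nondecreasing in $y$. For $\Delta\in[\Delta^-,\Delta^+]$, the actual window length $\delta(\Delta)$ satisfies $\delta(\Delta)\ge\delta^+$, so $y(\delta(\Delta))\ge y(\delta^+)$. Hence
\[
H_{p^*}(s,\Delta)\ge\mathcal L\bigl(\Delta^-,p^*(s),y^+\bigr),\qquad y^+=\int_{s-\delta^+}^s p^*(v)\,dv .
\]
Here $(p^*(s),y^+)\in D_{\delta^+}$ and $\Delta^-+\delta^+<1$, so Lemma~\ref{lem:log-plane} applies for any fixed slopes $\sigma,\nu$ chosen piecewise constantly in $s$. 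It gives $\mathcal L\ge\beta+\sigma p^*(s)+\nu y^+$; when $\Delta^-=0$ I use the bound $0$. I then multiply by $\lambda_j(s)=\sum_q L_{jq}\phi_{jq}(s)$, which is linear in $L_{jq}$, and apply Fubini to the $y^+$ term. This turns it into $\int p^*(v)K(v)\,dv$ with a fixed kernel $K$. The $\inf$ device then handles both $\sigma p^*(s)$ and this kernel term.

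\textbf{The comparison and correction terms.} In the comparison term, the coefficient of $\mathbf 1(\tau\le t)$ is $p^*(s)-1/(1-t+z_k)\le0$, because $p^*\le1\le1/(1-t+z_k)$. So replacing the indicator by $\mathbf 1(t\ge t_\ell^-)$ can only decrease the term, which makes it independent of $\tau\in T_\ell$. The remaining quantity $P^*(t)-P^*(z_k)$ is an integral of $p^*$. The integrability assumption on $\psi_{kq}/(1-t+z_k)$ justifies Fubini, which produces a weight in $p^*$ plus a constant, both linear in $M_{kq}$. For the correction term:
\begin{itemize}
\item on each piece of $\chi_q$ contained in $[0,c]$, I apply Lemma~\ref{lem:direct-survival-plane} with fixed $(c,\rho)$, giving $\widehat S(s)\ge\alpha(1-p^*(s))-\rho^{1+c}P^*(s)$, which is affine in $p^*$ after Fubini;
\item since $\kappa(1-p^*)\ge0$, the subtracted integral satisfies $-\int_0^\tau\kappa(1-p^*)\ge-\int_0^{t_\ell^+}\kappa(1-p^*)$.
\end{itemize}
Collecting constants into $C_\ell$ and cell coefficients into $g_{\ell,i}$ finishes the argument.

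I expect the main obstacle to be the conditioning term. There, $\tau$ enters through both $\Delta$ and the window $\delta$, which move in opposite directions. The step needing care is pairing $\Delta^-$ with the shorter window $\delta^+$, justified by the two monotonicities of $\mathcal L$. I also need to check that this pair still lies in the domain where Lemma~\ref{lem:log-plane} applies, including the degenerate cases $\Delta^-=0$ and $\delta^+=0$.
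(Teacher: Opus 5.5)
Your proposal is correct and follows essentially the paper's route. You apply Lemmas~\ref{lem:log-plane} and~\ref{lem:direct-survival-plane}, exchange integrals, and pass to the averages $\xi_i$ cell by cell (your bound is the case $d_i=\inf_{I_i}w$ of the paper's inequality $\int_0^1 k\,p^*\ge\sum_i h_i d_i\xi_i-\sum_i\int_{I_i}(d_i-k)^+$); you then use $v_\ell$. Your monotonicity arguments supply the uniformity over $\tau\in T_\ell$ that the paper leaves implicit: monotonicity of $\mathcal L$ in $\Delta$ and in $y$, the sign of the coefficient of $\mathbf 1(\tau\le t)$, and $\tau\le t_\ell^+$.

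One justification needs tightening. Consider the Fubini kernels, coming from $\nu y^+$, $P^*(t)-P^*(z_k)$, $P^*(s)$, and $\int_s^1\mu_k\,dt$. Each has the form $\sum_a x_a f_a(v)$, with different fixed functions $f_a$ that vary inside $I_i$. So $\inf_{I_i}$ of the sum is not linear in the coefficients $x_a$, even though $b,\phi_{jq},\chi_q$ are constant there. Instead, apply your device to each fixed $f_a$ before multiplying by its nonnegative coefficient, using $\inf_{I_i}\sum_a x_a f_a\ge\sum_a x_a\inf_{I_i}f_a$. This is exactly the paper's remark that the bounds are constructed for each fixed function first.
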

\begin{proof}
After applying Lemmas~\ref{lem:log-plane} and~\ref{lem:direct-survival-plane} and exchanging the order of integration, we lower-bound the resulting integrals in terms of the interval averages $\xi_i$, uniformly for $\tau\in T_\ell$.

For example, for an integrable coefficient function $k$ and fixed real constants $d_i$, we use $0\le p^*\le 1$ to obtain
\begin{equation*}
    \int_0^1 k(s)\cdot p^*(s)\,ds \ge \sum_i h_i\cdot d_i\cdot \xi_i - \sum_i\int_{I_i}(d_i-k(s))^+\,ds.
\end{equation*}
We can therefore replace these functionals by linear expressions in the actual interval averages.
We construct these bounds for each fixed function before multiplying by its nonnegative drop-rate or weight coefficient.
Finally, we use $\tau+\theta(\tau)\ge v_\ell$ and collect the constant and average coefficients as $C_\ell,g_{\ell,i}$.
All coefficient bounds are fixed before optimization, so these expressions are linear in $b_1,\ldots,b_N$, $L_{jq}, M_{kq}, K_q$, and $v_\ell$.
\end{proof}

\paragraph{Removing the unknown probabilities.}
We minimize the resulting expression over all ordered averages $0\le\xi_0\le\cdots\le\xi_{N_p-1}\le 1$.

\begin{lemma}\label{lem:ordered-cube}
For any real $C,g_0,\ldots,g_{N_p-1}$, we have
\begin{equation*}
    \min_{0\le\xi_0\le\cdots\le\xi_{N_p-1}\le1}
    \left\{C+\sum_{i=0}^{N_p-1}g_i\cdot\xi_i\right\}
    =C+\min_{0\le h\le N_p}\left\{\sum_{i=h}^{N_p-1}g_i\right\},
\end{equation*}
where the sum for $h=N_p$ is zero.
\end{lemma}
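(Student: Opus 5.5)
The approach is a change of variables that turns the ordered region into a simplex, whose vertices are step vectors. I write $m=N_p$ and introduce the increments
\begin{equation*}
w_0=\xi_0,\qquad w_i=\xi_i-\xi_{i-1}\ (1\le i<m),\qquad w_m=1-\xi_{m-1}.
\end{equation*}
The constraints $0\le\xi_0\le\cdots\le\xi_{m-1}\le1$ become exactly $w_i\ge0$ for all $i$ together with $\sum_{h=0}^m w_h=1$. So the ordered region is the image of the standard simplex under this affine bijection.

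Next I rewrite the objective in the new variables. Since $\xi_i=\sum_{h\le i}w_h$, exchanging the order of summation gives
\begin{equation*}
\sum_{i<m}g_i\cdot\xi_i=\sum_{h=0}^{m}w_h\cdot G_h,\qquad G_h=\sum_{i=h}^{m-1}g_i,
\end{equation*}
where $G_m=0$ because the sum is empty.

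The objective $C+\sum_h w_h\cdot G_h$ is linear on the simplex, so it is at least $C+\min_h\{G_h\}$. Equality is attained at the vertex $w=e_h$ for the minimizing $h$. That vertex corresponds to the step vector $\xi_i=\mathbf{1}(i\ge h)$, which is admissible. Hence the minimum equals $C+\min_{0\le h\le m}\{G_h\}$, which is the claim.

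This is a short, routine argument. The only point needing care is the index bookkeeping: the exchanged sum must be written correctly, and the case $h=m$ must give the empty sum, corresponding to $\xi\equiv0$.
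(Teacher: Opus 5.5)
Your proof is correct: the increment map is an affine bijection from the ordered cube onto the standard simplex, and the exchange of summation correctly gives the coefficients $G_h=\sum_{i=h}^{N_p-1}g_i$, with $G_{N_p}=0$. The paper states this lemma without proof, and your argument is the standard fact it relies on, namely that the vertices of the ordered cube are exactly the step vectors $\xi_i=\mathbf{1}(i\ge h)$, so a linear objective is minimized at one of them.
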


\paragraph{The factor-revealing LP.}
We optimize the candidate guarantee $\gamma$, the drop rates $b_1,\ldots,b_N$,
the coefficients $L_{jq},M_{kq},K_q$, and $v_0,\ldots,v_{N_t-1}$:
\begin{align*}
    \text{maximize}\quad&\gamma\\
    \text{subject to}\quad
    &0\le b_1\le\cdots\le b_N,\qquad
      \frac1N\sum_{a=1}^N b_a\le1,\\
    &L_{jq},M_{kq},K_q\ge0,\\
    &\sum_{j:c_j(s)\ge t}\lambda_j(s)+\sum_k\mu_k(s,t)\le b(s)
      &&\text{on each allocation region},\\
    &v_\ell\le t_\ell^-+\theta(t_\ell^-),\quad
      v_\ell\le t_\ell^++\theta(t_\ell^+)
      &&0\le\ell<N_t,\\
    &C_\ell+\sum_{i=h}^{N_p-1}g_{\ell,i}\ge\gamma
      &&0\le\ell<N_t,\quad0\le h\le N_p.
\end{align*}
We substitute the linear expressions from Lemma~\ref{lem:secretary-discretization} into the last line.
For every feasible solution, we apply Lemma~\ref{lem:ordered-cube} on each $T_\ell$ to obtain $\Gamma_{p^*}(\tau)\ge\gamma$ for every admissible $p^*$ and every $\tau\in[0,1]$.
By Lemma~\ref{lem:secretary-criterion}, we conclude that $\E[\ALG]\ge\E[\Gamma_{p^*}(\tau)]\ge\gamma$.

\paragraph{Numerical certificate.}
We choose $N=20$ threshold intervals, $N_t=1283$ stopping-time intervals, and a common partition of $N_p=5120$ averaging intervals.
A feasible solution satisfies
\begin{equation*}
    \gamma_{\mathrm{cert}}
    =\frac{49735269883477169}{72057594037927936}
    =0.69021552200728099\ldots,\qquad
    \theta(0)=\frac{8998481631}{12500000000}.
\end{equation*}
Verification consists of checking the fixed coefficient bounds and all constraints of the finite LP.

\section{Upper Bounds}

In this section, we give three upper bounds.
For the i.i.d. prophet inequality, we prove an upper bound of $\beta^*-\varepsilon$ for all online algorithms for some constant $\varepsilon>0$, separating the mean-only and the full-information settings. 
We also prove an upper bound of $0.74197$ for (possibly randomized) precommitted thresholds.
For the prophet secretary, we prove an upper bound of $0.6945$ for deterministic precommitted thresholds.

To construct the hard instances, we begin with bounded inputs whose expected maximum is less than $1$.
We then introduce rare large values to make the expected maximum equal to $1$.
In the following lemma, we describe this modification and calculate how it changes the algorithm's expected reward.

\begin{lemma}
\label{lem:rare-completion}
Consider $n$ nonnegative independent variables bounded by a common constant, and let their expected maximum be $M<1$.
Designate $1 \le r \le n$ variables.
Independently, replace each designated variable
by $H_\varepsilon = \frac{1-\left(1-\varepsilon\right)^r\cdot M}{1-\left(1-\varepsilon\right)^r}$ with probability $\varepsilon$, and leave it unchanged with probability
$1-\varepsilon$.
For sufficiently small $\varepsilon$, the completed input has expected maximum $1$.

On the original input, let $Y$ be the reward of a policy, and let $K$ be the number of designated variables observed up to and including acceptance. On failure, put $Y=0$ and $K=r$.
Extend the policy by accepting $H_\varepsilon$ immediately. Then its expected reward is
\begin{equation*}
 \E\!\left[Y\right]
   +\left(1-M\right)\cdot\frac{\E\!\left[K\right]}r+o\!\left(1\right),
 \qquad \varepsilon\to 0^+.
\end{equation*}
The error is uniform over all policies on the fixed original input.
If the original variables are i.i.d. and all are designated, the completed variables remain i.i.d.
\end{lemma}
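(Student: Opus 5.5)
The plan is to compute both quantities directly, conditioning on the event $E_0$ that no designated variable is replaced. The remaining events are $E_1$, that exactly one designated variable is replaced, and $E_{\ge2}$, that at least two are.

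\textbf{Normalization.} The original variables are bounded by a constant $C$, and $H_\varepsilon\to\infty$, so for small $\varepsilon$ we have $H_\varepsilon>C$. On the complement of $E_0$ the maximum is then exactly $H_\varepsilon$. This gives
\begin{equation*}
\E\!\left[X^*\right]=(1-\varepsilon)^r\cdot M+\left(1-(1-\varepsilon)^r\right)\cdot H_\varepsilon ,
\end{equation*}
which equals $1$ by the choice of $H_\varepsilon$. If the original variables are i.i.d. and all are designated, each completed variable is obtained by the same independent mixing, so the completed variables remain i.i.d.

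\textbf{Reward of the extended policy.} I couple the completed input with the original one. Use independent coins: coin $i$ decides whether designated variable $i$ is replaced, and the underlying original values are kept. The extended policy behaves exactly like the original policy until it first observes a replaced variable, and then accepts $H_\varepsilon$.

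On $E_0$, which has probability $(1-\varepsilon)^r=1-r\varepsilon+O(\varepsilon^2)$, the reward is $Y$.

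On $E_1$, suppose the unique replaced variable is $i$. The policy gets $H_\varepsilon$ if $i$ is observed at or before the original acceptance, and otherwise gets $Y$. Since the coin is independent of the original run, the probability of the first case is $\PP(i\text{ observed up to acceptance})$, with the convention that failure means all $r$ designated variables are observed. Summing over $i$ and using linearity gives $\sum_i \PP(i\text{ observed})=\E[K]$. So the contribution of $E_1$ is
\begin{equation*}
\varepsilon(1-\varepsilon)^{r-1}\left(H_\varepsilon\cdot\E[K]+O(rC)\right).
\end{equation*}

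On $E_{\ge2}$, which has probability $O(\varepsilon^2)$, the reward is at most $H_\varepsilon$ and at least $0$. This contributes $O(\varepsilon^2 H_\varepsilon)=O(\varepsilon)$, because $H_\varepsilon=\Theta(1/\varepsilon)$.

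\textbf{Asymptotics.} From $1-(1-\varepsilon)^r=r\varepsilon+O(\varepsilon^2)$ we get
\begin{equation*}
\varepsilon\cdot H_\varepsilon=\frac{1-M}{r}+O(\varepsilon).
\end{equation*}
Summing the three contributions then gives
\begin{equation*}
\E[Y]+\frac{(1-M)\cdot\E[K]}{r}+O(\varepsilon).
\end{equation*}
The constants in $O(\varepsilon)$ depend only on $r$, $M$ and $C$. They do not depend on the policy, because $Y\le C$ and $K\le r$. Hence the error is uniform over policies.

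\textbf{Main obstacle.} The delicate point is defining the coupling so that "observing a designated variable" means the same event on both inputs. For $K$ I use that the policy's decisions up to seeing variable $i$ do not depend on coin $i$. For randomized or adaptive policies, I would condition on the policy's internal randomness and on the arrival order, so that this independence holds.
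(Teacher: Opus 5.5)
Your proof is correct and uses the same coupling as the paper: same original values, arrival order and private seed, with replacement coins independent of the original run, so that $\varepsilon H_\varepsilon\to(1-M)/r$ together with boundedness of $Y$ gives the claim uniformly over policies. The only difference is bookkeeping. The paper conditions on the original run to get the exact identity $\E[Y(1-\varepsilon)^K+H_\varepsilon(1-(1-\varepsilon)^K)]$ and then uses $(1-(1-\varepsilon)^k)/\varepsilon\to k$ uniformly for $0\le k\le r$. You instead split by the number of replaced variables, use $\sum_i \PP(i\text{ observed})=\E[K]$, and discard the $O(\varepsilon^2H_\varepsilon)=O(\varepsilon)$ event of two or more replacements, which also makes the $O(\varepsilon)$ rate explicit.
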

\begin{proof}
For small $\varepsilon$, $H_\varepsilon$ exceeds every original value.
The maximum is the original maximum if no replacement occurs, and $H_\varepsilon$ otherwise. Therefore,
\begin{equation*}
 \E\!\left[X^*\right]
 =\left(1-\varepsilon\right)^r\cdot M
   +\left(1-\left(1-\varepsilon\right)^r\right)\cdot H_\varepsilon=1.
\end{equation*}

Couple the two runs using the same original values, arrival order, and private seed.
Conditional on the original run, the completed policy earns $Y$ if the first $K$ designated variables are unchanged, and $H_\varepsilon$ if one of them has been replaced.
Thus its exact expected reward equals:
\begin{equation*}
 \E\!\left[Y\cdot\left(1-\varepsilon\right)^K
       +H_\varepsilon\cdot
             \left(1-\left(1-\varepsilon\right)^K\right)\right].
\end{equation*}

Observe that $\left(1-\left(1-\varepsilon\right)^k\right)/\varepsilon\to k$ uniformly for $0\le k\le r$
and $\varepsilon \cdot H_\varepsilon \to \left(1-M\right)/r$ as $\varepsilon \to 0^+$.
Since $Y$ is bounded uniformly over policies, taking the limit yields
the claim and its uniformity.
The i.i.d. assertion is immediate.
\end{proof}

%On an independent input with expected maximum $1$, accepting a value greater than $1$ is weakly better than continuing: conditional on the observed history and remaining labels, the expected maximum of the remaining independent variables is at most $1$. Consequently, requiring immediate acceptance of the rare value does not restrict the upper bounds.

\subsection{A strict information loss for i.i.d. observations}
\label{app:iid-all-algorithms}

Recall that $\beta^*$ is the optimal ratio for the full-information i.i.d. prophet inequality.
We use its standard Hill--Kertz characterization:
\begin{equation*}
 \int_0^1\frac{dy}{D+y-y\cdot\log y}=1,
 \qquad \beta^*=\frac1{1+D}\approx 0.7454,
\end{equation*}
where $D>0$ and $0\cdot\log0=0$.

\begin{theorem}\label{thm:adaptive-gap}
There are constants $\varepsilon>0$ and $n_0$ such that, for every $n\ge n_0$, there are two nonnegative distributions $F_{0,n}$ and $F_{1,n}$, each with expected maximum $1$, for which every randomized, history-dependent algorithm $A$ satisfies:
\begin{equation*}
 \min\left\{\E_{F_{j,n}}\left[\ALG_A\right]:j\in\left\{0,1\right\}\right\}
 \le \beta^*-\varepsilon.
\end{equation*}
The pair does not depend on $A$.
The conclusion still holds if the algorithm is told both candidate distributions, but not which candidate generates the observations.
\end{theorem}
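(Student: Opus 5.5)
The plan is a two-point (Le Cam style) indistinguishability argument. Both candidate inputs will be built from bounded bases via Lemma~\ref{lem:rare-completion}, so each has expected maximum exactly $1$ and the two share $n$ and $\E[X^*]$. The two bases are chosen so that the full-information optimal stopping rules for them must differ substantially early in the sequence, while the observed histories stay statistically close for a constant fraction of the horizon. Any algorithm, even one told both candidates, must then commit to behavior that is suboptimal by a constant on at least one of them. Since the full-information optimum is at most $\beta^*+o(1)$ on each, and in the worst-case family essentially equal to it, the loss is a constant $\varepsilon$.

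\textbf{Step 1 (extremal instance).} I would start from the Hill--Kertz extremal i.i.d. instance in quantile form: values $v(u)$ as a function of the normalized rank $u=n\cdot$quantile. On it the optimal dynamic program yields $\beta^*\cdot\E[X^*]$ and every threshold rule is nearly tight. The key feature is that the optimal thresholds $\theta^{\mathrm{opt}}(t)$ are determined by the whole distribution, not just the mean.

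\textbf{Step 2 (two perturbations).} Define $F_{0,n}$ and $F_{1,n}$ as two perturbations of the extremal instance that agree on the low part of the distribution, i.e., on values below some level $v_0$ which at early times are all that is realistically observed. They differ in where mass sits among the top $O(1/n)$ quantiles, with $\E[X^*]$ rebalanced to $1$ by the rare-completion lemma. On the first $cn$ arrivals, the probability that any observation lands in the region where the two distributions differ should be bounded by a small constant $\eta$. Equivalently, the total variation distance between the laws of the first $cn$ observations is at most $\eta$.

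\textbf{Step 3 (forcing a constant loss).} On each $F_j$, the stopping rule must match that instance's optimal acceptance policy up to time $c$ to within $o(1)$; otherwise a stability estimate on the Hill--Kertz value function (strict concavity or uniqueness of the optimal threshold ODE) makes it lose a constant. Since the two optimal policies prescribe different acceptance probabilities for the same observable histories, no single history-dependent rule can be close to both. Averaging over the two instances with the TV bound shows the sum of the two losses is at least some $\varepsilon_0-O(\eta)$, so the minimum reward is at most $\beta^*-\varepsilon$. Randomization is handled because the bound holds for each seed and is linear in the policy. Telling the algorithm both candidates changes nothing, since the argument never used ignorance of them.

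\textbf{Main obstacle.} The hard part is Step 3's quantitative stability: showing that deviating from the optimal early acceptance profile by $\Omega(1)$ on a set of histories of constant probability costs $\Omega(1)$ in reward uniformly in $n$. I would first attempt it in the continuous Poisson limit, treating $q(t)$ as the policy's acceptance rate and the reward as a strictly concave functional in the style of $L(b,q)$. There I would compute the second variation around the optimum, then transfer to finite $n\ge n_0$ by the same discretization comparison used in Appendix~\ref{app:iid-transfer}.
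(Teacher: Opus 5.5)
There is a genuine gap in how you compare the two candidates, and it sits in the closing step of your plan and in Step~3.

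\textbf{The premise on the optimum is backwards.} You use that the full-information optimum is at most $\beta^*+o(1)$ on each candidate. In general the reverse holds: $\beta^*$ is a \emph{lower} bound on the optimal stopping value of every normalized i.i.d.\ input, and an upper bound only for (near-)extremal ones. You measure each loss against that candidate's own optimum. So passing from ``the two losses sum to at least $\varepsilon_0-O(\eta)$'' to $\min_j\E_{F_{j,n}}[\ALG_A]\le\beta^*-\varepsilon$ requires \emph{both} candidates to be extremal up to an error smaller than the loss they force, while their optimal rules differ by a constant. You construct no such pair, and it is unclear that one exists.

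\textbf{The forced loss may itself be of order $\eta$.} Two optimal rules disagree only at observations lying between their thresholds. For perturbations of the extremal instance, the early thresholds sit near the top of the distribution: there $\theta_0(0)=\beta^*$ and the eligible rate vanishes at $t=0$. Since i.i.d.\ inputs are time-homogeneous, the perturbation that separates the two threshold curves is sampled in the window at the same rate as later. So the window TV and the probability of a window observation between the curves are controlled by the same quantity, and you give no estimate showing the latter dominates. Your own construction illustrates this: wherever both early thresholds exceed $v_0$, as they do near $t=0$, both optimal rules reject every typical window observation. They therefore agree exactly where your TV bound is small, and disagreement needs the rare, informative observations. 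Together with the quantitative stability estimate you flag yourself as open, Step~3 does not go through.

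\textbf{The missing idea.} This is how the paper proceeds: drop closeness of the candidates, and compare the second candidate with $\beta^*$ rather than with its own optimum.
\begin{itemize}
\item The paper works in the Poisson limit with the accounting reward $R_\nu(A)=\E_\nu[X_A+h_\nu\cdot T_A]$, which is the limit produced by Lemma~\ref{lem:rare-completion}.
\item It takes $\nu_0$ \emph{exactly} extremal and proves the regret identity \eqref{eq:regret}. Hence $R_0(A)\le\beta^*$ for every policy, with equality for the threshold rule $A_0$.
\item The second intensity $\nu_1$ multiplies $\nu_0$ by $1+m$ on the band $[x(1/2),x(3/4)]$. It is far from $\nu_0$ in total variation, and its own optimum may exceed $\beta^*$, but one checks that $R_1(A_0)<\beta^*$.
\item The constant comes from a contradiction argument, not a quantitative bound. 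Policies $A_k$ with both rewards at least $\beta^*-2^{-k}$ would have regret terms summable under $\E_0$. This forces $(X_{A_k},T_{A_k})\to(X_{A_0},T_{A_0})$ $\PP_0$-almost surely.
\item The explicit likelihood ratio $(3/2)^{-m}(1+m)^{N_B}$ transfers this almost-sure convergence to $\PP_1$. Bounded convergence then gives $R_1(A_k)\to R_1(A_0)<\beta^*$.
\end{itemize}
This yields a non-explicit $\delta>0$ without any TV bound or second-variation estimate.

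\textbf{The finite-$n$ transfer.} The transfer in Appendix~\ref{app:iid-transfer} concerns one specific threshold rule and does not cover arbitrary adaptive algorithms. The paper instead simulates any $n$-observation rule inside the Poisson model, cell by cell, and then applies Lemma~\ref{lem:rare-completion}.
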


We first construct two auxiliary inputs in continuous time.
For the first input, we identify an optimal threshold rule attaining $\beta^*$.
We then show that any algorithm whose reward is sufficiently close to $\beta^*$ on the first input must fall a fixed amount below $\beta^*$ on the second.
Finally, we turn the two inputs into finite i.i.d. inputs and apply Lemma~\ref{lem:rare-completion} to introduce rare large values so that $\E[X^*]=1$.

\paragraph{A base instance and its regret identity.}
For a Poisson process with intensity $dt\cdot\nu\left(dx\right)$ on $\left[0,1\right]\times\left(0,\bet\right]$, assume that $\int x\cdot\nu\left(dx\right)<\infty$.
Write $M_\nu$ for its largest mark, putting $M_\nu=0$ if there are no marks, and let $h_\nu=1-\E_\nu\left[M_\nu\right]$.
For a policy selecting value $X_A$ at time $T_A$, put $\left(X_A,T_A\right)=\left(0,1\right)$ on failure and define the accounting reward
\begin{equation*}
 R_\nu(A)=\E_\nu\left[X_A+h_\nu\cdot T_A\right].
\end{equation*}
The extra term is the limiting contribution of the rare large values in Lemma~\ref{lem:rare-completion}; we apply that lemma after constructing finite inputs.

Define
\begin{equation*}
 d(y)=D+y-y\cdot\log y,\qquad
 C=\frac{\bet}{\int_0^1d(y)^{-2}\,dy},\qquad
 x(y)=C\cdot\int_0^y d(z)^{-2}\,dz.
\end{equation*}
Thus $x(0)=0$, $x\left(1\right)=\bet$, and $x'(y)=C/d(y)^2>0$.
Define the base intensity $\nu_0$ by sending the measure $dy/y$ through $x$: for every Borel set $S$ of positive mark values,
\begin{equation*}
 \nu_0(S)=\int_0^1\ind\left(x(y)\in S\right)\,\frac{dy}{y}.
\end{equation*}
Its first moment is finite because $x(y)\le C\cdot y/D^2$.
Moreover,
\begin{equation*}
 \PP_0\left(M_0\le x(y)\right)
 =\exp\left(-\int_y^1\frac{dz}{z}\right)=y,
 \qquad 0<y\le1.
\end{equation*}
Consequently, integration by parts, $D\cdot\bet=1-\bet$, and $\left(y/d(y)\right)'=\left(D+y\right)/d(y)^2$ give
\begin{align*}
 h_0
 &=1-\int_0^1x(y)\,dy
 =1-\bet+C\cdot\int_0^1\frac{y}{d(y)^2}\,dy\\
 &=C\cdot\int_0^1\frac{D+y}{d(y)^2}\,dy
 =C\cdot\left[\frac{y}{d(y)}\right]_0^1=C\cdot\bet.
\end{align*}

Define the continuous, decreasing threshold $\theta_0$ by
\begin{equation*}
 t(y)=\int_y^1\frac{dz}{d(z)},\qquad
 \theta_0\left(t(y)\right)=x(y).
\end{equation*}
Its endpoints are $\theta_0(0)=\bet$ and $\theta_0\left(1\right)=0$, and $-\theta_0'\left(t(y)\right)=C/d(y)$.
To check its Bellman equation, differentiate
$h_0+\int_y^1\left(x(z)-x(y)\right)\,dz/z$
with respect to $y$.
Its derivative is $C\cdot\log y/d(y)^2$, the derivative of $C/d(y)$, and both expressions equal $h_0$ at $y=1$.
Therefore,
\begin{equation*}
 -\theta_0'(t)
 =h_0+\int\left(x-\theta_0(t)\right)_+\cdot\nu_0\left(dx\right).
\end{equation*}

Integrate this identity up to $T_A$ and take expectations.
The Poisson compensation formula replaces the integral of the eligible excess by the sum of these excesses over marks observed while the policy is active: the indicator of being active immediately before an arrival is predictable.
This gives
\begin{equation*}
 \bet=\E_0\left[
 \theta_0\left(T_A\right)+h_0\cdot T_A
 +\sum_{\text{observed while active }\left(t,x\right)}
       \left(x-\theta_0(t)\right)_+
 \right].
\end{equation*}
All terms are integrable since the total mark value has finite expectation.
Subtracting $R_0(A)$, and combining the selected mark's excess with its terminal threshold, yields the regret identity
\begin{equation}\label{eq:regret}
 \bet-R_0(A)=\E_0\left[
 \sum_{\text{rejected while active }\left(t,x\right)}
       \left(x-\theta_0(t)\right)_+
 +\left(\theta_0\left(T_A\right)-X_A\right)_+
 \right].
\end{equation}
The policy $A_0$ accepting the first mark strictly above $\theta_0$ makes both terms zero.
Hence $R_0(A)\le\bet$ for every policy, with equality for $A_0$.

\paragraph{A second instance and a common gap.}
Let $B=\left[x\left(1/2\right),x\left(3/4\right)\right]$ and increase its intensity by a factor $1+m$:
\begin{equation*}
 \nu_1\left(dx\right)=\left(1+m\cdot\ind\left(x\in B\right)\right)\cdot\nu_0\left(dx\right).
\end{equation*}
We choose a finite positive integer $m$ below.
Under a coupling, instance $1$ is instance $0$ together with an independent process of intensity $dt\cdot m\cdot\nu_0|_B\left(dx\right)$.
Put $c=x\left(3/4\right)$ and choose $t_c$ so that $\theta_0\left(t_c\right)=c$.
All added marks are ineligible before $t_c$.
For every $0<\eta<1-t_c$, the probability that no eligible added mark arrives in $\left(t_c,t_c+\eta\right)$ is
\begin{equation*}
 \exp\left(
 -m\cdot\int_{t_c}^{t_c+\eta}
 \nu_0\left(B\cap\left(\theta_0(t),\infty\right)\right)\,dt
 \right)\longrightarrow0.
\end{equation*}
Thus, conditional on still being active at $t_c$, the rule $A_0$ stops at a time tending to $t_c$, with a value tending to $c$.
Indeed, original eligible marks have finite intensity near $t_c$, and an added value accepted before $t_c+\eta$ lies between $\theta_0\left(t_c+\eta\right)$ and $c$.

Let $\left(X^c,T^c\right)$ follow $A_0$ up to $t_c$ and select $c$ at $t_c$ if still active.
The same compensated identity as above gives
$\E_0\left[X^c+h_0\cdot T^c\right]=\bet$:
every rejected mark has zero excess, and the forced terminal value equals the threshold.
Also, the largest added mark tends to $c$, so
\begin{equation*}
 h_1\longrightarrow h_0-\E_0\left[\left(c-M_0\right)_+\right].
\end{equation*}
Since values and times are bounded, these convergences also hold in expectation. Consequently,
\begin{equation*}
 \lim_{m\to\infty}R_1\left(A_0\right)
 =\bet-\E_0\left[\left(c-M_0\right)_+\right]
             \cdot\E_0\left[T^c\right]<\bet.
\end{equation*}
The inequality is strict because $\PP_0\left(M_0<c\right)=3/4$ and $T^c>0$ almost surely.
Fix a finite $m$ for which $R_1\left(A_0\right)<\bet$.

We now prove that there is a constant $\delta>0$ such that every common policy satisfies
\begin{equation}\label{eq:poissongap}
 \min\left\{R_0(A),R_1(A)\right\}\le\bet-\delta.
\end{equation}
Otherwise, choose policies $A_k$ with both rewards at least $\bet-2^{-k}$, and run them on the same arrivals and a common private seed independent of the arrivals.
Let $G_k$ be the nonnegative expression inside the expectation in~\eqref{eq:regret}.
Then
\begin{equation*}
 \E_0\left[\sum_{k\ge1}G_k\right]
 \le\sum_{k\ge1}2^{-k}<\infty,
\end{equation*}
so $G_k\to0$ almost surely.

Fix such a sample path and $s<1$.
Every mark before $s$ is separated from its threshold by a common positive amount: the marks above $\theta_0(s)/2$ form a finite set with no threshold ties, and every other mark is at least $\theta_0(s)/2$ below its threshold.
A first disagreement with $A_0$ before $s$ either rejects a mark above the threshold or selects a mark below it.
Its positive distance appears in $G_k$, so eventually no such disagreement is possible.
If both policies are still active at $s$, their stopping times differ by at most $1-s$, and the difference between their selected values is at most the total mark value after $s$.
This total tends to zero as $s\uparrow1$, since the total mark value is finite almost surely and no mark arrives exactly at time $1$.
Applying this argument to a countable sequence $s\uparrow1$ gives
\begin{equation*}
 \left(X_{A_k},T_{A_k}\right)
 \longrightarrow\left(X_{A_0},T_{A_0}\right)
 \qquad \PP_0\text{-almost surely}.
\end{equation*}

If $N_B$ is the number of marks in $B$, the likelihood ratio is
\begin{equation*}
 \frac{d\PP_1}{d\PP_0}
 =\left(3/2\right)^{-m}\cdot\left(1+m\right)^{N_B}.
\end{equation*}
The private seed has the same independent law in both instances, so this formula also applies to the joint observation and seed.
Absolute continuity transfers the almost-sure convergence to $\PP_1$.
Since $0\le X_A+h_1\cdot T_A\le\bet+1$, bounded convergence gives
$R_1\left(A_k\right)\to R_1\left(A_0\right)<\bet$,
contradicting the choice of $A_k$.
This proves~\eqref{eq:poissongap}.

\paragraph{Transfer to finite i.i.d. observations.}
Choose $a>0$ and retain only marks of value at least $a$.
Write $\nu_j^a$ for their intensity, put $q_j=\nu_j\left(\left[a,\bet\right]\right)$, and let $Q=\max\!\left\{q_0,q_1\right\}<\infty$.
Divide $\left[0,1\right]$ into $n$ equal cells.
Let $G_{j,n}$ be the distribution of the first retained mark in one cell, with value $0$ if the cell is empty.
The cell values are independent and identically distributed.
Let $M_{j,n}$ be their maximum and $h_{j,n}=1-\E\!\left[M_{j,n}\right]$.

For any decision rule $A$ on $n$ observations that never selects $0$ before the end, simulate it in the Poisson model by feeding it the first retained mark in each cell when that mark arrives, or $0$ at the cell endpoint if the cell is empty.
Ignore later marks in that cell and generate any independent label or arrival-time information in the private seed.
This is the same policy $\widetilde A$ under both candidates.
If $A$ selects value $X_A$ at index $K_A$, with $\left(X_A,K_A\right)=\left(0,n\right)$ on failure, the simulation satisfies exactly
\begin{equation*}
 X_{\widetilde A}=X_A,
 \qquad
 \left|T_{\widetilde A}-K_A/n\right|\le1/n.
\end{equation*}

Unless a cell contains two retained marks, keeping only its first mark preserves the retained maximum.
A Poisson variable of mean $q_j/n$ is at least $2$ with probability at most $q_j^2/\left(2\cdot n^2\right)$.
Thus the probability of any such cell is at most $Q^2/\left(2\cdot n\right)$.
Removing marks below $a$ changes the maximum by at most $a$, and all other marks are at most $\bet$; hence
\begin{equation*}
 h_{j,n}\le h_j+a+\frac{\bet\cdot Q^2}{2\cdot n}.
\end{equation*}
Since $0<h_j\le1$, it follows that
\begin{equation*}
 \E_{G_{j,n}}\left[X_A+h_{j,n}\cdot K_A/n\right]
 \le R_j\left(\widetilde A\right)
      +a+\frac{1+\bet\cdot Q^2/2}{n}.
\end{equation*}

Apply Lemma~\ref{lem:rare-completion} to $n$ independent observations from each $G_{j,n}$, using replacement probability $\zeta$.
The resulting distributions $F_{j,n,\zeta}$ have expected maximum $1$.
We may assume that a policy accepts a rare value immediately and never selects $0$ before the end: a rare value exceeds $1$, whereas the conditional expected maximum of the remaining observations is at most $1$ under either candidate.
On histories with no rare value, its decisions form a common rule $A$ on the cell observations.
Uniformly over these rules, the lemma gives
\begin{equation*}
 \E_{F_{j,n,\zeta}}\left[\ALG_A\right]
 =\E_{G_{j,n}}\left[X_A+h_{j,n}\cdot K_A/n\right]+o\!\left(1\right),
 \qquad \text{when } \zeta\to 0^+.
\end{equation*}
Combining this with~\eqref{eq:poissongap}, choose $a<\delta/4$, then $n_0$ so that the error proportional to $1/n$ is at most $\delta/4$ for all $n\ge n_0$.
For each such $n$, choose $\zeta_n$ so that the completion error is at most $\delta/4$ for both candidates and all policies.
The pair $F_{j,n}=F_{j,n,\zeta_n}$ is independent of the algorithm and satisfies
\begin{equation*}
 \min\left\{\E_{F_{j,n}}\left[\ALG_A\right]:j\in\left\{0,1\right\}\right\}
 \le\bet-\delta/4.
\end{equation*}
This proves Theorem~\ref{thm:adaptive-gap} with $\varepsilon=\delta/4$.

\subsection{Precommitted thresholds in the i.i.d. setting}\label{app:hardness}\label{sec:iid-dual}

We prove an upper bound for the class of continuous and strictly decreasing thresholds with $\theta\left(1\right) = 0$, including
thresholds chosen at random before the observations.
First, we reduce random precommitment to one deterministic threshold.
Then, a finite dual witness gives a bad row for that threshold.
Finally, we realize the row by a normalized  distribution.

\paragraph{Bounding the initial height.}
The following reduction also applies to the prophet secretary upper bound in the next subsection.

\begin{lemma}[Reducing the initial height]\label{lem:height-reduction}
Suppose that $\theta(t)$ is continuous and strictly decreasing with $\theta\left(1\right)=0$. The input is normalized such that $\E\!\left[X^*\right]=1$.
Define $\hat{\theta}(t) = \min\left\{\theta(t),1\right\}$
and
$ \theta_\delta(t)
 =\min\!\left\{\theta(t),1-\delta \cdot t\right\}$ for every $0<\delta<1$.
Then the following results hold.

\text{(i)} $\hat{\theta}(t)$ satisfies:
\begin{equation*}
    \E\!\left[\ALG_{\theta}\right] \le \E\!\left[\ALG_{\hat{\theta}}\right].
   \end{equation*}

\text{(ii)}
$\theta_\delta(t)$
is continuous and strictly decreasing, has initial height at most $1$, and ends at $0$.

\text{(iii)}
$\theta_\delta(t)$ satisfies:
\begin{equation*}
 \E\!\left[\ALG_\theta\right]
 \le\E\!\left[\ALG_{\theta_\delta}\right]+\delta.
\end{equation*}
\end{lemma}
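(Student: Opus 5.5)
The plan is a coupling argument, used once for (i) and once for (iii). Both rely on the following observation. Fix any time $t$ and any history of arrivals observed before $t$; in the i.i.d. model this means a prefix of ranks, in the prophet secretary model the set of variables with $t_i<t$ together with their values. The variables not yet observed remain independent, with their original value distributions, and independent of that history. Only their arrival times are conditioned, to lie after $t$. Hence any continuation policy, and in particular the remaining run of a precommitted threshold rule, has conditional expected reward at most the expected maximum of the remaining variables. Since the remaining variables form a subset of an independent family, this is at most $\E\!\left[X^*\right]=1$.

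For (i), I would run $\theta$ and $\hat\theta$ on the same realization of values and arrival times. Since $\hat\theta\le\theta$, the two rules make identical decisions up to the first arrival at which they disagree. At that arrival, $\hat\theta$ accepts and $\theta$ rejects. Such a disagreement can only happen at a time $t$ with $\theta(t)>1$, where the accepted value $v$ satisfies $1<v\le\theta(t)$. At that moment $\hat\theta$ stops with $v>1$. By the observation above, applied to the history at the disagreement time, $\theta$ continues and earns conditional expected reward at most $1<v$. If there is no disagreement, both rules earn the same. Taking expectations over the disagreement event gives $\E\!\left[\ALG_{\theta}\right]\le\E\!\left[\ALG_{\hat\theta}\right]$.

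I expect the main technical point to be justifying the conditioning at a random disagreement time. I would handle it by conditioning on the index and arrival time of the disagreeing variable and on everything observed strictly before it. The event "first disagreement here" is measurable with respect to this information and the current value, and the future variables are independent of all of it.

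Part (ii) is immediate. The minimum of two continuous strictly decreasing functions is continuous and strictly decreasing. Moreover $\theta_\delta(0)\le 1$ and $\theta_\delta(1)=\min\{0,1-\delta\}=0$.

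For (iii), I first note that $\theta_\delta=\min\{\hat\theta,1-\delta t\}$, because $1-\delta t\le 1$. I then repeat the coupling between $\hat\theta$ and $\theta_\delta$. The two rules agree until the first arrival at which $\theta_\delta$ accepts a value $v$ with $1-\delta t<v\le\hat\theta(t)$ that $\hat\theta$ rejects. At that point $\theta_\delta$ gets $v>1-\delta t\ge 1-\delta$, while $\hat\theta$'s continuation has conditional expectation at most $1$. So the conditional loss on this event is at most $\delta$, and the event has probability at most $1$. This gives $\E\!\left[\ALG_{\hat\theta}\right]\le\E\!\left[\ALG_{\theta_\delta}\right]+\delta$, and combining with (i) proves (iii).
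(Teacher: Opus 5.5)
Your proposal is correct and follows the paper's proof. Both use the same coupling at the first disagreement, where the lower threshold accepts a value above $1$ (resp.\ above $1-\delta$) while the higher threshold's continuation is worth at most $\E\!\left[X^*\right]=1$, because the unobserved values keep their independent laws. The only cosmetic difference is that for (iii) you couple $\hat\theta$ with $\theta_\delta$ and then invoke (i), whereas the paper couples $\theta$ with $\theta_\delta$ directly (a disagreement there already forces $\theta_\delta(t)=1-\delta t$); the detour through $\hat\theta$ is harmless but unnecessary.
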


\begin{proof}
Couple the algorithms that use different threshold functions on the same values and arrival times.
When their first disagreement occurs, it must be that the lower-threshold algorithm accepts the current value and stops, while the original algorithm rejects and continues.

Conditional on the observed history, the remaining values retain their independent laws, so the expected future reward of any continuation policy is at most:
\begin{equation*}
 \E\!\left[\max_{i\ \mathrm{remaining}}\left\{X_i\right\}\right]
 \le\E\!\left[X^*\right]=1.
\end{equation*}

Suppose that the algorithms of $\theta$ and the modified threshold disagree at $\left(x,t_x\right)$.
To see (i),
 it holds that $\theta(t_x) \ge x>1$, so this change cannot reduce the expected reward.
For (iii), it must be that $x > 1-\delta \cdot t_x \ge 1-\delta$, so the conditional loss is at most $\delta$.
 Averaging over the history gives the result.

To see (ii), the minimum of two continuous, strictly decreasing functions is still continuous and strictly decreasing.
We have $\theta_\delta\left(1\right)=0$ by definition.

Finally, as for a random threshold function, condition first on its independently chosen seed and then average.
\end{proof}

\paragraph{Removing random precommitment.}
The adversarial distribution must be fixed before the
algorithm’s random seed is revealed. Choosing a different hard distribution for each realized function would not prove an upper bound for randomized algorithms. The next statement removes this issue.
\begin{lemma}[A deterministic threshold function dominates random precommitment]
\label{lem:random-functions}
Every independent random mixture of continuous, strictly decreasing
thresholds of initial height at most $1$ is dominated by one deterministic
threshold of the same class against every i.i.d. input whose common value distribution is finitely supported.
\end{lemma}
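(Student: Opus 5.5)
The plan is to reduce everything to finitely many \emph{crossing times} and then average in a coordinate in which the reward is concave. Fix a finitely supported common law with support points $0\le v_1<\cdots<v_m$. Write $\theta_\omega$ for the realized threshold, where $\omega$ is the independent seed. Since $\theta_\omega$ is continuous and strictly decreasing, the acceptance rule $X_i>\theta_\omega(t_i)$ depends on $\theta_\omega$ only through the crossing times
\begin{equation*}
 s_\omega(v)=\inf\left\{t:\theta_\omega(t)<v\right\}\in\left[0,1\right],\qquad v\in\left(0,1\right].
\end{equation*}
Values $v>\theta_\omega(0)$ have $s_\omega(v)=0$, and the height bound caps this uniformly. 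For a given input, the reward is therefore a function $\mathcal R\bigl(s(v_1),\ldots,s(v_m)\bigr)$ of a nonincreasing vector of crossing times. A deterministic threshold of the class is determined, up to irrelevant choices, by any nonincreasing function $v\mapsto s(v)$ on $(0,1]$ with $s(0^+)=1$. We can smooth it to be continuous and strictly decreasing without changing the crossing times at any fixed finite set of levels, and hence without changing the reward against any finitely supported input with those support points.

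The first step is to write $\mathcal R$ explicitly. With arrival times uniform on $[0,1]$ and $p_k=\PP(X_1=v_k)$, the survival function in continuous time is $S(t)=\prod_i(1-P_i(t))$ with $P_i(t)=\sum_k p_k\,(t-s(v_k))^+$. Writing $u_k$ for the gap between consecutive crossing times and $g_k=\sum_{j\ge k}p_j$, we get $S$ as a product of linear pieces. The reward becomes
\begin{equation*}
 \mathcal R=\sum_k v_k\cdot n\cdot p_k\int_{s(v_k)}^1 S(t)\,\frac{dt}{1-P_1(t)}.
\end{equation*}
The second step is to find a reparametrization $\sigma=\Phi(s)$, chosen independently of the input (for instance the logarithmic coordinate $\sigma=-\log(1-s)$, or $S$ itself along the canonical path), in which $\mathcal R$ is concave, or at least Schur-concave along monotone paths, for every finitely supported law simultaneously.

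Given such a coordinate, we define the deterministic threshold through averaged crossing times,
\begin{equation*}
 s(v)=\Phi^{-1}\!\left(\E_\omega\!\left[\Phi\!\left(s_\omega(v)\right)\right]\right).
\end{equation*}
This is nonincreasing in $v$ because each $s_\omega$ is, it has $s(0^+)=1$, and it has $s(v)=0$ for $v$ above the mixture's heights. By Lemma~\ref{lem:height-reduction}(i) we may assume all heights are at most $1$, so this threshold stays in the class. Concavity of $\mathcal R$ in the $\Phi$-coordinates and Jensen's inequality then give $\E_\omega[\mathcal R(s_\omega)]\le\mathcal R(s)$ for every finitely supported input at once, because the definition of $s$ does not involve the input.

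The main obstacle is the concavity step. The reward is not obviously concave in the raw crossing times, and the coordinate $\Phi$ must work uniformly over all $n$ and all finitely supported laws. My first attempt would be to verify concavity one level at a time: move a single crossing time $s(v_k)$ with the others fixed, and check that $\mathcal R$ is concave in $\Phi(s(v_k))$ using the product form of $S$. I would then upgrade this to joint concavity, or at least to a coordinatewise averaging argument. That upgrade would replace the crossing times level by level with the averages, starting from the top support value and moving down, each time conditioning on the other levels and using Jensen's inequality in one variable. If exact concavity fails, the fallback is to use Lemma~\ref{lem:height-reduction}(iii) and argue up to an arbitrarily small additive loss, which suffices for the upper-bound application.
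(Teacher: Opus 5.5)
\paragraph{The gap: the concavity step, and your coordinate fails.}
The gap is the step you flag yourself as the main obstacle, and your proposed way around it goes in the wrong direction. The logarithmic coordinate already fails for a single positive support value $v$ with probability $p$. In $\sigma=-\log(1-s)$ the reward is $v\cdot(1-(1-p\cdot e^{-\sigma})^n)$. Its second derivative in $\sigma$ is $v\cdot n\cdot w\cdot(1-w)^{n-2}\cdot(1-n\cdot w)$ with $w=p\cdot e^{-\sigma}$, so it is strictly convex whenever $n\cdot p<1$. Jensen then goes the wrong way, and your averaged threshold is strictly worse than the mixture. For $n=1$ this is transparent: the reward $v\cdot p\cdot(1-s)$ is linear and decreasing in $s$, while averaging a strictly convex $\Phi$ gives $\Phi^{-1}(\E[\Phi(s_\omega)])>\E[s_\omega]$.

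The alternative coordinate ``$S$ itself'' depends on the probabilities $p_k$, so it is not input-independent as you require. The fallbacks do not help either:
\begin{itemize}
\item Conditional Jensen at one level replaces $s_\omega(v_k)$ by a conditional mean given the other levels, not by the unconditional average.
\item Separate concavity does not give $\E[f(X,Y)]\le f(\E[X],\E[Y])$; take $f(x,y)=x\cdot y$.
\item Partial replacements can break the ordering of the crossing times.
\item Lemma~\ref{lem:height-reduction}(iii) only caps the initial height and says nothing about mixtures.
\end{itemize}

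\paragraph{The missing idea: concavity in the raw crossing times.}
No reparametrization is needed, because the reward is concave in the raw crossing times. Order the positive support as $v_1>\cdots>v_m$, let $\ell_i$ be the crossing time of $v_i$, and put $\ell_{m+1}=1$. Let $M_j$ be the mean of $X_1$ conditional on $X_1\in\{v_1,\ldots,v_j\}$. The probability that no eligible arrival has occurred by time $\ell_j$ is $D_j=(1-\sum_{i<j}p_i\cdot(\ell_j-\ell_i))^n$. An acceptance during $(\ell_j,\ell_{j+1}]$ has mean value $M_j$. Integrating your own formula piece by piece therefore gives
\begin{equation*}
 \E[\ALG]=\sum_{j=1}^m M_j\cdot(D_j-D_{j+1})=M_1-\sum_{j=2}^m(M_{j-1}-M_j)\cdot D_j-M_m\cdot D_{m+1}.
\end{equation*}
Each $D_j$ is the $n$-th power of a $[0,1]$-valued affine function of the crossing times, hence convex, and $M_{j-1}\ge M_j\ge0$. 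So $\E[\ALG]$ is concave on the convex set of ordered crossing-time vectors. Jensen then applies to the plain average $\bar\ell(x)=\E_\Theta[\ell_\Theta(x)]$, which preserves the ordering.

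You still need to check that $\bar\ell$ defines a threshold in the class without reference to any input. It is continuous by bounded convergence. It is strictly decreasing on $[0,h]$ with $h=\operatorname{ess\,sup}\Theta(0)$, because for $x<y<h$ the event $\{\Theta(0)>y\}$ has positive probability. Its inverse is the single dominating threshold. Your smoothing remark cannot replace this check. Preserving crossing times at the support points of a given input makes the threshold input-dependent, whereas the lemma, and its use in Lemma~\ref{prop:dual-scope}, needs one threshold for all finitely supported inputs.
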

\begin{proof}
%We will express the reward as a concave function of the eligibility times, so averaging these times yields the desired deterministic domination.

For a threshold curve $\theta$, define its eligibility by the following continuous function:
\begin{equation*}
 \ell_\theta(x)=\operatorname{Leb}\left\{t\in\left[0,1\right]:\theta(t)\ge x\right\},
 \qquad x\ge0.
\end{equation*}
Then $\ell_\theta(0)=1$. $\ell_\theta$ strictly decreases up to $\theta(0)$,
and becomes zero thereafter.
By definition, for a policy still active, a positive value $x$ gets accepted when its arrival time exceeds $\ell_\theta(x)$.

Let $\Theta$ be a random threshold.
Define:
\begin{equation*}
    \bar\ell(x)=\E_\Theta\left[\ell_\Theta(x)\right], \qquad
    h=\operatorname{ess\,sup}\Theta(0)\le1.
\end{equation*}

Bounded convergence gives $\bar\ell$'s continuity, with $\bar\ell(0)=1$ and $\bar\ell(h)=0$.
$\bar{\ell}$ is also strictly decreasing on $\left[0,h\right]$: whenever $0\le x<y<h$, the event
$\left\{\Theta(0)>y\right\}$ has positive probability and
$\ell_\Theta(x)>\ell_\Theta(y)$ on that event. The endpoint $y=h$
follows from $\bar\ell(x)>0$ for $x<h$.
Therefore, inverting $\bar\ell$ on $\left[0,h\right]$ gives a continuous and strictly decreasing threshold $\bar\theta(t)$ with $\bar\theta(0)=h$ and $\bar\theta\left(1\right)=0$.

Fix a distribution with positive support
$v_1>\cdots>v_m>0$
and corresponding positive probabilities
$\{p_i\}_{i \in [m]}$.
The additional mass is put at zero.
For a fixed threshold curve $\theta$, let
$\ell_i=\ell_\theta(v_i)$ for $i \in [m]$ and $\ell_{m+1}=1$.

Define the following terms:
\begin{align*}
 P_j&=\sum_{i\le j}p_i,
 \qquad M_j=\frac{\sum_{i\le j}p_i\cdot v_i}{P_j}, \\
 D_j&=\left(1-\sum_{i<j}p_i\cdot\left(\ell_j-\ell_i\right)\right)^n~ \text{for}~1\le j \le m+1.
\end{align*}
$D_j$ is the probability that no eligible value has arrived
by time $\ell_j$.
During $\left(\ell_j,\ell_{j+1}\right)$, only $v_1,\ldots,v_j$ are eligible, and their conditional mean is $M_j$.
Then,
\begin{align*}
 \E\!\left[\ALG_\theta\right]
 =\sum_{j=1}^m M_j\cdot \left(D_j-D_{j+1}\right)
 =M_1-\sum_{j=2}^m\left(M_{j-1}-M_j\right)\cdot D_j-M_m\cdot D_{m+1}.
\end{align*}
Here we use $D_1=1$, and the second equality is a telescoping sum.

We now apply this identity to the random threshold $\Theta$.
By definition, each $D_j$ is the $n$-th power of a nonnegative expression that is a constant plus a linear combination of the eligibility times, and $z\mapsto z^n$ is convex on $\left[0,1\right]$. We have:
\begin{align*}
    \E_{\Theta}[D_j] 
    &= \E_\Theta \left[\left(1-\sum_{i<j}p_i\cdot\left(\ell_j-\ell_i\right)\right)^n\right] \\
    &\ge \left(\E_\Theta \left[1-\sum_{i<j}p_i\cdot\left(\ell_j-\ell_i\right)\right]\right)^n \\
    &= \left(  1-\sum_{i<j}p_i\cdot\left(\bar{\ell}_j-\bar{\ell}_i\right) \right)^n = D_j\left(\bar{\theta}\right),
\end{align*}
where the second line uses Jensen's inequality.
Since $M_{j-1} \ge M_j$, taking expectations over the randomness of $\Theta$ and substituting the above inequalities, we get $\E_\Theta\left[\E\!\left[\ALG_\Theta \,| \,\Theta\right]\right]
 \le \E\!\left[\ALG_{\bar\theta}\right]$.
The same $\bar\theta$ works for every finitely supported distribution.
\end{proof}

\paragraph{A finite dual witness.}
Recall that in the max–min program, we fix a drop rate and control
every increasing rate. For the upper bound, we reverse the roles: fix finitely many increasing
rates and average their payoffs. Since the reward is a constant plus a linear expression in the threshold decrease,
we control every threshold through a coefficientwise bound on the average. We then realize a
suitable function by an actual normalized distribution.

Partition $\left[0,1\right]$ into $N$ equal cells.
Define a \textit{row} to be a nonnegative, nondecreasing $N$-dimensional  vector $\mathbf{q}=\left(q_0,\ldots,q_{N-1}\right)$.
Identify $\mathbf{q}$ with the following step function:
\begin{align*}
    q(t)=q_j
~\text{for }t\in\left[j/N,\left(j+1\right)/N\right), ~ 0\le j<N; \qquad q\left(1\right)=q_{N-1}.
\end{align*}

Recall the following functions:
\begin{align*}
 S_q(t)&=\exp\!\left(-\int_0^t q(s)\,ds\right),\\
 A_q(t)&=\int_t^1 S_q(s)\,ds,\\
 k_q(t)&=S_q(t)-q(t)\cdot A_q(t)-A_q(0)\cdot e^{-q(t)}.
\end{align*}

Write $S_j=S_q\left(j/N\right)$ and $a_j=\int_{j/N}^{\left(j+1\right)/N}S_q(t)\,dt$.
Then,
\begin{equation*}
 S_0=1,\qquad S_{j+1}=S_j\cdot e^{-q_j/N},\qquad
 a_j=\begin{cases}\left(S_j-S_{j+1}\right)/q_j,&q_j>0,\\S_j/N,&q_j=0.
 \end{cases}
\end{equation*}

On cell $j$, $\left(S_q-q_j\cdot A_q\right)'=-q_j\cdot S_q+q_j\cdot S_q=0$. Thus $k_q$ is
constant there, with value
\begin{equation}\label{eq:row}
 A_q(0)=\sum_j a_j,
 \qquad
 k_{q,j}=S_{j+1}-q_j\cdot \sum_{\ell>j}a_\ell-A_q(0)\cdot e^{-q_j}.
\end{equation}

For $z_j=\theta\left(j/N\right)$ and $w_j=z_j-z_{j+1}$, define the row payoff:
\begin{equation*}
 F_q(\theta)=\left(1-z_0\right)\cdot A_q(0)+\sum_jw_j\cdot \left(1-k_{q,j}\right).
\end{equation*}

We note that this row payoff is an auxiliary functional.
Its role will be justified in Lemma~\ref{lem:finite-hard}, which realizes a selected
row by a normalized input and bounds the actual
expected reward in terms of $F_q(\theta)$.

Since $z_N=0$, $\theta$ is decreasing and $\theta(0)\le 1$, the coefficients $1-z_0,w_0,\ldots,w_{N-1}$ are nonnegative and sum
to one.

\begin{lemma}[The finite dual test]\label{prop:dual}
Suppose that there are finitely many rows $\mathbf{q}$ with weights $\lambda_q\ge0$ summing to $1$,
and
\begin{equation*}
 \sum_q\lambda_q \cdot A_q(0)\le U,
 \qquad
 \sum_q\lambda_q\cdot\left(1-k_{q,j}\right)\le U\quad\left(0\le j<N\right).
\end{equation*}
Then every threshold of initial height at most $1$ has
$F_q(\theta)\le U$ for at least one row.
\end{lemma}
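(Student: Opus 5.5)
The plan is a straightforward averaging argument, the weak-duality half of a minimax statement. Fix any threshold $\theta$ with initial height at most $1$, and set $z_j=\theta(j/N)$ and $w_j=z_j-z_{j+1}$. As noted just before the lemma, the coefficients $1-z_0,w_0,\ldots,w_{N-1}$ are nonnegative and sum to one. Nonnegativity uses $\theta(0)\le 1$ and the monotonicity of $\theta$. The sum telescopes:
\begin{equation*}
\left(1-z_0\right)+\sum_j w_j=1-z_0+z_0-z_N=1,
\end{equation*}
using $z_N=\theta(1)=0$.

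Next I would average the row payoffs with the given weights $\lambda_q$. Each $F_q(\theta)$ is linear in the row-dependent quantities $A_q(0)$ and $1-k_{q,j}$, and its coefficients $1-z_0$ and $w_j$ do not depend on $q$. So I can interchange the finite sums:
\begin{equation*}
\sum_q\lambda_q\cdot F_q(\theta)
=\left(1-z_0\right)\cdot\sum_q\lambda_q\cdot A_q(0)
+\sum_j w_j\cdot\sum_q\lambda_q\cdot\left(1-k_{q,j}\right).
\end{equation*}
Applying the hypotheses coefficientwise, and using that every coefficient is nonnegative, bounds the right side by $\left(1-z_0+\sum_j w_j\right)\cdot U=U$.

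Finally, because the $\lambda_q$ are nonnegative and sum to one, the left side is a convex combination of the finitely many values $F_q(\theta)$. Hence
\begin{equation*}
\min_q F_q(\theta)\le\sum_q\lambda_q\cdot F_q(\theta)\le U,
\end{equation*}
so some row with $\lambda_q>0$ satisfies $F_q(\theta)\le U$.

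There is no real obstacle here. The only point needing care is the nonnegativity of the coefficients, which is why the hypothesis $\theta(0)\le 1$ is needed (Lemma~\ref{lem:height-reduction} supplies it in the intended application). Without it, $1-z_0$ could be negative and the coefficientwise bound would fail. I would also check that the row payoff involves $\theta$ only through the grid values $z_j$, so no continuity assumptions beyond $\theta(1)=0$ are used.
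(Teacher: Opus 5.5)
Your proposal is correct and matches the paper's proof: both average the row payoffs with the weights $\lambda_q$, apply the two hypotheses coefficientwise using $1-z_0\ge0$, $w_j\ge0$ and $(1-z_0)+\sum_j w_j=1$, and conclude that some row has $F_q(\theta)\le U$. Your added remark that such a row can be taken with $\lambda_q>0$ is a harmless refinement.
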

\begin{proof}
The weighted average of the row payoffs is at most:
\begin{equation*}
 \left(1-z_0\right)\cdot U+\sum_jw_j\cdot U
 =U\cdot \left(1-z_0+\sum_jw_j\right)=U.
\end{equation*}
There must exist some row at most this weighted average $U$.
\end{proof}

Our certificate uses $N = 1024$ and $1014$ rows with rational weights. The displayed inequalities with $U = 0.741967$ can be verified by solving the finite dual LP and evaluating its coefficients with outward rounding.

The dual test gives a row payoff at most $U$.
To turn it into an upper bound on an algorithm, we need to realize the row by base values at the
threshold’s cell boundaries and one rare large value. The next construction gives the finite-$n$
upper bound directly.

\begin{lemma}
\label{lem:finite-hard}
Fix a finite witness satisfying Lemma~\ref{prop:dual}, with row set
$\mathcal Q$ and bound $U$. Let $n\ge 2$ be an integer satisfying
$n \ge 1+\max_{q\in\mathcal Q}\left\{\max_{0\le j<N}\left\{q_j\right\}\right\}.$
For every continuous, strictly decreasing threshold curve
$\theta$ with $\theta(0)\le1$ and $\theta\left(1\right)=0$, there exists a
distribution $F$ supported on at most $N+2$ nonnegative values
such that, for i.i.d. variables $X_1,\ldots,X_n \sim F$,
\begin{equation*}
    \mathbb E\!\left[X^*\right]=1,
\qquad
\mathbb E\left[\mathrm{ALG}_\theta\right]
\le e^{1/n}\cdot \left(U+\frac{2}{n}\right).
\end{equation*}
The threshold curve $\theta$ may depend on $n$.
\end{lemma}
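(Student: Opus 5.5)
Given $\theta$, Lemma~\ref{prop:dual} supplies a row $\mathbf q\in\mathcal Q$ with $F_q(\theta)\le U$. The plan is to realize this row by a finitely supported distribution whose values sit at the threshold's cell-boundary heights $z_j=\theta(j/N)$, and then to complete the expected maximum to $1$ with one rare large value via Lemma~\ref{lem:rare-completion}.

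\textbf{Construction.} Put base mass at the values $z_1>z_2>\cdots>z_{N-1}$ (and at $0$), with $\PP(X_1\ge z_{j+1})=q_j/n$. Since $\theta$ strictly decreases, a base value $z_{j+1}$ is above threshold exactly at times $t>(j+1)/N$. The cumulative tail then gives per-cell acceptance probability $q_j/n$ on cell $j$ up to a one-cell shift, which I would absorb by indexing carefully (possibly setting mass so eligibility begins exactly on cell $j$, using the strict inequality $X>\theta(t)$). The condition $n\ge 1+\max q_j$ guarantees these probabilities are at most $1$ and that $\PP(X_1\ge z_{j+1})$ is nondecreasing in $j$, because the row is nondecreasing. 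Next, add a value $H$ with probability $\varepsilon$, chosen so that $\E[X^*]=1$; this is the rare completion with $r=n$, and it yields the $(N+2)$-point support. The base maximum satisfies $\PP(X^*_{\rm base}\le z_{j+1})=(1-q_j/n)^n\le e^{-q_j}$.

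\textbf{Evaluating the reward.} Run the revenue–utility computation of Section~\ref{sec:iid-reduction} exactly, not as a lower bound. On cell $j$, the survival is $(1-q_j/(nN))^{\cdot}$-type product, which is at most $S_q$ up to a factor $e^{1/n}$; this uses $(1-x/n)^n\le e^{-x}$ in one direction and $(1-x/n)\ge e^{-x/n-x^2/n^2}$ in the other. Accepted base values earn at most the cell-boundary value $z_{j}$ or lower, so the revenue is bounded by $\sum_j w_j(1-\text{stuff})$ plus surplus terms. The utility from the rare value equals $(1-M)\E[K]/n$ by Lemma~\ref{lem:rare-completion}, where $\E[K]/n\approx A_q(0)$. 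Now $1-M=1-\int_0^{z_0}\PP(X^*>x)dx$, which is expressed through the $w_j$ and $e^{-q_j}$. Collecting terms should reproduce $(1-z_0)A_q(0)+\sum_j w_j(1-k_{q,j})=F_q(\theta)$, with errors $O(1/n)$ from discreteness and with a multiplicative $e^{1/n}$. Here the union bound of Section~\ref{sec:iid-reduction} becomes tight, since the surplus above $z_0$ comes only from $H$. Taking $\varepsilon\to0$ then leaves the bound $e^{1/n}(U+2/n)$.

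\textbf{Main obstacle.} The hardest step is the exact bookkeeping of the finite-$n$ discrepancies. The cell-boundary values lose at most one cell of eligibility, the binomial survival differs from the exponential, and $(1-q/n)^n$ differs from $e^{-q}$. All of these must be controlled with the sign that only increases the bound, giving a clean $e^{1/n}$ factor and an additive $2/n$. For this step I would first try pointwise comparisons $1-x\le e^{-x}$ and $(1-x/n)^{-1}\le e^{x/n\cdot(1+x/n)}$ with $x\le n-1$.
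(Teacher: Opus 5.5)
Your overall plan is the paper's: pick a witness row with $F_q(\theta)\le U$, put base values at the heights $z_j=\theta(j/N)$, complete with one rare value with all $n$ variables designated, compare with $S_q$, and bound $d=1-M$ via $(1-q_j/n)^n\le e^{-q_j}$. However, the two quantitative steps that carry the proof would not go through as you describe them.

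The first is the factor $e^{1/n}$. The quantity to control is the probability that a variable arriving at time $t$ is still observed, namely $(1-y(t)/n)^{n-1}$ with $y(t)=\int_0^tq(s)\,ds$. Each other variable has made an eligible arrival before $t$ with probability $y(t)/n$, so this is not a per-cell product. Your comparison $(1-x/n)^{-1}\le e^{(x/n)(1+x/n)}$ is false when $x/n$ is near $1$: at $x/n=0.9$ the left side is $10$ and the right side is about $5.5$. This range matters, because $y(t)$ can be as large as $\max_jq_j$, which under the hypothesis may be close to $n$. Where the comparison does hold, it only gives the factor $e^{x/n+x^2/n^2}$, which already exceeds $e^{1/n}$ for $x\ge1$. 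What is true is $(1-y/n)^{n-1}\le e^{1/n-y}$ for all $0\le y<n$. The function $(n-1)\log(1-y/n)+y$ has derivative $(1-y)/(n-y)$, so it is maximized at $y=1$, where it is at most $1/n$. This bound, together with total base mass $q_{N-1}/n<1$, is exactly where $n\ge1+\max_jq_j$ enters.

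The second is the bookkeeping. A one-cell shift of eligibility is not an $O(1/n)$ error. It replaces $\mathbf q$ by a shifted row whose payoff the witness does not control, and the discrepancy does not shrink with $n$, since $N$ is fixed. So it must be removed rather than absorbed. Give mass $(q_j-q_{j-1})/n$ to $z_j$ for $0\le j<N$, with $q_{-1}=0$, so that $z_0=\theta(0)$ is in the support. Since $z_j>\theta(t)$ exactly when $t>j/N$, the eligible probability on cell $j$ is then exactly $q_j/n$.

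Also, you cannot rerun the integration by parts of Section~\ref{sec:iid-reduction} at finite $n$ and then bound survival from above. That computation writes the revenue as $\theta(0)-\int_0^1b(t)(1-y(t)/n)^n\,dt$, with a negative coefficient on survival, where an upper bound is useless. Instead write
\begin{equation*}
 \E[Y]+d\cdot\frac{\E[K]}{n}=\sum_j\int_{j/N}^{(j+1)/N}\left(1-\frac{y(t)}{n}\right)^{n-1}\left(C_j+d\right)dt,\qquad C_j=\sum_{i\le j}\left(q_i-q_{i-1}\right)z_i,
\end{equation*}
whose weights $C_j+d$ are nonnegative. Then apply the pointwise bound and $d\le1-z_0+\sum_jw_je^{-q_j}$. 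Finally, check via $z_i=\sum_{k\ge i}w_k$ and $a_jq_j=S_j-S_{j+1}$ that $\sum_ja_jC_j+A_q(0)\cdot\bigl(1-z_0+\sum_jw_je^{-q_j}\bigr)=F_q(\theta)$ holds exactly. This identity is the actual content behind your ``collecting terms should reproduce $F_q(\theta)$''. There is no discreteness error at all; the additive $2/n$ is merely the tolerance used when choosing the replacement probability in Lemma~\ref{lem:rare-completion}.
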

\begin{proof}
Choose a row $\mathbf{q}$ with $F_q(\theta)\le U$.
Put $q_{-1}=0$.
Give each base variable mass $\left(q_j-q_{j-1}\right)/n$ at $z_j$,
$0\le j<N$, and the remaining mass at $0$. These masses are
nonnegative and sum to $q_{N-1}/n<1$. On cell $j$, the eligible-value
probability is $q_j/n$.
Equality times happen with probability zero.

Let $M$ be the base expected maximum and let $d=1-M$. Integrating its
tail over the value intervals $\left(z_{j+1},z_j\right)$ gives:
\begin{align*}
     M&=\sum_jw_j\cdot \left[1-\left(1-q_j/n\right)^n\right]<1, \\
 d&=1-z_0+\sum_jw_j\cdot \left(1-q_j/n\right)^n
 \le1-z_0+\sum_jw_j\cdot e^{-q_j}.
\end{align*}

Let $Y$ be the base selected value and $K$ the number of arrivals
observed up to and including acceptance, with $K=n$ on failure.
Recall that by Lemma~\ref{lem:rare-completion}, the reward after rare-value completion converges to $\mathbb E[Y]+d\cdot\mathbb E[K]/n$.
We bound this quantity on the base input.
Define:
\begin{equation*}
 y(t)=\int_0^tq(s)\,ds,
 \qquad C_j=\sum_{i\le j}\left(q_i-q_{i-1}\right)\cdot z_i.
\end{equation*}

Conditional on one variable arriving at $t$, each other variable has
produced an eligible arrival before $t$ with probability $y(t)/n$.
Thus the current variable is observed with probability
$\left(1-y(t)/n\right)^{n-1}$. Summing the selected-value contributions over the
$n$ variables gives
\begin{equation*}
 \E[Y]=\sum_j\int_{j/N}^{\left(j+1\right)/N}
       \left(1-y(t)/n\right)^{n-1}\cdot C_j\,dt.
\end{equation*}

The same conditioning also gives
\begin{equation*}
 \frac{\E[K]}n=\int_0^1\left(1-y(t)/n\right)^{n-1}\,dt.
\end{equation*}

Combining the two identities yields
\begin{equation*}
 \E[Y]+d\cdot\frac{\E[K]}n
 =\sum_j\int_{j/N}^{\left(j+1\right)/N}\left(1-y(t)/n\right)^{n-1}\cdot\left(C_j+d\right)\,dt.
\end{equation*}

Recall that $e^{-y(t)}=S_q(t)$, and hence
$\int_{j/N}^{\left(j+1\right)/N}e^{-y(t)}\,dt=a_j$ and
$\int_0^1e^{-y(t)}\,dt=A_q(0)$.
For $0\le y<n$, we have 
$ \left(1-y/n\right)^{n-1}\le e^{1/n-y}.$
Applying this bound and the bound on $d$ gives
\begin{equation*}
 \E[Y]+d\cdot\frac{\E[K]}n
 \le e^{1/n}\cdot \left\{\sum_ja_j\cdot C_j
       +A_q(0)\cdot \left(1-z_0+\sum_jw_j\cdot e^{-q_j}\right)\right\}.
\end{equation*}

To identify the expression in braces, use $z_i=\sum_{k\ge i}w_k$:
\begin{align*}
 \sum_ja_j\cdot C_j
 &=\sum_kw_k\cdot \left(\sum_{j\le k}a_j\cdot q_j
                         +q_k\cdot \sum_{j>k}a_j\right)\\
 &=\sum_kw_k\cdot \left(1-S_{k+1}+q_k\cdot \sum_{j>k}a_j\right)
 =z_0-\sum_kw_k\cdot \left(S_{k+1}-q_k\cdot \sum_{j>k}a_j\right).
\end{align*}
The second equality uses $a_j\cdot q_j=S_j-S_{j+1}$, including when $q_j=0$.

Substituting into the preceding bound and \eqref{eq:row} gives
\begin{equation*}
 \E[Y]+d\cdot\frac{\E[K]}n\le e^{1/n}\cdot F_q(\theta)\le e^{1/n}\cdot U.
\end{equation*}

Finally, designate all $n$ base variables by
Lemma~\ref{lem:rare-completion}. This preserves identical distributions,
normalizes the expected maximum to $1$, and adds one rare value above
$1$, which the threshold accepts immediately. The new expected reward
tends to $\E[Y]+d\cdot\E[K]/n$.

For fixed $n$ and $\theta$, choose the replacement probability
$\varepsilon>0$ sufficiently small that
\begin{equation*}
\left| \E\!\left[\ALG_\theta\right]-
\left(\mathbb E[Y]+d\cdot\frac{\mathbb E[K]}n\right)\right|
\le \frac{2\cdot e^{1/n}}n.
\end{equation*}
Then $\E\!\left[\ALG_\theta\right]\le e^{1/n}\cdot \left(U+2/n\right)$.
The support has the $N$ possible
base positive values, zero, and the rare value, which proves the claim.
\end{proof}

\begin{lemma}[The scope of the finite dual upper bound]
\label{prop:dual-scope}
Suppose the witness satisfies Lemma~\ref{prop:dual}. For every
continuous and strictly decreasing precommitted threshold, including an
independent random choice of such a threshold, for every sufficiently large
$n$, and every $0<\delta<1$, there is a normalized and finitely supported
distribution satisfying
\begin{equation*}
 \E[\ALG]\le e^{1/n}\cdot \left(U+2/n\right)+\delta.
\end{equation*}
The input is fixed before the threshold's random seed is chosen.

Consequently, the best asymptotic competitive ratio in this class is
at most $U$.
\end{lemma}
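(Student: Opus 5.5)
The plan is to chain the three reductions already established: Lemma~\ref{lem:height-reduction}, Lemma~\ref{lem:random-functions}, and Lemma~\ref{lem:finite-hard}. The one delicate point is the order of quantifiers for randomized thresholds: the hard input must be fixed before the seed is drawn.

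First, reduce to thresholds of initial height at most $1$. Fix the random precommitted threshold $\Theta$ and $0<\delta<1$. Apply Lemma~\ref{lem:height-reduction}(ii)--(iii) realization by realization, replacing $\Theta$ by $\Theta_\delta=\min\{\Theta,1-\delta t\}$. This is again continuous, strictly decreasing, starts at height at most $1$, and ends at $0$. Conditioning on the seed and averaging, we get $\E[\ALG_\Theta]\le\E[\ALG_{\Theta_\delta}]+\delta$ on every normalized input, and this bound does not depend on which input is chosen. Second, remove the randomness. Lemma~\ref{lem:random-functions} gives a deterministic threshold $\bar\theta$ in the same class with $\E[\ALG_{\Theta_\delta}]\le\E[\ALG_{\bar\theta}]$ simultaneously for every i.i.d. input with finitely supported common distribution. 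Since $\bar\theta$ is built from $\Theta_\delta$ alone, it is defined before any input is selected.

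Third, construct the hard input. Take $n\ge\max\{2,1+\max_{q,j}q_j\}$, which is what ``sufficiently large'' means here. Lemma~\ref{lem:finite-hard}, applied to $\bar\theta$, yields a distribution $F$ supported on at most $N+2$ points with $\E[X^*]=1$ and $\E[\ALG_{\bar\theta}]\le e^{1/n}(U+2/n)$. This $F$ is finitely supported, so the domination from the second step applies to it. Chaining the three inequalities gives $\E[\ALG_\Theta]\le e^{1/n}(U+2/n)+\delta$. Because $F$ depends only on $\bar\theta$, which depends only on the law of $\Theta$ and on $\delta$, the input is fixed before the seed is revealed, as the statement requires.

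For the ``consequently'' part, let $n\to\infty$ and $\delta\to0$. For any threshold in the class, the worst-case ratio over normalized inputs of size $n$ is then at most $U+o(1)$, so the asymptotic competitive ratio is at most $U$. The main obstacle is the quantifier bookkeeping in the second step: one must check that Lemma~\ref{lem:random-functions} applies to the truncated mixture $\Theta_\delta$ (its height is at most $1$ almost surely, and continuity and strict decrease hold pathwise), and that Lemma~\ref{lem:finite-hard}'s distribution is genuinely finitely supported, including the rare value from Lemma~\ref{lem:rare-completion}. Both follow directly from the cited statements.
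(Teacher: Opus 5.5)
Your proposal is correct and follows the paper's proof step for step. You cap the random threshold via Lemma~\ref{lem:height-reduction}, replace the capped mixture $\Theta_\delta$ by the deterministic $\bar\theta$ of Lemma~\ref{lem:random-functions} (which depends only on the law of $\Theta_\delta$), take the finitely supported normalized i.i.d. input of Lemma~\ref{lem:finite-hard} for $\bar\theta$, chain the three inequalities, and then let $n\to\infty$ and $\delta\to0^+$; your extra bookkeeping (the chosen input is fixed before the seed, it is finitely supported, and ``sufficiently large $n$'' means $n\ge\max\{2,1+\max_{q,j}q_j\}$) is exactly what the paper's short proof leaves implicit.
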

\begin{proof}
Apply Lemma~\ref{lem:height-reduction} to the random threshold $\Theta$,
and use Lemma~\ref{lem:random-functions} to obtain a deterministic threshold
$\bar\theta$ dominating the capped version $\Theta_\delta$.
Note that $\bar\theta$ depends only on the law
of $\Theta_\delta$, and is constructed independently of the
input distribution and the realized threshold seed.
We now choose the normalized input supplied by
Lemma~\ref{lem:finite-hard} for $\bar\theta$.
On this chosen distribution,
\begin{equation*}
 \E\!\left[\ALG_\Theta\right]
 \le\E\!\left[\ALG_{\Theta_\delta}\right]+\delta
 \le\E\!\left[\ALG_{\bar\theta}\right]+\delta
 \le e^{1/n}\cdot \left(U+2/n\right)+\delta.
\end{equation*}
Letting $n\to\infty$ and $\delta\to 0^+$ proves the asymptotic claim.
\end{proof}

\subsection{Precommitted thresholds in prophet secretary}
\label{sec:upper-four}

We now consider a deterministic, continuous and strictly decreasing threshold
$\theta$ with $\theta\left(1\right)=0$.
The threshold may depend on $n$, but is fixed
before the observations and their independent uniform arrival times are sampled.
We will use four inputs determined by three heights of $\theta$. Each
input gives a reward bound consisting of a constant and a weighted sum of these heights. A positive weighted
average cancels all three heights and leaves a constant.

\begin{theorem}\label{thm:upper-four}
Let $U=0.694451$ and $\lambda=0.386571$.
For every $n\ge3$, every deterministic threshold in the above class, and
every $\zeta>0$, there is a nonnegative independent input with
$\E\!\left[X^*\right]=1$ such that
\begin{equation*}
 \E\!\left[\ALG\right]\le U+\frac{\lambda}{n-1}+\zeta.
\end{equation*}
Consequently, no such algorithm guarantees a ratio of $0.694451$
uniformly over $n$.
\end{theorem}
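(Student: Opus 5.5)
The plan is to follow the description just before the statement: build four independent inputs from three heights of $\theta$, bound the reward on each, and average. First apply Lemma~\ref{lem:height-reduction}(i) to assume $\theta(0)\le1$; capping only helps the algorithm, so an upper bound for $\hat\theta$ transfers to $\theta$. Since $\theta$ is continuous and strictly decreasing, fix three times $0<t_1<t_2<t_3<1$ (chosen numerically to match $U$ and $\lambda$) and their heights $h_k=\theta(t_k)$, with $1\ge h_1>h_2>h_3>0$. Each input consists of bounded base variables taking values at or just above some of the $h_k$ (or zero), plus a rare large value added by Lemma~\ref{lem:rare-completion} so that $\E[X^*]=1$. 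The completion contributes $(1-M)\cdot\E[K]/r$, which is explicit.

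Next I would design each input so that the algorithm's behaviour is determined by arrival times alone. A value slightly above $h_k$ is accepted exactly when it arrives after $t_k$. A value at exactly $h_k$ is never accepted before $t_k$, and after $t_k$ it is accepted, since it then exceeds the threshold; strictness is irrelevant because ties have probability zero. Concrete candidates:
\begin{itemize}
\item[(a)] one deterministic variable at $h_1$, with the rest designated for rare completion;
\item[(b)] one variable at $h_2$ plus many small-probability copies at $h_1$ (the Poissonized regime);
\item[(c), (d)] analogous mixtures involving $h_3$, one of them using many i.i.d. variables with total mass $\Theta(1)$ near $h_3$, so that an early acceptance locks in a low value.
\end{itemize}
For each input, $\E[\ALG]$ is computed in closed form over uniform arrival times: a product of terms $(1-P_i(t))$ integrated against values. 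For large $n$ these reduce to exponentials, and using $(1-x/n)^{n-1}\le e^{1/(n-1)}e^{-x}$ type bounds produces the $\lambda/(n-1)$ error. The completion error is pushed below $\zeta$ by taking the replacement probability small. The resulting bound has the form $\E[\ALG]\le c_m+\sum_k\alpha_{m,k}h_k$ for $m=1,\dots,4$. Here the $\alpha_{m,k}$ are constants of either sign, because raising a height increases revenue but reduces the acceptance probability, which in turn reduces the completion term weighted by $1-M$ (note that $M$ itself depends on $h_k$).

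Finally, I would pick nonnegative weights $w_1,\dots,w_4$ summing to $1$ with $\sum_m w_m\alpha_{m,k}=0$ for $k=1,2,3$. Three linear equations in four unknowns with one normalization give a unique solution, which must be checked to be nonnegative. Then $\min_m \E[\ALG_m]\le\sum_m w_mc_m=U$, plus the $n$-dependent error terms, collected into $\lambda/(n-1)$. The adversary chooses the minimizing input after seeing $\theta$, which is legitimate because $\theta$ is deterministic.

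The main obstacle is that the reward is linear in the heights only after the times $t_k$ are fixed, and $t_k$ must not depend on $\theta$ in a way that breaks linearity. So I would fix the $t_k$ as universal constants, read off $h_k=\theta(t_k)$, and verify two things. First, each reward is genuinely affine in $(h_1,h_2,h_3)$, including through the term $1-M$, which is affine in the $h_k$ since $M$ is a probability-weighted sum of heights. Second, the dual weights are positive. I expect a numerical search over $(t_1,t_2,t_3)$ and the input masses, followed by a rational verification, to yield the stated constants $U=0.694451$ and $\lambda=0.386571$.
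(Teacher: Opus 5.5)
Your outline matches the strategy the paper announces just before the theorem: fixed times, values pegged to $\theta(t_k)$, rare completion, and a positive average that cancels the heights. But it stops where the proof has to begin. The statement asserts the specific constants $U=0.694451$ and $\lambda=0.386571$. You never fix the four inputs or compute their rewards, so nothing can be checked, and ``a numerical search should produce the constants'' is not an argument.

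The missing ingredient is the paper's concrete two-value family. Each input has $n-2$ deterministic copies of a floor value $v=\theta(s)$, one variable equal to $a=\theta(u)$ with probability $p$, and one appended variable that carries the rare value. In the idealized limit the floor is supplied at time $s$. So the rule accepts $a$ exactly when it arrives in $(u,s)$, and otherwise accepts $v$ at time $s$. With $d=s-u$ and $h=s-p d^2/2$, the normalized reward $\E[Y]+(1-M)\E[T]$ is then exactly $h+p(d-h)a+(1-pd-(1-p)h)v$.

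The paper takes $(u,s)=(0,t_1),(t_1,t_2),(t_2,t_3),(t_3,1)$. Consecutive rows then share one height. The floor height gets a positive coefficient, and in rows $1$--$3$ the upper value gets the negative coefficient $p(pd^2/2-u)$. This chain structure is what makes backward elimination give \emph{positive} weights. In your plan the unique solution of the $3\times4$ system is merely ``to be checked'', and for unstructured inputs (for example, your single deterministic variable at $h_2$, which is eligible only with probability $1-t_2$ and so does not act as a floor) there is no reason it is nonnegative.

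The second gap is the error term. The claim holds for every $n\ge3$ with an \emph{additive} loss $\lambda/(n-1)$, where $\lambda$ bounds the normalized weight $(w_0+w_1+w_2)/W$ of the three rows that use floor copies. Each such row loses at most $(1-v)\E[\Delta]\le 1/(n-1)$. Here the first floor copy after $s$ arrives at $s+\Delta$, with $\E[\Delta]=(1-s^{m+1})/(m+1)$ and $m=n-2$.

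Your proposed route (many small-probability copies, Poisson limits, bounds like $(1-x/n)^{n-1}\le e^{1/(n-1)}e^{-x}$) is asymptotic. For small $n$ there is no room for many copies. And a factor $e^{1/(n-1)}$ applied to a reward near $U$ already costs about $U/(n-1)>\lambda/(n-1)$.

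A smaller point: Lemma~\ref{lem:height-reduction}(i) gives $\min\{\theta,1\}$. This can be constant near $0$ and then falls outside the strictly decreasing class. Your eligibility claims (a value equal to $\theta(t_k)$ is accepted exactly after $t_k$) need strict decrease. Use part (iii) with $\theta_\delta$ and absorb $\delta$ into $\zeta$, as the paper does.
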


\begin{proof}
By Lemma~\ref{lem:height-reduction}, it suffices to prove the claim
under $\theta(0)\le1$, with an arbitrarily small additive error. The
error from reducing the initial height can then be included in $\zeta$.

\paragraph{The input family and its normalization.}
Let $\theta$ be the threshold function.
Fix times $0\le u<s\le1$ and choose values $0\le v\le a\le1$ such that
$v= \theta(s)$, $a = \theta(u)$.
When $u=0$, we also allow $a\ge\theta(0)$.
Then $v$ is eligible after time $s$ and $a$ is eligible after time $u$.

%Denote values $0\le v\le a\le1$ whose eligibility times are $u$ and $s$ with $0\le u<s\le1$, so that $a$ is eligible after time $u$ and $v$ is eligible after time $s$ almost surely.When $u=0$, any $a\ge\theta(0)$ has eligibility time zero.
The input consists of copies of $v$
and one independent variable equal to $a$ with probability $p$, where $0\le p\le1$,
and to $0$ otherwise.
The endpoint $s=1$, $v=0$ requires no copies.

\bigskip

First, consider the ideal limiting case in which the floor value $v$ is
supplied at time $s$ if no eligible $a$ has arrived.
Then $a$ gets accepted when it is present and its arrival falls in $\left(u,s\right)$.
This auxiliary comparison will be approximated by finitely
many independent floor copies, with an explicit error bound below.
Let $Y_\infty$ and $T_\infty$ be the selected value and acceptance time, with
$\left(Y_\infty,T_\infty\right) = \left(0,1\right)$ on failure.
Set $d=s-u$. Let $M$ be the expected maximum of this input with at least one
floor copy.
We have:
\begin{align*}
 M&=p\cdot a+\left(1-p\right)\cdot v = v+p\cdot\left(a-v\right),\\
 \E\!\left[Y_\infty\right]&= p\cdot d\cdot a+\left(1-p\cdot d\right)\cdot v
     =v+p\cdot\left(a-v\right)\cdot d,\\
 \E\!\left[T_\infty\right]
   &=p\cdot\int_u^s t\,dt+\left(1-p\cdot d\right)\cdot s
     =s-\frac{p\cdot d^2}{2}.
\end{align*}

The same expression holds when $s=1$, $v=0$ and the copies
are omitted. In that endpoint case, an unaccepted positive value gives
$Y_\infty=0$ and $T_\infty=1$ as required.

\medskip

Next, we use Lemma~\ref{lem:rare-completion} to normalize an actual input from this family with $M<1$.
Append one zero variable and replace it by $H_\varepsilon=M+\left(1-M\right)/\varepsilon$ with probability $\varepsilon$, which then makes the expected maximum exactly $1$.
The appended variable has an independent uniform arrival time.
Since $\theta(t)>0$ for $t<1$, its zero value is ignored.

Let $Y$ and $T$ denote the selected value and acceptance time from this actual input,
with $\left(Y,T\right)=\left(0,1\right)$ on failure.
Then the probability of arriving before acceptance
is $\E[T]$. Therefore, the normalized reward tends to:
\begin{equation*}
 \E[Y]+\left(1-M\right)\cdot\E[T]
 \qquad \text{when } \varepsilon\to 0^+.
\end{equation*}

Write $h=s-p\cdot d^2/2$. For the limiting comparison, the normalized
accounting reward is:
\begin{align*}
 R\left(u,s,a,v,p\right)
 &=v+p\cdot\left(a-v\right)\cdot d
   +\left(1-v-p\cdot\left(a-v\right)\right)\cdot h\\
 &=h+p\cdot\left(d-h\right)\cdot a
   +\left(1-p\cdot d-\left(1-p\right)\cdot h\right)\cdot v.
\end{align*}
Note that this expression is a constant plus a weighted sum of the values $v$ and $a$.

\paragraph{Four choices and the cancellation.}
Set
\begin{equation*}
 t_1=0.0206,\qquad
 t_2=0.159,\qquad
 t_3=0.4981.
\end{equation*}
Strict decrease gives $1>\theta\left(t_1\right) > \theta\left(t_2\right)>
\theta\left(t_3\right)>0$. Use the following four members
of the input family:
\begin{equation*}
 \begin{array}{c|ccccc}
  j&u&s&\text{positive value}&\text{floor value}&p\\\hline
  0&0&t_1&1&\theta\left(t_1\right)&1/2\\
  1&t_1&t_2&\theta\left(t_1\right)&\theta\left(t_2\right)&1\\
  2&t_2&t_3&\theta\left(t_2\right)&\theta\left(t_3\right)&1\\
  3&t_3&1&\theta\left(t_3\right)&0&1
 \end{array}
\end{equation*}

These are the stated eligibility times. In row zero, value one is
eligible at every positive time because $\theta(0)\le1$. The
expected maxima before normalization are $\left(1+\theta\left(t_1\right)\right)/2$,
$\theta\left(t_1\right)$, $\theta\left(t_2\right)$, and $\theta\left(t_3\right)$,
respectively, and are all less than $1$.

Substituting the four parameter choices into the definition of $R$ gives
the following upper bounds $R_j$ on their limiting accounting rewards:
\begin{align*}
 R_0&=0.020547+0.979453\cdot\theta\left(t_1\right),\\
 R_1&=0.149423-0.011023\cdot\theta\left(t_1\right)
       +0.8616\cdot\theta\left(t_2\right),\\
 R_2&=0.440606-0.101506\cdot\theta\left(t_2\right)
       +0.6609\cdot\theta\left(t_3\right),\\
 R_3&=0.874049-0.372149\cdot\theta\left(t_3\right).
\end{align*}
In each row, the constant coefficient is rounded upward while its sum
with the next coefficient is unchanged. Since the corresponding threshold
height lies in $\left[0,1\right]$, this rounding can only increase the reward.

The dependence forms a chain: $R_0$ has a positive coefficient of
$\theta\left(t_1\right)$; $R_1$ has a negative coefficient of $\theta\left(t_1\right)$ and a positive coefficient
of $\theta\left(t_2\right)$; $R_2$ has a negative coefficient of $\theta\left(t_2\right)$ and a positive
coefficient of $\theta\left(t_3\right)$; $R_3$ has a negative coefficient of $\theta\left(t_3\right)$.

Set $w_3=1$ and choose the remaining weights backward:
\begin{equation*}
 w_2=\frac{0.372149}{0.6609},\qquad
 w_1=w_2\cdot\frac{0.101506}{0.8616},\qquad
 w_0=w_1\cdot\frac{0.011023}{0.979453}.
\end{equation*}

All weights are positive. The first equation makes the coefficient of $\theta\left(t_3\right)$ in
$w_2\cdot R_2+w_3\cdot R_3$ zero. The second then cancels $\theta\left(t_2\right)$, and the third
cancels $\theta\left(t_1\right)$. Let $W=\sum_{j=0}^3 w_j$. Direct substitution gives
\begin{equation}\label{eq:secretary-weight-cancellation}
 \frac1W\cdot\sum_{j=0}^3 w_j\cdot R_j<U,
 \qquad
 \frac{w_0+w_1+w_2}{W}<\lambda.
\end{equation}
Since the weights are positive, the first inequality shows that some limiting
reward is strictly less than $U$, whatever the three threshold heights are.

\paragraph{Realization with $n$ variables.}
For rows zero, one, and two, use $m=n-2\ge1$ floor copies, the one
positive-value variable, and the appended rare-value variable.

Let $Z_m$
be the first floor arrival after $s$, taking $Z_m=1$ if every copy
arrives before $s$, and let $\Delta=Z_m-s$. For $s\le t<1$, the
event $Z_m>t$ means that no floor copy arrives in $\left(s,t\right]$.
We have:
\begin{equation*}
 \E\!\left[\Delta\right]
 =\int_s^1\Pr\left(Z_m>t\right)\,dt
 =\int_s^1\left(1-t+s\right)^m\,dt
 =\frac{1-s^{m+1}}{m+1}
 \le\frac1{m+1}.
\end{equation*}

Couple this finite input with the limiting comparison using the same
positive-value variable and its arrival time. If the positive value
arrives in $\left(u,s\right)$, both rules accept it at the same time. Otherwise, the finite rule either accepts a floor copy by $s+\Delta$, accepts the positive value in $\left(s,s+\Delta\right)$, or fails when no floor
copy is eligible. In all cases, with $T_m=1$ on failure, $T_m\le T_\infty+\Delta$.
Its selected value exceeds $Y_\infty$ only if the positive value
arrives in $\left(s,s+\Delta\right)$, and the excess is then at most
$a-v$, with $a,v$ denoting the values in the relevant row.
The arrival time of that variable is independent of $\Delta$. Consequently,
\begin{equation*}
 \E\!\left[Y_m\right]\le\E\!\left[Y_\infty\right]+p\cdot\left(a-v\right)\cdot\E\!\left[\Delta\right],\qquad
 \E\!\left[T_m\right]\le\E\!\left[T_\infty\right]+\E\!\left[\Delta\right].
\end{equation*}

Since $M=v+p\cdot\left(a-v\right)$, we have
\begin{align*}
 \E\!\left[Y_m\right]+\left(1-M\right)\cdot\E\!\left[T_m\right]
 &\le R_j+\left(p\cdot\left(a-v\right)+1-M\right)\cdot\E\!\left[\Delta\right]\\
 &=R_j+\left(1-v\right)\cdot\E\!\left[\Delta\right]
 \le R_j+\frac1{n-1}.
\end{align*}

For row three, no floor copies are needed, so its accounting reward is
at most $R_3$. Append zero values to bring its size to $n$.

Finally, choose the rare-value probabilities sufficiently small. For
any $\eta>0$, the normalized finite rewards $R_j^{\left(n\right)}$ then satisfy $R_j^{\left(n\right)}\le R_j+\frac{\mathbf 1_{\left\{j<3\right\}}}{n-1}+\eta$.
Combining this bound with~\eqref{eq:secretary-weight-cancellation}, we get
\begin{equation*}
 \min_j\left\{R_j^{\left(n\right)}\right\}
 \le\frac1W\cdot\sum_{j=0}^3 w_j\cdot R_j^{\left(n\right)}
 <U+\frac\lambda{n-1}+\eta.
\end{equation*}

If an initial-height reduction was used, choose its error and $\eta$ so that their sum is
less than $\zeta$. This proves the claimed finite-$n$ bound.
Moreover, the first weighted sum in~\eqref{eq:secretary-weight-cancellation}
is a constant strictly less than $U$. Taking $n$ sufficiently large and then
$\zeta$ sufficiently small therefore gives an input with reward strictly
less than $U$, proving the final claim.
\end{proof}

\end{document}